\DeclareMathOperator*{\cov}{cov}
\DeclareMathOperator*{\var}{var}
\DeclareMathOperator*{\supp}{supp}
\DeclareMathOperator*{\diag}{diag}
\newcommand{\R}{\ensuremath{\mathbb{R}}}
\newcommand{\Exp}{\ensuremath{\mathbb{E}}} %
\newcommand{\Prob}{\ensuremath{\mathbb{P}}} %
\newcommand{\0}{\ensuremath{\mathbf{0}}}
\newcommand{\indicator}{\ensuremath{\mathbbm{1}}}
\DeclareMathAlphabet{\mathpzc}{OT1}{pzc}{m}{it}
\newcommand*{\indep}{%
  \mathbin{%
    \mathpalette{\@indep}{}%
  }%
}
\newcommand*{\nindep}{%
  \mathbin{%
    \mathpalette{\@indep}{\not}%
  }%
}
\newcommand*{\@indep}[2]{%
  \sbox0{$#1\perp\m@th$}%
  \sbox2{$#1=$}%
  \sbox4{$#1\vcenter{}$}%
  \rlap{\copy0}%
  \dimen@=\dimexpr\ht2-\ht4-.2pt\relax
  \kern\dimen@
  {#2}%
  \kern\dimen@
  \copy0 %
} 
\colorlet{mylinkcolor}{RoyalBlue}%
\theoremstyle{definition}
\newtheorem{proposition}{Proposition}
\newtheorem{lemma}{Lemma}
\newtheorem{remark}{Remark}
\newtheorem{definition}{Definition}
\newtheorem{theorem}{Theorem}
\newtheorem{corollary}{Corollary}
\newenvironment{assump}[2][Assumption]{\begin{trivlist}
\item[\hskip \labelsep {\bfseries #1}\hskip \labelsep {\bfseries #2}]}{\end{trivlist}}
\newcounter{homogSection}
\newcommand{\aAssump}{A\arabic{homogSection}}
\newcounter{het1section}
\newcommand{\bAssump}{B\arabic{het1section}}
\newcounter{het2section}
\newcommand{\cAssump}{C\arabic{het2section}}
\newtheoremstyle{theoremSuppressedNumber}{}{}{}{}{\bfseries}{.}{ }{\thmname{#1}\thmnote{ (\mdseries #3)}}
\theoremstyle{theoremSuppressedNumber}
\newtheorem{homogAssump}{Assumption \aAssump \addtocounter{homogSection}{1}}
\newtheorem{het1Assump}{Assumption \bAssump \addtocounter{het1section}{1}}
\newtheorem{het2Assump}{Assumption \cAssump \addtocounter{het2section}{1}}
\title{\textbf{Salvaging Falsified Instrumental Variable Models}\footnote{This paper was presented at Auburn University, UC San Diego, Texas A\&M, Duke, Columbia, Cornell, the Yale MacMillan-CSAP workshop, the University of Mannheim, the 2018 Incomplete Models conference at Northwestern University, the 2018 and 2019 Southern Economic Association Meetings, the 2019 IAAE conference, the 2019 Georgetown Center for Economic Research Mini-Conference on Non-Standard Methods in Econometrics, the 2019 CeMMAP UCL/Vanderbilt Conference on Advances in Econometrics, the 29th Annual Meeting of the Midwest Econometrics Group, and the 2019 Greater New York Area Econometrics Colloquium. We thank audiences at those seminars and conferences as well as Federico Bugni, Tim Christensen, Allan Collard-Wexler, Joachim Freyberger, Chuck Manski, Arnaud Maurel, Francesca Molinari, Adam Rosen, Pedro Sant'Anna, and Alex Torgovitsky for helpful conversations and comments. We thank Paul Diegert and Peiran Xiao for excellent research assistance.
\newline\indent Data Acknowledgements: Researchers own analyses calculated (or derived) based in part on data from The Nielsen Company (US), LLC and marketing databases provided through the Nielsen Datasets at the Kilts Center for Marketing Data Center at The University of Chicago Booth School of Business. The conclusions drawn from the Nielsen data are those of the authors and do not reflect the views of Nielsen. Nielsen is not responsible for, had no role in, and was not involved in analyzing and preparing the results reported herein.
}}
\author{Matthew A. Masten\thanks{Department of Economics, Duke University, \texttt{matt.masten@duke.edu}}
\qquad
Alexandre Poirier\thanks{Department of Economics, Georgetown University, \texttt{alexandre.poirier@georgetown.edu}}
}
\definecolor{lightgray}{gray}{0.95}
\newcolumntype{a}{>{\columncolor{lightgray}}c}
\begin{document}
\maketitle

\vspace{-2em}

\begin{abstract}
What should researchers do when their baseline model is refuted? We provide four constructive answers. First, researchers can measure the extent of falsification. To do this, we consider continuous relaxations of the baseline assumptions of concern. We then define the falsification frontier: The smallest relaxations of the baseline model which are not refuted. This frontier provides a quantitative measure of the extent of falsification. Second, researchers can present the identified set for the parameter of interest under the assumption that the true model lies somewhere on this frontier. We call this the \emph{falsification adaptive set}. This set generalizes the standard baseline estimand to account for possible falsification. Importantly, it does not require the researcher to select or calibrate sensitivity parameters. Third, researchers can present the identified set for a specific point on this frontier. Finally, as a sensitivity analysis, researchers can present identified sets for points beyond the frontier. To illustrate these four ways of salvaging falsified models, we study overidentifying restrictions in two instrumental variable models: a homogeneous effects linear model, and heterogeneous effect models with either binary or continuous outcomes. In the linear model, we consider the classical overidentifying restrictions implied when multiple instruments are observed. We generalize these conditions by considering continuous relaxations of the exclusion restrictions. By sufficiently weakening the assumptions, a falsified baseline model becomes non-falsified. This lets us derive the falsification adaptive set, which has a simple closed form expression that depends only on a few 2SLS regression coefficients. We obtain similar results in the heterogeneous effect models, where we derive identified sets for marginal distributions of potential outcomes, falsification frontiers, and falsification adaptive sets under continuous relaxations of the instrument exogeneity assumptions. We apply our results to four different empirical studies: \cite{DurantonMorrowTurner2014}, \cite{AlesinaGiulianoNunn2013}, \cite{AcemogluJohnsonRobinson2001}, and \cite{Nevo2001}. We show how our results, especially the falsification adaptive set, can help researchers go beyond the binary pass/fail outcome of a specification test and instead summarize the variation in estimates obtained from alternative non-falsified models.
\end{abstract}

\bigskip
\small
\noindent \textbf{JEL classification:}
C14; C18; C21; C26; C51

\bigskip
\noindent \textbf{Keywords:}
Instrumental Variables, Nonparametric Identification, Partial Identification, Sensitivity Analysis

\onehalfspacing
\normalsize

\newpage
\section{Introduction}

Many models used in empirical research are falsifiable, in the sense that there exists a population distribution of the observable data which is inconsistent with the model. We equivalently call these overidentified or refutable models.\footnote{Keep in mind that falsifiability is a property of a model, not a parameter. It is nonetheless common for researchers to discuss  ``overidentified parameters.'' These are parameters whose identified sets are either empty (when the model is falsified) or a singleton (when the model is not falsified). Thus such parameters are point identified when the model is not falsified. To avoid confusion between properties of a parameter and properties of a model, we only use the terms falsifiable and refutable when describing models.} With finite samples, researchers often use specification tests to check whether their baseline model is refuted. A well known example is the overidentifying restrictions test in linear instrumental variable models (\citealt{AndersonRubin1949}, \citealt{Sargan1958}, \citealt{Hansen1982}). Abstracting from sampling uncertainty\footnote{We study population level identification and falsification in this paper. We briefly discuss finite sample estimation and inference in sections \ref{subsec:estimationInference} and \ref{subsec:FASestimationLinearIV}, but this is not the focus of the paper.}, the population versions of such specification tests have a persistent problem: What should researchers do when their baseline model is refuted?

One option is to ignore the finding and report some point estimand, like the 2SLS estimand in an overidentified linear instrumental variable model. There are two problems with this approach. First, it is often justified by appealing to large sample sizes. For example, \cite{Nevo2001} refers to this justification on page 325: ``It is well known that with a large enough sample a chi-squared test will reject essentially any model.'' If we are willing to relax assumptions, however, this common justification is not true: There always exist assumptions which are sufficiently weak that they are not refuted. 
Second, such point estimands can be highly misleading. For example, \cite{HeckmanHotz1989} compare estimates from observational data with those from experimental data. In their data, they show that the observational estimates from models which fail specification tests are much farther from the experimental estimates than observational estimates from models which pass specification tests. Essentially, parameters like the 2SLS estimand are motivated by arguments that they deliver a causal effect of interest under a set of baseline assumptions. When those assumptions are known to be false, there is no reason for the corresponding estimand to be close to the causal effect of interest.
In appendix \ref{sec:AltResponses} we discuss five other responses to baseline falsification from the literature. We review the related literature in appendix \ref{sec:relatedLiterature}.

Instead of ignoring findings from specification tests, we provide four constructive ways for researchers to salvage a falsified baseline model. First, researchers can measure the extent of falsification. To do this, we consider continuous relaxations of the baseline assumptions of concern. We then define the falsification frontier: The smallest relaxations of the baseline model which are not refuted. 
This frontier provides a quantitative measure of the extent of falsification. Second, researchers can present the identified set for the parameter of interest under the assumption that the true model lies somewhere on this frontier. We call this the \emph{falsification adaptive set} (FAS). This set collapses to the baseline identified set or point estimand when the baseline model is not refuted. When the baseline model is refuted, this set expands to include all parameter values consistent with the data and a model which is relaxed just enough to make it non-refuted. This set generalizes the standard baseline estimand to account for possible falsification. 
Importantly, \emph{researchers do not need to select or calibrate sensitivity parameters to compute the falsification adaptive set.}
While this set is agnostic about the direction of relaxation, our third suggestion is that researchers present the identified set for a specific minimally relaxed model. 
Finally, as a further sensitivity analysis, researchers can present identified sets for points beyond the falsification frontier. We discuss a complementary Bayesian approach in appendix \ref{sec:BayesianApproach}.

To illustrate these four constructive ways to salvage a falsified baseline model, we study a classic source of overidentification: observation of several instrumental variables. We do this in two different models. The first is the classical constant coefficients linear model with multiple instruments. This model imposes homogeneous treatment effects, but allows for continuous treatments. We generalize the usual overidentifying restrictions by allowing the instruments to have some direct effect on outcomes. We use this result to characterize the falsification frontier, which trades off bounds on the magnitudes of these direct instrument effects. We then characterize the identified set along the falsification frontier. This leads to a particularly simple closed form expression for the falsification adaptive set, depending only on the value of a handful of 2SLS regression coefficients. We also show how to use these identified sets to do sensitivity analysis for non-falsified models beyond the frontier.

It is well known that the classical overidentifying conditions may not hold when treatment effects are heterogeneous, even if all instrument exogeneity and exclusion restrictions hold. We therefore also study a second model, which allows for heterogeneous treatment effects and multiple instruments. In this model we focus on binary treatments. We consider both binary and continuous outcomes. Although this model allows for heterogeneous treatment effects, it is nonetheless falsifiable. We relax statistical independence between each instrument and potential outcomes using a latent propensity score distance from our previous work, \cite{MastenPoirier2017}. Under these relaxations, we derive the identified set for the marginal distributions of potential outcomes. 
We show how to quickly compute the falsification frontier and the falsification adaptive set using convex optimization. We then discuss how to use these identified sets to do sensitivity analysis regardless of whether the baseline model is refuted.

We show how to use our results in four previously published empirical studies: \citet*[\emph{The Review of Economic Studies}]{DurantonMorrowTurner2014}, \citet*[\emph{The Quarterly Journal of Economics}]{AlesinaGiulianoNunn2013}, \citet*[\emph{The American Economic Review}]{AcemogluJohnsonRobinson2001}, and \citet[\emph{Econometrica}]{Nevo2001}. Each paper reports 2SLS estimates using multiple instrumental variables and each paper discusses concerns about instrument validity. All four papers run overidentification tests, which sometimes fail. Even when they do not fail, the authors sometimes express a concern that this could simply be due to low sample size. We show that the falsification adaptive set is an informative complement to these traditional tests: Rather than focusing on null hypothesis significance testing, the FAS summarizes the range of estimates obtained from alternative models which which are not falsified by the data. Thus the FAS reflects the model uncertainty that arises from a falsified baseline model: Relying on different instruments to different degrees yields different results. The FAS gives this range of results.

Finally, it is important to distinguish between two kinds of falsifiable models. The first kind is falsified because of an assumption made directly on observed random variables. For example, suppose we observe a scalar random variable. Consider the model which assumes that this variable is normally distributed. At the population level, we can simply check whether the observed variable is actually normally distributed. If not, the model is refuted. When the model is refuted, it can be salvaged by removing the normality assumption. A massive literature in statistics and econometrics on semi- and non-parametrics is largely concerned with salvaging refuted models by relaxing parametric assumptions on observed random variables. For example, see the survey by \cite{Spanos2018}.
The second kind of model is falsified because of an assumption made on unobserved random variables, or unobserved structural parameters. Salvaging these models is delicate because, even at the population level, the data themselves do not tell us the correct alternative assumptions. We study this second kind of falsifiable model in this paper.

\section{Salvaging Falsified Models}\label{sec:generalFF}

In this section we consider a general falsifiable model. We use this model to precisely define our four recommended responses to baseline falsification. In particular, we formally define the falsification frontier and falsification adaptive set. In sections \ref{sec:homogModel} and \ref{sec:hetModel} we illustrate these general concepts in two specific instrumental variable models. We discuss some interpretation issues in section \ref{subsec:FFinterp}. In section \ref{subsec:FFvsBF} we discuss the relationship between the falsification frontier and the breakdown frontier we studied in our previous work (\citealt{MastenPoirier2017BF}). We discuss estimation and inference in section \ref{subsec:estimationInference}.

\subsection{Measuring the Extent of Falsification}\label{subsec:generalFF}

Let $W$ be a vector of observed random variables. Let $\mathcal{F}$ denote the set of all cdfs on the support of $W$, $\supp(W)$. A \emph{model} is a set of underlying structural parameters which generate the observed distribution $F_W$ and restrictions on those structural parameters. This definition of a model suffices for our purposes. See section 2 of \cite{Matzkin2007}, for example, for a more formal definition. A given model $\mathcal{M}$ is \emph{falsifiable} if there are some observed distributions $F_W$ which could not have been generated by the model. If such a cdf is observed, we say the model is falsified (equivalently, refuted). For a given model, let $\mathcal{F}_\text{f}$ denote the set of cdfs $F_W$ which falsify the model. Let $\mathcal{F}_\text{nf}$ denote the set of cdfs $F_W$ which do not falsify the model. Falsifiable models are often said to have \emph{testable implications}. We prefer the terms `falsifiable' or `refutable', to distinguish this population level feature from issues arising in statistical hypothesis testing.\footnote{Informally, it is possible that for every non-falsified distribution of observables there exists a falsified distribution of observables which is arbitrarily close. Thus hypothesis tests in finite samples have a difficult time distinguishing the two cases, even though we can distinguish them at the population level. In this case we say the model is falsifiable but not testable. \cite{CanaySantosShaikh2013} study an example of this; also see \cite{Freyberger2017}.}

Suppose we begin with a falsifiable baseline model $\mathcal{M}(0_L)$. Let $\mathcal{F}_\text{nf}(0_L)$ denote the set of joint distributions of the data which are not falsified by this model. Hence $\mathcal{F}_\text{nf}(0_L)$ is a strict subset of $\mathcal{F}$. $L$ denotes the number of assumptions which we believe may be the reason the model is falsifiable. For each assumption $\ell \in \{1,\ldots,L\}$, we define a class of assumptions indexed by a parameter $\delta_\ell$ such that the assumption is imposed for $\delta_\ell = 0$, the assumption is not imposed for $\delta_\ell = 1$, and the assumption is partially imposed for $\delta_\ell \in (0,1)$.\footnote{More generally, the range of $\delta$ could be $[0,\delta_\text{max}]$ for some value $\delta_\text{max} \geq 0$, possibly $+\infty$.} These assumptions must be nested in the sense that for $\delta_\ell' \geq \delta_\ell$, assumption $\delta_\ell'$ is weaker than assumption $\delta_\ell$. Let $\mathcal{M}(\delta)$ denote the model which imposes assumptions $\delta = (\delta_1,\ldots,\delta_L)$. Let $\mathcal{F}_\text{nf}(\delta)$ denote the set of joint distributions of the data which are not falsified by this model. Suppose further that the model $\mathcal{M}(1_L)$ which does not impose any of the $L$ assumptions is not falsifiable.

Recall that $F_W$ denotes the observed distribution of the data. Suppose $F_W \notin \mathcal{F}_\text{nf}(0_L)$, so that the baseline model is falsified. We first consider the case where we only relax a single assumption.

\begin{definition}
Suppose $L =1$. The \emph{falsification point} is
\[
	\delta^* = \inf \{ \delta \in [0,1] : F_W \in \mathcal{F}_\text{nf}(\delta) \}.
\]
\end{definition}

That is, the falsification point is the smallest relaxation of the baseline assumption such that the model is not falsified by the observed data $F_W$. For any $\delta < \delta^*$, the model is falsified. For any $\delta > \delta^*$, the model is not falsified. Under mild conditions, we can show that the set $ \{ \delta \in [0,1] : F_W \in \mathcal{F}_\text{nf}(\delta) \}$ is closed, and hence the model is not falsified at $\delta^*$. %

The falsification frontier is the multi-dimensional version of the falsification point. To formally define it, partition $[0,1]^L$ into the set of all assumptions which are falsified,
\[
	\mathcal{D}_\text{f} = \{ \delta \in [0,1]^L : F_W \notin \mathcal{F}_\text{nf}(\delta) \}
\]
and the set of all assumptions which are not falsified,
\[
	\mathcal{D}_\text{nf} = \{ \delta \in [0,1]^L : F_W \in \mathcal{F}_\text{nf}(\delta) \}.
\]

\begin{definition}\label{def:abstractFF}
The \emph{falsification frontier} is the set
\[
	\text{FF} = \{\delta \in [0,1]^L: \delta \in \mathcal{D}_\text{nf} \text{ and for any other $\delta' < \delta$, we have $\delta' \in \mathcal{D}_\text{f}$ } \}
\]
where $\delta' < \delta$ means that $\delta_\ell' \leq \delta_\ell$ for all $\ell \in \{1,\ldots,L\}$ and $\delta_m' < \delta_m$ for some $m \in \{1,\ldots,L\}$.
\end{definition}

That is, the falsification frontier is the set of assumptions which are not falsified, but if strengthened in any component, leads to a refuted model. In the one dimensional case, this corresponds to the smallest relaxation $\delta$ such that the model is not refuted. For example, suppose $L=2$. For each $\delta_1 \in [0,1]$ define
\[
	\delta_2^*(\delta_1) = \inf \{ \delta_2 \in [0,1] : F_W \in \mathcal{F}_\text{nf}(\delta_1,\delta_2) \}.
\]
This function $\delta_2^*(\cdot)$ plots the falsification frontier in the two-dimensional case. 
Figure \ref{FF_illustration} shows an example of such a function. Our baseline model is refuted and we have picked two different assumptions to focus on as possible explanations. The horizontal axis measures the relaxation of the first assumption. The vertical axis measures the relaxation of the second assumption. The origin at the lower left represents the baseline model. The point at the top right of the box represents the model where neither of the two baseline assumptions are imposed. First, look at the top left and lower right corners. We see that if we completely drop one assumption while maintaining the other, the model is no longer falsified. However, we can learn more by studying the points at which the model becomes non-refutable---that is, by examining the falsification frontier. This is shown as the boundary between the two regions. Consider its vertical and horizontal intercepts. Supposing the units are comparable, then the model continues to be refuted even for large relaxations of assumption 1, while maintaining assumption 2. Conversely, the model is not refuted if we allow moderate relaxations of assumption 2, while maintaining assumption 1. Looking along the frontier, strengthening assumption 2 by some amount requires relaxing assumption 1 by even more, if we want to avoid falsification. Thus the falsification frontier gives the trade-off between different assumptions' ability to falsify the model.

\begin{figure}[t]%
\centering
\includegraphics[width=65mm]{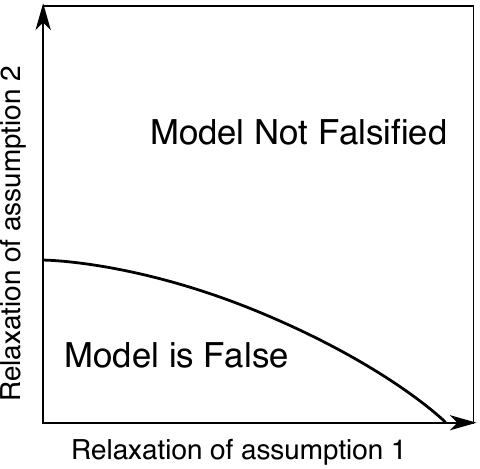}
\caption{An example falsification frontier, partitioning the space of assumptions into the set for which the model is falsified and the set for which the model is not falsified.}
\label{FF_illustration}
\end{figure}

\subsection{Interpreting Falsified and Non-Falsified Models}\label{subsec:FFinterp}

In this section we discuss two well-known points that are nevertheless important to keep in mind when discussing falsification. First, although a falsified model cannot be true, a non-falsified model is not necessarily true. This asymmetry underlies Popper's \citeyearpar{Popper1934} philosophy of falsification. In the classical linear instrumental variables model of section \ref{sec:homogModel}, this is discussed in \cite{KadaneAnderson1977}, \cite{Newey1985}, \cite{Breusch1986},  \cite{Small2007}, \cite{ParenteSantosSilva2012}, and \cite{Guggenberger2012}. They noted that the classical overidentifying conditions are really constraints that the instruments are consistent with each other. Thus it is possible that the instrument exogeneity assumptions fail in such a way that all instruments yield the same biased estimand for the true parameter. In this case, the baseline model is not refuted even though it is false. This is often explained by saying that the classical overidentification test is ``not consistent against all alternatives.'' This is why the falsification frontier is not a partition of true and false assumptions, but rather of falsified and non-falsified assumptions. We discuss this point further in appendix \ref{sec:GMMobjFun}.

Second, when the baseline model is falsified, it could be due to any combination of our baseline assumptions. Importantly, this is true \emph{regardless} of the shape of the falsification frontier. The falsification frontier does \emph{not} tell us why the baseline assumptions were refuted. For example, consider figure \ref{FF_illustration}. This figure does not tell us that `assumption 1 is probably the reason the baseline model was refuted'. Instead, it tells us about the \emph{robustness of falsification} to certain relaxations from the baseline assumptions. It tells us that, maintaining assumption 2, we must substantially relax assumption 1 in order to prevent falsification.

\subsection{The Falsification Adaptive Set}

In section \ref{subsec:generalFF}, we formalized our first recommendation: Measure the extent of falsification. When the falsification frontier is far from the origin in all directions, researchers may wish to stop there. In other cases, researchers may want to proceed and present identified sets for non-falsified relaxations of the baseline model.

Let $\Theta_I(\delta)$ denote the identified set for a parameter of interest $\theta \in \Theta$, given the model which imposes the assumptions $\delta$. When $\delta \in \mathcal{D}_\text{f}$, $\delta$ is below the falsification frontier. In this case, the identified set $\Theta_I(\delta)$ is empty. When $\delta \in \mathcal{D}_\text{nf}$, $\delta$ is on or above the falsification frontier. In this case, the identified set $\Theta_I(\delta)$ is nonempty.

The set of $\delta$'s on the falsification frontier are minimally non-falsified, in the sense that they are the closest to the baseline model while still not leading to a falsified model. These $\delta$'s, along with the $\delta$'s beyond the frontier, are consistent with the data. As discussed in section \ref{subsec:FFinterp}, the data alone cannot tell us which of these models are true. It only tells us that $\delta$'s below the falsification frontier are not true. Indeed, the same remark is true when the baseline model is not falsified. Yet in that case, it is common to present identified sets or point estimands under the baseline model. To generalize this practice to the case where the baseline model is falsified, consider the following definition.

\begin{definition}
Call
\[
	\bigcup_{\delta \in \text{FF}} \Theta_I(\delta)
\]
the \emph{falsification adaptive set}.
\end{definition}

The falsification adaptive set is the identified set for the parameter of interest under the assumption that the true model lies somewhere on the frontier. When the baseline model is not refuted, this set collapses to the $\Theta_I(0_L)$, the baseline identified set (which may be a singleton). This is what researchers typically report when their baseline model is not refuted. When the baseline model is refuted, however, the falsification adaptive set expands to account for uncertainty about which assumption along the frontier is true. Hence this set generalizes the standard baseline estimand to account for possible falsification. We recommend that researchers report this set.

In some cases, researchers may have a prior belief over the relative role of the various assumptions in falsification. In this case, researchers may also want to present $\Theta_I(\delta)$ for a specific $\delta$ on the falsification frontier. For example, consider figure \ref{FF_illustration}. Two natural relaxations of the baseline model are the horizontal intercept $(\delta_1^*,0)$ and the vertical intercept $(0,\delta_2^*)$. These correspond to fully maintaining one assumption while sufficiently relaxing the other. The corresponding identified sets are $\Theta_I(\delta_1^*,0)$ and $\Theta_I(0,\delta_2^*)$. Presenting sets like these for specific points on the frontier is our third recommendation.

\subsubsection*{Cautious Optimism}

Although the points on the falsification frontier are consistent with the data, this does not mean one of them is true. For this reason, the falsification adaptive set---although it represents uncertainty due to the nature of the misspecification---is nonetheless an optimistic set. It assumes that, although the model is misspecified, it is only minimally misspecified. Since this may not be true, our fourth and final recommendation is that researchers present identified sets $\Theta_I(\delta)$ for points $\delta$ beyond the falsification frontier, as a sensitivity analysis. We discuss this recommendation in the next subsection.

The widespread practice in existing empirical work is to first present results from a usually optimistic baseline model---which typically point identifies a parameter of interest---and then to present results from various sensitivity analyses. Our recommendations directly generalize this standard practice to falsified baseline models: First present the falsification adaptive set and then present identified sets for further relaxations.

\subsection{Breakdown Frontier Analysis}\label{subsec:FFvsBF}

Our fourth recommendation is to present identified sets $\Theta_I(\delta)$ for $\delta$'s beyond the falsification frontier. To guide this analysis, researchers can use the breakdown frontier concept which we previously studied in \cite{MastenPoirier2017BF}. In this subsection we briefly summarize this concept and relate it to the falsification frontier. Consider the model $\mathcal{M}(\delta)$ from section \ref{subsec:generalFF}. As above, let $\Theta_I(\delta) \subseteq \Theta$ be the identified set for some parameter of interest $\theta \in \Theta$. Let $\mathcal{C} \subseteq \Theta$. Suppose we are interested in the conclusion that $\theta \in \mathcal{C}$. For example, if our parameter is the average treatment effect, then we may be interested in the conclusion that the average treatment effect is positive, $\text{ATE} \geq 0$. Then $\mathcal{C} = [0,\infty)$. Define the \emph{robust region}
\[
	\text{RR} = \{ \delta \in [0,1]^L : \Theta_I(\delta) \subseteq \mathcal{C}, \Theta_I(\delta) \neq \emptyset \}.
\]
The conclusion of interest holds for all assumptions in this set. That is, for all assumptions such that all elements of the identified set are also elements of $\mathcal{C}$. We also restrict the set to assumptions such that the model is not refuted. The breakdown frontier is the boundary between the robust region and its complement:
\[
	\text{BF} = \overline{\text{RR}} \cap \overline{\text{RR}^c}.
\]
In the two dimensional case, we can write the breakdown frontier as the function
\[
	\text{BF}(\delta_1) = \sup \{ \delta_2 \in [0,1] : \Theta_I(\delta_1,\delta_2) \subseteq \mathcal{C}, \Theta_I(\delta_1,\delta_2) \neq \emptyset \}.
\]

It is possible for the robust region to be empty. This happens when the baseline model is falsified, but as soon as the model is not falsified, the identified set $\Theta_I(\delta)$ contains values of $\theta$ outside of $\mathcal{C}$.
When the robust region is not empty, so that the breakdown frontier is not empty, the breakdown frontier will always be weakly larger than the falsification frontier. This follows since the identified set is empty for all points below the falsification frontier.

Figure \ref{FF_illustration2} depicts an analysis combining both the falsification frontier and the breakdown frontier. Here the robust region is a band between the falsification frontier and the breakdown frontier. It is the region of assumptions which are weak enough that the model is not falsified, but strong enough that our conclusion of interest holds.

\begin{figure}[t]%
\centering
\includegraphics[width=65mm]{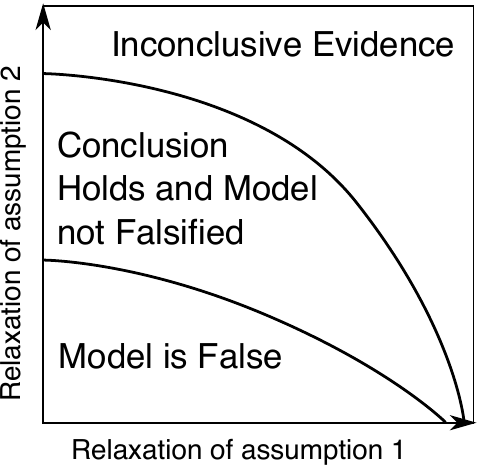}
\caption{An example of using the falsification frontier jointly with a breakdown frontier for a specific conclusion of interest. Here there is a band of assumptions which are sufficiently weak that the model is not falsified, but are sufficiently strong that the conclusion of interest holds. This band is the robust region.}
\label{FF_illustration2}
\end{figure}

A key difference between the breakdown and falsification frontiers is that the falsification frontier does not depend on (a) the parameter of interest or (b) the conclusion of interest. For a given class of relaxations from the baseline assumptions, there is only one falsification frontier. In contrast, there are many different breakdown frontiers, depending on the parameter and conclusion of interest. This difference is important since there have been extensive debates in the literature regarding what parameters researchers should study (for example, see \citealt{Imbens2010} and \citealt{HeckmanUrzua2010}). Thus one's opinions on this issue do not affect the problem of measuring the extent of falsification.

\subsection{Estimation and Inference}\label{subsec:estimationInference}

In this paper we focus on population level analysis. In this section we briefly discuss how to implement our recommendations with finite sample data. First, to measure the extent of falsification, one can estimate and do inference on the falsification frontier. The basic idea is similar to our analysis of inference on breakdown frontiers (\citealt{MastenPoirier2017BF}). For the breakdown frontier, we recommended constructing lower confidence bands. For the falsification frontier, we recommend constructing \emph{upper} confidence bands, since this provides an outer confidence set for the area under the falsification frontier---the set of models which are refuted. When doing a combined analysis, we recommend constructing an inner confidence set for the robust region; that is, for the area above the falsification frontier but below the breakdown frontier. This will require constructing two-sided confidence bands, whereas one only needs to construct one-sided confidence bands when considering each frontier separately.

When there are closed form expressions for the identified sets $\Theta_I(\delta)$---like in our analysis below---the falsification adaptive set can be estimated by sample analog:
\[
	\widehat{\text{FAS}} = \bigcup_{\delta \in \widehat{\text{FF}}} \widehat{\Theta}_I(\delta).
\]
In the linear instrumental variable model of section \ref{sec:homogModel}, we provide a particularly simple characterization of the falsification adaptive set which does not require pre-estimation of the falsification frontier. In that case, estimation and inference is straightforward. Recall that, by definition, the falsification adaptive set simplifies to the standard estimand when the baseline model is not refuted. Thus there is no need to do a specification pre-test---researchers can just always present the estimated falsification adaptive set with accompanying confidence intervals. However, note that in finite samples the estimated falsification adaptive set will generally always be an interval. This situation is similar to \citet[page 10]{HaileTamer2003}, who note that their population bounds collapse to a single point in a special case, although in finite samples their estimator still generally gives an interval.

\section{The Classical Linear Model with Multiple Instruments}\label{sec:homogModel}

In this section and section \ref{sec:hetModel}, we illustrate our recommendations using two variations of instrumental variable models. While many kinds of refutable assumptions have been considered in the literature, we focus on the classical case where variation from two or more instruments is used to falsify a model. In this section we consider the classical linear model with homogeneous treatment effects and continuous outcomes. In section \ref{sec:hetModel} we consider a model with heterogeneous treatment effects.

\subsection{Model and Identification}

We thus begin with the classical constant coefficient model
\begin{equation}\label{eq:constantCoeffOutcomeEqGen}
	Y(x,z) = x'\beta + z'\gamma + U
\end{equation}
where $Y(x,z)$ are potential outcomes defined for values $(x,z) \in \R^{K+L}$ and $U$ is an unobserved random variable. $\beta$ is an unknown constant $K$-vector. $\gamma$ is an unknown constant $L$-vector. Let $X$ be an observed $K$-vector of endogenous variables. Throughout we suppose $X$ does not contain a constant. Hence $U$ absorbs any nonzero constant intercept. Let $Z$ be an observed $L$-vector of potentially invalid instruments. For simplicity we have omitted any additional known exogenous covariates $W$ in equation \eqref{eq:constantCoeffOutcomeEqGen}. In appendix \ref{sec:covariatesInLinearModel} we show how to easily include them via partialling out.

We observe the outcome $Y = Y(X,Z)$. Thus our equation for observed outcomes is
\[
	Y = X'\beta + Z'\gamma + U.
\]
Equation \eqref{eq:constantCoeffOutcomeEqGen} imposes homogeneous treatment effects since the difference between two potential outcomes
\[
	Y(x_1,z) - Y(x_0,z) = (x_1 - x_0)' \beta
\]
is constant across the population. We consider a model with heterogeneous treatment effects in section \ref{sec:hetModel}.

\begin{remark}\label{remark:LinearHetTrtEffects}
In this section we present the linear model with homogeneous treatment effects. It is well known, however, that linear model analysis can be generalized to allow for heterogeneous treatment effects under some assumptions. For example, suppose $\beta$ is a random coefficient. \cite{HeckmanVytlacil1998} and \cite{Wooldridge1997, Wooldridge2003, Wooldridge2008} showed that if the causal effect of $Z$ on $X$ is homogeneous, then the 2SLS estimand equals $\Exp(\beta)$. Alternatively, we can allow for heterogeneous instrument effects if we assume these are independent of $\beta$; for example, see the discussion following equation (11) in \citet[page 1142]{Card2001}. Under these kinds of assumptions, all of our results in this section extend to allow for heterogeneous treatment effects. Our analysis of heterogeneous treatment effects in section \ref{sec:hetModel} applies when we are not willing to impose assumptions that restrict the nature of the heterogeneous instrument effects. \citet[page 261]{ConleyHansenRossi2012} make a similar observation.
\end{remark}

In addition to this homogeneous treatment effect assumption, we maintain the following relevance and sufficient variation assumptions throughout this section.

\begin{homogAssump}[Relevance]\label{assump:homog:relevance:gen}
The $L\times K$ matrix $\cov(Z,X)$ has rank $K$.
\end{homogAssump}

\begin{homogAssump}[Sufficient variation]\label{assump:homog:nonsing:gen}
The $L\times L$ matrix $\var(Z)$ is invertible.
\end{homogAssump}

A\ref{assump:homog:relevance:gen} implies the order condition $L \geq K$. When there is just one endogenous variable ($K=1$), A\ref{assump:homog:relevance:gen} only requires $\cov(X,Z_\ell) \neq 0$ for at least one instrument. Other instruments may have zero correlation. In this case, these other instruments provide additional falsifying power. We discuss this further below. A\ref{assump:homog:nonsing:gen} simply requires that the instruments are not degenerate and are not affine combinations of each other.

The classical model imposes two more assumptions:

\begin{homogAssump}[Exogeneity]\label{assump:exogeneity:gen}
$\cov(Z_\ell,U) = 0$ for all $\ell \in \{1,\ldots,L \}$.
\end{homogAssump}

\begin{homogAssump}[Exclusion]\label{assump:exclusion:gen}
$\gamma_\ell = 0$ for all $\ell \in \{1,\ldots,L \}$.
\end{homogAssump}

A\ref{assump:homog:relevance:gen}--A\ref{assump:exclusion:gen} imply that the coefficient vector $\beta$ is point identified and equals the two-stage least squares (2SLS) estimand. Furthermore, these assumptions imply well-known overidentifying conditions. The following proposition gives these conditions when there is just a single endogenous variable.

\begin{proposition}\label{prop:testableImplicationsOfLinearIV}
Consider model \eqref{eq:constantCoeffOutcomeEqGen}. Suppose A\ref{assump:homog:relevance:gen}--A\ref{assump:exclusion:gen} hold. Suppose $K=1$. Suppose the joint distribution of $(Y,X,Z)$ is known. Then the model is not refuted if and only if
\begin{equation}\label{eq:classicalSarganEqs}
	\cov(Y,Z_m) \cov(X,Z_\ell)
	= 
	\cov(Y,Z_\ell) \cov(X,Z_m)
\end{equation}
for all $m$ and $\ell$ in $\{1,\ldots,L\}$.
\end{proposition}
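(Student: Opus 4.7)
The plan is to prove both directions of the equivalence by working directly from the definition of the model under the maintained assumptions. Under A\ref{assump:exclusion:gen} the outcome equation reduces to $Y = X\beta + U$ with scalar $\beta$, and under A\ref{assump:exogeneity:gen} we have $\cov(Z_\ell,U) = 0$ for every $\ell$. ``The model is not refuted'' means precisely that there exist a scalar $\beta$ and a random variable $U$ on the same probability space as $(Y,X,Z)$ such that $Y = X\beta + U$ and $\cov(Z_\ell,U) = 0$ for all $\ell$.

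For the necessity direction (``only if''), I would take $(\beta,U)$ witnessing non-refutation and compute, by linearity of covariance,
\[
    \cov(Y,Z_\ell) = \beta\cov(X,Z_\ell) + \cov(U,Z_\ell) = \beta\cov(X,Z_\ell)
\]
for every $\ell$. Multiplying this identity for index $\ell$ by $\cov(X,Z_m)$, multiplying the analogous identity for index $m$ by $\cov(X,Z_\ell)$, and subtracting yields equation \eqref{eq:classicalSarganEqs}. This step is entirely routine.

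For the sufficiency direction (``if''), the main task is to construct a candidate $(\beta,U)$ given only the observed joint distribution and the overidentifying equalities. By A\ref{assump:homog:relevance:gen} with $K=1$, at least one index $\ell^\star$ satisfies $\cov(X,Z_{\ell^\star})\neq 0$. I would then define
\[
    \beta \;:=\; \frac{\cov(Y,Z_{\ell^\star})}{\cov(X,Z_{\ell^\star})},
    \qquad
    U \;:=\; Y - X\beta.
\]
The outcome equation $Y = X\beta + U$ holds by construction, so the only thing left to verify is $\cov(Z_\ell,U) = 0$ for every $\ell$. Substituting the definition of $U$ gives
\[
    \cov(Z_\ell,U) = \cov(Y,Z_\ell) - \frac{\cov(Y,Z_{\ell^\star})}{\cov(X,Z_{\ell^\star})}\cov(X,Z_\ell) = \frac{\cov(Y,Z_\ell)\cov(X,Z_{\ell^\star}) - \cov(Y,Z_{\ell^\star})\cov(X,Z_\ell)}{\cov(X,Z_{\ell^\star})},
\]
and the numerator vanishes by hypothesis \eqref{eq:classicalSarganEqs} applied to the pair $(\ell,\ell^\star)$.

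The only subtle point, and the one worth flagging, is that the definition of $\beta$ appears to depend on the choice of $\ell^\star$; however, the overidentifying equalities \eqref{eq:classicalSarganEqs} guarantee that $\cov(Y,Z_{\ell^\star})/\cov(X,Z_{\ell^\star})$ takes the same value for every $\ell^\star$ with $\cov(X,Z_{\ell^\star})\neq 0$, so any such choice works. Note also that the argument does not invoke A\ref{assump:homog:nonsing:gen}; sufficient variation is not needed for the proposition, only relevance is. I would close by remarking that this derivation simultaneously shows $\beta$ is point-identified as the ratio in question, which foreshadows the 2SLS representation used later.
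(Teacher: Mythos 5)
Your proof is correct and follows essentially the same route as the paper's: the sufficiency direction constructs $\beta = \cov(Y,Z_{\ell^\star})/\cov(X,Z_{\ell^\star})$ and $U = Y - X\beta$ exactly as the paper does, and the necessity direction rests on the same identity $\cov(Y,Z_\ell) = \beta\cov(X,Z_\ell)$. Your cross-multiplication trick in the necessity direction is in fact slightly cleaner than the paper's argument, which instead splits into cases according to whether $\beta = 0$ and whether individual covariances $\cov(X,Z_\ell)$ vanish; and your observations that A\ref{assump:homog:nonsing:gen} is not used and that $\beta$ is independent of the choice of $\ell^\star$ are both accurate.
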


When all instruments are relevant, so that $\cov(X,Z_\ell) \neq 0$ for all $\ell \in \{1,\ldots,L\}$, equation \eqref{eq:classicalSarganEqs} can be written as
\[
	\frac{\cov(Y,Z_m)}{\cov(X,Z_m)} 
	= 
	\frac{\cov(Y,Z_\ell)}{\cov(X,Z_\ell)}.
\]
That is, the linear IV estimand must be the same for all instruments $Z_\ell$. This result is the basis for the classical test of overidentifying restrictions (\citealt{AndersonRubin1949}, \citealt{Sargan1958}, \citealt{Hansen1982}). Suppose the distribution of $(Y,X,Z)$ is such that the model is refuted. This happens when at least one of our model assumptions fails: (a) homogeneous treatment effects, (b) linearity in $X$, (c) instrument exogeneity, or (d) instrument exclusion.

In this section we maintain the homogeneous treatment effects assumption and focus on instrument exclusion or exogeneity as an explanation for refutation. We consider heterogeneous treatment effects in section \ref{sec:hetModel}. We also maintain linearity of potential outcomes in $X$. In principle our analysis can be extended to allow for relaxations of the linearity assumption, but we leave this to future work. Note that linearity holds automatically when $X$ is binary.

We thus focus on failure of (c) instrument exogeneity or (d) instrument exclusion as reasons for refuting the baseline model. These are two different substantive assumptions. Mathematically, however, we can use the same analysis to relax both assumptions. For simplicity, here we formally maintain the exogeneity assumption A\ref{assump:exogeneity:gen} and focus on failure of the exclusion assumption A\ref{assump:exclusion:gen}. In appendix \ref{sec:ExoExclu} we explain how to use our results to relax either or both assumptions.

We relax exclusion as follows.

\begin{assump}{A\ref{assump:exclusion:gen}$^\prime$}(Partial exclusion).
There are known constants $\delta_\ell \geq 0$ such that 
\[
	| \gamma_\ell | \leq \delta_\ell
\]
for all $\ell \in \{1,\ldots,L \}$.
\end{assump}

This kind of relaxation of the baseline instrumental variable assumptions was previously considered by \cite{Small2007} and \cite{ConleyHansenRossi2012}; also see \cite{AngristKrueger1994} and \cite{BoundJaegerBaker1995}. For continuous instruments one may want to standardize the instruments to have variance one, to make the magnitudes of the components in $\delta = (\delta_1,\ldots,\delta_L)$ comparable. We discuss interpretation of $\delta_\ell$ further in section 3.6 of \cite{MastenPoirierFFarxiv}; our empirical analysis in section \ref{sec:empirical} does not require us to select or calibrate this parameter. %

Under partial exclusion, the instruments may have a direct causal effect on outcomes. For sufficiently small values of the components in $\delta$, the model may nonetheless continue to be refuted. For sufficiently large values, however, the model will not be refuted. To characterize the falsification frontier, we begin by deriving the identified set for $\beta$ as a function of $\delta$. 

\begin{theorem}\label{thm:idset:homog:gen}
Suppose A\ref{assump:homog:relevance:gen}--A\ref{assump:exogeneity:gen} and A\ref{assump:exclusion:gen}$^\prime$ hold. Suppose the joint distribution of $(Y,X,Z)$ is known. Then
\begin{equation}
		\mathcal{B}(\delta) = \left\{ b \in \R^K : -\delta \leq \var(Z)^{-1}(\cov(Z,Y) - \cov(Z,X)b ) \leq \delta \right\}
\end{equation}
is the identified set for $\beta$. Here the inequalities are component-wise. The model is refuted if and only if this set is empty.
\end{theorem}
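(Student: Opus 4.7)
The plan is to derive an identification equation linking the candidate value $b$ to a candidate value of $\gamma$, apply the partial exclusion bound, and then argue the reverse direction by explicit construction of a valid structure.

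First, taking the covariance of both sides of $Y = X'\beta + Z'\gamma + U$ with $Z$ yields
\[
    \cov(Z,Y) = \cov(Z,X)\beta + \var(Z)\gamma + \cov(Z,U).
\]
Under A\ref{assump:exogeneity:gen} the last term vanishes, and by A\ref{assump:homog:nonsing:gen} the matrix $\var(Z)$ is invertible, so
\[
    \gamma = \var(Z)^{-1}\bigl(\cov(Z,Y) - \cov(Z,X)\beta\bigr).
\]
The relaxed exclusion assumption A\ref{assump:exclusion:gen}$^\prime$ then gives $-\delta \leq \var(Z)^{-1}\bigl(\cov(Z,Y) - \cov(Z,X)\beta\bigr) \leq \delta$ component-wise, so $\beta \in \mathcal{B}(\delta)$. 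This establishes that the identified set is contained in $\mathcal{B}(\delta)$.

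For the converse, I would show that every $b \in \mathcal{B}(\delta)$ is observationally equivalent to $\beta$. Given such a $b$, define
\[
    \gamma(b) = \var(Z)^{-1}\bigl(\cov(Z,Y) - \cov(Z,X)b\bigr)
    \quad\text{and}\quad
    \widetilde U = Y - X'b - Z'\gamma(b).
\]
By definition of $\mathcal{B}(\delta)$, $|\gamma_\ell(b)| \leq \delta_\ell$ for all $\ell$, so A\ref{assump:exclusion:gen}$^\prime$ holds for $(b,\gamma(b))$. A short covariance calculation gives $\cov(Z,\widetilde U) = \cov(Z,Y) - \cov(Z,X)b - \var(Z)\gamma(b) = 0$, so A\ref{assump:exogeneity:gen} holds. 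The relevance and nonsingularity assumptions A\ref{assump:homog:relevance:gen}--A\ref{assump:homog:nonsing:gen} are properties of $(X,Z)$ alone, hence unaffected. Thus $(b,\gamma(b),\widetilde U)$ together with the observed $(X,Z)$ defines a structure consistent with the model $\mathcal{M}(\delta)$ and generating the observed distribution of $(Y,X,Z)$, so $b$ lies in the identified set.

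Finally, the refutation claim is immediate from the identification equation: the model is consistent with the data under A\ref{assump:homog:relevance:gen}--A\ref{assump:exogeneity:gen} and A\ref{assump:exclusion:gen}$^\prime$ if and only if there exists \emph{some} $b$ for which the induced $\gamma(b)$ satisfies $|\gamma_\ell(b)| \leq \delta_\ell$ for all $\ell$, i.e.\ if and only if $\mathcal{B}(\delta) \neq \emptyset$. The main conceptual step is the second paragraph: the construction showing that the covariance constraint is the \emph{only} testable implication of the relaxed model, so that every $b$ satisfying it can be rationalized. No calculation here is especially delicate; the only subtlety is making sure that in the sufficiency direction one indeed only needs to match the first two (cross-)moments, which is exactly what the assumptions A\ref{assump:exogeneity:gen} and A\ref{assump:exclusion:gen}$^\prime$ restrict.
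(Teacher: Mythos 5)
Your proposal is correct and follows essentially the same route as the paper's proof: necessity via the covariance identity $\cov(Z,Y) = \cov(Z,X)\beta + \var(Z)\gamma$ under A\ref{assump:exogeneity:gen} and A\ref{assump:homog:nonsing:gen}, and sharpness via the explicit construction $\gamma(b) = \var(Z)^{-1}(\cov(Z,Y) - \cov(Z,X)b)$ and $\widetilde U = Y - X'b - Z'\gamma(b)$, checking that $\cov(Z,\widetilde U)=0$. The refutation claim likewise matches the paper's appeal to the definition of a sharp identified set.
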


The identified set $\mathcal{B}(\delta)$ depends on the data via two terms:
\[\label{eq:defOfPsiPi}
	\underset{(L \times 1)}{\psi} \equiv \var(Z)^{-1}\cov(Z,Y)
	\qquad \text{and} \qquad
	\underset{(L \times K)}{\Pi} \equiv \var(Z)^{-1}\cov(Z,X).
\]
$\psi$ is the reduced form regression of $Y$ on $Z$. $\Pi$ is the first stage of $X$ on $Z$. If we demeaned $(Y,X,Z)$ then we would have $\psi = \Exp(ZZ')^{-1} \Exp(ZY)$ and $\Pi = \Exp(ZZ')^{-1} \Exp(ZX')$. Theorem \ref{thm:idset:homog:gen} shows that the identified set is the intersection of $L$ pairs of parallel half-spaces in $\R^K$. When $\delta = 0$, this identified set becomes the intersection of $L$ hyperplanes in $\R^K$. In this case, $\beta$ is point identified when 
\[
	\cov(Z,Y) = \cov(Z,X) b
\]
for a unique $b \in \R^K$. If $\cov(Z,Y) \neq \cov(Z,X)b$ for all $b \in \R^K$, then the baseline model $\delta = 0$ is refuted.

Increasing the components of $\delta$ leads to a weakly larger identified set. Furthermore, there always exists a $\delta$ with large enough components so that $\mathcal{B}(\delta)$ is nonempty. We characterize the set of such $\delta$ below. Before that, we show that the identified set can be written as simple intersection bounds when there is a single endogenous variable.

\begin{corollary}\label{corr:K1identBetaSetLinearIV}
Suppose the assumptions of theorem \ref{thm:idset:homog:gen} hold. Suppose $K=1$. Then
\[
	\mathcal{B}(\delta) = \bigcap_{\ell=1}^L B_\ell(\delta_\ell)
\]
is the identified set for $\beta$, where
\begin{equation}\label{eq:K1identBetaSetLinearIV}
	B_\ell(\delta_\ell)
	= \begin{cases}
	\left[\dfrac{\psi_\ell}{\pi_\ell} - \dfrac{\delta_\ell}{|\pi_\ell|}, \dfrac{\psi_\ell}{\pi_\ell} + \dfrac{\delta_\ell}{|\pi_\ell|} \right] 
		&\text{ if } \pi_\ell \neq 0\\
		\R 
		&\text{ if } \pi_\ell = 0 \text{ and } 0 \in [\psi_\ell - \delta_\ell, \psi_\ell + \delta_\ell] \\
		\emptyset 
		&\text{ if } \pi_\ell = 0 \text{ and } 0 \notin [\psi_\ell - \delta_\ell, \psi_\ell + \delta_\ell].
	\end{cases}
\end{equation}
Here $\pi_\ell$ is the $\ell$th component of $\Pi$, which is an $L$-vector.
\end{corollary}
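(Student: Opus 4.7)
The plan is to specialize the identified set from Theorem \ref{thm:idset:homog:gen} to the scalar $K=1$ case and then reduce each of the $L$ coordinate-wise inequalities to one of the three forms given in \eqref{eq:K1identBetaSetLinearIV}. Since $K=1$, the matrix $\Pi$ collapses to the $L$-vector $\pi$ with components $\pi_\ell = [\var(Z)^{-1}\cov(Z,X)]_\ell$, and the defining condition of $\mathcal{B}(\delta)$ decouples into $L$ scalar conditions on $b \in \R$:
\[
	-\delta_\ell \;\leq\; \psi_\ell - \pi_\ell\, b \;\leq\; \delta_\ell, \qquad \ell = 1,\ldots,L.
\]
Hence $\mathcal{B}(\delta) = \bigcap_{\ell=1}^L B_\ell(\delta_\ell)$, where $B_\ell(\delta_\ell)$ is the set of $b\in\R$ satisfying the $\ell$th inequality. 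So the only work is to show that each $B_\ell(\delta_\ell)$ has the claimed form.

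Next I would split into cases on $\pi_\ell$. If $\pi_\ell \neq 0$, I would first rewrite the $\ell$th constraint as $\psi_\ell - \delta_\ell \leq \pi_\ell b \leq \psi_\ell + \delta_\ell$ and then divide through by $\pi_\ell$. When $\pi_\ell > 0$ the direction is preserved, giving directly $b \in [\psi_\ell/\pi_\ell - \delta_\ell/\pi_\ell,\; \psi_\ell/\pi_\ell + \delta_\ell/\pi_\ell]$; when $\pi_\ell < 0$ the inequalities flip, but after rearrangement the set is $[\psi_\ell/\pi_\ell + \delta_\ell/\pi_\ell,\; \psi_\ell/\pi_\ell - \delta_\ell/\pi_\ell]$. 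In both subcases the endpoints can be written uniformly as $\psi_\ell/\pi_\ell \pm \delta_\ell/|\pi_\ell|$, which matches the first branch of \eqref{eq:K1identBetaSetLinearIV}.

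For the case $\pi_\ell = 0$, the $\ell$th constraint no longer involves $b$ and becomes $-\delta_\ell \leq \psi_\ell \leq \delta_\ell$, i.e. $0 \in [\psi_\ell-\delta_\ell,\psi_\ell+\delta_\ell]$. If this holds, every $b \in \R$ satisfies the constraint and $B_\ell(\delta_\ell) = \R$; if it fails, no $b$ satisfies it and $B_\ell(\delta_\ell) = \emptyset$. This yields the second and third branches of \eqref{eq:K1identBetaSetLinearIV}.

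The argument is essentially a line-by-line case analysis, so there is no real obstacle; the only point worth being careful about is absorbing the sign of $\pi_\ell$ into $|\pi_\ell|$ so that the two $\pi_\ell \neq 0$ subcases produce the same closed-form expression, and explicitly handling $\pi_\ell = 0$ separately so that a degenerate (zero-first-stage) instrument contributes a half-plane condition $0 \in [\psi_\ell - \delta_\ell,\psi_\ell + \delta_\ell]$ rather than a bounded interval---which is the mechanism by which irrelevant instruments can still refute the model or enlarge $\mathcal{B}(\delta)$ as noted after A\ref{assump:homog:relevance:gen}.
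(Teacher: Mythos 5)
Your proposal is correct and follows essentially the same route as the paper: specialize Theorem \ref{thm:idset:homog:gen} to $K=1$, decouple the vector inequality into $L$ scalar constraints $\psi_\ell-\delta_\ell \leq \pi_\ell b \leq \psi_\ell+\delta_\ell$, and case-split on the sign of $\pi_\ell$ (the paper's own proof is just a terser version of the same argument). Your explicit handling of the sign flip via $|\pi_\ell|$ and of the degenerate $\pi_\ell=0$ case fills in exactly the details the paper leaves to the reader.
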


To interpret this result, first consider an instrument $Z_\ell$ with a zero first stage coefficient, $\pi_\ell = 0$. If $Z_\ell$ has a sufficiently large covariance with the outcome, so that $\psi_\ell \pm \delta_\ell$ does not contain zero, then the model is refuted. Furthermore, in this case falsification can be solely attributed to the assumption that $| \gamma_\ell | \leq \delta_\ell$. This is similar to what is sometimes called the `zero first-stage test' (for example, see \citealt{Slichter2014} and the references therein). When this covariance with the outcome is sufficiently small, however, $Z_\ell$ unsurprisingly has no falsifying or identifying power for $\beta$.

Next consider a relevant instrument $Z_\ell$; so $\pi_\ell \neq 0$. To interpret corollary \ref{corr:K1identBetaSetLinearIV} in this case, we use the following lemma.

\begin{lemma}\label{lemma:interpretingPsiOverPi}
Suppose $K=1$. Let $\widetilde{X}_\ell = (Z_1,\ldots,Z_{\ell-1},X,Z_{\ell+1},\ldots,Z_L)$. Let $e_\ell$ be the $L \times 1$ vector of zeros with a one in the $\ell$th component. Suppose $\pi_\ell \neq 0$. Suppose $\cov(Z, \widetilde{X}_\ell)$ is invertible. Then
\[
	\frac{\psi_\ell}{\pi_\ell} = e_\ell'\cov(Z, \widetilde{X}_\ell)^{-1} \cov(Z,Y).
\]
\end{lemma}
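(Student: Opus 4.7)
The plan is to exploit the fact that the inverse on the right-hand side involves a matrix that differs from $\var(Z)$ only in its $\ell$th column, which yields a clean rank-one representation amenable to Sherman--Morrison.

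First I would rewrite $\cov(Z,\widetilde{X}_\ell)$ in a factored form. The $k$th column of $\cov(Z,\widetilde{X}_\ell)$ equals $\cov(Z,Z_k) = \var(Z)\, e_k$ for $k \neq \ell$, and equals $\cov(Z,X) = \var(Z)\, \Pi$ for $k = \ell$. Define the $L\times L$ matrix $M$ whose $k$th column is $e_k$ for $k \neq \ell$ and whose $\ell$th column is $\Pi$. Then
\[
\cov(Z,\widetilde{X}_\ell) = \var(Z)\, M,
\qquad \text{so} \qquad
\cov(Z,\widetilde{X}_\ell)^{-1} = M^{-1}\var(Z)^{-1}.
\]
The invertibility of $\cov(Z,\widetilde{X}_\ell)$ together with A\ref{assump:homog:nonsing:gen} ensures $M$ is invertible.

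Next I would observe that $M = I_L + (\Pi - e_\ell)\, e_\ell'$, a rank-one perturbation of the identity. By the Sherman--Morrison formula,
\[
M^{-1} = I_L - \frac{(\Pi - e_\ell)\, e_\ell'}{1 + e_\ell'(\Pi - e_\ell)}
        = I_L - \frac{(\Pi - e_\ell)\, e_\ell'}{\pi_\ell},
\]
using that $e_\ell'\Pi = \pi_\ell$ and $\pi_\ell \neq 0$ by hypothesis. Premultiplying by $e_\ell'$,
\[
e_\ell' M^{-1} = e_\ell' - \frac{(\pi_\ell - 1)\, e_\ell'}{\pi_\ell} = \frac{1}{\pi_\ell}\, e_\ell'.
\]

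Finally, combining the two displays,
\[
e_\ell'\cov(Z,\widetilde{X}_\ell)^{-1}\cov(Z,Y)
= \frac{1}{\pi_\ell}\, e_\ell'\,\var(Z)^{-1}\cov(Z,Y)
= \frac{e_\ell'\psi}{\pi_\ell}
= \frac{\psi_\ell}{\pi_\ell},
\]
which proves the lemma. There is no real obstacle; the only thing to be careful about is verifying the denominator $1 + e_\ell'(\Pi - e_\ell) = \pi_\ell$ is nonzero, which is exactly the hypothesis $\pi_\ell \neq 0$ (and this is also why the first-stage relevance condition is what makes the factored form $\var(Z)\, M$ invertible).
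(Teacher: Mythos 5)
Your proof is correct, but it takes a genuinely different route from the paper's. The paper sets $\ell = 1$ without loss of generality, writes $\cov(Z,\widetilde{X}_1)$ in block form with $\cov(Z_1,X)$ in the top-left corner, extracts the first row of the inverse by block matrix inversion, and then recognizes the resulting ratio as $\cov(\widetilde{Z}_1,Y)/\cov(\widetilde{Z}_1,X)$ for the residualized instrument $\widetilde{Z}_1 = Z_1 - \cov(Z_1,Z_{-1})\var(Z_{-1})^{-1}Z_{-1}$, finishing with the partitioned regression formula. You instead factor $\cov(Z,\widetilde{X}_\ell) = \var(Z)\,M$ with $M = I_L + (\Pi - e_\ell)e_\ell'$ and apply Sherman--Morrison. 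I checked your algebra: the identity $e_\ell' M^{-1} = \pi_\ell^{-1} e_\ell'$ is right, and the Sherman--Morrison denominator $1 + e_\ell'(\Pi - e_\ell) = \pi_\ell$ is exactly the relevance hypothesis. Your observation here is sharper than you let on: since $\det(I_L + uv') = 1 + v'u$, invertibility of $\cov(Z,\widetilde{X}_\ell)$ given the sufficient-variation assumption on $\var(Z)$ is \emph{equivalent} to $\pi_\ell \neq 0$, so the two hypotheses of the lemma are not independent. Your route is shorter, treats all $\ell$ symmetrically without the permutation step, and isolates the scalar $\pi_\ell$ as the only nontrivial quantity in the inversion. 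What the paper's longer route buys is the econometric interpretation along the way: its intermediate expression displays $\psi_\ell/\pi_\ell$ explicitly as the IV estimand built from the partialled-out instrument $\widetilde{Z}_\ell$, which is the reading the surrounding text relies on and which the paper immediately reuses in the subsequent argument about 2SLS coefficients with and without controls.
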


This lemma shows that $\psi_\ell / \pi_\ell$ is the population 2SLS coefficient on $X$ using $Z_\ell$ as the excluded instrument and $Z_{-\ell}$ as controls. Thus the identified set $\mathcal{B}(\delta)$ is the intersection of intervals around these 2SLS coefficients using one relevant instrument at a time and controlling for the rest.

Finally, consider the baseline case where $\delta = 0$. Corollary \ref{corr:K1identBetaSetLinearIV} implies that $\mathcal{B}(0)$ is nonempty if and only if
\[
	\frac{\psi_m}{\pi_m} = \frac{\psi_\ell}{\pi_\ell}
\]
for any $m, \ell \in \{ 1,\ldots, L \}$ with $\pi_m, \pi_\ell \neq 0$. Moreover, in this case $\mathcal{B}(0)$ is a singleton equal to this common value. In this case---when the baseline model is not refuted---we also have
\begin{equation}\label{eq:2SLSwithAndWithoutControls}
	\frac{\psi_\ell}{\pi_\ell} = \frac{\cov(Y,Z_\ell)}{\cov(X,Z_\ell)}
\end{equation}
for all $\ell \in \{1,\ldots,L\}$. That is, $\psi_\ell / \pi_\ell$ equals the population 2SLS coefficient on $X$ using $Z_\ell$ as an instrument and \emph{not} including $Z_{-\ell}$ as controls. This equality of single instrument 2SLS coefficients with and without controls for the other instruments is an alternative characterization of the classic overidentifying conditions from proposition \ref{prop:testableImplicationsOfLinearIV}.

\subsection{The Falsification Frontier}\label{sec:linearModelFF}

So far we have characterized the identified set for $\beta$ given a fixed value of $\delta$, the upper bound on the violation of the exclusion restriction. We now consider the possibility that this identified set is empty when $\delta = 0$, so that the baseline model is falsified. Our next result characterizes the falsification frontier, the minimal set of $\delta$'s which lead to a non-empty identified set. Here we focus on the single endogenous regressor case. We extend this result to multiple endogenous regressors in section \ref{sec:linearModelGeneralK}.

\begin{proposition}\label{prop:K1FFhomogTrt}
Suppose A\ref{assump:homog:relevance:gen}--A\ref{assump:exogeneity:gen} hold. Suppose the joint distribution of $(Y,X,Z)$ is known. Suppose $K=1$. Then the falsification frontier is the set
\begin{equation}\label{eq:linearIVgeneralFF}
	\text{FF}
	= \left\{ \delta \in \R^L_{\geq 0}: \delta_\ell = | \psi_\ell - b \pi_\ell |, \ \ell=1,\ldots,L, \ b \in \left[\min_{\ell=1,\ldots,L:\pi_\ell \neq 0} \frac{\psi_\ell}{\pi_\ell}, \max_{\ell=1,\ldots,L:\pi_\ell \neq 0} \frac{\psi_\ell}{\pi_\ell}\right]\right\}.
\end{equation}
\end{proposition}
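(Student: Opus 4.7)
The approach is to leverage Corollary \ref{corr:K1identBetaSetLinearIV} to reformulate membership in $\mathcal{D}_{\text{nf}}$ in a way that handles both relevant and irrelevant instruments uniformly. The key observation is that, regardless of whether $\pi_\ell=0$, we have $b \in B_\ell(\delta_\ell)$ if and only if $|\psi_\ell - b\pi_\ell| \leq \delta_\ell$. Hence $\delta \in \mathcal{D}_{\text{nf}}$ iff there exists $b \in \R$ with $|\psi_\ell - b\pi_\ell| \leq \delta_\ell$ for every $\ell$, and the set in \eqref{eq:linearIVgeneralFF} is exactly $\{\delta(b) : b \in [b_{\min}, b_{\max}]\}$, where $\delta(b)_\ell := |\psi_\ell - b\pi_\ell|$ and $b_{\min}, b_{\max}$ are the min and max of $\psi_\ell/\pi_\ell$ over relevant $\ell$. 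Relevance (A\ref{assump:homog:relevance:gen}) with $K=1$ ensures $\cov(Z,X) \neq 0$, so at least one $\pi_\ell \neq 0$ and $b_{\min}, b_{\max}$ are well-defined.

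First, I would show that every candidate $\delta(b)$ with $b \in [b_{\min}, b_{\max}]$ lies in the falsification frontier. Non-refutation is immediate since $b$ itself satisfies the constraints, so $b \in \mathcal{B}(\delta(b))$. For the Pareto-minimality condition, suppose $\delta' \leq \delta(b)$ componentwise with strict inequality at some coordinate $m$. The choice $b'=b$ fails coordinate $m$, so any feasible $b'$ must differ from $b$. If $b' > b$, choose $\ell^*$ attaining $b_{\min}$; then $\psi_{\ell^*} - b'\pi_{\ell^*} = \pi_{\ell^*}(b_{\min} - b')$ and $b_{\min} \leq b < b'$ give $|\psi_{\ell^*} - b'\pi_{\ell^*}| > |\psi_{\ell^*} - b\pi_{\ell^*}| = \delta(b)_{\ell^*} \geq \delta'_{\ell^*}$, contradicting feasibility; the case $b' < b$ is symmetric using $\ell$ attaining $b_{\max}$. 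Hence $\mathcal{B}(\delta') = \emptyset$, and $\delta(b)$ is on the frontier.

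Conversely, let $\delta \in \text{FF}$ and pick any $b \in \mathcal{B}(\delta)$. Then $\delta(b) \leq \delta$ componentwise and $b \in \mathcal{B}(\delta(b))$, so the frontier condition forces $\delta(b) = \delta$. To finish, I would verify $b \in [b_{\min}, b_{\max}]$. If instead $b < b_{\min}$, then for each relevant $\ell$, writing $c_\ell = \psi_\ell/\pi_\ell \geq b_{\min} > b$ yields $\delta(b_{\min})_\ell = |\pi_\ell|(c_\ell - b_{\min}) < |\pi_\ell|(c_\ell - b) = \delta(b)_\ell = \delta_\ell$, while for irrelevant $\ell$ both sides equal $|\psi_\ell|$. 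Since there is at least one relevant $\ell$, $\delta(b_{\min}) < \delta$ strictly, yet $b_{\min} \in \mathcal{B}(\delta(b_{\min}))$, contradicting $\delta \in \text{FF}$. The case $b > b_{\max}$ is analogous, completing the characterization.

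The main subtlety is the handling of irrelevant instruments: they contribute a fixed component $|\psi_\ell|$ to $\delta(b)$ regardless of $b$, so they never produce Pareto trade-offs along $b$ and simply pin down those coordinates of any frontier point, consistent with \eqref{eq:linearIVgeneralFF}. Beyond that, the argument is essentially a one-dimensional sign calculation driven by the fact that $b \mapsto |\psi_\ell - b\pi_\ell|$ is a V-shaped function of $b$ with vertex at $\psi_\ell/\pi_\ell$, so the Pareto-minimal envelope of these functions is traced out precisely as $b$ ranges over the interval spanned by their vertices.
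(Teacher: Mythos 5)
Your proof is correct. Every step checks out: the uniform reformulation $b \in B_\ell(\delta_\ell) \Leftrightarrow |\psi_\ell - b\pi_\ell| \leq \delta_\ell$ is exactly the componentwise characterization in theorem \ref{thm:idset:homog:gen} and correctly absorbs the irrelevant-instrument cases of corollary \ref{corr:K1identBetaSetLinearIV}; the trichotomy on $b'$ versus $b$ with the endpoints $b_{\min}, b_{\max}$ as infeasibility witnesses rules out every candidate in $\mathcal{B}(\delta')$; and the converse direction correctly forces $\delta = \delta(b)$ and then places $b$ in the interval by exhibiting the strictly smaller non-falsified point $\delta(b_{\min})$ or $\delta(b_{\max})$.

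Your route is organized differently from the paper's. The paper's proof leans on a separate workhorse lemma (lemma \ref{lemma:FFisSingleton}) establishing that $\mathcal{B}(\delta(b)) = \{b\}$ exactly, and then proves minimality by showing that this unique element $b$ drops out of $\mathcal{B}(\delta')$ when any coordinate is tightened; its reverse inclusion is a contrapositive with three cases (empty identified set, identified set inside the interval, identified set containing a point outside the interval). You dispense with the singleton lemma entirely and argue minimality by directly eliminating all of $\R$ via the sign of $b' - b$, and you prove the reverse inclusion directly by picking any $b \in \mathcal{B}(\delta)$ rather than by contrapositive. Your version is more self-contained and arguably shorter; what the paper's organization buys is that lemma \ref{lemma:FFisSingleton} is reused verbatim in the proof of theorem \ref{thm:identSetOnFFhomogTrt} to show the FAS is a union of singletons. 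Note that your trichotomy argument, applied with $\delta' = \delta(b)$ itself, actually yields that singleton property as a free byproduct (for $b' \neq b$ the witness coordinate gives $|\psi_{\ell^*} - b'\pi_{\ell^*}| > \delta(b)_{\ell^*}$ strictly), so nothing downstream is lost if one makes that observation explicit.
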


In the proof we show that this set satisfies our definition \ref{def:abstractFF} of the falsification frontier. Specifically: Any $\delta \in \text{FF}$ maps to a non-empty identified set, and strengthening any of the assumptions for a given $\delta \in \text{FF}$ leads to an empty identified set.

If we knew a priori that some instruments are valid, then we could obtain the falsification frontier for relaxing the potentially invalid instrument assumptions by simply setting $\delta_\ell = 0$ for the valid instruments $\ell$.

To better understand proposition \ref{prop:K1FFhomogTrt}, consider the two instrument case. In this case, the following corollary characterizes the falsification frontier.

\begin{corollary}\label{corr:K1L2FFhomogTrt}
Suppose A\ref{assump:homog:relevance:gen}--A\ref{assump:exogeneity:gen} hold. Suppose the joint distribution of $(Y,X,Z)$ is known. Suppose $K=1$ and $L=2$. Suppose $\pi_1 \neq 0$ and $\pi_2 \neq 0$. Then the falsification frontier is the set
\[
	\text{FF} = \left\{ (\delta_1,\delta_2) \in \R_{\geq 0}^2 : \delta_2 = \left| \frac{\psi_1}{\pi_1} - \frac{\psi_2}{\pi_2} \right| | \pi_2 | - \left| \frac{\pi_2}{\pi_1} \right| \delta_1 \right\}.
\]
\end{corollary}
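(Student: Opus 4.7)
The plan is to specialize proposition \ref{prop:K1FFhomogTrt} to the case $L=2$ with both first-stage coefficients nonzero, and then eliminate the scalar parameter $b$ that indexes the frontier to obtain an explicit relationship between $\delta_1$ and $\delta_2$. Since both $\pi_1,\pi_2 \neq 0$, proposition \ref{prop:K1FFhomogTrt} says the frontier is parametrized by $b \in [b_{\min}, b_{\max}]$ with $b_{\min} = \min(\psi_1/\pi_1, \psi_2/\pi_2)$ and $b_{\max} = \max(\psi_1/\pi_1, \psi_2/\pi_2)$, via $\delta_\ell = |\psi_\ell - b\pi_\ell| = |\pi_\ell|\cdot|b - \psi_\ell/\pi_\ell|$ for $\ell = 1,2$.

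First, I would assume without loss of generality that $\psi_1/\pi_1 \leq \psi_2/\pi_2$; the reverse ordering is handled by symmetry since the target formula is symmetric in the indices through the absolute values. Under this ordering, any $b$ in the admissible interval satisfies $b - \psi_1/\pi_1 \geq 0$ and $\psi_2/\pi_2 - b \geq 0$, which lets me drop the absolute values to get $\delta_1 = |\pi_1|(b - \psi_1/\pi_1)$ and $\delta_2 = |\pi_2|(\psi_2/\pi_2 - b)$.

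Next, I would solve the first equation for $b$, obtaining $b = \psi_1/\pi_1 + \delta_1/|\pi_1|$, and substitute this into the second equation. After simplification, this gives
\[
\delta_2 = |\pi_2|\left(\frac{\psi_2}{\pi_2} - \frac{\psi_1}{\pi_1}\right) - \left|\frac{\pi_2}{\pi_1}\right|\delta_1 = \left|\frac{\psi_1}{\pi_1} - \frac{\psi_2}{\pi_2}\right||\pi_2| - \left|\frac{\pi_2}{\pi_1}\right|\delta_1,
\]
where the last step uses the WLOG sign to replace the signed difference with its absolute value. This is precisely the claimed formula. Finally, I would confirm that, as $b$ ranges over $[b_{\min}, b_{\max}]$, $\delta_1$ ranges over $[0, |\pi_1|\cdot|\psi_1/\pi_1 - \psi_2/\pi_2|]$ and $\delta_2$ correspondingly ranges over $[|\pi_2|\cdot|\psi_1/\pi_1 - \psi_2/\pi_2|, 0]$, so the resulting set of pairs is exactly the line segment in $\R_{\geq 0}^2$ traced out by the equation, ensuring $\delta \in \R_{\geq 0}^2$ is automatic along the frontier.

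There is essentially no substantive obstacle here: the argument is a parameter-elimination computation once proposition \ref{prop:K1FFhomogTrt} is invoked. The only point requiring care is the WLOG ordering: one must verify that the resulting expression is the same regardless of which of $\psi_1/\pi_1$ or $\psi_2/\pi_2$ is larger, which follows because the formula depends on the ordering only through $|\psi_1/\pi_1 - \psi_2/\pi_2|$, an absolute value invariant to the swap.
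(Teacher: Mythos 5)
Your proposal is correct and follows essentially the same route as the paper: both specialize Proposition \ref{prop:K1FFhomogTrt} to $L=2$ with $\pi_1,\pi_2\neq 0$ and identify the frontier as the line segment between the intercepts $(|\psi_1-(\psi_2/\pi_2)\pi_1|,0)$ and $(0,|\psi_2-(\psi_1/\pi_1)\pi_2|)$, you simply making the parameter elimination of $b$ explicit where the paper states the equation of the segment directly.
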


Thus, in the two instrument case, the falsification frontier is simply the line with horizontal intercept $(\delta_1^*,0)$ and vertical intercept $(0,\delta_2^*)$, where
\[
	\delta_1^* = \left| \frac{\psi_1}{\pi_1} - \frac{\psi_2}{\pi_2} \right| | \pi_1 | 
	\qquad \text{and} \qquad
	\delta_2^* = \left| \frac{\psi_1}{\pi_1} - \frac{\psi_2}{\pi_2} \right| | \pi_2 | .
\]
The classical overidentifying restrictions (proposition \ref{prop:testableImplicationsOfLinearIV}) state that the baseline model is refuted if and only if $| \psi_1/\pi_1 - \psi_2/\pi_2 |$ is nonzero. A key aspect of our analysis is that not all refuted models are equivalent. Specifically, two dgps can have the same value of this difference but have different falsification frontiers. 
\begin{figure}[t]
\centering
\includegraphics[width=100mm]{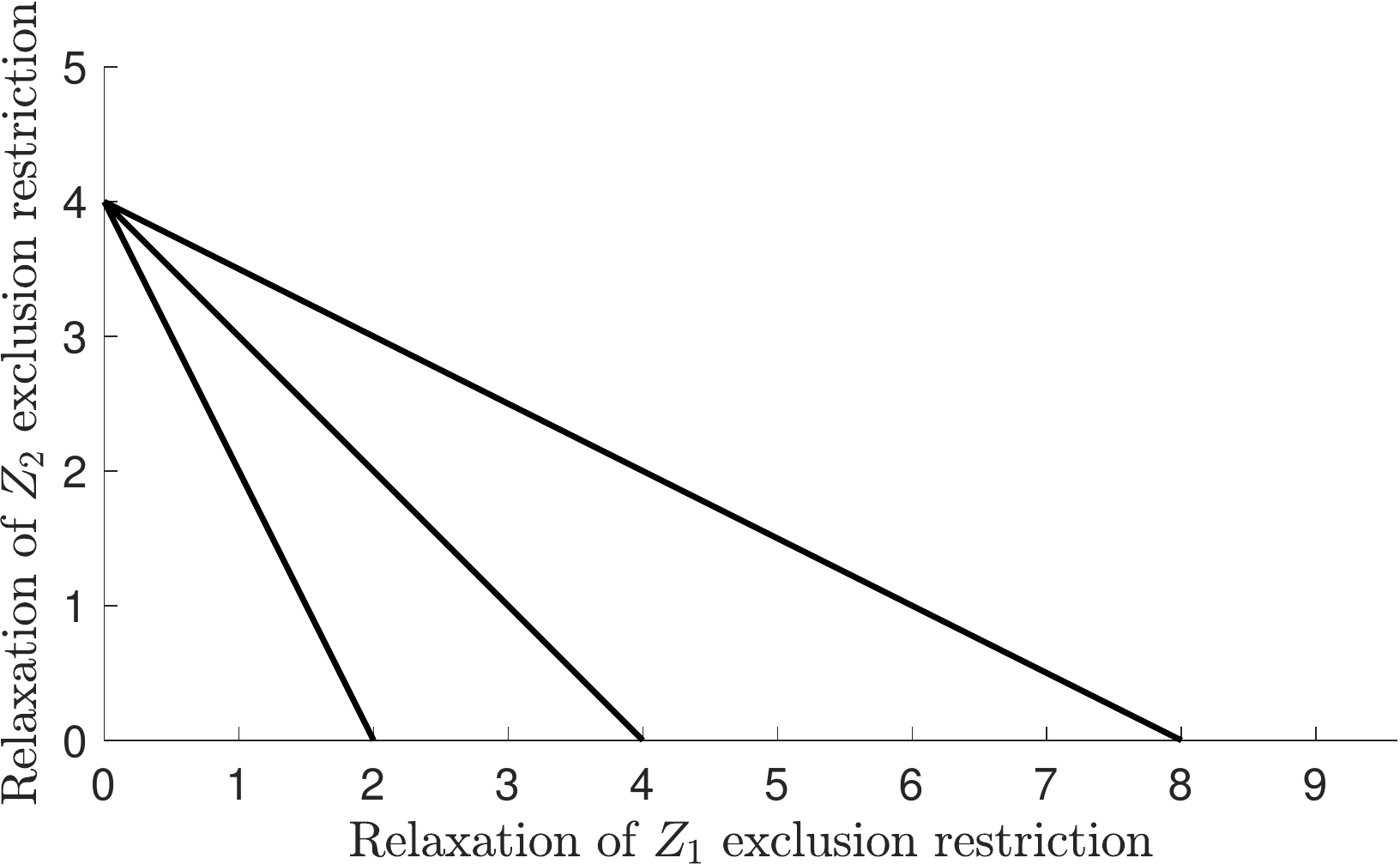}
\caption{Three example falsification frontiers. Middle: Instruments are equally strong. Right: $Z_1$ is two times stronger than $Z_2$. Left: $Z_1$ is half as strong as $Z_2$.}
\label{FFlinearIVexample}
\end{figure}
Figure \ref{FFlinearIVexample} illustrates this point. It shows three example falsification frontiers, corresponding to three different distributions of $(Y,X,Z_1,Z_2)$. All three dgps have
\[
	\left| \frac{\psi_1}{\pi_1} - \frac{\psi_2}{\pi_2} \right| = 4.
\]
The relative strength of the two instruments differs across the dgps, however. Specifically, we set $| \pi_1 / \pi_2 |$ to be either $0.5$, $1$, or $2$. That is, the middle dgp has two instruments with exactly the same first stage coefficient. Hence there is a one-to-one trade off in the falsification frontier. If we weaken $Z_1$ relative to $Z_2$, however, the falsification frontier rotates inwards as the horizontal intercept $\delta_1^*$ gets smaller. Conversely, if we strengthen $Z_1$ relative to $Z_2$, the falsification frontier rotates outwards, as the horizontal intercept $\delta_1^*$ gets larger. That is, a false baseline model is harder to save by relaxing instrument exclusion for a strong instrument, relative to a weak one.

\subsection{The Falsification Adaptive Set}

Thus far we have characterized the identified set for $\beta$ given any assumption $\delta \in \R_{\geq 0}^L$ as well as the falsification frontier. Next we characterize the falsification adaptive set, which is the identified set for $\beta$ under the assumption that one of the points on the falsification frontier is true. Here we again focus on the single endogenous regressor case. We generalize to multiple endogenous regressors in section \ref{sec:linearModelGeneralK}.

\begin{theorem}\label{thm:identSetOnFFhomogTrt}
Suppose A\ref{assump:homog:relevance:gen}--A\ref{assump:exogeneity:gen} hold. Suppose the joint distribution of $(Y,X,Z)$ is known. Suppose $K=1$. Then
\begin{equation}\label{eq:mainFAS}
	\bigcup_{\delta \in \text{FF}} \mathcal{B}(\delta) = \left[\min_{\ell=1,\ldots,L:\pi_\ell \neq 0} \frac{\psi_\ell}{\pi_\ell}, \ \max_{\ell=1,\ldots,L:\pi_\ell \neq 0} \frac{\psi_\ell}{\pi_\ell}\right]
\end{equation}
is the falsification adaptive set.
\end{theorem}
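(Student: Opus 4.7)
The plan is to exploit the parameterization of the falsification frontier furnished by Proposition \ref{prop:K1FFhomogTrt}: every $\delta \in \text{FF}$ has the form $\delta_\ell = |\psi_\ell - b\pi_\ell|$ for some $b$ in the interval $I := [\min_{\ell:\pi_\ell\neq 0}\psi_\ell/\pi_\ell,\ \max_{\ell:\pi_\ell\neq 0}\psi_\ell/\pi_\ell]$, and conversely every such $b$ produces a point on the frontier. Reparameterizing the union $\bigcup_{\delta \in \text{FF}}\mathcal{B}(\delta)$ as $\bigcup_{b \in I}\mathcal{B}(\delta(b))$, it suffices to show that $\mathcal{B}(\delta(b)) = \{b\}$ for every $b \in I$. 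The right-hand side of \eqref{eq:mainFAS} is exactly $I$, so this reduction gives the theorem.

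\textbf{Key identity and inclusion $b \in \mathcal{B}(\delta(b))$.} For each relevant instrument, $\delta_\ell(b)/|\pi_\ell| = |\psi_\ell/\pi_\ell - b|$, so the interval $B_\ell(\delta_\ell(b))$ from Corollary \ref{corr:K1identBetaSetLinearIV} is centered at $\psi_\ell/\pi_\ell$ with half-width $|\psi_\ell/\pi_\ell - b|$, which always has $b$ as one of its endpoints. For irrelevant instruments ($\pi_\ell = 0$), $\delta_\ell(b) = |\psi_\ell|$, so $0 \in [\psi_\ell - \delta_\ell(b), \psi_\ell + \delta_\ell(b)]$ trivially and $B_\ell(\delta_\ell(b)) = \R$ contributes no constraint. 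Intersecting over $\ell$ then gives $b \in \mathcal{B}(\delta(b))$.

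\textbf{The squeeze to a singleton.} The main step is showing $\mathcal{B}(\delta(b)) \subseteq \{b\}$. Let $\ell^*$ and $m^*$ achieve the minimum and maximum of $\psi_\ell/\pi_\ell$ over relevant $\ell$. For any $c \in \mathcal{B}(\delta(b))$, membership in $B_{\ell^*}$ and $B_{m^*}$ together with $\psi_{\ell^*}/\pi_{\ell^*} \leq b \leq \psi_{m^*}/\pi_{m^*}$ forces $c$ to lie in the intervals $[\psi_{\ell^*}/\pi_{\ell^*} - (b - \psi_{\ell^*}/\pi_{\ell^*}),\ b]$ and $[b,\ \psi_{m^*}/\pi_{m^*} + (\psi_{m^*}/\pi_{m^*} - b)]$ respectively; their intersection is $\{b\}$. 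Putting the pieces together, $\bigcup_{b \in I} \mathcal{B}(\delta(b)) = \bigcup_{b \in I} \{b\} = I$, which is \eqref{eq:mainFAS}.

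\textbf{Anticipated obstacle.} The only delicate step is the squeezing argument: it relies crucially on the fact that the parameterization interval in Proposition \ref{prop:K1FFhomogTrt} is precisely $I$ rather than anything larger, since the endpoints of $I$ (the achievers $\ell^*, m^*$) are what collapse the two-sided bound to a point. Handling degenerate cases (e.g., when only one instrument is relevant, or when $\min = \max$ so the baseline model is already non-refuted and $I$ degenerates to a singleton matching $\mathcal{B}(0)$) should follow from the same formulas without extra work.
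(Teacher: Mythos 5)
Your proposal is correct and follows essentially the same route as the paper: it invokes the parameterization of the frontier from Proposition \ref{prop:K1FFhomogTrt} and then establishes that $\mathcal{B}(\delta(b))=\{b\}$ for each $b$ in the interval, which is precisely the paper's Lemma \ref{lemma:FFisSingleton} (the paper intersects all $L$ intervals and notes that both orientations $[b,\cdot]$ and $[\cdot,b]$ occur, while you isolate the two extremal instruments $\ell^*,m^*$ — a cosmetic difference). The handling of irrelevant instruments and degenerate cases also matches the paper's treatment.
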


We first sketch the proof of this result and then discuss its implications. It follows from two main steps: First, the identified set $\mathcal{B}(\delta)$ is a singleton for any $\delta \in \text{FF}$ (see lemma \ref{lemma:FFisSingleton} in the appendix). Second, each of these singleton sets corresponds to an element in the interval on the right hand side of equation \eqref{eq:mainFAS} (follows using proposition \ref{prop:K1FFhomogTrt}). Thus we obtain the entire interval by taking the union over all of these singletons.

One of our main recommendations is that researchers report the falsification adaptive set. Theorem \ref{thm:identSetOnFFhomogTrt} shows that, in the classical linear model we consider here, this set has an exceptionally simple form. Most importantly, no $\delta$'s appear on the right hand side of equation \eqref{eq:mainFAS}. This implies that \emph{we can obtain the falsification adaptive set without pre-computing the falsification frontier or selecting any sensitivity parameters}. Furthermore, it is very simple to compute, since it just requires running $L$ different 2SLS regressions.
In appendix \ref{sec:2SLSvsFAS} we show that the baseline 2SLS estimand using all instruments does not have to be inside the falsification adaptive set. In fact, the baseline 2SLS estimand can be arbitrarily far from the FAS.

In finite samples, researchers can present sample analogs or bias-corrected versions of equation \eqref{eq:mainFAS}, along with corresponding confidence sets (e.g., \citealt{ChernozhukovLeeRosen2013}). We further discuss finite sample practice in section \ref{subsec:FASestimationLinearIV}.

In this model, we can also immediately see how this set adapts to falsification of the baseline model. When the baseline model is not false, $\psi_m / \pi_m = \psi_\ell / \pi_\ell$ for all $m, \ell \in \{1,\ldots,L\}$ with nonzero $\pi_m$ and $\pi_\ell$. In this case, the falsification adaptive set collapses to the singleton equal to the common value. This is the same point estimand researchers would usually present when their baseline model is not refuted. As the baseline model becomes more refuted, the values of $\psi_\ell/\pi_\ell$ become more different, and the falsification adaptive set expands. Thus the size of this set reflects the severity of baseline falsification.

\begin{figure}[t]
\centering
\includegraphics[width=50mm]{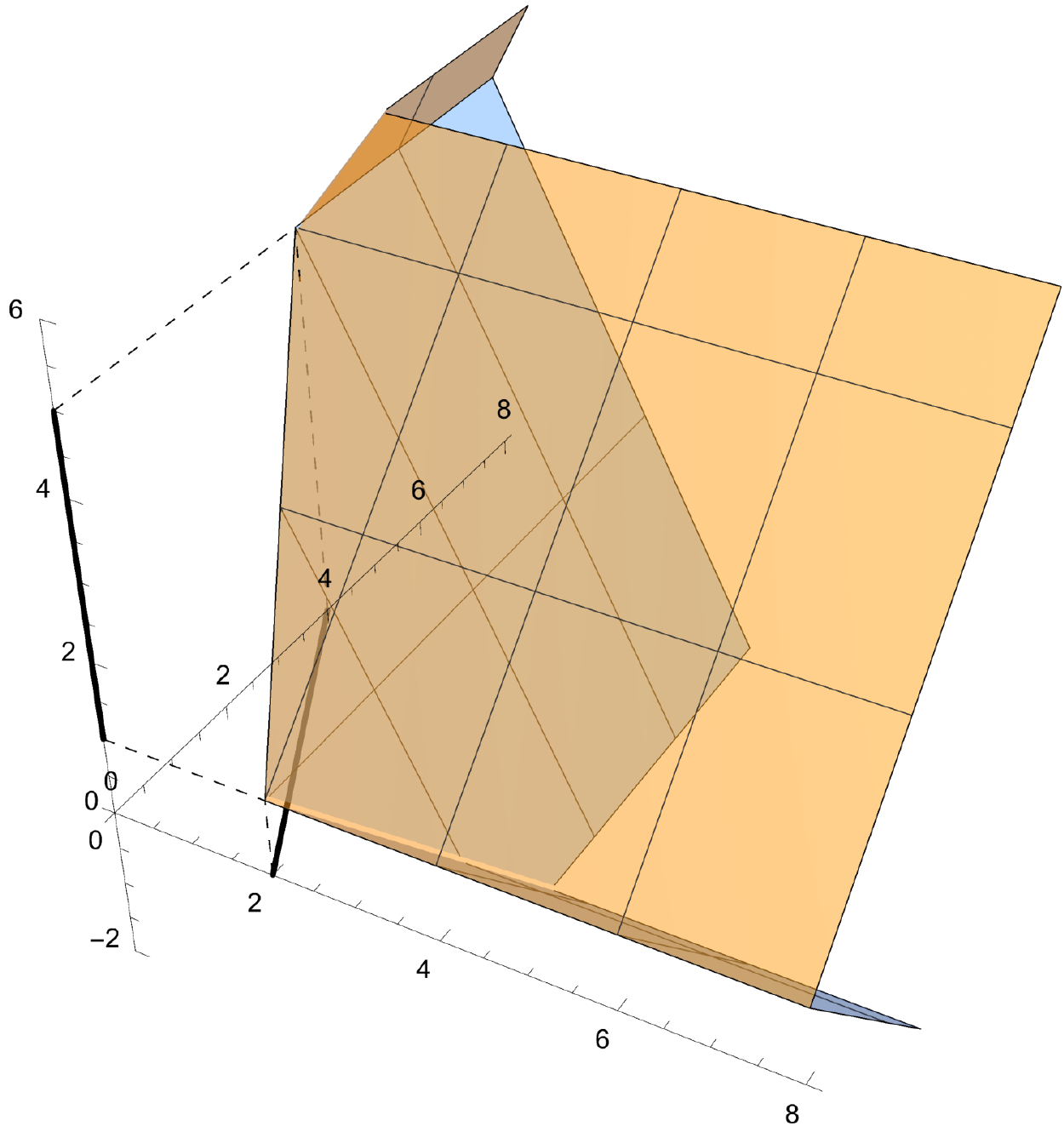}
\hspace{3mm}
\includegraphics[width=50mm]{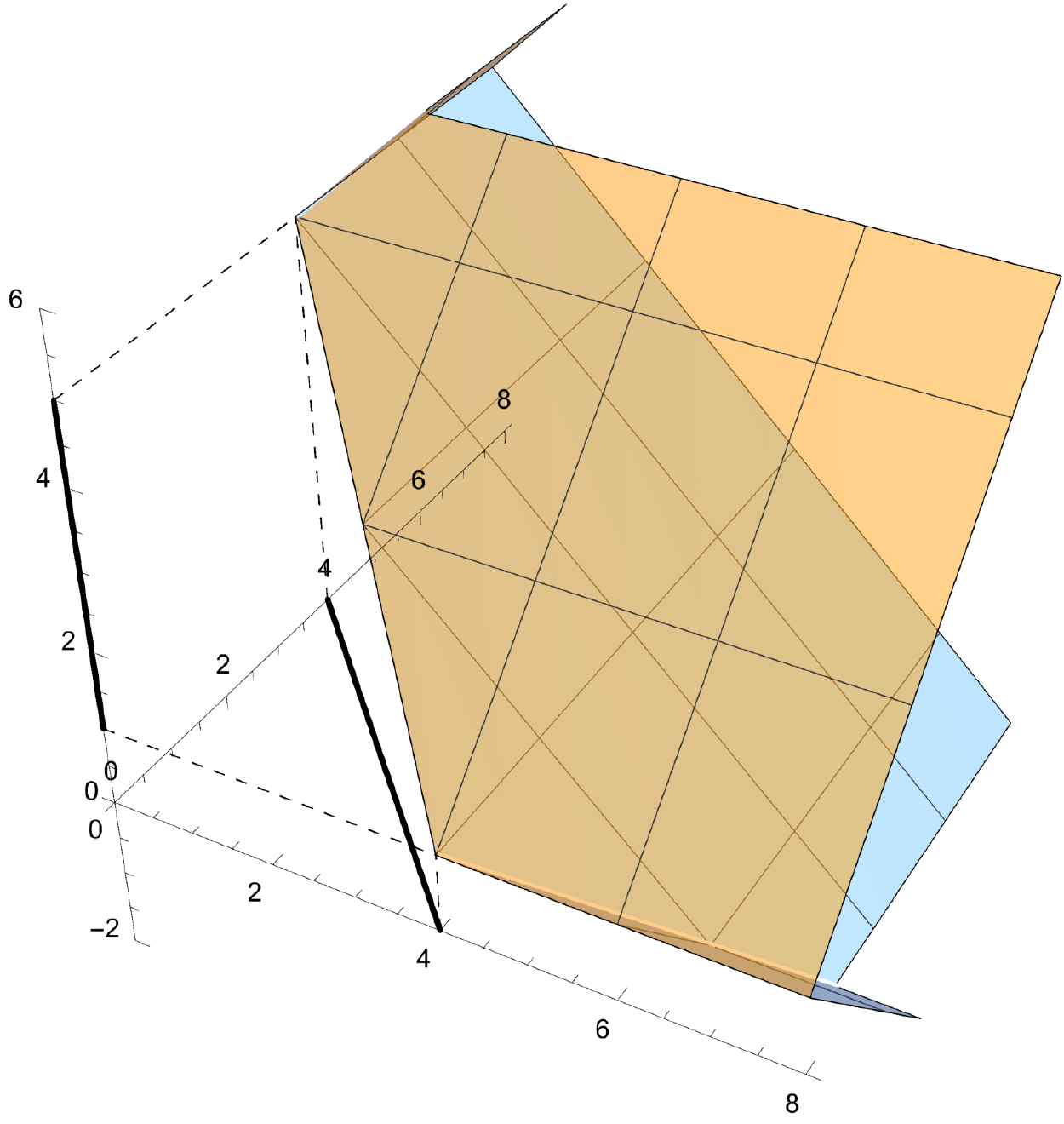}
\hspace{3mm}
\includegraphics[width=50mm]{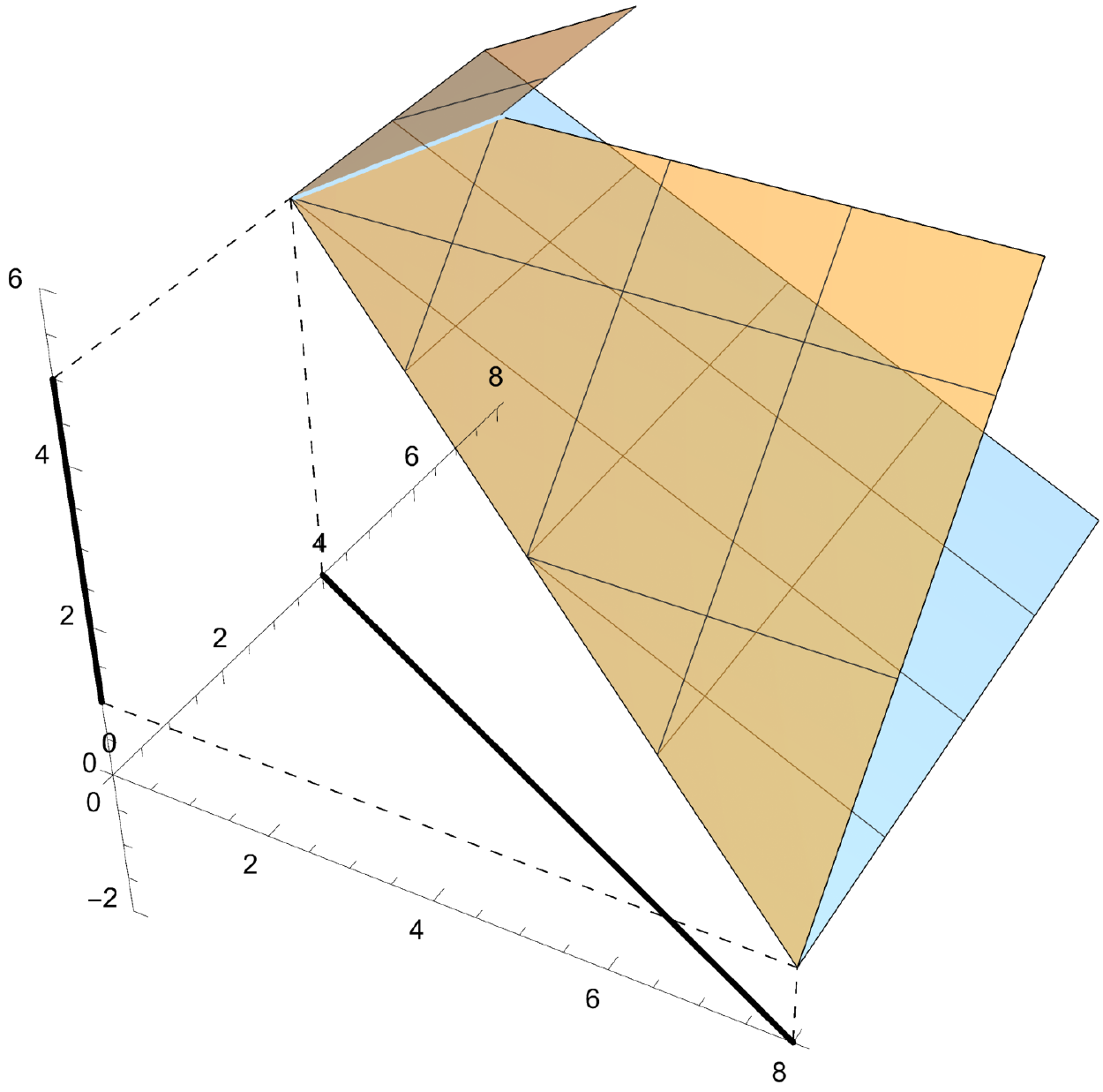} \\
\includegraphics[width=50mm]{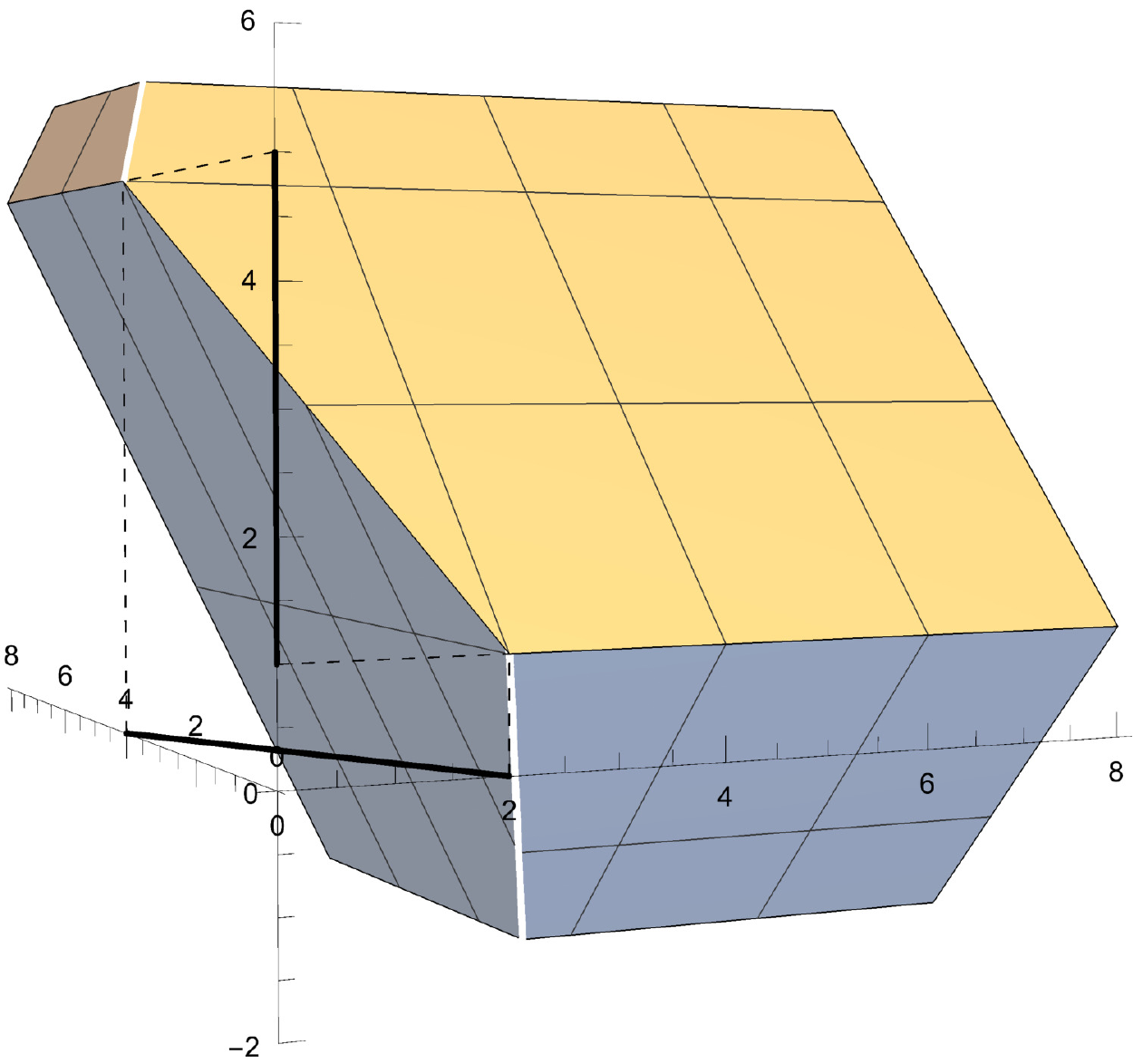}
\hspace{3mm}
\includegraphics[width=50mm]{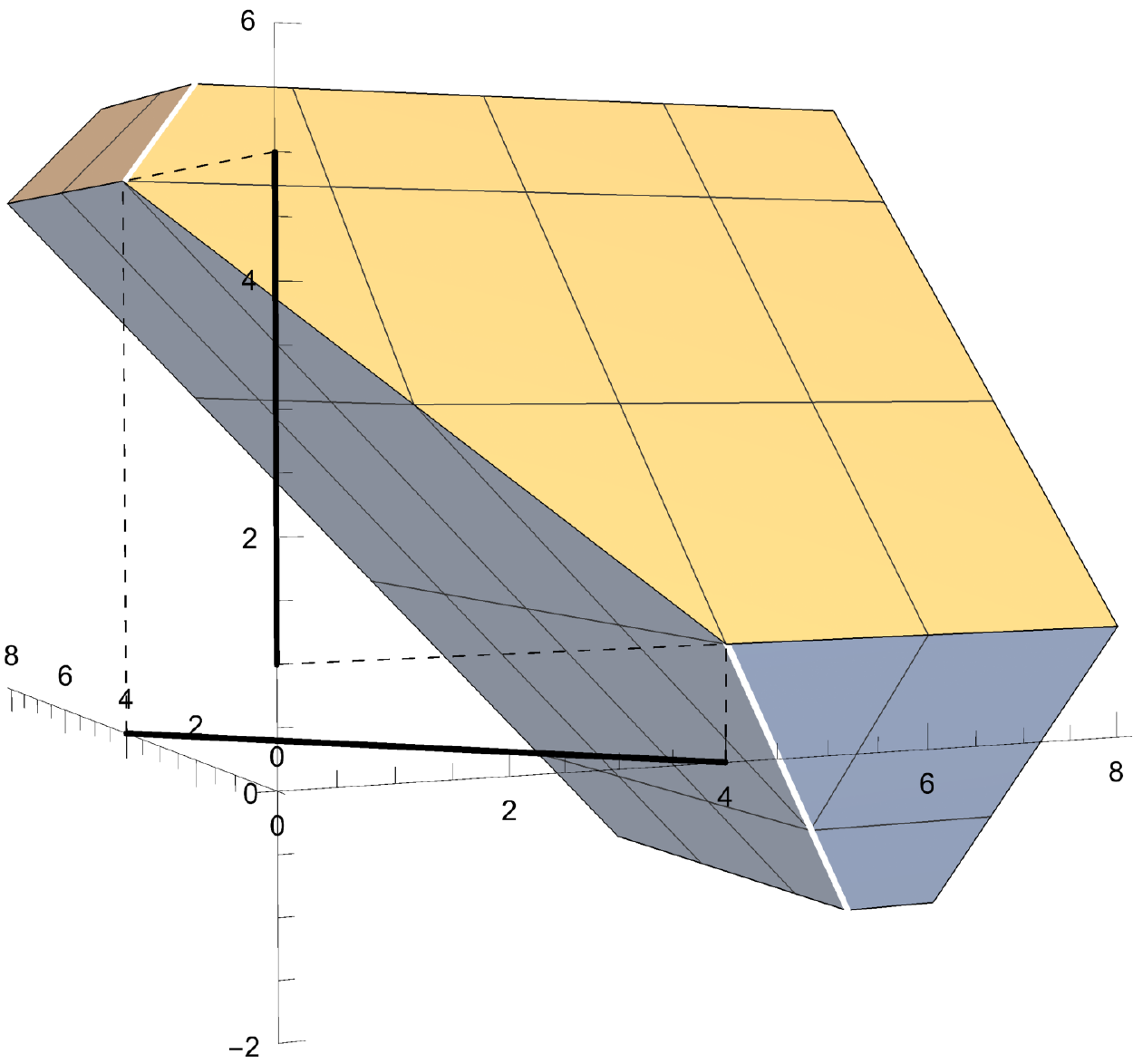}
\hspace{3mm}
\includegraphics[width=50mm]{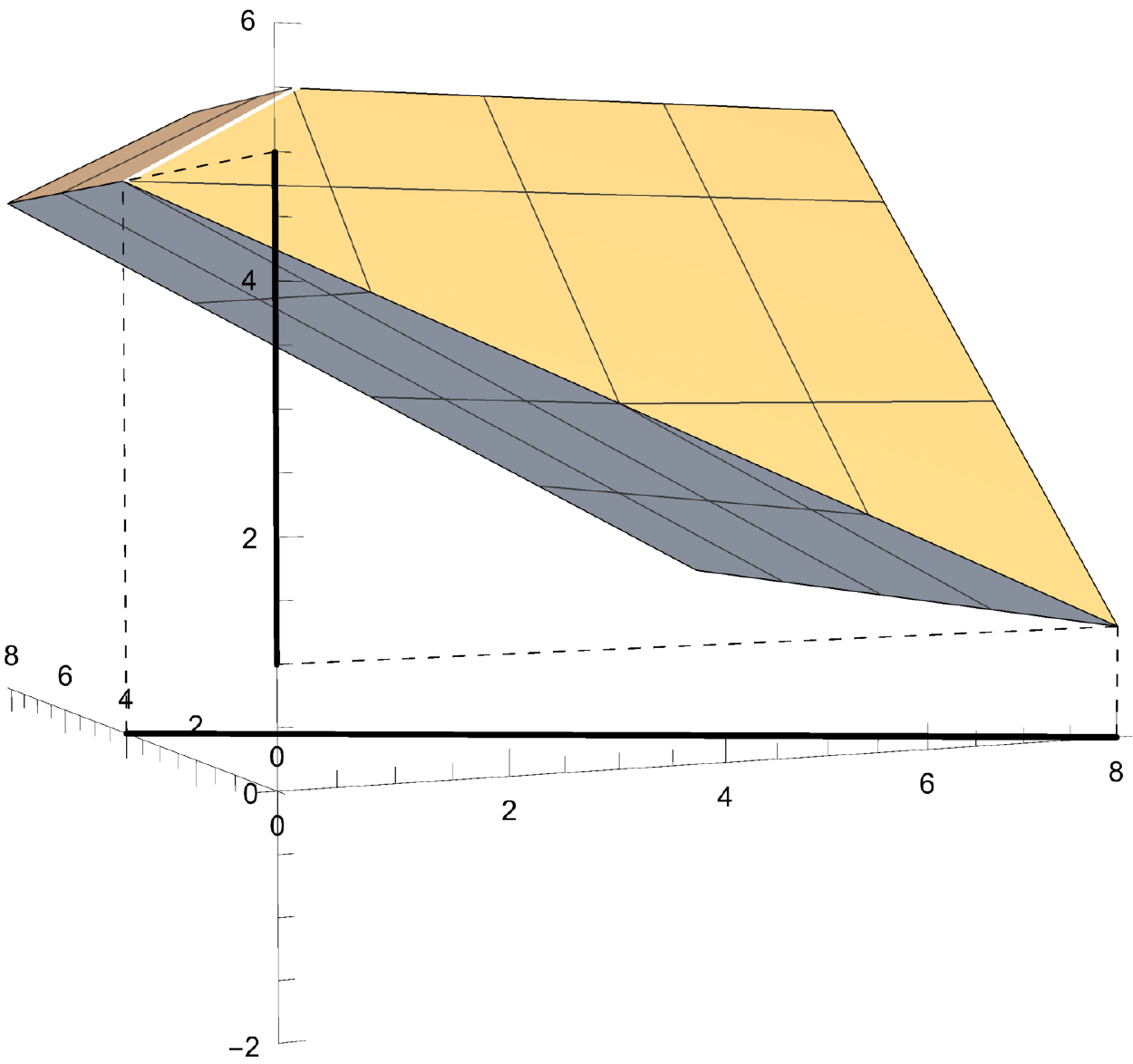}
\caption{These plots simultaneously show the three key objects in our analysis: The identified set $\mathcal{B}(\delta)$, the falsification frontier, and the falsification adaptive set. See the text for a full description.}
\label{fig:FFtwoIV_full_illustration}
\end{figure}

To show how all of the concepts we have discussed fit together, we return to the earlier two instrument numerical illustration discussed earlier. Figure \ref{fig:FFtwoIV_full_illustration} extends figure \ref{FFlinearIVexample} by adding a third dimension: Values of $\beta$. Specifically, each column in figure \ref{fig:FFtwoIV_full_illustration} is a different dgp. Within a column, the top and bottom row are just different viewing angles. For each plot, the horizontal plane shows different values of $(\delta_1,\delta_2)$. The solid line in that plane shows the falsification frontier, as in figure \ref{FFlinearIVexample}. The vertical axis shows the identified set $\mathcal{B}(\delta)$ as a function of $\delta$. For $\delta$ values sufficiently close to zero, this identified set is empty. But for $\delta$'s beyond the falsification frontier, this set is nonempty. For any point $\delta$ on the falsification frontier, this set is a singleton. But the set grows as both $\delta_1$ and $\delta_2$ increase. Finally, the falsification adaptive set is shown as the solid line segment on the vertical axis. It is the union over over the identified sets $\mathcal{B}(\delta)$ as $\delta$ varies over the falsification frontier. Since $\mathcal{B}(\delta_1^*,0) = \{ \psi_2/\pi_2 \}$, $\mathcal{B}(0,\delta_2^*) = \{ \psi_1/\pi_1 \}$, and $\mathcal{B}(\delta)$ varies monotonically as we traverse the falsification frontier, the falsification adaptive set is simply the interval between the points $\psi_1/\pi_1$ and $\psi_2/\pi_2$. In this illustration, these points are $1$ and $5$ for all three dgps and hence the falsification adaptive set is the interval $[1,5]$.

With more than two instruments, we cannot draw the space of $\delta$'s and $\beta$ at the same time. Nonetheless, theorem \ref{thm:identSetOnFFhomogTrt} shows that we can compute the falsification adaptive set by simply taking the convex hull of the $L$ different 2SLS estimands $\psi_\ell / \pi_\ell$.

\subsection{Estimation}\label{subsec:FASestimationLinearIV}

In section \ref{subsec:estimationInference} we discussed estimation and inference in general. Here we discuss the specific case given in theorem \ref{thm:identSetOnFFhomogTrt}. This characterization of the falsification adaptive set requires that we first screen for weak instruments. It is not clear how to best do this. We present a first pass approach, but leave a detailed analysis to future work. Let 
\[
	\mathcal{L} = \{ \ell \in \{1,\ldots,L \} : \pi_\ell \neq 0 \}
\]
be the set of indices corresponding to relevant instruments. Estimate this set by
\[
	\widehat{\mathcal{L}} = \{ \ell \in \{1,\ldots,L \} : F_\ell \geq C_n \}.
\]
$F_\ell$ is the first stage $F$-statistic when considering $Z_\ell$ as an instrument and $Z_{-\ell}$ as controls. $C_n$ is a cutoff that converges to zero as the sample size grows. Specifying the cutoff to shrink ensures that asymptotically we only discard instruments whose coefficients are exactly zero, so that $\widehat{\mathcal{L}}$ is consistent for $\mathcal{L}$.

We then estimate the falsification adaptive set by
\[
	\widehat{\text{FAS}} = \left[\min_{\ell \in \widehat{\mathcal{L}}} 
	\hspace{-1.5mm}
	\widehat{\phantom{\big(}\frac{\psi_\ell}{\pi_\ell}\phantom{\big)}}
	\hspace{-1.5mm}
	, \ 
	\max_{\ell \in \widehat{\mathcal{L}}}
	\hspace{-1.5mm}
	\widehat{\phantom{\big(}\frac{\psi_\ell}{\pi_\ell}\phantom{\big)}}
	\hspace{-1.5mm}
	\right]
\]
where $\widehat{\psi_\ell / \pi_\ell}$ is the estimated 2SLS coefficient on $X$ using $Z_\ell$ as the excluded instrument and $Z_{-\ell}$ as controls. We use this estimator in our empirical analysis of section \ref{sec:empirical}. There we use $C_n = 10$ as our default, although we sometimes consider other cutoffs, or a sequence of cutoffs.

\subsection{Directional Falsification Points}

Beyond presenting the falsification frontier and the falsification adaptive set, researchers may want to also present $\mathcal{B}(\delta)$ for specific choice of model relaxation. We consider the case where the researcher specifies the \emph{direction} of $\delta$ a priori and then chooses the smallest magnitude $\| \delta \|$ such that $\mathcal{B}(\delta) \neq \emptyset$.

Let $\delta = m \cdot d$ for $m \in \R$ and known $d = (d_1,\ldots,d_L)' \in (0,\infty)^L$. For example, when all the instruments are continuous, one reasonable direction may be $d = (\sqrt{ \var(Z_1)}, \ldots, \sqrt{ \var(Z_L)} )'$. 
Given this fixed direction $d$, our assumptions are now parameterized by a single scalar, $m$. The following proposition shows that there is a unique falsification point $m^*$.

\begin{proposition}\label{prop:directionalFalsificationPoint}
Suppose A\ref{assump:homog:relevance:gen}--A\ref{assump:exogeneity:gen} hold. Suppose the joint distribution of $(Y,X,Z)$ is known. Suppose $\delta = m \cdot d$ for known $d = (d_1,\ldots,d_L)' \in (0,\infty)^L$. Suppose $\pi_\ell \neq 0$ for all $\ell \in \{1,\ldots,L \}$. Then
\begin{equation}\label{eq:directionalFalsificationPointMStar}
	m^* 
	= 
	\max_{\ell,\ell' \in \{1,\ldots,L\}} \;
	\frac{\dfrac{\psi_\ell}{\pi_\ell} - \dfrac{\psi_{\ell'}}{\pi_{\ell'}}
	}{
	\dfrac{d_\ell}{|\pi_\ell|} + \dfrac{d_{\ell'}}{|\pi_{\ell'}|}
	}
\end{equation}
is the falsification point. That is, $\mathcal{B}(m^*\cdot d) \neq \emptyset$ and $\mathcal{B}(m\cdot d) = \emptyset$ for all $m < m^*$.
\end{proposition}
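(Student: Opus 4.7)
The plan is to reduce the emptiness question to a system of pairwise inequalities using the interval characterization already established in Corollary~\ref{corr:K1identBetaSetLinearIV}, and then solve for the smallest scalar $m$ that makes the system consistent.

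First, I would invoke Corollary~\ref{corr:K1identBetaSetLinearIV}. Since $\pi_\ell \neq 0$ for every $\ell$ by assumption, each $B_\ell(\delta_\ell)$ is a closed interval, and with the substitution $\delta_\ell = m\cdot d_\ell$ it becomes
\[
B_\ell(m d_\ell) = \left[\frac{\psi_\ell}{\pi_\ell} - \frac{m\, d_\ell}{|\pi_\ell|},\ \frac{\psi_\ell}{\pi_\ell} + \frac{m\, d_\ell}{|\pi_\ell|}\right].
\]
Hence $\mathcal{B}(m\cdot d) = \bigcap_{\ell=1}^L B_\ell(m d_\ell)$ is a finite intersection of closed intervals on the real line.

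Second, I would use the standard fact that a finite intersection of closed intervals in $\R$ is nonempty if and only if every pairwise intersection is nonempty, i.e. if and only if the maximum of the left endpoints does not exceed the minimum of the right endpoints. Written out, $\mathcal{B}(m\cdot d) \neq \emptyset$ is equivalent to
\[
\frac{\psi_\ell}{\pi_\ell} - \frac{m\, d_\ell}{|\pi_\ell|} \ \leq\ \frac{\psi_{\ell'}}{\pi_{\ell'}} + \frac{m\, d_{\ell'}}{|\pi_{\ell'}|} \qquad \text{for all } \ell,\ell' \in \{1,\ldots,L\}.
\]
Since the bracketed coefficient on $m$ on the right-hand side, $d_\ell/|\pi_\ell| + d_{\ell'}/|\pi_{\ell'}|$, is strictly positive (because $d_\ell, d_{\ell'}>0$ and $\pi_\ell,\pi_{\ell'}\neq 0$), each inequality is equivalent to a lower bound on $m$ of the form displayed inside the max in \eqref{eq:directionalFalsificationPointMStar}.

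Third, taking the maximum over all ordered pairs $(\ell,\ell')$ gives a single lower bound on $m$ that is both necessary and sufficient for $\mathcal{B}(m\cdot d)$ to be nonempty, which is precisely the $m^*$ in the statement. I would note that the max is nonnegative (choose $\ell=\ell'$ to get value $0$), so $m^*\geq 0$; and that at $m = m^*$ the critical pairwise inequality holds with equality, so the intersection is nonempty (degenerating at worst to a singleton). Consequently $\mathcal{B}(m^*\cdot d) \neq \emptyset$ while $\mathcal{B}(m\cdot d)=\emptyset$ for every $m<m^*$, which is the claim. The main subtlety—rather than difficulty—is just keeping track of signs of $\pi_\ell$ via the absolute values, which is already handled cleanly by Corollary~\ref{corr:K1identBetaSetLinearIV}; no significant obstacle is expected.
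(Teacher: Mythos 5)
Your proposal is correct and follows essentially the same route as the paper: both reduce $\mathcal{B}(m\cdot d)$ via Corollary \ref{corr:K1identBetaSetLinearIV} to an intersection of closed intervals, characterize nonemptiness by the condition that every left endpoint is at most every right endpoint, and solve each resulting pairwise linear inequality for $m$ to obtain the threshold $m^*$. The only cosmetic difference is that the paper verifies nonemptiness at $m=m^*$ by explicitly exhibiting the common point $b^*$ and checking membership in each $B_\ell$ by contradiction, whereas you invoke the pairwise-intersection equivalence directly; both are valid.
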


The following corollary characterizes the identified set at the point on the falsification frontier in the direction $d$.

\begin{corollary}\label{corr:identSetDirectionalFP}
Suppose the assumptions of proposition \ref{prop:directionalFalsificationPoint} hold.  Define $m^*$ by equation \eqref{eq:directionalFalsificationPointMStar} and let $(\ell^*, \ell^{\prime\ast})$ be the argmax in that equation. Let $\delta^* = m^* \cdot d$. Then $\mathcal{B}(\delta^*)$ is the singleton equal to
\begin{align*}
	\mathcal{B}(\delta^*)
	&= \left\{
	\frac{\psi_{\ell^*}}{\pi_{\ell^*}}  
	-  \frac{m^* \cdot d_{\ell^*}}{|\pi_{\ell^*}|}
	\right\} \\
	&= \left\{
	\frac{\psi_{\ell^{\prime\ast}}}{\pi_{\ell^{\prime\ast}}}  
	+ \frac{m^* \cdot d_{\ell^{\prime\ast}}}{|\pi_{\ell^{\prime\ast}}|} 
	\right\}.
\end{align*}
\end{corollary}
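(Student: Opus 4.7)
The plan is to show that at $\delta^* = m^* \cdot d$, the identified set $\mathcal{B}(\delta^*)$ is a singleton consisting of the common endpoint of the two intervals $B_{\ell^*}(m^* d_{\ell^*})$ and $B_{\ell^{\prime\ast}}(m^* d_{\ell^{\prime\ast}})$ from Corollary \ref{corr:K1identBetaSetLinearIV}. Since $\pi_\ell \neq 0$ for all $\ell$, that corollary gives
\[
	\mathcal{B}(\delta^*) = \bigcap_{\ell=1}^L \left[\frac{\psi_\ell}{\pi_\ell} - \frac{m^* d_\ell}{|\pi_\ell|}, \ \frac{\psi_\ell}{\pi_\ell} + \frac{m^* d_\ell}{|\pi_\ell|}\right].
\]
The nonemptiness of this intersection is equivalent (over all pairs) to the system of inequalities $\frac{\psi_\ell}{\pi_\ell} - \frac{\psi_{\ell'}}{\pi_{\ell'}} \leq m^*\bigl(\frac{d_\ell}{|\pi_\ell|} + \frac{d_{\ell'}}{|\pi_{\ell'}|}\bigr)$, which is precisely the system whose maximum defines $m^*$ in Proposition \ref{prop:directionalFalsificationPoint}.

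First I would use the argmax definition: for the pair $(\ell^*, \ell^{\prime\ast})$, the defining inequality holds with equality, which after rearrangement is exactly
\[
	\frac{\psi_{\ell^*}}{\pi_{\ell^*}} - \frac{m^* d_{\ell^*}}{|\pi_{\ell^*}|} = \frac{\psi_{\ell^{\prime\ast}}}{\pi_{\ell^{\prime\ast}}} + \frac{m^* d_{\ell^{\prime\ast}}}{|\pi_{\ell^{\prime\ast}}|}.
\]
Call this common value $b^*$; this establishes the two equivalent expressions in the statement. Geometrically, the lower endpoint of $B_{\ell^*}$ coincides with the upper endpoint of $B_{\ell^{\prime\ast}}$, so these two intervals intersect in exactly the single point $\{b^*\}$. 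Hence $\mathcal{B}(\delta^*) \subseteq B_{\ell^*} \cap B_{\ell^{\prime\ast}} = \{b^*\}$, giving the singleton upper bound.

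Next I would verify $b^* \in B_\ell(m^* d_\ell)$ for every $\ell \neq \ell^*, \ell^{\prime\ast}$. Applying the max inequality to the pair $(\ell^*, \ell)$ yields $\frac{\psi_{\ell^*}}{\pi_{\ell^*}} - \frac{m^* d_{\ell^*}}{|\pi_{\ell^*}|} \leq \frac{\psi_\ell}{\pi_\ell} + \frac{m^* d_\ell}{|\pi_\ell|}$, so $b^*$ does not exceed the upper endpoint of $B_\ell$. Applying it to the pair $(\ell, \ell^{\prime\ast})$ yields $\frac{\psi_\ell}{\pi_\ell} - \frac{m^* d_\ell}{|\pi_\ell|} \leq \frac{\psi_{\ell^{\prime\ast}}}{\pi_{\ell^{\prime\ast}}} + \frac{m^* d_{\ell^{\prime\ast}}}{|\pi_{\ell^{\prime\ast}}|} = b^*$, so $b^*$ is at least the lower endpoint of $B_\ell$. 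Together these imply $b^* \in B_\ell(m^* d_\ell)$, so $b^* \in \mathcal{B}(\delta^*)$, completing the reverse inclusion.

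The steps are almost purely bookkeeping once one recognizes that $m^*$ being a max, attained at $(\ell^*, \ell^{\prime\ast})$, is precisely the condition that makes the two extreme intervals just touch while all other intervals overlap that touching point. The only mildly tricky part is keeping track of orientations (which $\ell$ contributes the lower versus upper endpoint), but this is pinned down by the order of the pair in the argmax and does not require any further case analysis.
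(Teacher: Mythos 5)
Your proof is correct and is essentially the argument the paper intends: the paper's own proof is a one-line citation of corollary \ref{corr:K1identBetaSetLinearIV}, proposition \ref{prop:K1FFhomogTrt}, and proposition \ref{prop:directionalFalsificationPoint}, and you supply exactly the natural unpacking of those ingredients --- the interval form of $\mathcal{B}(\delta)$, equality at the argmax pair producing the common touching point $b^*$ (which gives both displayed expressions and the singleton upper bound $\mathcal{B}(\delta^*) \subseteq B_{\ell^*} \cap B_{\ell^{\prime\ast}} = \{b^*\}$), and the pairwise max inequalities placing $b^*$ in every remaining interval. The only cosmetic difference is that you verify $b^* \in B_\ell(m^* d_\ell)$ directly from the definition of $m^*$ as a maximum, whereas the paper's proof of proposition \ref{prop:directionalFalsificationPoint} establishes that same membership by contradiction; both are fine.
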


\subsection{The FF and FAS for $K$ Endogenous Variables}\label{sec:linearModelGeneralK}

Theorem \ref{thm:idset:homog:gen} characterizes the identified set for the vector of coefficients on the endogenous variables, as a function of the exclusion restriction relaxation. Our subsequent characterizations of the falsification frontier and the falsification adaptive set, however, restricted attention to the case with just one endogenous variable---see proposition \ref{prop:K1FFhomogTrt} and theorem \ref{thm:identSetOnFFhomogTrt}. In this section, we extend these two results to the general case with $K \geq 1$ endogenous variables. These results can be used in at least three different cases: (1) If there are multiple `basic' endogenous variables. For example, in our empirical application in section \ref{sec:NevoAnalysis}, one could allow both price and advertising spending to be endogenous. (2) If the outcome equation has interactions of one basic endogenous variable with covariates. (3) If the outcome equation is nonlinear in one basic endogenous variable. For example, it could be a quadratic function.

Recall our notation for the reduced form and first stage regressions:
\[
	\underset{(L \times 1)}{\psi} \equiv \var(Z)^{-1}\cov(Z,Y)
	\qquad \text{and} \qquad
	\underset{(L \times K)}{\Pi} \equiv \var(Z)^{-1}\cov(Z,X).
\]
$L$ denotes the number of instruments while $K$ denotes the number of endogenous variables. Let $\pi_\ell'$ denote the $\ell$th row of the matrix $\Pi$. When $K = 1$, this is a scalar and $\pi_\ell' = \pi_\ell$.

When there is a single endogenous variable, theorem \ref{thm:identSetOnFFhomogTrt} shows that the falsification adaptive set is the interval
\[
	\left[\min_{\ell=1,\ldots,L:\pi_\ell \neq 0} \frac{\psi_\ell}{\pi_\ell}, \ \max_{\ell=1,\ldots,L:\pi_\ell \neq 0} \frac{\psi_\ell}{\pi_\ell}\right].
\]
This interval can be interpreted as follows: Pick any set of $K$ instruments. Impose the exclusion restriction for those instruments. For the remaining instruments, completely drop the exclusion restriction. This yields a non-refutable model that is weaker than the original model which imposed exclusion for all of the instruments. But this weaker model is still strong enough that $\beta$ is point identified (ignoring the possibility of irrelevant instruments, which we discuss below). Moreover, in this weaker model $\beta$ equals the population 2SLS coefficient on $X$ using the chosen $K$ instruments as the excluded instruments and the remaining instruments as controls. Collect all of these different 2SLS estimands, as we vary which set of $K$ instruments we pick. Then take their convex hull.

When $K=1$, this convex hull is precisely the interval above: We pick a single relevant instrument to exclude, and include the others as controls. This gives us a single just-identified 2SLS point estimand. By cycling through which instrument we exclude, we obtain a variety of different point estimands. Since $K=1$, these estimands are scalars. So their convex hull is simply the interval between the smallest and largest values.

When $K > 1$ and $L = K + 1$, the falsification adaptive set is precisely the convex hull of just-identified 2SLS estimands that we've just described. For $L > K + 1$, it is somewhat more complicated. We discuss these general results next.

In the $K=1$ case we allowed for irrelevant instruments. For $K>1$ we focus on the case where all instruments are relevant for simplicity. Specifically, we impose assumption A\ref{assump:homog:relevance:gen}$^\prime$ below. To state this assumption, we consider submatrices of $\Pi$. Let $\mathcal{L} \subseteq \{ 1,\ldots, L \}$. Let $\Pi_\mathcal{L}$ be the $| \mathcal{L} | \times K$ submatrix of $\Pi$ formed by removing all rows $\ell \notin \mathcal{L}$. %

\begin{assump}{A\ref{assump:homog:relevance:gen}$^\prime$}(Relevance). The following hold:
\begin{enumerate}
\item For all $\mathcal{L} \subseteq \{ 1,\ldots, L \}$ with $| \mathcal{L} | = K$, $\Pi_\mathcal{L}$ has full rank.

\item For all $\mathcal{L}\subseteq \{1,\ldots,L\}$ such that $|\mathcal{L}| = K+1$,  $\{\pi_\ell :  \ell \in \mathcal{L}\}$ are affinely independent. That is,
\[
	\begin{pmatrix}
	1 & 1 & \ldots & 1\\
	\pi_{\ell_1} & \pi_{\ell_2} & \ldots & \pi_{\ell_{K+1}}
	\end{pmatrix}
\]
has full rank, where $\mathcal{L} = \{ \ell_1,\ldots, \ell_{K+1} \}$.
\end{enumerate}
\end{assump}

\noindent A\ref{assump:homog:relevance:gen}.1$^\prime$ is equivalent to the following:
\begin{itemize}
\item Pick any $K$ components of the instrument vector $Z = (Z_1,\ldots,Z_L)$. Let $Z_\mathcal{L}$ denote this subvector of instruments. Let $Z_{-\mathcal{L}}$ denote the remaining instruments. Then the population 2SLS coefficient on $X$ using $Z_\mathcal{L}$ as excluded instruments and $Z_{-\mathcal{L}}$ as controls is well-defined and unique. Moreover, it equals
\[
	\beta^\textsc{2sls}_\mathcal{L} = \Pi_\mathcal{L}^{-1} \psi_\mathcal{L}
\]
where $\psi_\mathcal{L}$ equals the subvector of $\psi$ after removing all components $\ell \notin \mathcal{L}$.
\end{itemize}
For example, suppose $L = K+1$. Then dropping the exclusion restriction for any single instrument returns us to a non-refutable model. Here we simply assume that $\beta$ is still point identified regardless of which exclusion restriction we choose to drop. Finally, note that A\ref{assump:homog:relevance:gen}.1$^\prime$ implies our earlier relevance assumption A\ref{assump:homog:relevance:gen}.
A1.2$^\prime$ means that there does not exist a hyperplane that passes through all of the $\pi_\ell$ vectors. It is equivalent to linear independence of $(\pi_L - \pi_1,\ldots, \pi_2 - \pi_1)$.

Let
\begin{equation}\label{eq:FASstar}
	\text{FAS}^* = \text{conv} \left( \left\{ \beta_\mathcal{L}^\textsc{2sls} : \mathcal{L} \subseteq \{ 1,\ldots, L \}, | \mathcal{L} |  = K \right\} \right).
\end{equation}
denote the polytope defined by the convex hull of the set of just-identified 2SLS estimands.

\begin{proposition}\label{prop:KisLplus1FF}
Suppose A\ref{assump:homog:relevance:gen}$^\prime$, A\ref{assump:homog:nonsing:gen}, and A\ref{assump:exogeneity:gen} hold. Suppose the joint distribution of $(Y,X,Z)$ is known. Suppose $L = K + 1$. Then the falsification frontier is the set
\begin{equation}\label{eq:KisLplus1FF}
	\text{FF} = \{\delta \in\R^L_{\geq 0}: \delta_\ell = | \psi_\ell - \pi_\ell' b |, \ \ell =1,\ldots,L, \ b \in \text{FAS}^* \}.
\end{equation}
\end{proposition}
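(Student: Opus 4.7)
The plan is to exploit the codimension-one structure $L = K+1$ to reduce both the falsification frontier and the right-hand side of \eqref{eq:KisLplus1FF} to a single scalar equation in $\delta$. Since $\Pi$ is $L \times K$ with $L = K+1$ and full column rank (by A\ref{assump:homog:relevance:gen}$^\prime$.1), its left null space is one-dimensional; I would fix a nonzero $\lambda \in \R^L$ with $\Pi^\prime \lambda = 0$. A preliminary fact, which is where A\ref{assump:homog:relevance:gen}$^\prime$.1 is really used, is that every component of $\lambda$ is nonzero: if $\lambda_m = 0$, then $\sum_{\ell \neq m} \lambda_\ell \pi_\ell = 0$ would give a nontrivial dependence among the $K$ rows $\{\pi_\ell : \ell \neq m\}$, contradicting the full-rank hypothesis.

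Using Theorem \ref{thm:idset:homog:gen}, $\mathcal{B}(\delta)$ is nonempty iff the linear system $\Pi b = \psi - s \odot \delta$ has a solution for some $s \in [-1,1]^L$, equivalently $\lambda^\prime(\psi - s \odot \delta) = 0$, i.e., $\lambda^\prime \psi = \sum_\ell \lambda_\ell s_\ell \delta_\ell$. As $s$ varies over $[-1,1]^L$, the right-hand side sweeps the interval $[-\sum_\ell |\lambda_\ell|\delta_\ell,\ \sum_\ell |\lambda_\ell|\delta_\ell]$, so $\mathcal{B}(\delta) \neq \emptyset$ iff $\sum_\ell |\lambda_\ell|\delta_\ell \geq |\lambda^\prime \psi|$. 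Because every $|\lambda_\ell|$ is strictly positive, the componentwise-minimal elements of this halfspace trace out exactly its boundary, yielding
\[
	\text{FF} = \Big\{ \delta \in \R^L_{\geq 0} : \sum_\ell |\lambda_\ell|\delta_\ell = |\lambda^\prime \psi| \Big\}.
\]

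It remains to match this set with $\{\delta : \delta_\ell = |\psi_\ell - \pi_\ell^\prime b|,\ b \in \text{FAS}^*\}$. Writing $\beta^{(m)} := \Pi_{-m}^{-1}\psi_{-m}$ and $b = \sum_m \alpha_m \beta^{(m)}$ with $\alpha_m \geq 0$, $\sum_m \alpha_m = 1$, the defining property $\pi_k^\prime \beta^{(m)} = \psi_k$ for $k \neq m$ yields, by direct expansion, the key identity $\psi_\ell - \pi_\ell^\prime b = \alpha_\ell r_\ell$, where $r_\ell := \psi_\ell - \pi_\ell^\prime \beta^{(\ell)}$. Applying $\lambda^\prime$ to the vector identity $\psi - \Pi \beta^{(\ell)} = r_\ell e_\ell$ gives $\lambda_\ell r_\ell = \lambda^\prime \psi$, hence $\sum_\ell |\lambda_\ell|\, |\psi_\ell - \pi_\ell^\prime b| = \sum_\ell \alpha_\ell |\lambda_\ell r_\ell| = |\lambda^\prime \psi|$, placing the induced $\delta$ on FF. Conversely, for $\delta \in \text{FF}$ with $\lambda^\prime \psi \neq 0$, the weights $\alpha_\ell := |\lambda_\ell|\delta_\ell/|\lambda^\prime \psi|$ are nonnegative and sum to one by the frontier equation, so $b := \sum_m \alpha_m \beta^{(m)} \in \text{FAS}^*$ realizes $|\psi_\ell - \pi_\ell^\prime b| = \alpha_\ell|r_\ell| = \delta_\ell$; the degenerate non-refuted case $\lambda^\prime \psi = 0$ collapses both sides to $\{0_L\}$ and the unique 2SLS point, so the claim is immediate. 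The only substantive obstacle is the nonvanishing of every $\lambda_\ell$, since the subsequent algebra divides by $\lambda_\ell$ and $|\lambda^\prime \psi|$; after that the argument is a single null-space reduction made possible by $L = K+1$, pairing the frontier hyperplane bijectively with the $K$-simplex $\text{FAS}^*$.
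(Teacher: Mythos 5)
Your proof is correct, and it takes a genuinely different route from the paper's. The paper obtains Proposition \ref{prop:KisLplus1FF} as a special case of the general result for $L \geq K+1$ (Proposition \ref{prop:generalLinearFF}), whose proof runs through a chain of lemmas: a half-space representation of $\text{FAS}^*$, the singleton property $\mathcal{B}(\delta(b)) = \{b\}$ for $b$ in the simplex (lemma \ref{lemma:generalLinearSingletonOnFF}), and a Farkas/Caratheodory argument to show that any $b \notin \text{FAS}^*$ admits a strictly smaller non-falsified relaxation (lemma \ref{lemma:case2:FFguessGeneralSupersetFF}). You instead exploit the codimension-one structure head-on: with $\lambda$ spanning the left null space of $\Pi$ and every $\lambda_\ell \neq 0$ (correctly deduced from A\ref{assump:homog:relevance:gen}$^\prime$.1), feasibility of $\mathcal{B}(\delta)$ collapses to the single scalar inequality $\sum_\ell |\lambda_\ell|\delta_\ell \geq |\lambda'\psi|$, so the frontier is immediately the boundary hyperplane restricted to the nonnegative orthant; the identities $\psi_\ell - \pi_\ell' b = \alpha_\ell r_\ell$ (this is the paper's lemma \ref{lemma:substitutingWeightedAverage}, which you rediscover) and $\lambda_\ell r_\ell = \lambda'\psi$ then give an explicit two-way correspondence between that hyperplane and the simplex $\text{FAS}^*$ via $\alpha_\ell = |\lambda_\ell|\delta_\ell / |\lambda'\psi|$, with the non-refuted case $\lambda'\psi = 0$ handled separately and correctly. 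What your approach buys is brevity and elementarity: no Farkas, no Caratheodory, and in fact no use of the affine-independence condition A\ref{assump:homog:relevance:gen}$^\prime$.2. What it gives up is generality: the one-dimensional null-space reduction is exactly what fails when $L > K+1$, which is why the paper's heavier machinery is needed for Proposition \ref{prop:generalLinearFF} and Theorem \ref{thm:generalLinearFAS}.
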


\begin{theorem}\label{thm:LisKplus1FAS}
Suppose A\ref{assump:homog:relevance:gen}$^\prime$, A\ref{assump:homog:nonsing:gen}, and A\ref{assump:exogeneity:gen} hold. Suppose the joint distribution of $(Y,X,Z)$ is known. Suppose $L = K + 1$. Then $\text{FAS}^*$ is the falsification adaptive set.
\end{theorem}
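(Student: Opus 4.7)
The plan is to prove the equality $\bigcup_{\delta \in \text{FF}} \mathcal{B}(\delta) = \text{FAS}^*$ by establishing a singleton property: for each $b \in \text{FAS}^*$, the identified set $\mathcal{B}(\delta(b))$ at the frontier point $\delta(b) \equiv (|\psi_1 - \pi_1'b|, \ldots, |\psi_L - \pi_L'b|)$ provided by Proposition \ref{prop:KisLplus1FF} equals $\{b\}$. Once this is shown, the union collapses: $\bigcup_{\delta \in \text{FF}} \mathcal{B}(\delta) = \bigcup_{b \in \text{FAS}^*}\{b\} = \text{FAS}^*$.

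One direction is immediate from Proposition \ref{prop:KisLplus1FF} and Theorem \ref{thm:idset:homog:gen}: for any $b \in \text{FAS}^*$, $|\psi_\ell - \pi_\ell' b| = \delta_\ell(b)$ gives $b \in \mathcal{B}(\delta(b))$. For the reverse inclusion, fix $b^0 \in \text{FAS}^*$ and let $b^1 \in \mathcal{B}(\delta(b^0))$. Then $\delta_\ell(b^1) \leq \delta_\ell(b^0)$ for every $\ell$, and since $b^1$ itself witnesses $\delta(b^1) \in \mathcal{D}_\text{nf}$, the Pareto-minimality built into Definition \ref{def:abstractFF} of the falsification frontier rules out any strict inequality, forcing $\delta(b^1) = \delta(b^0)$, i.e., $|\psi_\ell - \pi_\ell' b^1| = |\psi_\ell - \pi_\ell' b^0|$ for every $\ell$.

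The main step is then to deduce $b^0 = b^1$ from this equal-absolute-value condition. Set $r^j = \psi - \Pi b^j$ for $j = 0,1$ and partition $\{1, \ldots, L\}$ into $S = \{\ell : r^0_\ell = r^1_\ell\}$ and $T = \{\ell : r^0_\ell \neq r^1_\ell\}$. For $\ell \in T$, equality of absolute values forces $r^0_\ell = -r^1_\ell$ with both nonzero. A direct computation at the midpoint $b^* = (b^0 + b^1)/2$ gives $\psi_\ell - \pi_\ell' b^* = r^0_\ell$ for $\ell \in S$ and $\psi_\ell - \pi_\ell' b^* = 0$ for $\ell \in T$, so $\delta(b^*) \leq \delta(b^0)$ component-wise with strict improvement on $T$. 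Pareto-minimality of $\delta(b^0)$ therefore forces $T = \emptyset$, so $\Pi b^0 = \Pi b^1$. Assumption A\ref{assump:homog:relevance:gen}$^\prime$.1 implies $\Pi$ has full column rank $K$ (since any $K$-row submatrix is invertible), hence $b^0 = b^1$.

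The main obstacle is the sign bookkeeping in the midpoint argument: distinct points $b^0, b^1$ could a priori match in absolute value across all $L$ constraints in many ways, and the key observation is that any such mismatch at a single constraint is strictly ``flattened'' by averaging, producing a dominating frontier point unless every sign already agreed. Note that only the full-rank part A\ref{assump:homog:relevance:gen}$^\prime$.1 is invoked explicitly in the final step; the affine independence in A\ref{assump:homog:relevance:gen}$^\prime$.2 has already been absorbed into the parametrization supplied by Proposition \ref{prop:KisLplus1FF}.
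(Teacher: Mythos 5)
Your proof is correct, and it reaches the result by a genuinely different route from the paper on the one step that carries all the difficulty. Both arguments share the same skeleton: use proposition \ref{prop:KisLplus1FF} to parametrize the falsification frontier as $\{\delta(b) : b \in \text{FAS}^*\}$, show that $\mathcal{B}(\delta(b)) = \{b\}$ for each such $b$, and collapse the union. The difference is in how the singleton property is established. The paper proves it \emph{before} and independently of the frontier characterization (lemmas \ref{lemma:singletonFAS}--\ref{lemma:generalLinearSingletonOnFF} in the appendix): it writes $\text{FAS}^*$ in half-space form, represents an arbitrary candidate $b^*$ as an affine combination of the just-identified 2SLS estimands via lemma \ref{lemma:weightedAverageOf2SLS}, and runs a sign analysis on the weights to exclude $b^* \neq b$; this machinery (together with a Farkas-type lemma) is then reused to prove proposition \ref{prop:KisLplus1FF} itself. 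You instead take proposition \ref{prop:KisLplus1FF} as given and extract the singleton property as a near-immediate consequence of the Pareto-minimality baked into the definition of the frontier: any second element $b^1$ of $\mathcal{B}(\delta(b^0))$ must match $|\psi_\ell - \pi_\ell' b^0|$ in every coordinate, and averaging $b^0$ and $b^1$ strictly shrinks every coordinate where the residuals have opposite signs, producing a non-falsified $\delta$ strictly below the frontier unless the residual vectors coincide, at which point full column rank of $\Pi$ (from A\ref{assump:homog:relevance:gen}$^\prime$.1, available since $L = K+1 \geq K$) forces $b^0 = b^1$. Each step checks out: $b^1 \in \mathcal{B}(\delta(b^1))$ holds tautologically so $\delta(b^1) \in \mathcal{D}_\text{nf}$, the equal-absolute-value reduction is forced by minimality, and the midpoint computation gives exactly $0$ on the sign-mismatch set $T$ and the unchanged residual on $S$. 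What your approach buys is brevity and transparency --- the convexity/midpoint trick replaces the paper's geometric case analysis entirely; what it costs is that it is parasitic on proposition \ref{prop:KisLplus1FF}, so it could not be used to bootstrap the frontier characterization itself the way the paper's self-contained lemma \ref{lemma:generalLinearSingletonOnFF} is. As a proof of the theorem given the preceding results, it is complete.
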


\begin{figure}[!t]
\centering
\includegraphics[width=0.49\linewidth]{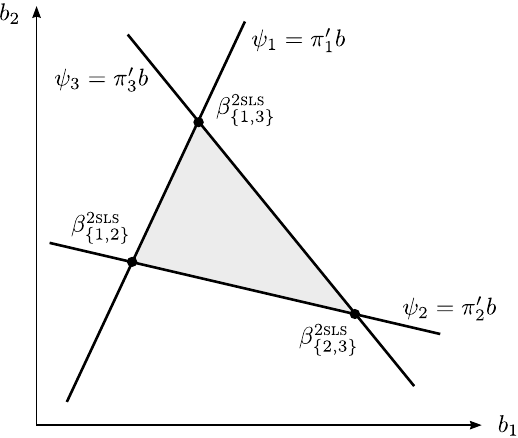}
\includegraphics[width=0.49\linewidth]{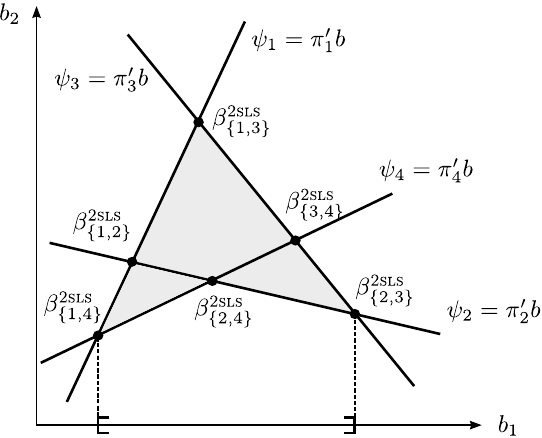}
\caption{Example with $K=2$ endogenous variables. Left: $L=3$ instruments. Right: $L=4$ instruments. In both plots, the falsification adaptive set for $(\beta_1,\beta_2)$ is the shaded region. In the right plot, the falsification adaptive set for $\beta_1$ is shown as the projection onto the first component. See text for additional explanation.}
\label{FAS_K2L3}
\end{figure}

To illustrate theorem \ref{thm:LisKplus1FAS}, consider the two endogenous variables ($K=2$) and three instruments ($L=3$) case. Since there are two endogenous variables, there are two coefficients we're interested in. Consider the left plot in figure \ref{FAS_K2L3}. This plot shows possible values $(b_1, b_2)$ for these two coefficients. The exclusion restriction from instrument $\ell$ imposes a single linear constraint $\psi_\ell = \pi_\ell' b$. These constraints are simply lines in $\R^2$. Since there are three instruments, there are three constraints. When these three lines do not intersect at a common point, the baseline model is refuted. This case is shown in the figure. Suppose we drop the exclusion restriction for instrument 1. Then two linear constraints remain, $\beta$ is point identified, and it equals the intersection point $\beta_{\{2,3\}}^\textsc{2sls}$ in the lower right of the figure. A similar interpretation applies to the other two intersection points $\beta_{\{1,3\}}^\textsc{2sls}$ and $\beta_{\{1,2\}}^\textsc{2sls}$. The falsification adaptive set is then simply the convex hull of these three points in $\R^2$, which is shown as the shaded triangular region.

Next consider the case with $K > 1$ and $L > K + 1$. We handle this case by reducing it to the case we just studied where $L = K + 1$. Let
\[
	 \mathcal{P}_{\mathcal{L}} = \text{conv} \big( \big\{ \beta_{\mathcal{L} \setminus \{ \ell \}}^\textsc{2sls} : \ell \in \mathcal{L} \big\} \big).
\]
In the following proposition, we consider subsets of indices $\mathcal{L} \subseteq \{ 1,\ldots, L \}$ such that $| \mathcal{L} | = K + 1$. Thus $| \mathcal{L} \setminus \{ \ell \} | = K$. So $\beta_{\mathcal{L} \setminus \{ \ell \}}^\textsc{2sls}$ is a just-identified 2SLS estimand. We form $K+1$ different estimands by cycling through which exclusion restriction to drop. We then take their convex hull. This set is similar to the set defined in equation \eqref{eq:FASstar}. The key difference is that here we only look at the estimands which use all but one index from a reference set $\mathcal{L}$.

\begin{proposition}\label{prop:generalLinearFF}
Suppose A\ref{assump:homog:relevance:gen}$^\prime$, A\ref{assump:homog:nonsing:gen}, and A\ref{assump:exogeneity:gen} hold. Suppose the joint distribution of $(Y,X,Z)$ is known. Then the falsification frontier is the set
\[
	\text{FF} = \left\{ \delta \in \R^L_{\geq 0}: 
	\delta_\ell = |\psi_\ell - \pi_\ell' b |, \
	b \in \mathcal{P}_{\mathcal{L}}, \
	\mathcal{L} \subseteq \{1,\ldots, L\}, \
	|\mathcal{L}| = K+1 \right\}.
\]
\end{proposition}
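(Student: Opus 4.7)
The plan is to reduce Proposition \ref{prop:generalLinearFF} to the case $L = K+1$ established in Proposition \ref{prop:KisLplus1FF}, via a Pareto-minimality reformulation of the FF combined with Caratheodory's theorem.

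First, I would establish the reformulation: $\delta \in \text{FF}$ if and only if there exists $b \in \R^K$ with $\delta_\ell = |\psi_\ell - \pi_\ell' b|$ for every $\ell$ and no $b' \in \R^K$ satisfies $|\psi_\ell - \pi_\ell' b'| \leq |\psi_\ell - \pi_\ell' b|$ componentwise with strict inequality at some coordinate. The $\Rightarrow$ direction is immediate: if some $b \in \mathcal{B}(\delta)$ had a strictly slack constraint at $\ell$, then $\delta' = (|\psi_{\ell'} - \pi_{\ell'}' b|)_{\ell'}$ would satisfy $\delta' < \delta$ with $b \in \mathcal{B}(\delta')$, violating $\delta \in \text{FF}$. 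The $\Leftarrow$ direction uses any $b' \in \mathcal{B}(\delta')$ with $\delta' < \delta$ to Pareto-dominate $b$.

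For the forward inclusion, given $\delta \in \text{FF}$ with associated Pareto-minimal $b$, weak Pareto scalarization for the convex map $b \mapsto (|\psi_\ell - \pi_\ell' b|)_\ell$ yields $\lambda_\ell \geq 0$ not all zero and sign multipliers $\tilde{s}_\ell$ with $\sum_\ell \lambda_\ell \tilde{s}_\ell \pi_\ell = 0$, where $\tilde{s}_\ell = \text{sign}(\psi_\ell - \pi_\ell' b)$ when nonzero and $\tilde{s}_\ell \in [-1, 1]$ otherwise. Caratheodory's theorem applied to this linear dependence in $\R^K$ produces a subset $\mathcal{L}$ with $|\mathcal{L}| \leq K+1$ supporting positive weights; Assumption A\ref{assump:homog:relevance:gen}$^\prime$.1 (any $K$ rows of $\Pi$ are linearly independent) rules out $|\mathcal{L}| \leq K$, forcing $|\mathcal{L}| = K+1$. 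The restricted identity shows $b$ minimizes $\sum_{\ell \in \mathcal{L}} \lambda_\ell |\psi_\ell - \pi_\ell' \cdot|$ with all weights strictly positive, hence $b$ is Pareto-minimal in the submodel using only instruments in $\mathcal{L}$. Applying Proposition \ref{prop:KisLplus1FF} to this submodel places $(|\psi_\ell - \pi_\ell' b|)_{\ell \in \mathcal{L}}$ on the submodel's FF, and the singleton property of $\mathcal{B}_\mathcal{L}$ at FF points (the general-$K$ counterpart of the singleton lemma underlying Theorem \ref{thm:identSetOnFFhomogTrt}, embedded in the proof of Theorem \ref{thm:LisKplus1FAS}) then forces $b \in \mathcal{P}_\mathcal{L}$.

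For the reverse inclusion, fix $|\mathcal{L}| = K+1$ and $b \in \mathcal{P}_\mathcal{L}$, and set $\delta_\ell = |\psi_\ell - \pi_\ell' b|$. Since $b \in \mathcal{B}(\delta)$, the model is not falsified. Suppose for contradiction that $b'$ satisfies $|\psi_\ell - \pi_\ell' b'| \leq \delta_\ell$ for all $\ell$ with strict inequality at some $\ell^*$. If $\ell^* \in \mathcal{L}$, then $(|\psi_\ell - \pi_\ell' b'|)_{\ell \in \mathcal{L}}$ strictly Pareto-dominates a point on the submodel's FF (Proposition \ref{prop:KisLplus1FF} applied to the submodel), which is impossible. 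If $\ell^* \notin \mathcal{L}$, the same submodel FF-membership forces equality for all $\ell \in \mathcal{L}$, so $b' \in \mathcal{B}_\mathcal{L}((\delta_\ell)_{\ell \in \mathcal{L}})$; the singleton property then gives $b' = b$, contradicting strict inequality at $\ell^*$. The main obstacle is the Step-2 subgradient argument when some residuals vanish, since the sign multipliers $\tilde{s}_\ell$ are then free within $[-1,1]$ and require care during the Caratheodory reduction. Convexity of the absolute value keeps the subgradient calculus standard, but explicit bookkeeping of which coordinates have zero residuals is needed to ensure the reduced support $\mathcal{L}$ has size exactly $K+1$.
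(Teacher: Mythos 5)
Your architecture is a genuinely different (and essentially dual) route to the paper's: where you take a Pareto-minimal $b$ and use scalarization plus a subgradient first-order condition to \emph{produce} a $(K+1)$-element linear dependence $\sum_{\ell}\lambda_\ell \tilde{s}_\ell\pi_\ell = 0$, the paper argues the contrapositive in lemma \ref{lemma:case2:FFguessGeneralSupersetFF}: if $b \notin \mathcal{P}$ then (after translation and sign normalization, via lemma \ref{lemma:convexHullsOfBandPis} and Caratheodory) $\0 \notin \text{conv}(\{\pi_\ell\})$, so Farkas' lemma supplies a direction $\bar{b}$ with $\pi_\ell'\bar{b}>0$ for all $\ell$ that strictly reduces every residual — i.e.\ $b$ is not Pareto-minimal. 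Your Pareto reformulation of the FF is correct, your reverse inclusion is complete (it is a repackaging of the paper's lemma \ref{lemma:FFguessGeneralSubsetFF} routed through the submodel FF plus the singleton property of lemma \ref{lemma:singletonOnFASgeneralCase}), and the Caratheodory-plus-A\ref{assump:homog:relevance:gen}$^\prime$.1 step forcing $|\mathcal{L}|=K+1$ is sound. One structural caveat: you invoke proposition \ref{prop:KisLplus1FF} and the singleton lemma for $(K+1)$-instrument submodels, but in the paper proposition \ref{prop:KisLplus1FF} is itself proved as a special case of the result you are proving, so you would need to first establish the $L=K+1$ case independently (which the paper's lemmas \ref{lemma:singletonFAS}--\ref{lemma:convexHullsOfBandPis} do permit) to avoid circularity.

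The genuine gap is the degenerate case you flag but do not resolve. The scalarization theorem for efficient points of a convex vector problem only guarantees \emph{some} $\lambda \geq 0$, $\lambda \neq 0$; it does not let you choose where the weight lands. If $\lambda$ happens to be supported entirely on coordinates where $\psi_\ell - \pi_\ell'b = 0$ (and fewer than $K$ such coordinates bind, so $b$ is not already a vertex $\beta_{\mathcal{L}_0}^\textsc{2sls}$ of $\mathcal{P}$), then the subgradient condition is satisfied by taking $\tilde{s}_\ell = 0$ on those coordinates, the products $\lambda_\ell\tilde{s}_\ell$ can all vanish, and there is no nontrivial linear dependence to feed into Caratheodory — your forward inclusion stalls exactly where it needs to bite. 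The paper confronts the same boundary case in case 2 of the proof of lemma \ref{lemma:case2:FFguessGeneralSupersetFF}: it restricts attention to the affine subspace $\{ \tilde{b} : \pi_\ell'\tilde{b} = \psi_\ell, \ \ell \in \mathcal{L}_0 \}$ cut out by the binding constraints (which has dimension between $1$ and $K-1$) and reruns the Farkas argument there. Your proof needs the analogous recursion: restrict the vector objective to that subspace, note that Pareto-minimality is inherited for the non-binding coordinates, and rerun the scalarization there, with the reduced support then combined with (a subset of) $\mathcal{L}_0$ to assemble the final $\mathcal{L}$ of size $K+1$. Without this the forward inclusion is not established.
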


\begin{theorem}\label{thm:generalLinearFAS}
Suppose A\ref{assump:homog:relevance:gen}$^\prime$, A\ref{assump:homog:nonsing:gen}, and A\ref{assump:exogeneity:gen} hold. Suppose the joint distribution of $(Y,X,Z)$ is known. Let
\[
	\mathcal{P} = \bigcup_{\mathcal{L} \subseteq \{ 1,\ldots, L \} : | \mathcal{L} | = K+1}  \mathcal{P}_{\mathcal{L}}.
\]
Then $\mathcal{P}$ is the falsification adaptive set.
\end{theorem}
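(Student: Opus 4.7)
The plan is to show that $\bigcup_{\delta \in \text{FF}} \mathcal{B}(\delta) = \mathcal{P}$ by combining the frontier characterization in Proposition \ref{prop:generalLinearFF} with a singleton property for $\mathcal{B}(\delta)$ on the frontier. By Proposition \ref{prop:generalLinearFF}, every $\delta \in \text{FF}$ takes the form $\delta_\ell = |\psi_\ell - \pi_\ell' b^*|$ for some $\mathcal{L}$ with $|\mathcal{L}| = K+1$ and some $b^* \in \mathcal{P}_\mathcal{L}$, and by direct verification $b^* \in \mathcal{B}(\delta)$. Hence it suffices to prove that $\mathcal{B}(\delta) = \{b^*\}$ for every $\delta \in \text{FF}$; then
\[
\text{FAS} = \bigcup_{\delta \in \text{FF}} \mathcal{B}(\delta) = \bigcup_{\mathcal{L} : |\mathcal{L}| = K+1} \mathcal{P}_\mathcal{L} = \mathcal{P}.
\]

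The main step is the singleton property, which generalizes Lemma \ref{lemma:FFisSingleton}. First, I would show that any $b \in \mathcal{B}(\delta)$ with $\delta \in \text{FF}$ must satisfy $|\psi_\ell - \pi_\ell' b| = \delta_\ell$ for every $\ell$. If the inequality were strict for some $\ell_0$, then setting $\delta'_{\ell_0} = |\psi_{\ell_0} - \pi_{\ell_0}'b|$ and $\delta'_m = \delta_m$ for $m \neq \ell_0$ would yield $\delta' < \delta$ with $b \in \mathcal{B}(\delta')$, contradicting the minimality of $\delta$ on the frontier. Second, if $b_1, b_2 \in \mathcal{B}(\delta)$ both attain these equalities, their midpoint $b_m = (b_1+b_2)/2$ lies in $\mathcal{B}(\delta)$ by convexity and must also attain them. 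Writing $a_\ell = \psi_\ell - \pi_\ell' b_1$ and $c_\ell = \psi_\ell - \pi_\ell' b_2$, the requirement $|a_\ell + c_\ell| = 2\delta_\ell = |a_\ell| + |c_\ell|$ is the equality case of the triangle inequality. Case analysis---trivially when $\delta_\ell = 0$, and via matching signs of $a_\ell, c_\ell$ when $\delta_\ell > 0$---yields $a_\ell = c_\ell$, i.e., $\pi_\ell'(b_1 - b_2) = 0$, for every $\ell$.

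Hence $\Pi(b_1 - b_2) = 0$. Assumption A\ref{assump:homog:relevance:gen}$^\prime$.1 ensures that any $K \times K$ submatrix of $\Pi$ is invertible, so $\Pi$ has rank $K$, forcing $b_1 = b_2$. This establishes the singleton property, and combined with Proposition \ref{prop:generalLinearFF} completes the proof. The main obstacle, I expect, is handling the triangle-inequality equality case cleanly and uniformly across the subcases $\delta_\ell = 0$ and $\delta_\ell > 0$, since one needs to conclude $\pi_\ell'(b_1-b_2)=0$ in both regimes in order to invoke the rank condition. Beyond that subtlety, the argument is essentially a multi-instrument, multi-endogenous-variable analog of the reasoning behind Theorem \ref{thm:identSetOnFFhomogTrt}, leveraging the polyhedral structure of $\mathcal{B}(\delta)$ from Theorem \ref{thm:idset:homog:gen}.
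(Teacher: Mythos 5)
Your proof is correct, but the way you establish the key singleton property is genuinely different from the paper's. The paper proves a standalone geometric lemma (lemma \ref{lemma:singletonOnFASgeneralCase}, built on lemma \ref{lemma:generalLinearSingletonOnFF}): for \emph{every} $b \in \mathcal{P}$, $\mathcal{B}(\delta(b)) = \{b\}$, with no reference to the frontier. That proof goes through the half-space representation of $\text{FAS}^*$, the affine-weight decomposition of lemma \ref{lemma:weightedAverageOf2SLS}, and a sign argument; it has to be self-contained because the same lemma is an ingredient in the proof of proposition \ref{prop:generalLinearFF} itself. Your argument instead takes proposition \ref{prop:generalLinearFF} as given and derives the singleton property only for $\delta$ already known to lie on the frontier: frontier minimality (definition \ref{def:abstractFF}) forces every $b \in \mathcal{B}(\delta)$ to saturate all $L$ constraints, convexity of $\mathcal{B}(\delta)$ from theorem \ref{thm:idset:homog:gen} lets you pass to the midpoint, the equality case of the triangle inequality gives $\pi_\ell'(b_1 - b_2) = 0$ for every $\ell$, and A\ref{assump:homog:relevance:gen}$^\prime$.1 kills the difference. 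This is shorter and considerably more elementary than the paper's route, and it is logically sound as a proof of the theorem since proposition \ref{prop:generalLinearFF} is an established prior result. What it buys is simplicity; what it gives up is that your singleton argument could not be recycled to prove proposition \ref{prop:generalLinearFF} itself (it presupposes frontier membership), whereas the paper's lemma does double duty. One small point worth making explicit in a write-up: your argument shows $\mathcal{B}(\delta)$ is a singleton and that $b^* \in \mathcal{B}(\delta(b^*))$ by direct verification, so $\mathcal{B}(\delta(b^*)) = \{b^*\}$; combining this with \emph{both} inclusions of proposition \ref{prop:generalLinearFF} (every frontier point has the form $\delta(b^*)$, and every $\delta(b^*)$ with $b^* \in \mathcal{P}$ is on the frontier) then yields $\bigcup_{\delta \in \text{FF}} \mathcal{B}(\delta) = \mathcal{P}$ exactly as you state.
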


We use Farkas' lemma and Caratheodory's theorem from convex analysis to prove these results. As in the previous cases, $\mathcal{P}$ can be computed by just running a variety of 2SLS regressions. Next note that, although each $\mathcal{P}_\mathcal{L}$ is convex, their union generally is not. Nonetheless, we are often only interested in linear functionals of the coefficient vector $\beta$. For example, we often care about just one component of $\beta$. The following corollary shows that the falsification adaptive set for a linear functional of $\beta$ again has a simple form. For this result, recall the definition of $\text{FAS}^*$ from equation \eqref{eq:FASstar}. 

\begin{corollary}\label{corr:FASprojection}
Suppose A\ref{assump:homog:relevance:gen}$^\prime$, A\ref{assump:homog:nonsing:gen}, and A\ref{assump:exogeneity:gen} hold. Suppose the joint distribution of $(Y,X,Z)$ is known. Then $\text{FAS}^*$ contains the falsification adaptive set for $\beta$. Moreover, for any $\alpha \in \R^K$ the falsification adaptive set for $\alpha' \beta$ is
\[
	\left[\min_{\mathcal{L}\subseteq \{1,\ldots,L\}, |\mathcal{L}| = K} \alpha' \beta_{\mathcal{L}}^\textsc{2sls}, \
	\max_{\mathcal{L}\subseteq \{1,\ldots,L\}, |\mathcal{L}| = K} \alpha' \beta_{\mathcal{L}}^\textsc{2sls}\right].
\]
\end{corollary}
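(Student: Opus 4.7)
\textbf{Proof Plan for Corollary \ref{corr:FASprojection}.}

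The plan is to deduce both claims from Theorem \ref{thm:generalLinearFAS}, which identifies the falsification adaptive set for $\beta$ with the (generally non-convex) union $\mathcal{P} = \bigcup_{\mathcal{L} : |\mathcal{L}|= K+1} \mathcal{P}_\mathcal{L}$. The first claim, $\mathcal{P} \subseteq \text{FAS}^*$, is essentially immediate: every vertex of every $\mathcal{P}_\mathcal{L}$ is by construction some $\beta_{\mathcal{L} \setminus \{\ell\}}^{\textsc{2sls}}$ with $|\mathcal{L} \setminus \{\ell\}| = K$, and $\text{FAS}^*$ is defined as the convex hull of all such just-identified 2SLS estimands. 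Hence each simplex $\mathcal{P}_\mathcal{L}$ is a subset of $\text{FAS}^*$, and so is their union.

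For the second claim, apply the linear functional $\alpha'\cdot$ to this inclusion to get $\alpha'(\mathcal{P}) \subseteq \alpha'(\text{FAS}^*)$. Since $\text{FAS}^*$ is the convex hull of the finite set $\{\beta_\mathcal{L}^{\textsc{2sls}} : |\mathcal{L}|=K\}$, its image under the linear map $\alpha'\cdot$ is exactly the interval $[\min_\mathcal{L} \alpha'\beta_\mathcal{L}^{\textsc{2sls}}, \max_\mathcal{L}\alpha'\beta_\mathcal{L}^{\textsc{2sls}}]$. This gives the $\subseteq$ direction of the stated identity.

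The main obstacle is the reverse inclusion, because $\mathcal{P}$ itself is not convex, so we cannot simply invoke linearity. I would establish this via a path-connectedness argument. The key observation is that if $\mathcal{L}_1, \mathcal{L}_2$ are subsets of $\{1,\ldots,L\}$ of size $K$ differing in exactly one element, then $\mathcal{L}^\prime = \mathcal{L}_1 \cup \mathcal{L}_2$ has size $K+1$, and both $\beta_{\mathcal{L}_1}^{\textsc{2sls}}$ and $\beta_{\mathcal{L}_2}^{\textsc{2sls}}$ are vertices of $\mathcal{P}_{\mathcal{L}^\prime}$; hence the line segment connecting them lies in $\mathcal{P}_{\mathcal{L}^\prime} \subseteq \mathcal{P}$. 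Since the Johnson graph $J(L,K)$ (whose vertices are size-$K$ subsets of $\{1,\ldots,L\}$, with edges between subsets differing in exactly one element) is connected, I can chain such segments to produce a continuous path in $\mathcal{P}$ between any two $\beta_{\mathcal{L}_1}^{\textsc{2sls}}$ and $\beta_{\mathcal{L}_2}^{\textsc{2sls}}$. Therefore $\mathcal{P}$ is path-connected.

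To finish, note that $\alpha'\cdot$ is continuous, so $\alpha'(\mathcal{P})$ is a connected subset of $\R$, hence an interval. This interval contains the point $\alpha'\beta_\mathcal{L}^{\textsc{2sls}}$ for every $\mathcal{L}$ with $|\mathcal{L}|=K$ (each such vertex lies in some $\mathcal{P}_{\mathcal{L}^\prime}$, hence in $\mathcal{P}$), and in particular it contains both the minimum and the maximum of these values. Being an interval, it must then contain the full segment $[\min_\mathcal{L} \alpha'\beta_\mathcal{L}^{\textsc{2sls}}, \max_\mathcal{L} \alpha'\beta_\mathcal{L}^{\textsc{2sls}}]$, which combined with the first inclusion yields equality. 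The delicate part to verify carefully is the claim that every vertex of $\text{FAS}^*$ indeed appears as a vertex of some $\mathcal{P}_{\mathcal{L}^\prime}$ with $|\mathcal{L}^\prime|=K+1$, but this is automatic whenever $L \geq K+1$, which is guaranteed by the order condition implied by A\ref{assump:homog:relevance:gen}$^\prime$.
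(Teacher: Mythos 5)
Your proof is correct, and its combinatorial core coincides with the paper's: both arguments ultimately rest on the fact that index sets differing in exactly one element give pieces of $\mathcal{P}$ sharing a just-identified 2SLS vertex, so the pieces can be chained together. The difference is where the chaining happens. The paper projects first: it proves that projection commutes with convex hulls, computes $\text{proj}(\text{FAS}^*)$ and $\text{proj}(\mathcal{P})$ separately, writes the latter as a union of closed intervals indexed by $(K+1)$-subsets, and argues that consecutive intervals overlap so the union is one interval---the existence of the required sequence $(\mathcal{L}_1,\ldots,\mathcal{L}_N)$ is asserted somewhat informally. You instead chain upstairs in $\R^K$: you show $\mathcal{P}$ is path-connected by joining the estimands $\beta_{\mathcal{L}}^\textsc{2sls}$ along edges of the Johnson graph $J(L,K)$, each edge realized as a segment inside the convex set $\mathcal{P}_{\mathcal{L}_1 \cup \mathcal{L}_2}$, and then conclude that $\alpha'(\mathcal{P})$ is a connected subset of $\R$ containing every $\alpha'\beta_{\mathcal{L}}^\textsc{2sls}$, hence the whole interval. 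Your route makes the connectivity step fully rigorous (connectivity of $J(L,K)$ for $K \leq L-1$ is standard, whereas the paper's sequence construction is left implicit) and dispenses with the projection-commutes-with-hull lemma by deriving the upper bound directly from $\mathcal{P} \subseteq \text{FAS}^*$; the paper's route, in exchange, gives an explicit description of $\text{proj}(\mathcal{P})$ as a union of named intervals. Your closing caveat---that every $K$-subset extends to a $(K+1)$-subset---is correctly dispatched by $L \geq K+1$, which is implicit in A1$^\prime$.
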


This result shows that we can simply cycle through all possible just identified models, compute the corresponding 2SLS estimand, take the convex hull, and project it onto one component to get the FAS for that component.

The right plot in figure \ref{FAS_K2L3} illustrates the $L > K+1$ case. Here we have $K=2$ and $L=4$. There are 6 different just-identified 2SLS estimands. The falsification adaptive set is no longer a convex set. Nonetheless, the projection of the convex hull of \emph{all} just-identified 2SLS estimands onto the first component still gives the falsification adaptive set for $\beta_1$. Moreover, this projection can be computed by simply taking the largest and smallest estimated values of $\beta_1$, among the just-identified 2SLS estimands. In this sense, corollary \ref{corr:FASprojection} directly generalizes theorem \ref{thm:identSetOnFFhomogTrt} to the case $K > 1$.

We conclude this subsection by briefly discussing estimation and inference. Corollary \ref{corr:FASprojection} shows that the FAS for a linear functional of $\beta$ has an intersection bounds form. As in the $K=1$ case, researchers can present sample analog estimators or bias-corrected versions, along with corresponding confidence sets (see \citealt{ChernozhukovLeeRosen2013}). Our discussion of weak instruments in section \ref{subsec:FASestimationLinearIV} applies here as well. More generally, researchers may want to do estimation and inference on the FAS for the entire vector $\beta$. Theorem \ref{thm:generalLinearFAS} shows that this set is defined by a finite set of unconditional linear moment inequalities. Note that here we have omitted covariates. Depending on how covariates enter the model, the FAS may continue to depend only on unconditional linear moment inequalities, or it may also depend on conditional linear moment inequalities. See appendix \ref{sec:covariatesInLinearModel} for details. While there is a large literature on inference in general moment inequality models---see \cite{CanayShaikh2017} and \cite{Molinari2019} for surveys---there is now a growing literature on the linear case. This includes \cite{HsiehShiShum2017}, \cite{AndrewsRothPakes2019}, \cite{ChoRussell2019}, and \cite{Gafarov2019}. We conjecture that some of these results can be applied to do inference on the FAS for $\beta$, the FAS for subvectors of $\beta$ with two or more components, and the FAS for nonlinear functionals of $\beta$. We leave a full analysis for future work, however.

\setlength{\heavyrulewidth}{0.03em}
\setlength{\lightrulewidth}{0.03em}

\section{Empirical Applications}\label{sec:empirical}

In this section we apply our results from section \ref{sec:homogModel} to four different empirical studies: \citet*[\emph{The Review of Economic Studies}]{DurantonMorrowTurner2014}, \citet*[\emph{The Quarterly Journal of Economics}]{AlesinaGiulianoNunn2013}, \citet*[\emph{The American Economic Review}]{AcemogluJohnsonRobinson2001}, and \citet[\emph{Econometrica}]{Nevo2001}. Each paper reports 2SLS estimates using multiple instrumental variables and each paper discusses concerns about instrument validity. In particular, all four papers run overidentification tests, which sometimes fail. Even when they do not fail, the authors sometimes express a concern that this could simply be due to low sample size. To address these concerns, we estimate falsification adaptive sets and compare them with the results reported in the original papers. We argue that the FAS is an informative complement to traditional overidentification test $p$-values: Rather than focusing on the null hypothesis that the instruments are consistent with each other, the FAS summarizes the range of estimates obtained from alternative models which which are not refuted by the data.

\subsection{Roads and Trade: \cite*{DurantonMorrowTurner2014}}\label{sec:DMT2014}

\def\mystrut{\rule{0pt}{1.25\normalbaselineskip}}

\cite{DurantonMorrowTurner2014} study the relationship between roads and trade. Specifically, they consider a dataset of 66 regions (`cities') in the United States. Their treatment variable is the log number of kilometers of interstate highways within a city, in 2007. This variable directly affects the cost of leaving a city, and therefore the cost of exporting from a city: It is easier to export from a city with many kilometers of interstate highways passing through it. Their outcome variable is a measure of how much that city exports. They consider two different ways of measuring exports: Weight (in tons) and value (in dollars). We focus on the weight measure for brevity. We discuss the value measure in appendix \ref{sec:additionalEmpirics}. They begin by estimating a gravity equation relating the weight of a city's exports to other cities with the highway distance between those cities, both measured in 2007. This equation includes a fixed effect for the exporting city. The estimate of this fixed effect is their main outcome variable. They call this variable the ``propensity to export weight.'' Thus their main goal is to estimate the causal effect of within city highways on the propensity to export weight.

We cannot learn this causal effect by simply regressing the propensity to export weight on within city highways since there is a classic simultaneity problem. We expect that building highways within the city will boost exports. But high export cities may also build more highways to facilitate their existing exports. The authors solve this problem by instrumenting for the number of kilometers of within city highways. They consider three different instruments:
\begin{enumerate}
\item \emph{Railroads}: The log number of kilometers of railroads in the city, in 1898.

\item \emph{Exploration}: A measure of the quantity of historical exploration routes that passed through the city. Specifically, they digitized five maps of exploration routes between 1528 and 1850. For each map, they counted the number of $1 \times 1$ km pixels crossed by an exploration route in the city, summed over all five maps, and took logs.

\item \emph{Plan}: The log number of kilometers of highway in the city, according to a planned highway construction map approved by the federal government in 1947. This map was written following the 1944 Federal Aid Highway Act. \cite{BaumSnow2007} had previously used this instrument, and provides a detailed history.

\end{enumerate}
We first summarize their arguments for validity of these instruments. We then review their analysis and present our new results.

\subsubsection*{Arguments for Instrument Validity}

Instrument relevance can be directly assessed from the first stage results in the data. We discuss this later. Still, there are a priori reasons to believe the instruments will be correlated with treatment. Building railroad tracks requires leveling ground. So does building roads. Thus old unused railroad tracks can be easily converted to highways. Exploration routes were easy to travel on foot, horseback, or wagon. Such routes are also likely to be good for cars. Finally, most of the highways on the 1947 plan were ultimately built, although additional highways not on the plan were built as well.

Next consider instrument exogeneity and exclusion. Exogeneity concerns the presence of omitted variables that affect the instrument and the outcome, or of simultaneity between the instrument and the outcome. Exclusion concerns the direct causal effect of the instrument on outcomes. We consider each instrument in turn:
\begin{enumerate}
\item \emph{Railroads}. Consider exogeneity. First they argue that simultaneity is not a concern, as railroads were ``constructed by private companies expecting to make a profit from railroad operations in a not too distant future,'' rather than over 100 years later. Moreover, in 1898 the U.S. was primarily agricultural, and railroads were built ``to transport grain, livestock and lumber as well as passengers over long distances''. This suggests that omitted variables may not be a concern, since modern manufactured goods trade in 2007 is quite different from agricultural and commodities trade in 1898. There are, however, three specific omitted variables they are worried about:
\begin{enumerate}
\item City population in 1898 affects the present of railroads in 1898. It also affects 2007 population, which affects trade in 2007. To address this, they control for population in 1920, the closest data they have to 1898.

\item Certain geographical features may make some cities more attractive for both railroad construction in 1898, as well as for trade in 2007. Thus they control for census region fixed effects, slope, and distance to the nearest body of water.

\item Time-invariant city productivity may affect 1898 railroads and trade in 2007. They argue that this can be controlled for by including population in 1950 and 2000 as controls.
\end{enumerate}
Next consider exclusion.
Here they are worried about two direct paths from 1898 railroads to trade in 2007:
\begin{enumerate}
\item Railroads may cause a city to specialize in manufacturing, which may then persist through the present. They address this by controlling for the log share of manufacturing employment in 1956 and in 2003. They also control for the weight of a city's exports in 1956.

\item Railroads may affect the socioeconomic characteristics of cities, which then affects patterns of trade in 2007. They address this by controlling for log income per capita and log share of the population with at least a college degree.
\end{enumerate}

\item \emph{Exploration}. Consider exogeneity. First they argue that simultaneity is not a concern, as exploration routes were forged for a variety of reasons, including the search for the fountain of youth, expanding U.S.\ territory to the Pacific Ocean, and finding emigration routes to Oregon. Moreover, this also suggests that omitted variables may not be a concern, since the variables affecting choice of exploration route are related to these motives like the search for the fountain of youth, rather than modern manufacturing. That said, there is one set of omitted variables they are concerned about: Geographical features. These could affect both exploration routes and trade patterns in 2007. They address this concern via the same set of geographic controls mentioned above in our discussion of the railroad instrument. Finally, they do not discuss any specific concerns regarding the exclusion restriction.

\item \emph{Plan}. Consider exogeneity. As with the other two instruments, they first argue that simultaneity is not a concern: The plan was explicitly drawn to ``connect population centers [in 1947], not to anticipate future population levels and trade patterns.'' However, because population is persistent, this could be an omitted variable. Hence they control for population in 1950 and 2007 as before. As with the exploration instrument, they do not discuss any specific concerns regarding the exclusion restriction.
\end{enumerate}
Overall, the authors raise concerns about validity of all three instruments. Although they address these concerns with various controls, these controls may still not perfectly fix failures of exogeneity, exclusion, or both. Hence the authors lean on overidentification, stating that
\begin{quote}
``Using different instruments, for which threats to validity differ, allows for informative over-identification tests.'' (page 700)
\end{quote}
With this motivation, we next present the results.

\subsubsection*{Results}

\begin{table}[!t]%
\scriptsize
\caption[]{\label{DMTtable1} Baseline 2SLS results for \cite{DurantonMorrowTurner2014}: The effect of highways on export weight. Non-highlighted parts reproduce results from their paper. Highlighted parts are new. Panel A reproduces columns 1--4 of their table 5. It also shows the estimated falsification adaptive set. Panel B uses only two of their instruments, controlling for the other. See text for discussion.}
\vspace{2mm}

\begin{adjustwidth}{-0.25in}{-0.25in}

\setlength{\linewidth}{.1cm}
\newcommand{\contents}{
\centering

\begin{tabular}{lcccc} \hline
\mystrut
& \multicolumn{4}{c}{Dependent variable: Export weight} \\[0.3em]
 & (1) & (2) & (3) & (4) \\[0.5em]
 \hline
\multicolumn{5}{l}{Panel A. Plan, exploration, and railroads used as instruments} \mystrut \\[0.5em]
\hline

\mystrut
log highway km & 1.13*** & 0.57*** & 0.47*** & 0.39***  \\
 & (0.14) & (0.16) & (0.14) & (0.12) \\[0.4em]
log employment &  & 0.52*** & 0.69* & 0.47  \\
 &  & (0.11) & (0.39) & (0.33)   \\[0.4em]
Market access (export) &  & -0.45*** & -0.65*** & -0.63***  \\
 &  & (0.14) & (0.14) & (0.11)  \\[0.4em]
log 1920 population &  &  & -0.38 & -0.29 \\
 &  &  & (0.25) & (0.23)  \\[0.4em]
log 1950 population &  &  & 1.00** & 0.65* \\
 &  &  & (0.39) & (0.38)  \\[0.4em]
log 2000 population &  &  & -0.74 & -0.20  \\
 &  &  & (0.48) & (0.45)  \\[0.4em]
log \% manuf. emp. &  &  &  & 0.64***  \\
 &  &  &  & (0.12)  \\[0.4em]
 First-stage $F$ stat. & 97.5 & 90.3 & 80 & 84.8  \\[0.4em]
Overid. $p$-value & 0.10 & 0.043 & 0.15 & 0.31  \\[0.4em]
\rowcolor{lightgray} FAS & [0.49, 0.86] & [-0.32, 0.28] & [-0.26, 0.31] & [0.18, 0.42]  \\[0.4em]
 \hline
\rowcolor{lightgray} \multicolumn{5}{l}{Panel B. Plan and exploration used as instruments, controlling for railroads} \mystrut \\[0.5em]
\hline
\rowcolor{lightgray} \mystrut
 log highway km & 0.79*** & 0.17 & 0.21 & 0.23 \\
\rowcolor{lightgray}  & (0.24) & (0.20) & (0.15) & (0.14)  \\[0.4em]
\rowcolor{lightgray} log 1898 railroad km & 0.33** & 0.33*** & 0.25** & 0.16  \\
\rowcolor{lightgray}  & (0.15) & (0.12) & (0.12) & (0.10)  \\[0.4em]
\rowcolor{lightgray}  First-stage $F$ stat. & 61.1 & 65.4 & 77.8 & 82.2 \\[0.4em]
\rowcolor{lightgray} Overid. $p$-value & 0.64 & 0.51 & 0.48 & 0.72  \\[0.4em]
 \hline
\multicolumn{5}{p{\linewidth}}{\emph{Notes}: 66 observations per column. All specifications include a constant. Heteroskedasticity robust standard errors in parentheses. ***, **, *: statistically significant at 1\%, 5\%, 10\%.}
\end{tabular}
}

\setbox0=\hbox{\contents}
\setlength{\linewidth}{\wd0-2\tabcolsep-.25em}
\contents

\end{adjustwidth}
\end{table}

First consider table \ref{DMTtable1}. In this and all other tables, the non-highlighted parts reproduce results from the original paper. The highlighted parts are new computations which we have added. Panel A reproduces columns 1--4 of table 5 in \cite{DurantonMorrowTurner2014}. These are their main results. In particular, they are interested in the coefficient on log highway km, the log number of highway kilometers within the city. This coefficient represents their estimate of the causal effect of roads on trade. Here it is estimated by 2SLS, using railroads, exploration, and plan as instruments. As mentioned above, we are concerned about possible invalidity of the instruments. Thus they implement the standard test of overidentifying restrictions. At conventional sizes, it easily passes in the longest specification, fails in the second specification, and marginally passes in the first specification. Also note that these specifications do not include all of the controls discussed above; the authors include those in separate analyses, which we discuss later (our table \ref{DMTappendixTable6}).

We add the falsification adaptive set to these baseline results. This is the last row of panel A. There are two things to notice: First, except for the last specification, none of the 2SLS estimates are within the FAS. We knew this was possible, given our theoretical results in appendix \ref{sec:comparing2SLSandFAS} (which relate the 2SLS estimand to the FAS). Second, the FAS magnitudes are all generally smaller than the 2SLS point estimates.

\begin{table}[!t]%
\scriptsize
\caption[]{\label{DMTtable2} The effect of controlling for unused instruments. Non-highlighted parts reproduce results from their paper. Highlighted parts are new. All columns have employment, market access, and past populations as controls. Columns 4-6 also have manufacturing share of employment as controls. Compare to table 6 of \cite{DurantonMorrowTurner2014}.}
\vspace{2mm}

\begin{adjustwidth}{-0.25in}{-0.25in}

\setlength{\linewidth}{.1cm}
\newcommand{\contents}{
\centering

\begin{tabular}{lccc | ccc} \hline
\mystrut
 & (1) & (2) & (3) & (4) & (5) & (6) \\
  & Plan & Railroad & Exploration & Plan & Railroad & Exploration \\[0.4em] 
\hline
\multicolumn{7}{l}{Panel A. Without controlling for other instruments} \mystrut \\[0.5em]
\hline
\mystrut
log highway km & 0.49*** & 0.83*** & 0.12 & 0.38*** & 0.64*** & 0.34* \\
 & (0.15) & (0.25) & (0.31) & (0.13) & (0.22) & (0.19) \\[0.4em] 
log \% manuf. emp. &  &  &  & 0.64*** & 0.60*** & 0.65*** \\
 &  &  &  & (0.12) & (0.13) & (0.13) \\[0.4em] 
 First stage $F$ stat. & 141 & 45.2 & 14.8 & 130 & 40.8 & 23.8 \\ [0.4em] 
 \hline
\rowcolor{lightgray} \multicolumn{7}{l}{Panel B. Controlling for other instruments} \mystrut \\[0.5em]
\hline
\rowcolor{lightgray}  \mystrut
log highway km & 0.31 & 4.09 & -0.26 & 0.18 & 3.65 & 0.42 \\
\rowcolor{lightgray}  & (0.22) & (4.09) & (0.71) & (0.21) & (4.16) & (0.52) \\[0.4em] 
\rowcolor{lightgray} log \% manuf. emp. &  &  &  & 0.63*** & 0.36 & 0.61*** \\
\rowcolor{lightgray}  &  &  &  & (0.12) & (0.38) & (0.12) \\[0.4em] 
\rowcolor{lightgray} log 1898 railroad km & 0.21* &  & 0.24** & 0.18 &  & 0.16 \\
\rowcolor{lightgray}  & (0.12) &  & (0.11) & (0.11) &  & (0.11) \\[0.4em] 
\rowcolor{lightgray} log 1528-1850 exploration & -0.053 & -0.40 &  & 0.025 & -0.32 &  \\
\rowcolor{lightgray}  & (0.077) & (0.36) &  & (0.065) & (0.40) &  \\[0.4em] 
\rowcolor{lightgray} log 1947 highway km &  & -2.09 & 0.32 &  & -1.90 & -0.13 \\
\rowcolor{lightgray}  &  & (2.50) & (0.46) &  & (2.48) & (0.36) \\[0.4em] 
\rowcolor{lightgray}  First stage $F$ stat. & 59.6 & 1.54 & 29.1 & 54.8 & 1.27 & 27 \\[0.4em]  
\hline
\rowcolor{lightgray} \mystrut
FAS for this specification & \multicolumn{3}{c}{[-0.26, 0.31]} & \multicolumn{3}{c}{[0.18, 0.42]} \\[0.4em]
\hline
\multicolumn{7}{p{\linewidth}}{\emph{Notes}: 66 observations per column. All specifications include a constant. Heteroskedasticity robust standard errors in parentheses. ***, **, *: statistically significant at 1\%, 5\%, 10\%.}
\end{tabular}
}

\setbox0=\hbox{\contents}
\setlength{\linewidth}{\wd0-2\tabcolsep-.25em}
\contents

\end{adjustwidth}
\end{table}

To better understand how we computed the FAS, and how to interpret it, next consider table \ref{DMTtable2}. Columns 1--3 include the same baseline controls as column 3 in table \ref{DMTtable1} while columns 4--6 include the same baseline controls as column 4 in table \ref{DMTtable1}. The only difference is that we no longer use all three variables (plan, railroad, exploration) as instruments. Instead, in panel A, we use only one of these variables as an instrument and we ignore the other two variables. Panel A reproduces columns 4--6 from table 6 in \cite{DurantonMorrowTurner2014}. The authors used these results as their main robustness check. They argue that the three estimates 0.38, 0.64, and 0.34 from columns 4--6, panel A, table \ref{DMTtable2} are consistent with their baseline estimates of 0.47 and 0.39 from columns 3 and 4, panel A, table \ref{DMTtable1}.

However, as we have discussed, omitting an invalid instrument can lead to omitted variable bias. In this application we are concerned that some of the instruments may be invalid. Thus the alternative models of interest are those where one of the instruments is valid but the others are not. When computing results in these alternative models, the invalid instruments should be included as controls. Panel B shows these results. Here we use one instrument while controlling for the other two. For example, in column 1 we use plan as an instrument and control for railroad and exploration.

For brevity, here we only describe the results in columns 4--6. These results use the full baseline specification. Consider column 5, panel B. This result uses railroad as an instrument, controlling for plan and highway. Unlike the uncontrolled result from panel A, railroad is a very weak instrument. Hence we ignore the result using railroad alone, as discussed in section \ref{subsec:FASestimationLinearIV}. Next consider column 4. Here we use plan as the instrument, controlling for the other two. Despite these controls, plan is still a strong instrument. The estimated effect 0.18 is roughly half as large as the estimate from panel A, 0.38. It is also no longer statistically significant at any conventional level. Next consider column 6. Here we use exploration as the instrument, controlling for the other two. Exploration continues to be a strong instrument with these controls. The estimated effect 0.42 in panel B is similar to the effect from panel A, 0.34. It is no longer statistically significant, however. 

Putting these coefficient estimates together gives us the estimated FAS, $[0.18, 0.42]$. The endpoints of this set correspond to point estimates from alternative models which maintain validity of only one instrument at a time. The interior of this set corresponds to alternative models which relax validity of all instruments at once, but just enough to avoid falsification. Thus the FAS reflects model uncertainty: Relying on different instruments to different degrees yields different results. This range of results is given by the FAS.

In panel B of table \ref{DMTtable2} we found that railroad is a weak instrument when controlling for the other two, and hence it yields the largest point estimates. Given this finding, one may also wonder how removing railroads as an instrument affects the baseline analysis. This is shown in panel B of table \ref{DMTtable1}. All of the coefficients on log highway km are smaller, to the point that they are no longer statistically significant for all but the shortest specification. Moreover, the standard overidentification tests are now all easily passed. (Note that these tests are only comparing results using plan and exploration as instruments.) However, the coefficients on railroads are statistically significant for all but the fourth column. This suggests that the full baseline model using all three instruments could be rejected, and also explains the source of the relatively small overidentification test $p$-values in panel A.

\renewcommand{\topfraction}{0.9}

\begin{table}[!t]%
\scriptsize
\caption[]{\label{DMTappendixTable6} The effect of controlling for unused instruments, continued. Non-highlighted parts reproduce results from their paper. Highlighted parts are new. This table extends the analysis of table \ref{DMTtable2} to consider specifications with additional control variables. Columns 1--3 reproduce the results in appendix table 6 of \cite{DurantonMorrowTurner2014}. Columns 4--6 just add controls for the other instruments.}
\vspace{2mm}

\begin{adjustwidth}{-0.25in}{-0.25in}

\setlength{\linewidth}{.1cm}
\newcommand{\contents}{
\centering
\begin{tabular}{c l ccc | aaa a }
&& \multicolumn{3}{c}{Without other IV controls} & \multicolumn{3}{c}{With other IV controls} \\[0.4em]
&& (1) & (2) & (3) & (4) & (5) & (6) & \\[0.4em]
Added variable && Plan & Railroad & Exploration & Plan & Railroad & Exploration & FAS \\[0.4em] 
\hline
\mystrut
\multirow{2}{*}{\parbox{0.08\linewidth}{\centering Water}} 
& log highway km & 0.34** & 0.66** & 0.24 & 0.13 & 3.96 & 0.30 & [0.13, 0.30]\\
&& (0.16) & (0.26) & (0.29) & (0.21) & (4.56) & (0.49) & \\[0.4em] 
& $F$ stat. & 126 & 25.3 & 10.9 & 67.6 & 1.17 & 26.2 & \\[0.4em]
\mystrut
\multirow{2}{*}{\parbox{0.08\linewidth}{\centering Slope}}
&\phantom{log highway km} & 0.39*** & 0.57*** & 0.46** & 0.20 & 3.86 & 0.66 & [0.20, 0.66]\\
 && (0.14) & (0.20) & (0.19) & (0.21) & (5.86) & (0.50)  & \\[0.4em]
& \phantom{First-stage Stat.} & 133 & 44.5 & 22.6 & 60.3 & 0.65 & 25.6 & \\[0.4em]
\mystrut
\multirow{2}{*}{\parbox{0.08\linewidth}{\centering Census regions}}
& \phantom{log highway km} & 0.32** & 0.62*** & 0.36* & -0.012 & 3.68 & 0.40 & [-0.012, 0.40] \\
& & (0.14) & (0.12) & (0.20) & (0.24) & (3.24) & (0.64) & \\[0.4em]
& \phantom{First-stage Stat.} & 122 & 58.3 & 22.6 & 39.4 & 1.59 & 10.7 & \\[0.4em]
\mystrut
\multirow{2}{*}{\parbox{0.08\linewidth}{\centering Percent college}}
&\phantom{log highway km} & 0.29** & 0.56** & 0.41** & 0.013 & 3.64 & 0.78 & [0.013, 0.78] \\
& & (0.13) & (0.23) & (0.18) & (0.19) & (4.13) & (0.49) & \\[0.4em]
& \phantom{First-stage Stat.} & 116 & 36.6 & 29.5 & 47.9 & 1.16 & 25.2 & \\[0.4em]
\mystrut
\multirow{2}{*}{\parbox{0.08\linewidth}{\centering Income per capita}}
&\phantom{log highway km} & 0.35** & 0.63*** & 0.35* & 0.079 & 3.42 & 0.54 & [0.079, 0.54] \\
& & (0.14) & (0.22) & (0.18) & (0.21) & (3.42) & (0.47) & \\[0.4em]
& \phantom{First-stage Stat.} & 123 & 36.8 & 26.4 & 47.8 & 1.70 & 24.9 & \\[0.4em]
\mystrut
\multirow{2}{*}{\parbox{0.08\linewidth}{\centering Percent wholesale}}
&\phantom{log highway km} & 0.41*** & 0.59*** & 0.49*** & 0.22 & 2.71 & 0.75 & [0.22, 0.75] \\
& & (0.12) & (0.21) & (0.12) & (0.19) & (3.29) & (0.49) & \\[0.4em]
& \phantom{First-stage Stat.} & 136 & 39.3 & 23.2 & 54.4 & 1.38 & 25.4 & \\[0.4em]
\mystrut
\multirow{2}{*}{\parbox{0.08\linewidth}{\centering Traffic}}
&\phantom{log highway km} & 0.42** & 1.00* & 0.34 & 0.23 & 5.57 & 0.40 & [0.23, 0.40] \\
& & (0.18) & (0.52) & (0.23) & (0.25) & (9.60) & (0.51)  &\\[0.4em]
& \phantom{First-stage Stat.} & 79 & 13.4 & 44.6 & 44.3 & 0.43 & 26 & \\ [0.4em]
\mystrut
\multirow{2}{*}{\parbox{0.08\linewidth}{\centering All}}
&\phantom{log highway km} & 0.39** & 0.73** & 0.58** & 0.18 & 2.43 & 0.69  & [0.18, 0.69] \\
& & (0.18) & (0.35) & (0.28) & (0.26) & (3.01) & (0.67) & \\[0.4em]
& \phantom{First-stage Stat.} & 52.6 & 19.9 & 15.5 & 32.1 & 0.88 & 6.53 & \\[0.4em]
\hline
\multicolumn{9}{p{\linewidth}}{66 observations per column. All specifications include a constant. Heteroskedasticity robust standard errors in parentheses. ***, **, *: statistically significant at 1\%, 5\%, 10\%.}
\end{tabular}
}

\setbox0=\hbox{\contents}
\setlength{\linewidth}{\wd0-2\tabcolsep-.25em}
\contents

\end{adjustwidth}
\end{table}

Thus far we have focused on the baseline results, which do not include all of the possible control variables discussed earlier. Table \ref{DMTappendixTable6} shows results with these controls. We begin with the full set of baseline control variables, as used in column 4 of table \ref{DMTtable1}. We then add just one control. Each row corresponds to a different control. The last row shows the results that add all controls at once. Unlike the main baseline result, column 4 of table \ref{DMTtable1}, here we only use one instrument at a time. Columns 1--3 use a single instrument, without controlling for the other two. These results reproduce appendix table 6 of \cite{DurantonMorrowTurner2014}. Based on these results, the authors argue that
\begin{quote}
``None of our main results is affected by these controls, even when we use our instruments individually.'' (page 708)
\end{quote}
They also argue that using one instrument at a time is an ``even more demanding exercise'' than examining the effect of additional controls when using all three instruments (not shown here; see their appendix table 5). As we've discussed, however, omitting the invalid instruments may cause omitted variable bias. So in columns 4--6, we replicate columns 1--3, except now controlling for the other two instruments.

There are three main differences between the results with the instrument controls and those without. First, the railroad instrument is again very weak, leading to large coefficients. This informs our understanding of the results from columns 1--3, since there we observed that the coefficients in column 2 are always larger than those in columns 1 and 3, and are often substantially larger. Second, none of the results are statistically significant at conventional levels. Finally, the coefficients using plan as the instrument all become smaller once the other instruments are controlled for (column 4 versus column 1), while the coefficients using exploration as the instrument all become larger once the other instruments are controlled for (column 6 versus column 3). Thus, ignoring the results using railroads, the overall range of point estimates is larger. This is reflected in the falsification adaptive sets, which are presented in the final column.

Overall, there are two main conclusions from our analysis: First, the evidence suggests that the railroad instrument is the most questionable, and should be used only as a control. Thus the estimates in panel B of table \ref{DMTtable1} are arguably the most appropriate baseline results. Second, there is substantially more uncertainty in the magnitude of the causal effect of roads on trade than suggested by the original results of \cite{DurantonMorrowTurner2014}. This is reflected in the various falsification adaptive sets we present. In particular, the FAS for the longest specification is $[0.18, 0.69]$; see table \ref{DMTappendixTable6}. Accounting for sampling uncertainty only increases this range. That said, these results do not change the overall qualitative conclusions of the paper: All points in the estimated FAS for the longest specification are still positive, suggesting that the number of within city highways appears to positively affect propensity to export weight.

\subsection{The Origin of Gender Roles: \cite*{AlesinaGiulianoNunn2013}}

\cite{AlesinaGiulianoNunn2013} study the relationship between traditional agricultural practices and modern gender norms. Specifically, following \cite{Boserup1970} they distinguish between two kinds of traditional agriculture: shifting cultivation, which ``is labor intensive and uses handheld tools like the hoe and the digging stick,'' and plough cultivation, which ``is much more capital intensive, using the plough to prepare the soil.'' They note that ``unlike the hoe or digging stick, the plough requires significant upper body strength, grip strength, and bursts of power.'' Hence ``when plough agriculture is practiced, men have an advantage in farming relative to women.'' Consequently, in societies that used plough agriculture, ``men tended to work outside the home in the fields, while women specialized in activities within the home.'' Boserup hypothesized that, as \cite{AlesinaGiulianoNunn2013} put it, ``this division of labor then generated norms about the appropriate role of women in society'' which persist today.

\cite{AlesinaGiulianoNunn2013} consider a variety of datasets and methods to test this hypothesis. We focus on their cross country instrumental variable analysis. Their dataset includes 160 countries. Their treatment variable is the ``estimated proportion of citizens with ancestors that traditionally used the plough in pre-industrial agriculture.'' They explain how they construct this variable in section 3. For this analysis they consider three different outcome variables: The female labor force participation rate in 2000, the share of a country's firms with some female ownership (measured using data between 2005--2011), and the share of political positions held by women in 2000. Their appendix A2 gives further details on the definition of these variables. They also consider a composite outcome variable, called the average effect size (AES). This is constructed by first dividing each outcome variable by its standard deviation and then averaging the three variables. %

Any correlation between treatment and outcomes is not necessarily causal for a few reasons. First, there may be reverse causality: Countries with less equal gender norms may have actively adopted innovations like the plough which conformed with these norms. Second, there may be omitted variables: Areas that were economically more developed may have been more likely to both adopt the plough and to have more equal gender norms today. This would tend to mask any causal effect of the plough on gender norms. To address these concerns, the authors use an instrumental variable approach.

The authors use properties of countries' ancestral land geography as instruments. Specifically, \cite{Pryor1985} classifies crops as either \emph{plough positive} or \emph{plough negative}. Plough positive crops benefit from the use of the plough while plough negative crops generally do not. The authors focus on cereal crops only, since they are similar except for how much they benefit from the plough. Given this classification, they construct instruments as follows: Consider a large piece of land. First measure the area of land that is suitable for growing crops in general. Pick a single plough positive cereal. Within this area of overall suitability, compute the area of land suitable for growing this specific plough positive cereal. Do this for each plough positive cereal. Note that some land is suitable for growing multiple crops, so there will be overlap in these areas. So take the average area across all plough positive cereals. Divide by the area that is suitable for growing crops overall. That is their measure of plough positive share. Next do the same for plough negative crops. Thus we have a procedure for defining suitability of a single piece of land. Using this procedure, the authors define the country level variable \emph{plough positive environment} as the ``average fraction of ancestral land that was suitable for growing [plough positive cereals] divided by the fraction that was suitable for any crops.'' They use the same procedure as in their definition of the treatment variable, described in their section 3. They define the country level variable \emph{plough negative environment} analogously.

\begin{table}[!t]
\scriptsize
\caption[]{\label{AGN2013table1} Country level 2SLS results from \cite{AlesinaGiulianoNunn2013}. Non-highlighted parts reproduce results from their paper. Highlighted parts are new. This table reproduces panels A and C of their table 8. It also shows the estimated falsification adaptive set, which here is a singleton since one of the two instruments is weak.}
\vspace{2mm}

\begin{adjustwidth}{-0.25in}{-0.25in}

\setlength{\linewidth}{.1cm}
\newcommand{\contents}{
\centering

\begin{tabular}{lcccccccc}
\hline
\mystrut
 & \multicolumn{2}{c}{\multirow{3}{*}{\parbox{0.17\linewidth}{\centering Female labor force participation in 2000}}} 
 & \multicolumn{2}{c}{\multirow{3}{*}{\parbox{0.17\linewidth}{\centering Share of firms with female ownership, 2005--2011}}} 
 & \multicolumn{2}{c}{\multirow{3}{*}{\parbox{0.17\linewidth}{\centering Share of political positions held by women in 2000}}} 
 & \multicolumn{2}{c}{\multirow{3}{*}{\parbox{0.17\linewidth}{\centering Average effect size (AES)}}} \\[3em]
 & (1) & (2) & (3) & (4) & (5) & (6) & (7) & (8) \\[0.4em]
 \hline
\multicolumn{7}{l}{Panel A. Second stage 2SLS estimates} \mystrut \\[0.5em]
\hline
\mystrut
Traditional plough use & -21.6*** & -25.0*** & -17.5*** & -22.7*** & -6.46*** & -9.73*** & -0.92*** & -1.31*** \\
 & (5.25) & (7.51) & (5.53) & (7.62) & (2.33) & (3.75) & (0.23) & (0.39) \\[0.4em]
Continent fixed effects &  & Y &  & Y &  & Y &  & Y \\[0.4em]
Overid. $p$-value & 0.00091 & 0.00012 & 0.41 & 0.25 & 0.72 & 0.86 & 0.048 & 0.027 \\[0.4em]
\rowcolor{lightgray} FAS &-14.3*** & -18.0*** & -15.3** & -18.9** & -7.08** & -9.99** & -0.73*** & -1.08***  \\
\rowcolor{lightgray}  & (5.41) & (6.78) & (6.18) & (8.11) & (3.00) & (4.11) & (0.23) & (0.38) \\[0.4em]
Observations & 160 & 160 & 122 & 122 & 140 & 140 & 104 & 104 \\[0.4em]
 \hline
\multicolumn{7}{l}{Panel B. First stage estimates. Dependent variable: Traditional plough use} \mystrut \\[0.5em]
\hline
\mystrut
Plough-pos. environment & 0.74*** & 0.63*** & 0.86*** & 0.67*** & 0.82*** & 0.68*** & 0.87*** & 0.72*** \\
 & (0.084) & (0.089) & (0.078) & (0.10) & (0.082) & (0.10) & (0.089) & (0.12) \\[0.4em]
Plough-neg. environment & 0.12 & 0.18 & 0.100 & 0.12 & 0.13 & 0.19 & 0.13 & 0.14 \\
 & (0.12) & (0.13) & (0.17) & (0.17) & (0.13) & (0.14) & (0.18) & (0.19) \\[0.4em]
$F$-stat (both) & 40.2 & 25.1 & 66.8 & 21.9 & 52 & 21.9 & 49.5 & 18.5 \\[0.4em]
\rowcolor{lightgray} $F$-stat (pos. only) & 80 & 57.7 & 123 & 42.8 & 101 & 43.6 & 95.6 & 36.6 \\[0.4em] 
\rowcolor{lightgray} $F$-stat (neg. only) & 0.58 & 1.55 & 0.36 & 0.45 & 1.03 & 1.76 & 0.51 & 0.57 \\[0.4em]
\hline
\multicolumn{9}{p{\linewidth}}{\emph{Notes}: These results include a variety of historical and contemporary controls. See the table 8 notes in \cite{AlesinaGiulianoNunn2013} along with their discussion in the text for details. Heteroskedasticity robust standard errors in parentheses. ***, **, *: statistically significant at 1\%, 5\%, 10\%.} 
\end{tabular}
}

\setbox0=\hbox{\contents}
\setlength{\linewidth}{\wd0-2\tabcolsep-.25em}\contents

\end{adjustwidth}
\end{table}

Their main concern with validity of these instruments is failure of exogeneity due to geographical variables which affect the instruments and also modern gender norms. In their baseline results (table 8) they include a variety of geographical controls to account for this. They consider more extensive geographical controls in appendix table A14. They consider many other controls in appendix table A15. Finally, they also use the presence of two instruments to conduct overidentification tests.

Table \ref{AGN2013table1} reproduces their main instrumental variable results. As in our analysis of \cite{DurantonMorrowTurner2014}, the non-highlighted parts reproduce results from the original paper. The highlighted parts are new computations which we have added. First focus on their results. The overidentification test fails at the conventional 5\% level for columns 1 and 2, as well as for the average effect size, columns 7 and 8.\footnote{The authors report overidentification test $p$-values of 0.31 in column 4 and 0.06 in column 8. These differ slightly from the $p$-values we report in table \ref{AGN2013table1}. The reason is that we use heteroskedasticity robust standard errors for all calculations, including columns 4 and 8. The authors used homoskedastic standard errors for their columns 4 and 8 overidentification test (but not for the standard errors reported under their point estimates).} The authors do not comment on this finding. To provide a constructive response to this rejection, we compute the falsification adaptive set. To do this, we must first check instrument relevance. As shown in panel B, plough positive environment satisfies the relevance assumption. Plough negative environment, however, does not. The authors commented on this, since they did present the first stage coefficients on both instruments, although they did not present the separate $F$-test statistics as we do. Despite this relevance failure, the authors used both instruments in their main results. 

In panel A, we present the estimated falsification adaptive set. Since there are only two instruments, one of which fails relevance, this set is simply a singleton. Specifically, this is the point estimate from the model using plough positive environment as an instrument, controlling for plough negative environment. Overall, the authors' qualitative findings all continue to hold using the FAS: All point estimates are still negative and statistically significant at conventional levels. Quantitatively, however, the FAS point estimates all have smaller magnitudes than the original baseline findings, except for columns 5 and 6. Excluding those columns, the point estimates are an average of 22\% smaller.
Finally, note that the columns where the estimates changed the least---columns 5 and 6---are those with the largest overidentification test $p$-values. Likewise, the columns where the estimates changed the most---columns 1 and 2---are the ones with the smallest overidentification test $p$-values. This is to be expected, given how the overidentification test is constructed. Our point is that the FAS is an informative complement to these $p$-values, since it directly summarizes the range of estimates corresponding to non-refuted alternative models. As in our analysis of \cite{DurantonMorrowTurner2014}, here it also highlighted the weakness of the plough negative instrument, suggesting that the FAS point estimates we present in panel A of table \ref{AGN2013table1} are more appropriate baselines than those originally presented by the authors.

\subsection{Colonial Origins of Development: \cite*{AcemogluJohnsonRobinson2001}}

\newcolumntype{K}[1]{>{\centering\arraybackslash}p{#1}}
\begin{table}
\scriptsize
\caption{\label{AJR2001table8} Results from \cite{AcemogluJohnsonRobinson2001}. Non-highlighted parts reproduce results from their paper. Highlighted parts are new. Panel B reproduces panel D of their table 8. Panel A is analogous to their panel A of table 8, except that we also use log settler mortality as an instrument. Columns  2, 3, 7, and 8 include a control for the number of years since independence. All other columns do not include controls, except as indicated in the table.}
\vspace{4mm}
\centering
\begin{adjustwidth}{-0.5in}{-0.5in}
\begin{tabular}{l K{1.1cm} K{1.1cm} K{1.1cm} K{1.1cm} K{1.2cm} | K{1.1cm} K{1.1cm} K{1.1cm} K{1.1cm} K{1.1cm} }
\hline
\mystrut
& \multicolumn{10}{c}{Dependent variable: Log GDP per capita in 1995} \\[0.3em]
 & (1) & (2) & (3) & (4) & (5) & (6) & (7) & (8) & (9) & (10) \\[0.4em]
 \hline
\multicolumn{11}{l}{Panel A. 2SLS estimates using two instruments at a time: log settler mortality paired with each of the other five instruments} \mystrut \\[0.5em]
\hline
\rowcolor{lightgray} \mystrut
 & 0.89*** & 0.81*** & 0.80*** & 0.67*** & 0.63*** & 0.95*** & 0.83*** & 0.82*** & 0.70*** & 0.65*** \\
\rowcolor{lightgray} 
\multirow{-2}{*}{\parbox{0.17\linewidth}{\raggedright Average protection against expropriation risk, 1985--1995}} & (0.13) & (0.13) & (0.12) & (0.11) & (0.11) & (0.17) & (0.17) & (0.16) & (0.14) & (0.12) \\[1.5em]
\rowcolor{lightgray} Latitude &  &  &  &  &  & -0.60 & -0.30 & -0.21 & -0.75 & -0.51 \\
\rowcolor{lightgray}  &  &  &  &  &  & (1.16) & (1.08) & (1.04) & (0.88) & (0.81) \\[0.4em]
\rowcolor{lightgray} Observations & 63 & 60 & 59 & 60 & 59 & 63 & 60 & 59 & 60 & 59 \\[0.4em]
\rowcolor{lightgray} First-stage $F$ stat. & 17.4 & 12.1 & 13.3 & 9.97 & 11.4 & 10.5 & 7.26 & 7.98 & 7.45 & 8.79 \\[0.4em]
\rowcolor{lightgray} Overid. $p$-value & 0.70 & 0.25 & 0.28 & 0.45 & 0.20 & 0.79 & 0.27 & 0.30 & 0.42 & 0.18 \\[0.4em]
\rowcolor{lightgray}  FAS & [0.81,0.99] & [0.45,1.03] & [0.51,1.03] & [0.48,0.77] & [0.40,0.85] & [0.88,1.02] & [0.42,1.06] & [0.48,1.04] & [0.49,0.84] & [0.41,0.93] \\[0.4em]
  \hline
\multicolumn{11}{l}{Panel B. 2SLS estimates using one of the other five instruments, controlling for log settler mortality} \mystrut \\[0.5em]
\hline
\mystrut
\multirow{2}{*}{\parbox{0.17\linewidth}{\raggedright Average protection against expropriation risk, 1985--1995}} & 0.81*** & 0.45* & 0.51** & 0.48** & 0.40** & 0.88*** & 0.42 & 0.48* & 0.49** & 0.41** \\
 & (0.22) & (0.24) & (0.22) & (0.22) & (0.17) & (0.28) & (0.29) & (0.27) & (0.24) & (0.18) \\[1.5em]
Log settler mortality & -0.067 & -0.25 & -0.21 & -0.14 & -0.19 & -0.050 & -0.26 & -0.22 & -0.14 & -0.19 \\
 & (0.16) & (0.16) & (0.15) & (0.15) & (0.12) & (0.18) & (0.16) & (0.16) & (0.14) & (0.12) \\[0.4em]
Latitude &  &  &  &  &  & -0.52 & 0.38 & 0.28 & -0.38 & -0.17 \\
 &  &  &  &  &  & (1.11) & (0.86) & (0.83) & (0.81) & (0.70) \\[0.4em]
\rowcolor{lightgray}  First-stage $F$ stat. & 9.34 & 3.66 & 4.87 & 3.56 & 5.88 & 6.58 & 2.50 & 3.21 & 3.25 & 5.36 \\[0.4em]
 \hline
\multicolumn{11}{l}{Panel C. 2SLS estimates using log settler mortality, controlling for one of the other five instruments} \mystrut \\[0.5em]
\hline
\rowcolor{lightgray} \mystrut
& 0.99*** & 1.03*** & 1.03*** & 0.77*** & 0.85*** & 1.02*** & 1.06*** & 1.04*** & 0.84*** & 0.93*** \\
\rowcolor{lightgray}  
\multirow{-2}{*}{\parbox{0.17\linewidth}{\raggedright Average protection against expropriation risk, 1985--1995}} & (0.32) & (0.29) & (0.31) & (0.19) & (0.25) & (0.34) & (0.34) & (0.33) & (0.25) & (0.32) \\[1.5em]
\rowcolor{lightgray}  & -0.0039 &  &  &  &  & -0.0029 &  &  &  &  \\
\rowcolor{lightgray}
\multirow{-2}{*}{\parbox{0.17\linewidth}{\raggedright European settlements in 1900}}
  & (0.011) &  &  &  &  & (0.012) &  &  &  &  \\[0.8em]
\rowcolor{lightgray}  &  & -0.10 &  &  &  &  & -0.099 &  &  &  \\
\rowcolor{lightgray}
\multirow{-2}{*}{\parbox{0.17\linewidth}{\raggedright Constraint on executive in 1900}}  &  & (0.11) &  &  &  &  & (0.11) &  &  &  \\[0.8em]
\rowcolor{lightgray} Democracy in 1900 &  &  & -0.074 &  &  &  &  & -0.071 &  &  \\
\rowcolor{lightgray}  &  &  & (0.086) &  &  &  &  & (0.085) &  &  \\[0.4em]
\rowcolor{lightgray} &  &  &  & -0.043 &  &  &  &  & -0.050 &  \\
\rowcolor{lightgray}
\multirow{-2}{*}{\parbox{0.17\linewidth}{\raggedright Constraint on executive in first year of independence}} &  &  &  & (0.064) &  &  &  &  & (0.071) &  \\[2em]
\rowcolor{lightgray} &  &  &  &  & -0.058 &  &  &  &  & -0.064 \\
\rowcolor{lightgray} 
\multirow{-2}{*}{\parbox{0.17\linewidth}{\raggedright Democracy in first year of independence}} &  &  &  &  & (0.057) &  &  &  &  & (0.064) \\[0.8em]
\rowcolor{lightgray} Latitude &  &  &  &  &  & -0.55 & -0.57 & -0.30 & -1.07 & -1.08 \\
\rowcolor{lightgray}  &  &  &  &  &  & (1.25) & (1.39) & (1.30) & (1.12) & (1.22) \\[0.4em]
\rowcolor{lightgray} First-stage $F$ stat. & 6.51 & 7.69 & 6.91 & 8.60 & 6.38 & 6.15 & 5.86 & 5.88 & 6.22 & 4.71 \\[0.4em]
 \hline
 \multicolumn{11}{l}{\emph{Notes}: Homoskedastic standard errors in parentheses. ***, **, *: statistically significant at 1\%, 5\%, 10\%.}
\end{tabular}
\end{adjustwidth}
\end{table}

\cite{AcemogluJohnsonRobinson2001} study the relationship between institutions and economic development. Since this paper is exceptionally well known, here we only briefly summarize it and then present our results. They consider a cross section of about 60 countries. Their treatment variable is the average protection against expropriation risk between 1985 and 1995, a measure of the strength of property rights in the country. Their outcome variable is log GDP per capita in 1995. Their primary instrument measures mortality rates faced by early settlers in the country. Their main results (table 4) use just this one instrument. In table 8, however, they consider five more instruments. We focus on that analysis.

They use the additional instruments as follows: First, they only ever consider pairs of instruments. As they say in footnote 29, they do this to avoid the many instruments problem. We follow them and also only consider two instruments at a time. For each of the five additional instruments, they estimate the coefficient on the treatment variable using either (a) that instrument alone (their panel A) or (b) that instrument alone but also including settler mortality as a control variable (their panel D). They test overidentification by using a Hausman test to compare these two coefficient estimates (their panel C). They also present the estimated coefficient on settler mortality when it is included as a control, noting that it is not statistically indistinguishable from zero at conventional levels (their panel D).

In our table \ref{AJR2001table8} we replicate and extend those results. Whereas \cite{AcemogluJohnsonRobinson2001} focused on using these additional instruments to perform overidentification tests, we instead focus on the point estimates themselves. Thus in our panel A we present a new set of baseline results, which use two instruments at a time. \cite{AcemogluJohnsonRobinson2001} did not present these results. Here we include the classical overidentification test $p$-value; the authors had instead presented a Hausman test result. Columns 1--5 present results with no control variables (except for 3 and 4, which include a single control: years since independence) while columns 6--10 add a control for latitude. 

The overidentification tests do not reject at conventional levels in each column. Nonetheless, it is informative to ask: What estimates would we obtain from alternative non-refutable models? The last line of panel A answers this question by presenting the FAS. This set is computed from the estimated coefficients on the treatment variable in panels B and C. The authors had originally computed one of the endpoints of this set, although they focused on the coefficient on the controlled instrument. Here we compute the other endpoint of the FAS. First note that none of the instruments are particularly strong. This is a well known problem in this specific empirical application (for example, see pages 3072--3073 of \citealt{albouy2012}). Since we have not formally studied inference on the FAS, we leave the question of how to accurately measure sampling uncertainty in this application to future work. Here we will simply focus on the point estimates. For these estimates, we use a small $F$-statistic cut-off for inclusion in the FAS, so that the FAS is an interval for all specifications. For every specification, the estimated FAS always lies above zero. Hence the FAS supports the qualitative conclusion of this paper: Increasing property rights causes an increase in GDP. That said, for some of the specifications the length of the FAS is quite large, indicating substantial variation in the estimated magnitude of this effect. 

As in our previous examples, the authors used multiple instruments to assess validity of the exogeneity and exclusion restrictions. They did this by focusing on a variety of statistical tests of overidentification. Here we illustrated the value of using the falsification adaptive set to go beyond the binary pass/fail outcome of a specification test and to instead summarize the substantive variation in estimates obtained from alternative non-refutable models.

\subsection{Discrete Choice Demand Estimation: \cite{Nevo2001}}\label{sec:NevoAnalysis}

In our final example, we study a classical application of instrumental variables: estimation of consumer demand. In a series of papers Aviv Nevo estimated the demand for ready-to-eat cereal as a first step to answering several important economic questions: How can we predict the welfare effects of a merger (\citealt{Nevo2000})? How can we test for collusion (\citealt{Nevo2001})? How can we construct price indices that account for new products and quality changes (\citealt{Nevo2003})? Answering these questions requires reliable estimates of price elasticities of demand. If one's baseline model of consumer demand is refuted, however, it is not clear whether the estimated elasticities from this known-to-be misspecified model are relevant for these questions. For this reason, we recommend computing falsification adaptive sets. These sets show the range of estimates consistent with non-refuted models.

Since all three of the papers \cite{Nevo2000, Nevo2001, Nevo2003} use essentially the same data and demand estimation, we focus on \cite{Nevo2001}. First we briefly summarize his analysis, focusing on falsification. We then discuss our results. Here we assume familiarity with standard discrete choice demand models. See \cite{Nevo2011} for a survey. 

\subsubsection*{Analysis from \cite{Nevo2001}}

Nevo begins by estimating a variety of logit demand models; see his table 5. In 9 out of the 10 specifications, the overidentification test rejects the model. Commenting on this, he says ``it is unclear whether the large number of observations is the reason for the rejection or that the IV's are not valid'' (page 325). He notes that the most flexible specification, column 10, passes the test. This specification added city fixed effects, and hence he concludes that it is important to flexibly control for demographics. In our results below, however, the specification with city fixed effects is strongly rejected.

In logit demand models estimated using overidentified 2SLS, the model can be refuted for two distinct reasons: (1) The instruments are inconsistent with each other or (2) the logit functional form is incorrect. The second issue is well known and is commonly addressed by using more flexible models of demand, like the random coefficients logit model. Nevo presents estimates from this model in table 6. Using these estimates he notes that the logit model ``is easily rejected'' (footnote 28 on page 332). 

The random coefficients logit model itself, however, is falsifiable. Nevo does not report the overidentification test result for his full model. He does report the GMM objective function value in table 6, but it is not clear if this objective function uses the optimal weighting matrix, as required to compute the $J$-test statistic. If it does, and assuming the statistic has not already been scaled by the sample size, then the overidentification test based on the reported value easily rejects the random coefficients logit model as well. Regardless, we also estimated the random coefficients logit model using our data (not reported here), and it is easily rejected.

As with the baseline logit model, this could be either because (1) the instruments are invalid or (2) the random coefficients logit functional form is incorrect. One response is to use even more flexible models of demand. For recent work along these lines, see \cite{Compiani2018} and \cite{TebaldiTorgovitskyYang2019}. An alternative approach is to maintain the random coefficients logit functional form, but relax the exogeneity and exclusion restrictions on the instruments. Or one could relax both assumptions simultaneously. Here we focus only on the instrument assumptions. Moreover, we also focus only on the baseline logit demand model. Our characterization of the falsification adaptive set (theorem \ref{thm:identSetOnFFhomogTrt}) can be extended to the random coefficients logit model, but since this is a substantial extension we leave it to separate work.

Finally, recall that \citet[pages 252--254]{Berry1994} shows how the nested logit model can be written as a linear instrumental variable model with multiple endogenous variables. This result combined with our multiple endogenous variable results in section \ref{sec:linearModelGeneralK} allows us to immediately study the nested logit model with endogenous price. Moreover, nested logit has been used as the baseline model in previous studies of the demand for ready-to-eat cereal---see \cite{CotterillHaller1997}. We omit a full empirical study of the nested logit baseline for brevity, however.

\subsubsection*{Data}

\begin{table}[tbp] \centering
\newcolumntype{C}{>{\centering\arraybackslash}X}
\caption{\label{NevoTable3}Brands used for estimating demand}
\vspace{2mm}
\scriptsize
\begin{tabularx}{\textwidth}{lCCC}
\toprule
{All Family / Basic Segment}&{Taste Enhanced Wholesome Segment}&{Simple Health Nutrition Segment}&{Kids Segment} \tabularnewline
\midrule\addlinespace[1.5ex]
K Rice Krispies & P Raisin Bran & P Grape Nuts &P Fruity Pebbles  \tabularnewline
K Corn Flakes & P Honey Bunches of Oats & K Special K Red Berry & K Frosted Flakes \tabularnewline
GM Cheerios & K Raisin Bran Crunch & K Special K & K Froot Loops \tabularnewline
& K Raisin Bran & GM Multigrain Cheerios &K Corn Pops  \tabularnewline
& K Frosted Mini-Wheats & & K Apple Jacks \tabularnewline
&&& Q Life \tabularnewline
&&& Q Cinnamon Life \tabularnewline
&&& Q Cap'n Crunch Crunch Berries \tabularnewline
&&& Q Cap'n Crunch \tabularnewline
&&&GM Reese's Puffs \tabularnewline
&&&GM Lucky Charms \tabularnewline
&&&GM Honey Nut Cheerios \tabularnewline
&&&GM Cinnamon Toast Crunch \tabularnewline
\end{tabularx}
\end{table}

\begin{table}[tbp] \centering
\newcolumntype{C}{>{\centering\arraybackslash}X}
\caption{\label{NevoTable1}Volume market shares}
\vspace{2mm}
\scriptsize
\begin{tabularx}{\textwidth}{lCCCCC}
\toprule
{Firm}&{2012 Q4}&{2013 Q4}&{2014 Q4}&{2015 Q4}&{2016 Q4} \tabularnewline
\midrule\addlinespace[1.5ex]
Kellogg&28.23&28.18&27.76&29.00&29.89 \tabularnewline
General Mills&31.03&31.03&30.48&30.74&28.92 \tabularnewline
Post&10.46&11.26&11.73&11.56&12.21 \tabularnewline
Quaker Oats&6.59&6.50&7.08&6.82&7.25 \tabularnewline
Malt-O-Meal&3.99&4.62&4.49&4.00&4.51 \tabularnewline
Kashi&2.69&2.65&2.09&2.12&1.92 \tabularnewline
C3&69.72&70.47&69.97&71.30&71.01 \tabularnewline
C6&83.00&84.24&83.64&84.24&84.70 \tabularnewline
Private Label&15.19&13.74&13.78&13.10&12.35 \tabularnewline
\bottomrule \addlinespace[1.5ex]

\end{tabularx}
\end{table}

Unlike our previous three applications, here the original data is not publicly available. We instead construct a new dataset, following the details of \cite{Nevo2001} as closely as possible. We consider a panel dataset of 53 cities between 2012 and 2016. Nevo's data covered 1988 to 1992. We obtained quarterly data on price and quantity of each cereal sold in these cities from Nielsen scanner data. We focus on the top 25 cereal brands, shown in table \ref{NevoTable3}. Table \ref{NevoTable1} shows the volume market shares (pounds sold) for the top 6 firms, for the last quarter of each year. Comparing this to table 1 from \cite{Nevo2001}, we see that firm market shares are relatively stable over time. The industry continues to be highly concentrated, with the top 3 firms having around 70\% of the market and the top 6 having around 84\%. The market share of private label cereals has increased somewhat, ranging from 12 to 15\% in our data, whereas it ranged from 3 to 8\% between 1988 and 1992.

{\renewcommand{\arraystretch}{1.2}
\begin{table}[t]
\centering
\caption{\label{NevoTable2}Prices and market shares of brands in sample}
\vspace{2mm}
\scriptsize
\begin{tabular}{lccccc | ccc}
\hline
\mystrut
 & &&&&& \multicolumn{3}{c}{Variation (percentage)} \\
Description                                   &Mean  &Median &Std  &Min  &Max   & Brand & City & Quarter \\[0.4em]
\hline
\mystrut
Prices                &10.38 &9.56   &2.45 &5.01 &18.02 &85                &6.7                 &0.8                    \\
\quad (\textcent\space per serving) & & & & & & & \\
Advertising &5.92  &4.98   &5.15 &0    &23.76 &53.6                &--                   &3.5                   \\
\quad (M\textdollar\space per quarter) & & & & & & & \\
Share within Cereal               &2.8   &2.47   &1.73 &0.22  &10.95 &77.6                &4.5                &1.2                   \\
\quad Market (\%) & & & & & & & \\[0.4em]
\hline
\end{tabular}
\end{table}
}

We obtained quarterly advertising spending from Kantar Media. Table \ref{NevoTable2} shows summary statistics for advertising, along with price and market shares. Compared to Nevo's table 4, we see that prices are generally substantially lower, with less variation over time. Advertising spending has increased substantially.

{
\renewcommand{\arraystretch}{1.2}
\begin{table}[t]
\caption{\label{NevoTable4}Sample statistics}
\vspace{2mm}
{\renewcommand\baselinestretch{1}\selectfont
{\small
\scriptsize
\begin{tabularx}{\textwidth}{@{\extracolsep{\fill}} lccccc}
\toprule
Description                             &Mean              &Median  &Std      &Min &Max       \\
\midrule
Total Calories                          &136               &120     &35.59    &100 &200       \\
Fat Calories (/100)                     &.11               &.1      &.08      &0   &.3        \\
Sodium (\% RDA/100)                      &.07               &.07     &.03      &0   &.12       \\
Fiber (\% RDA/100)                      &.1                &.08     &.09      &0   &.31       \\
Sugar (g/100)                           &.09               &.09     &.05      &.01 &.2        \\
Mushy (=1 if cereal gets soggy in milk) &.56 &1       &.51      &0   &1         \\
Serving weight (g)                      &35.96             &30      &11.62    &26  &61        \\
Income (\textdollar)                    &12,659          &9,052 &14,453 &0   &791,777 \\
Age (years)                             &34.92             &34      &21.96    &0   &85        \\
Child (=1 if age $<$ 16)                &.25               &0       &.43      &0   &1         \\
\bottomrule
\end{tabularx} }
\par}
\end{table}
}

We gathered nutrition data for each cereal by examining cereal boxes. For each city we obtained demographic information by sampling individuals from the March Current Population Survey. Table \ref{NevoTable4} provides summary statistics for nutrition and demographics. %

\subsubsection*{Results}

Nevo considered two kinds of instruments:
\begin{enumerate}
\item Cost shifter instruments: (1) annual average wages paid in the supermarket sector in each city and (2) city density, which is viewed as a proxy for the cost of rent. %
In our data, these instruments are very weak. We do not use them in our main analysis.

\item Hausman instruments: ``Regional quarterly average prices (excluding the city being instrumented) in all twenty quarters.'' Taken together, these instruments are very strong. Our falsification adaptive set analysis, however, requires us to consider the strength of each instrument conditional on the others. In this case, most of these instruments become very weak. For this reason we also consider aggregated versions of the Hausman instruments: (1) The regional average \emph{lagged} price, (2) the regional average \emph{contemporaneous} price, and (3) the regional average \emph{lead} price. For all three instruments we exclude the city being instrumented.

\end{enumerate}

\begin{table}[t]
\centering
\caption{\label{NevoTable5}Logit demand parameter estimates. The specifications here are similar to those in table 5 of \cite{Nevo2001}.}
\vspace{2mm}
\scriptsize
\begin{tabular}{l*{4}{c}}
                    &\multicolumn{1}{c}{(1)}&\multicolumn{1}{c}{(2)}&\multicolumn{1}{c}{(3)}&\multicolumn{1}{c}{(4)}\\
\midrule
\addlinespace[0.5em]
\multicolumn{5}{l}{Panel A. OLS} \\
\addlinespace[0.3em]
\midrule
Price               &       -2.45&       -1.49&       -2.41&      -14.97\\
                    &      (0.62)&      (0.59)&      (0.53)&      (0.34)\\
[0.4em]
Adjusted $R^2$        &        0.34&        0.48&        0.56&        0.92\\
\midrule
\addlinespace[0.5em]
\multicolumn{5}{l}{Panel B. 2SLS with per-quarter Hausman instruments} \\
\addlinespace[0.3em]
\midrule
Price               &       -6.67&       -8.85&       -6.89&      -15.15\\
                    &      (0.84)&      (0.93)&      (0.79)&      (1.12)\\
[0.4em]
First-stage $F$ stat. &     730.13&      391.52&      378.66&       65.49\\
[0.4em]
Overid. test stat.  &    2467.31&     1038.85&      874.62&     1085.28\\
[0.4em]
Overid. $p$-value   &       0.00&       0.00&       0.00&       0.00\\
\midrule
\addlinespace[0.5em]
\multicolumn{5}{l}{Panel C. 2SLS with aggregated Hausman instruments} \\
\addlinespace[0.3em]
\midrule
Price               &       -7.50&      -11.21&      -10.60&      -16.23\\
                    &      (0.76)&      (0.82)&      (0.73)&      (0.58)\\
[0.4em]
First-stage $F$ stat.  &     6574.61&     3715.17&     3578.36&     2134.16\\
[0.4em]
Overid. test stat. &      83.34&      143.45&       78.08&       30.13\\
[0.4em]
Overid. $p$-value    &       0.00&        0.00&        0.00&        0.00\\
[0.4em]
FAS ($F \geq 10$) &  -21.27 & -15.99 & -16.93 &  -17.97 \\
 & (1.71) & (1.59) & (1.48) &  (0.66) \\[0.4em]
FAS ($F \geq 5$) & -21.27 & [-332.10, -15.99] & [-182.92, -16.93]& [-58.96, -17.97] \\
\midrule
Advertising & Y & Y & Y & Y \\
Time fixed effects & Y & Y & Y & Y \\
Product characteristics & Y &  &  &  \\
Brand dummies &  & Y & Y & Y \\
Demographics &  &  & Y &  \\
City fixed effects &  &  &  & Y \\
\bottomrule
\multicolumn{5}{l}{\scriptsize Heteroskedasticity robust standard errors in parentheses.}\\
\end{tabular}
\end{table}

Table \ref{NevoTable5} shows our main results. We consider four specifications. All specifications include advertising and time fixed effects. Column 1 includes product characteristics as controls. Columns 2--4 replace these with brand dummies. Column 3 adds demographic controls: log of median income, log of median age, and median household size. Column 4 replaces this with city fixed effects. Panel A shows the OLS results. Panel B shows 2SLS results using the per-quarter Hausman instruments while panel C uses the aggregated Hausman instruments.\footnote{In panel A: Column 1 corresponds to column 1 of Nevo's table 5, Column 2 to his column 2, and column 3 to his column 3. He did not present the specification in our panel A, column 4. In panel B: Column 3 corresponds to Nevo's column 9, and column 4 to his column 10. He did not present the specifications in our panel B, columns 1 and 2. He also did not present any of the specifications in panel C.}

First consider panel A. In the first three columns, the OLS point estimates are much smaller in magnitude than the 2SLS point estimates. In column 4 the OLS and 2SLS estimates are quite similar. However, notice that the overidentification test easily rejects \emph{all} specifications (columns 1--4, panels B and C). This suggests that the instruments may not be valid. There has in fact been substantial debate over the validity of these instruments. See Bresnahan's comment on \cite{Hausman1996}. 

To address this concern, we next estimate the falsification adaptive set. As discussed above, we only do this for panel C. We consider two different cut-offs for the definition of a weak instrument: $F \geq 10$ and $F \geq 5$ (see section \ref{subsec:FASestimationLinearIV}). With the stricter cut-off, only one of the three 2SLS estimates is not weak, and thus the estimated FAS is a singleton. All the point estimates are larger in magnitude than the corresponding baseline estimates. The coefficients are also more stable across specifications. With the weaker cut-off, the FAS is an interval for columns 2--4. The length of this interval shrinks substantially as we move from column 2 to 3 to 4. Consider the FAS in column 4: $[-58.96, -17.97]$. All elements of this interval are larger in magnitude than the estimate $-16.23$ from the refuted baseline model. They are also larger than the OLS estimate of $-14.97$. In the logit model the price elasticities are proportional to the coefficient on price. So the lower bound on the FAS in column 4 suggests that the price elasticities could be as much as about three times larger than those suggested by the refuted baseline estimate. Thus ignoring the result of the overidentification test and using the baseline estimates anyway could lead to a large under-estimate of these elasticities. The falsification adaptive set is a constructive and informative tool researchers can use when their baseline model is rejected.

We conclude this section by briefly comparing our analysis with that of \cite{NevoRosen2012}, who also study discrete choice demand models. They derive identified sets under a discrete relaxation of the classical instrument exogeneity and exclusion restrictions. They do not discuss the general problem of salvaging a falsified model. In particular, their identified sets can be empty. They do not discuss what researchers should do in this case. Furthermore, we relax instrument validity in a different and non-nested way. Hence our analysis is complementary to theirs. Using our data, we estimated the Nevo and Rosen identified set for the coefficient on price. Under their assumptions 3 and 4, and for the specification in column 4 of table \ref{NevoTable5}, this set is $(-\infty, -16.727]$. This set is unbounded from below and strictly contains the estimated FAS $[-58.96, -17.97]$. In their proposition 5 they discuss an additional assumption which can sometimes be used to obtain a bounded identified set. They only consider the two instrument case, and hence we cannot apply their result here. Moreover, this result requires choosing a sensitivity parameter $\gamma$ a priori. By definition, the FAS does not require choosing any sensitivity parameters.

\section{Heterogeneous Treatment Effect Models}\label{sec:hetModel}

Even when both instrument exogeneity and exclusion assumptions hold, the classical overidentifying restrictions \eqref{eq:classicalSarganEqs} may fail when the homogeneous treatment effects assumption fails. Hence we next consider models with heterogeneous treatment effects. In section \ref{sec:hetTrtBinary} we consider binary outcomes while in section \ref{sec:hetTrtCts} we consider continuous outcomes. As earlier, we omit additional covariates for simplicity.

\subsection{Binary Outcomes}\label{sec:hetTrtBinary}

Unlike the baseline model in section \ref{sec:homogModel} using zero-correlation conditions, the baseline model we study here using statistical independence assumptions is falsifiable even with a single instrument. Thus we begin with the case where a single binary instrument is available. This allows us to explain the main ideas and results while keeping the notation simple. After this, we consider the general case where multiple discrete instruments are available.

Let $X \in \{ 0, 1 \}$ denote the observed treatment variable. Let $Y_1, Y_0 \in \{0,1\}$ denote binary potential outcomes. We observe
\begin{equation}\label{eq:generalOutcomeEquation}
	Y = Y_1 X + Y_0 (1-X).
\end{equation}
Let $Z \in \{0,1 \}$ denote an observed instrument. As mentioned above, we consider multiple discretely supported instruments later.
Thus we observe the joint distribution of $(Y,X,Z)$. Consider the following assumption.

\begin{het1Assump}[Instrument exogeneity]\label{assump:marginalIndepZ}
$Z \indep Y_1$ and $Z \indep Y_0$.
\end{het1Assump}

In this model, \cite{Manski1990} derived the identified set for $\Prob(Y_x=1)$ for $x \in \{0,1\}$ as well as the identified set for the average treatment effect
\[
	\text{ATE} = \Prob(Y_1=1) - \Prob(Y_0=1)
\]
under B\ref{assump:marginalIndepZ}.\footnote{Manski's \citeyearpar{Manski1990} analysis considered a general case which does not require outcomes, treatment, or instruments to be binary. In this general setting, he used a mean independence assumption. When outcomes are binary, mean independence of $Y_x$ from $Z$ is equivalent to statistical independence of $Y_x$ and $Z$.} \cite{BalkePearl1997} showed these identified sets can be empty, and hence that this model is falsifiable. Consequently, researchers whose model is falsified face the same question we discussed earlier: What should they do? We answer this question using the four recommendations provided in section \ref{sec:generalFF}. To do this, we must first define continuous relaxations of the instrument exogeneity assumption B\ref{assump:marginalIndepZ}. We do this using the following concept from \cite{MastenPoirier2017}.

\begin{definition}\label{def:c-dep}
Let $x \in \{ 0, 1 \}$. Let $c$ be a scalar between 0 and 1. Say $Z$ is \emph{$c$-dependent} with $Y_x$ if
\begin{equation}\label{eq:c-indep1}
	\sup_{y_x \in \supp(Y_x)} | \Prob(Z=1 \mid Y_x=y_x) - \Prob(Z=1) | \leq c.
\end{equation}
\end{definition}

When $c=0$, $c$-dependence is equivalent to $Z \indep Y_x$. When $c \geq \max\{ \Prob(Z=1), \Prob(Z=0) \}$, $c$-dependence does not constrain the stochastic relationship between $Z$ and $Y_x$. $c \in (0,1)$ partially constrains the stochastic relationship between $Z$ and $Y_x$. Specifically, while it allows the probability of $Z=1$ given $Y_x=y_x$ to vary with $y_x$, it must be within a band around the unconditional probability:
\[
	\Prob(Z=1 \mid Y_x=y_x) \in \big[ \Prob(Z=1) - c, \ \Prob(Z=1) + c \big] \cap [0,1]
\]
for all $y_x \in \supp(Y_x)$. \cite{MastenPoirier2017} give additional discussion of how to interpret $c$-dependence.

Thus we relax B\ref{assump:marginalIndepZ} as follows.

\begin{assump}{B\ref{assump:marginalIndepZ}$^\prime$}(Instrument partial exogeneity).
$Z$ is $c$-dependent with $Y_1$ and with $Y_0$. 
\end{assump}

Under this relaxation, we will derive identified sets for various parameters of interest. We use these identified sets to characterize the falsification point and the falsification adaptive set. We can also use these identified sets to do sensitivity analysis beyond the falsification point.

Let $p_Z = \Prob(Z=1)$. The following assumption rules out trivial cases.

\begin{het1Assump}\label{assump:NonTrivialinstrument}
Suppose $p_Z \in (0,1)$. Suppose $\Prob(X=1 \mid Z=z) \in (0,1)$ for $z \in \{0,1\}$.
\end{het1Assump}

First we set up some notation. Then we state and discuss the main theorem. For $z\in\{0,1\}$, define
\begin{equation}\label{eq:kandK}
	k_z(c) = \frac{\Prob(Z=z)\max\{\Prob(Z=1-z) - c,0\}}{\Prob(Z = 1-z)\min\{\Prob(Z=z)+c,1\}}.
\end{equation}
Note that $k_z(0) =1$, $k_z(1) = 0$, and $k_z(\cdot)$ is weakly decreasing on $c \in [0,1]$.
Using this function, define the set
\begin{equation}\label{eq:diamondSet} %
	\mathcal{D}(c)
	=
	\{(a_0,a_1) \in [0,1]^2 : a_1 \in [a_0 k_0(c), 1-k_0(c) + a_0 k_0(c)], a_0 \in [a_1 k_1(c), 1-k_1(c) + a_1 k_1 (c)] \},
\end{equation}
which depends only on $c$ and the marginal distribution of $Z$. For $x \in \{0,1\}$, define
\begin{align}
\label{eq:boxSet}
	\mathcal{H}_x
	&= [\Prob(Y=1,X=x \mid Z=0), \ \Prob(Y=1,X=x \mid Z=0) + \Prob(X=1-x \mid Z=0)] \\
	&\qquad \times [\Prob(Y=1,X=x \mid Z=1), \ \Prob(Y=1,X=x \mid Z=1) + \Prob(X=1-x \mid Z=1)], \notag
\end{align}
which depends on the joint distribution of $(Y,X,Z)$. Let
\[
	\Theta_x(c) = \mathcal{D}(c) \cap \mathcal{H}_x
\]
denote the intersection of these two sets.

\begin{theorem}\label{thm:BinaryYoneInstrumentCdep}
Consider the binary outcome, binary treatment model \eqref{eq:generalOutcomeEquation}. Suppose the joint distribution of $(Y,X,Z)$ is known, where $Z$ is binary. Suppose B\ref{assump:marginalIndepZ}$^\prime$ and B\ref{assump:NonTrivialinstrument} hold. Then the identified set for
\[
	\big( \Prob(Y_0=1 \mid Z=0), \Prob(Y_0=1 \mid Z=1), \Prob(Y_1=1 \mid Z=0), \Prob(Y_1=1 \mid Z=1) \big)
\]
is $\Theta_0(c) \times \Theta_1(c)$. Consequently, the model is falsified if and only if this set is empty. 
\end{theorem}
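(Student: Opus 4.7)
The plan is to decouple the problem into constraints on $(q_{00}, q_{01})$ and $(q_{10}, q_{11})$ separately, where $q_{xz} \equiv \Prob(Y_x = 1 \mid Z = z)$, and then to establish sharpness by explicit construction. The key observation is that B\ref{assump:marginalIndepZ}$^\prime$ imposes two separate marginal constraints --- one on $(Y_0, Z)$ and one on $(Y_1, Z)$ --- while the observable-compatibility constraints involve $(Y_x, X, Z)$ for each $x$ separately. Hence the identified set will factor as a product over $x \in \{0, 1\}$, and within each piece it is the intersection of the observable-compatibility constraints and the $c$-dependence constraints, which I will identify with $\mathcal{H}_x$ and $\mathcal{D}(c)$ respectively.

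For necessity, I would first show that $(q_{x0}, q_{x1}) \in \mathcal{H}_x$. Decomposing $q_{xz} = \Prob(Y_x = 1, X = x \mid Z = z) + \Prob(Y_x = 1, X = 1-x \mid Z = z)$, the first term equals the observable $\Prob(Y = 1, X = x \mid Z = z)$ by \eqref{eq:generalOutcomeEquation}, while the second lies in $[0, \Prob(X = 1-x \mid Z = z)]$; this recovers precisely the rectangle $\mathcal{H}_x$. For the $\mathcal{D}(c)$ constraints, Bayes' rule yields
\[
	\Prob(Z = 1 \mid Y_x = 1) - p_Z = \frac{p_Z(1 - p_Z)(q_{x1} - q_{x0})}{q_{x1} p_Z + q_{x0}(1 - p_Z)},
\]
and imposing $|\Prob(Z = 1 \mid Y_x = 1) - p_Z| \leq c$ rearranges into $q_{x1} \geq k_0(c) q_{x0}$ and $q_{x0} \geq k_1(c) q_{x1}$. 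The analogous computation applied to $Y_x = 0$ (equivalently, replacing $q_{xz}$ by $1 - q_{xz}$) gives $q_{x1} \leq 1 - k_0(c) + k_0(c) q_{x0}$ and $q_{x0} \leq 1 - k_1(c) + k_1(c) q_{x1}$, completing the defining inequalities of $\mathcal{D}(c)$.

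For sufficiency, I would construct, for any $(q_{00}, q_{01}) \in \Theta_0(c)$ and $(q_{10}, q_{11}) \in \Theta_1(c)$, a joint distribution of $(Y_0, Y_1, X, Z)$ consistent with the model, B\ref{assump:marginalIndepZ}$^\prime$, and the data. Match $\Prob(Z = z)$, $\Prob(X = x \mid Z = z)$, and $\Prob(Y = y \mid X = x, Z = z)$ to the observed distribution. Then set the counterfactual conditional
\[
	\Prob(Y_x = 1 \mid X = 1 - x, Z = z) = \frac{q_{xz} - \Prob(Y = 1, X = x \mid Z = z)}{\Prob(X = 1 - x \mid Z = z)},
\]
which lies in $[0,1]$ precisely because $(q_{x0}, q_{x1}) \in \mathcal{H}_x$, and take $Y_0 \indep Y_1 \mid (X, Z)$ to complete the specification. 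By construction $\Prob(Y_x = 1 \mid Z = z) = q_{xz}$ and the observed joint is matched; running the necessity computation in reverse then shows that $(q_{x0}, q_{x1}) \in \mathcal{D}(c)$ is sufficient for $Z$ to be $c$-dependent with $Y_x$. The final claim about falsification is then tautological: the model is refuted iff no parameter value is consistent with the data, i.e., iff $\Theta_0(c) \times \Theta_1(c) = \emptyset$.

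The main obstacle is handling boundary cases carefully: when $c \geq \min\{p_Z, 1 - p_Z\}$ the $\max$ and $\min$ operators in $k_z(c)$ cause some of the $\mathcal{D}(c)$ inequalities to become vacuous, and when $q_{xz} \in \{0, 1\}$ the conditional probability $\Prob(Z = 1 \mid Y_x = y)$ may be undefined for some $y$. Both cases require the convention that no restriction is imposed on a conditioning event of zero probability, and one must verify that this convention is exactly what the $k_z(c)$ truncation encodes. Apart from these boundary checks, the argument reduces to the algebraic manipulations sketched above together with the explicit construction in the sufficiency step.
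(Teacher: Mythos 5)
Your proposal is correct and follows essentially the same route as the paper: the paper proves this result by specializing its general multi-instrument theorem (theorem \ref{thm:BinaryYManyInstrumentCdep}), whose proof is exactly your two-step argument --- Bayes' rule plus the law of total probability to show the true conditionals satisfy the $\mathcal{D}(c)$ and $\mathcal{H}_x$ constraints, followed by sharpness via explicit construction of the counterfactual conditionals $\Prob(Y_x=1\mid X=1-x, Z=z)$ and an arbitrary coupling of $(Y_0,Y_1)$ given $(X,Z)$ (your choice of conditional independence is one admissible copula). Your attention to the boundary cases where $k_z(c)$ truncates and where $\Prob(Y_x=y)=0$ is consistent with the paper's definition of $c$-dependence as a supremum over $\supp(Y_x)$.
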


The identified sets in theorem \ref{thm:BinaryYoneInstrumentCdep} are defined via two kinds of constraints: the set $\mathcal{D}(c)$ and the sets $\mathcal{H}_x$ for $x\in\{0,1\}$. The set $\mathcal{H}_x$ is simply the identified set for
\[
	\big( \Prob(Y_x=1 \mid Z=0), \ \Prob(Y_x=1 \mid Z=1) \big)
\]
without imposing any assumptions on the stochastic relationship between $Y_x$ and $Z$, as derived by \cite{Manski1990}. This set can be thought of as the intersection of two pairs of parallel half-spaces. The shaded boxes in figure \ref{fig:FFbinaryYsingleIV} show one example of this set.

The set $\mathcal{D}(c)$ imposes the $c$-dependence constraint. For example, if $c=0$, it imposes that
\[
	\Prob(Y_x=1 \mid Z=0) = \Prob(Y_x=1 \mid Z=1)
\]
for $x \in \{0,1\}$. This is simply the statistical independence assumption B\ref{assump:marginalIndepZ} studied by \cite{Manski1990}. Thus the identified sets in theorem \ref{thm:BinaryYoneInstrumentCdep} simplify to those derived by Manski when $c = 0$. As $c$ gets larger, we no longer require the values $\Prob(Y_x=1 \mid Z=z)$ for $z \in \{0,1\}$ to exactly equal each other, but we do require them to be sufficiently close, as defined by the set $\mathcal{D}(c)$. Specifically, it constrains the pairs $(\Prob(Y_x=1 \mid Z=0), \Prob(Y_x=1 \mid Z=1))$ to lie inside a diamond-shaped parallelogram, as shown in figure \ref{fig:FFbinaryYsingleIV}. As $c$ gets larger, this parallelogram gets larger, eventually approaching the entire square $[0,1]^2$ as $c \rightarrow \max \{ p_Z, 1- p_Z \}$. Thus, for $c \geq \max \{ p_Z, 1 - p_Z \}$, theorem \ref{thm:BinaryYoneInstrumentCdep} simplifies to the no assumption bounds derived by \cite{Manski1990}. Thus theorem \ref{thm:BinaryYoneInstrumentCdep} continuously connects the no assumption bounds with the bounds under the statistical independence assumption B\ref{assump:marginalIndepZ}.

\begin{figure}[t]
\centering
\includegraphics[width=50mm]{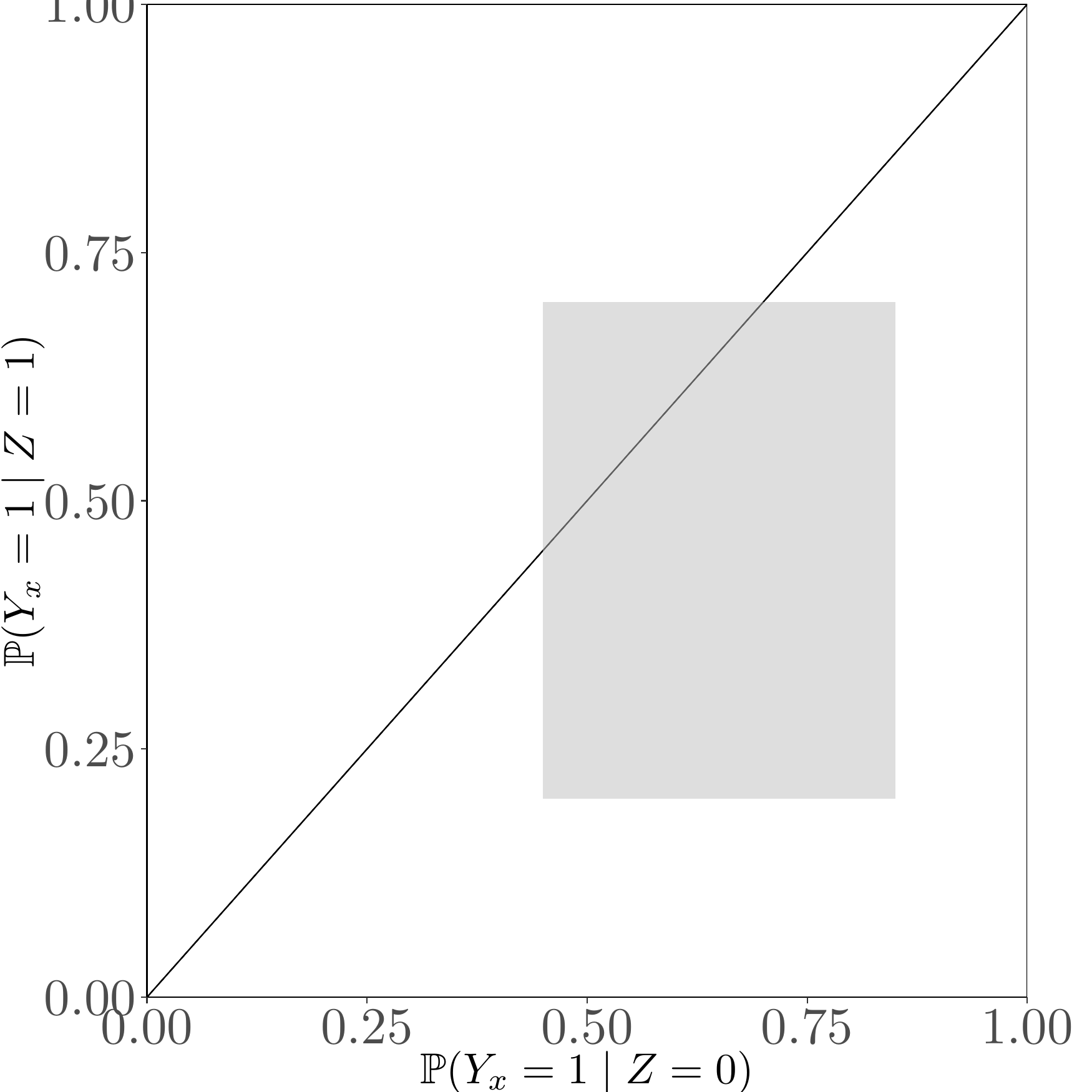}
\hspace{3mm}
\includegraphics[width=50mm]{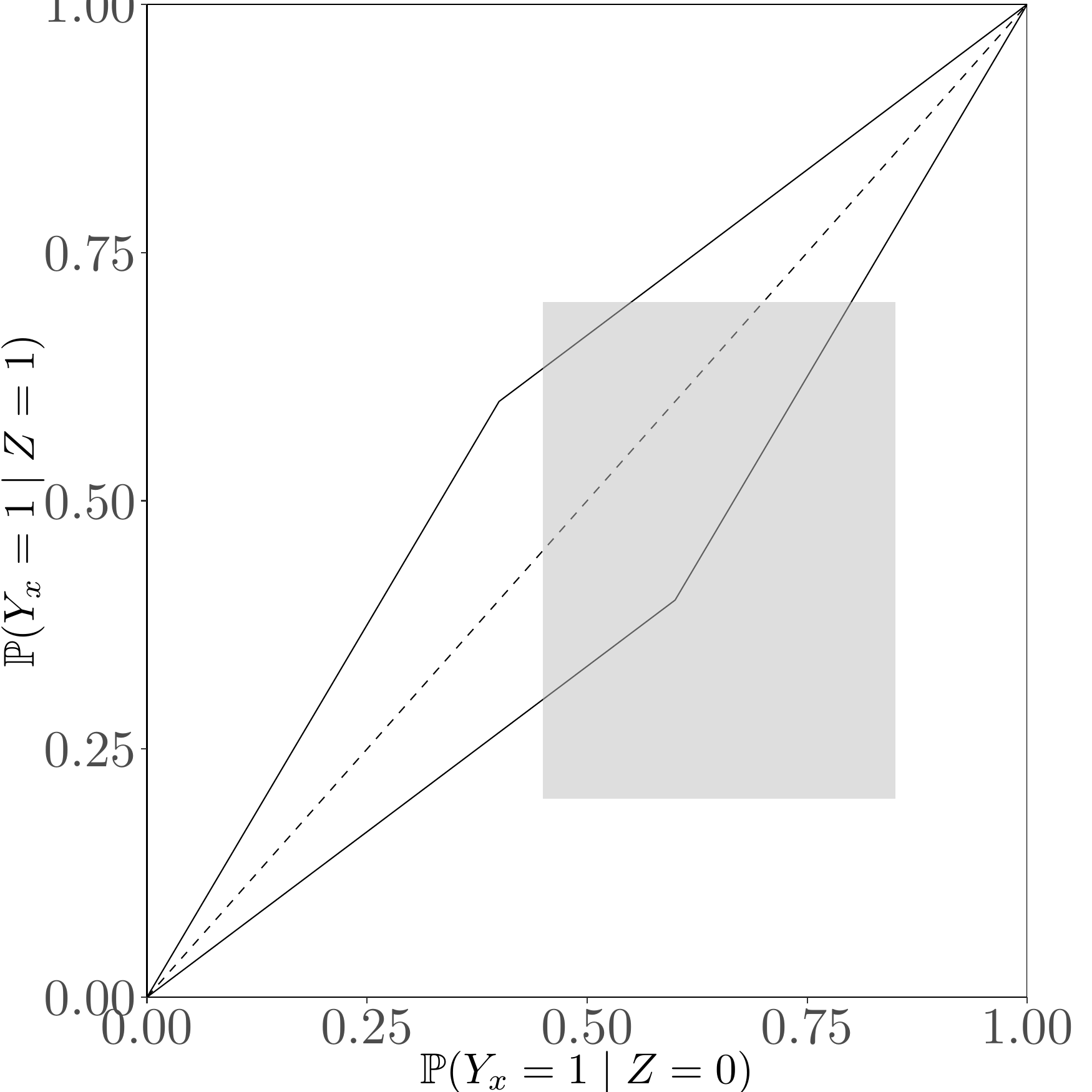}
\hspace{3mm}
\includegraphics[width=50mm]{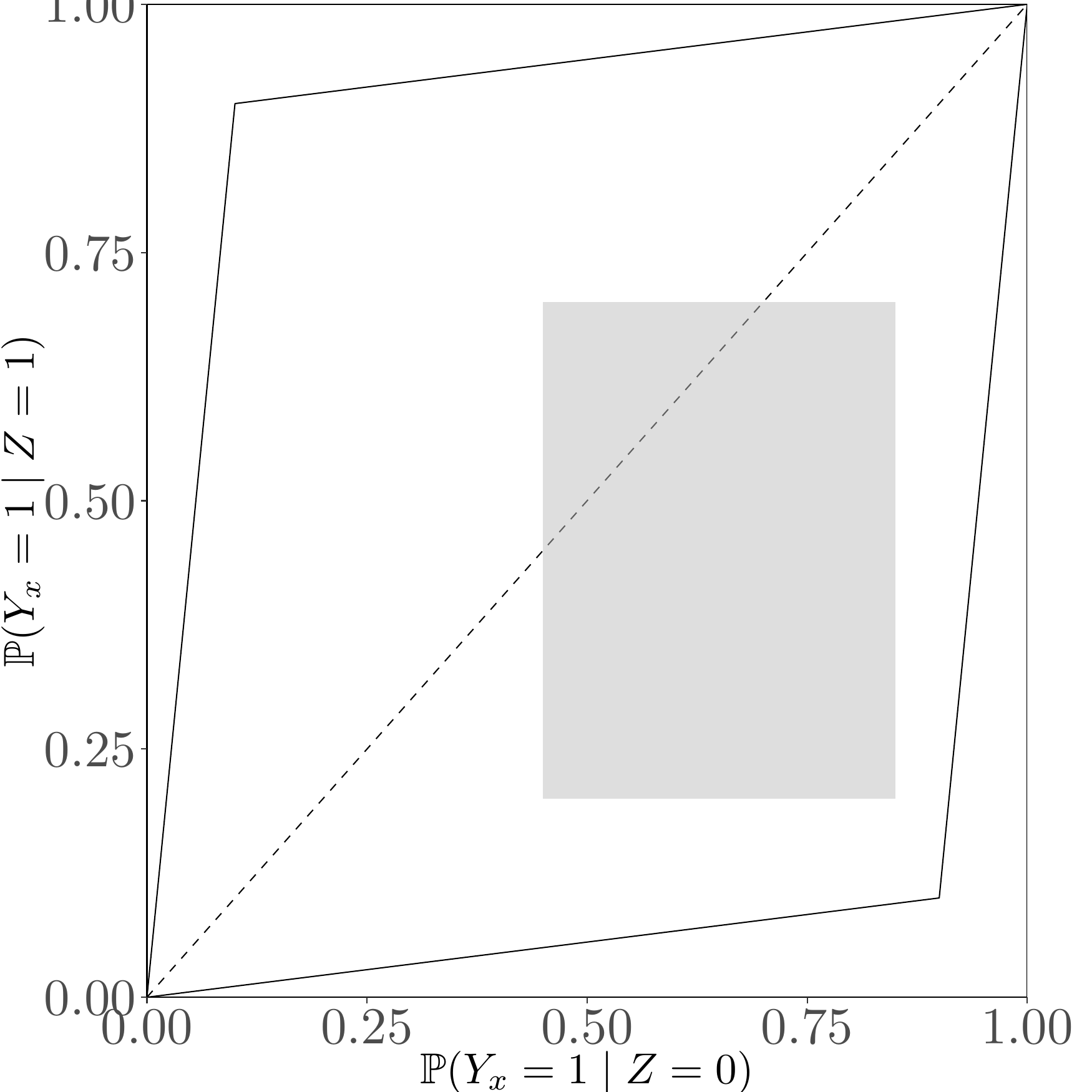}
\caption{Example identified sets for $(\Prob(Y_x=1 \mid Z=1), \Prob(Y_x=1 \mid Z=0))$. Here $p_Z = 0.5$. Left: $c=0$. Middle: $c=0.1$. Right: $c=0.4$.}
\label{fig:FFbinaryYsingleIV}
\end{figure}

While the no assumption bounds ($c \geq \max \{p_Z, 1-p_Z \}$) are never empty, the bounds under statistical independence ($c=0$) can be empty, and hence the baseline statistical independence assumption can be falsified. This happens when, for some $x \in \{0,1\}$, the no assumption bounds $\mathcal{H}_x$ have an empty intersection with the statistical independence constraint set $\mathcal{D}(0)$. Graphically, this happens when the box defined by the no assumption bounds does not intersect the 45-degree line. This is shown in figure \ref{fig:FFbinaryYsingleIV_falsification}. The falsification point is simply the smallest $c$ such that the parallelogram defined by $\mathcal{D}(c)$ has a nonempty intersection with the no assumption bounds $\mathcal{H}_x$ for each $x \in \{0,1\}$. This intersection is illustrated in the middle plot of figure \ref{fig:FFbinaryYsingleIV_falsification}. 

\begin{figure}[t]
\centering
\includegraphics[width=50mm]{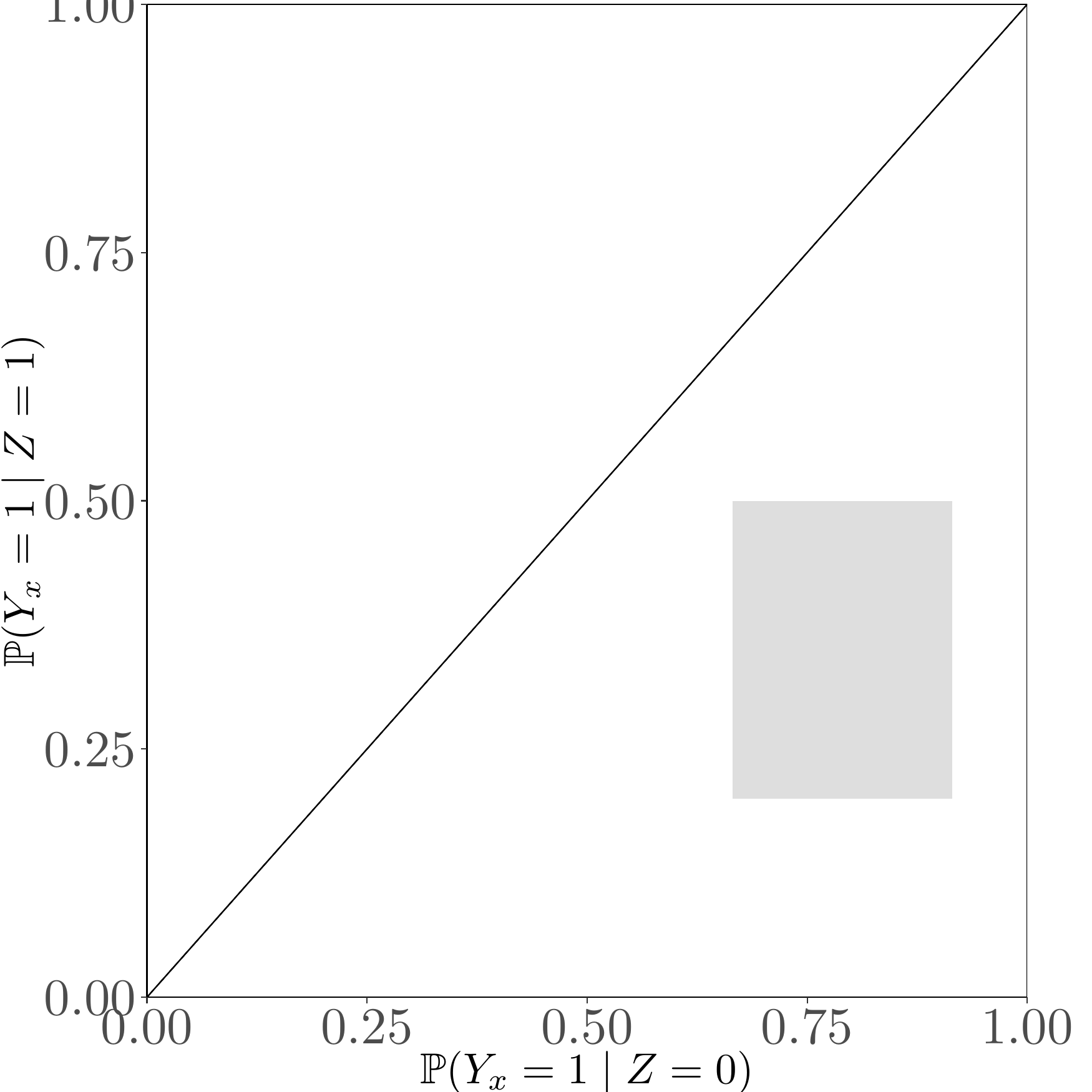}
\hspace{3mm}
\includegraphics[width=50mm]{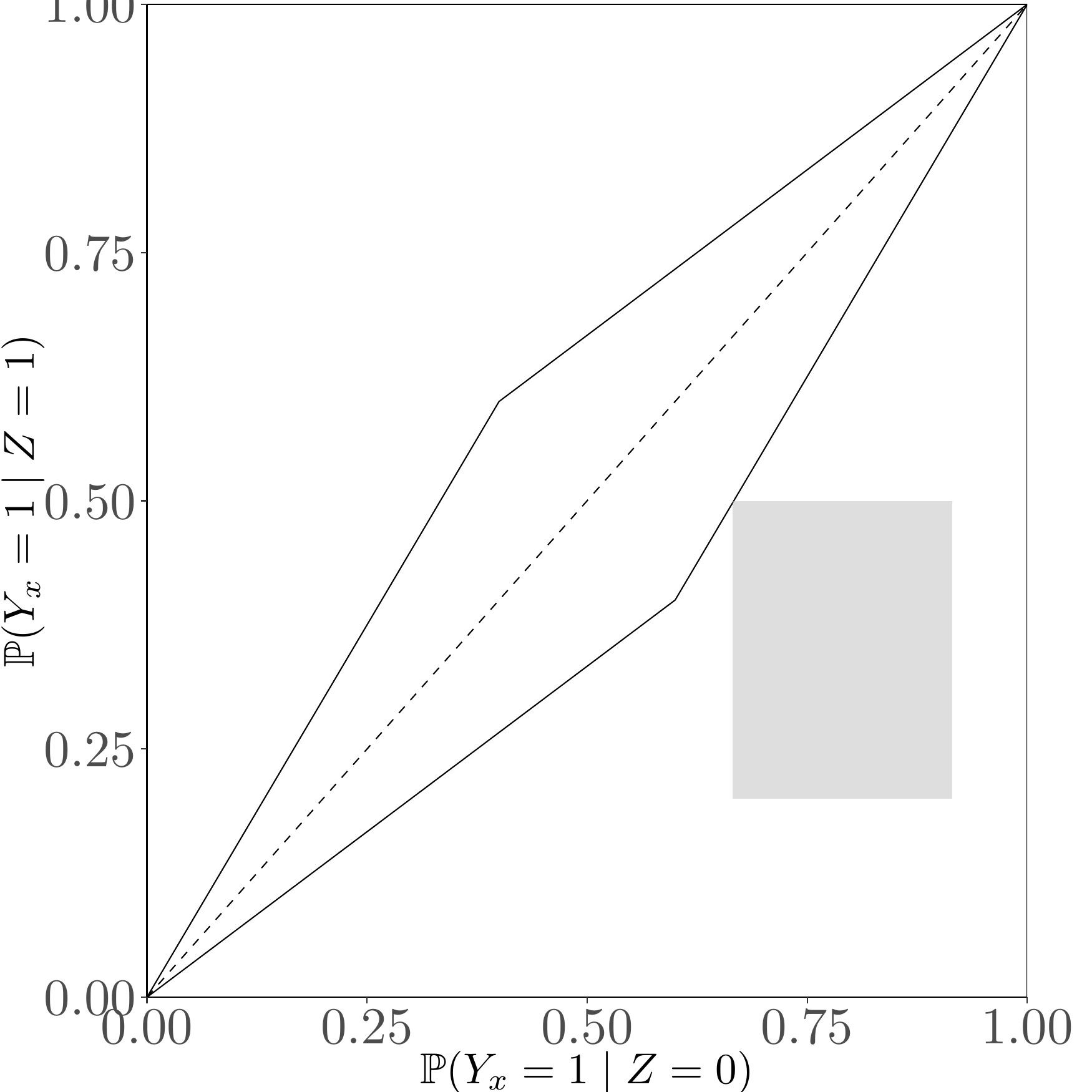}
\hspace{3mm}
\includegraphics[width=50mm]{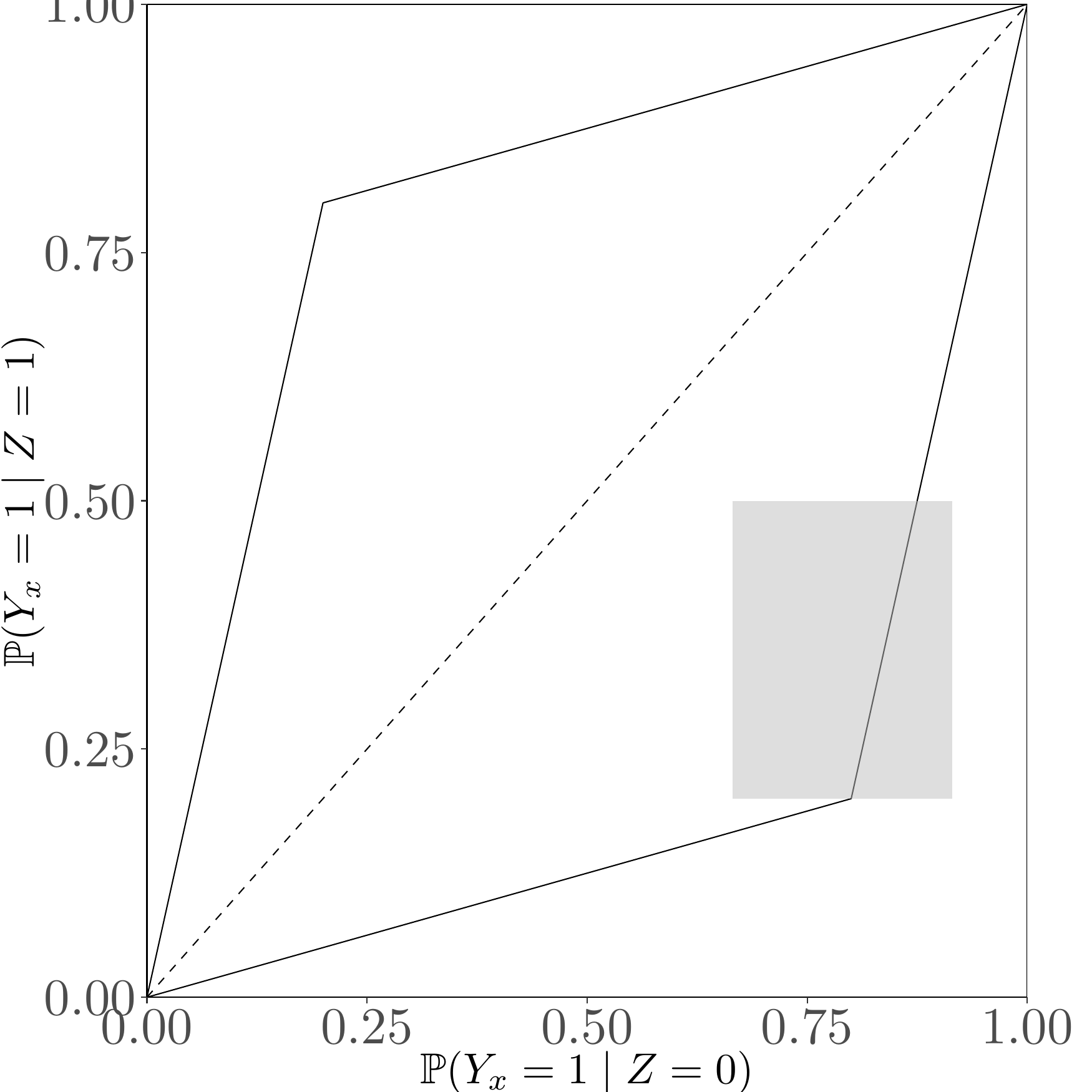}
\caption{Example identified set for $(\Prob(Y_x=1 \mid Z=1), \Prob(Y_x=1 \mid Z=0))$. Here $p_Z = 0.5$. Left: $c=0$. We see that the baseline model is falsified. Middle: $c = c^* = 0.1$, the falsification point. Here we have relaxed the assumption just enough that the model is no longer falsified. Right: $c = 0.3$. Relaxing independence even further increases the size of the identified set.}
\label{fig:FFbinaryYsingleIV_falsification}
\end{figure}

We next state some useful properties of the identified set given in theorem \ref{thm:BinaryYoneInstrumentCdep}.

\begin{proposition}\label{prop:PropertiesOneInstBounds}
Suppose the assumptions of theorem \ref{thm:BinaryYoneInstrumentCdep} hold. Then:
\begin{enumerate}
\item There is a value $c^* \in [0,1]$ such that $\Theta_0(c) \times \Theta_1(c)$ is empty for all $c \in [0,c^*)$ and is non-empty for all $c \in [c^*,1]$. The value $c^*$ is point identified.

\item For each $c \in [c^*,1]$, the set $\Theta_0(c) \times \Theta_1(c)$ is a closed, convex polytope in $[0,1]^4$.

\item The correspondence $\phi: [c^*,1] \rightrightarrows [0,1]^2$ defined by $\phi(c) = \Theta_0(c)\times \Theta_1(c)$ is continuous.
\end{enumerate}
\end{proposition}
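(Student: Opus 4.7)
The plan is to build everything on two simple facts about $k_0$ and $k_1$: each $k_z(\cdot)$ in equation~\eqref{eq:kandK} is continuous on $[0,1]$ and weakly decreasing, with $k_z(1)=0$. This gives the monotonicity backbone of the argument. Each of the four inequalities defining $\mathcal{D}(c)$ in~\eqref{eq:diamondSet} relaxes as $c$ grows (the lower envelopes $a_0 k_0(c)$ and $a_1 k_1(c)$ shrink, and the upper envelopes $1-k_0(c)+a_0k_0(c)$ and $1-k_1(c)+a_1k_1(c)$ grow), so $c\mapsto\mathcal{D}(c)$ is weakly increasing for set inclusion; since $\mathcal{H}_x$ does not depend on $c$, the same holds for $\Theta_x(c)=\mathcal{D}(c)\cap\mathcal{H}_x$ and for $\phi(c)=\Theta_0(c)\times\Theta_1(c)$. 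Because $\mathcal{D}(1)=[0,1]^2$ we have $\Theta_x(1)=\mathcal{H}_x\neq\emptyset$, so $\{c:\phi(c)\neq\emptyset\}$ is non-empty and its infimum $c^*$ lies in $[0,1]$. To show that $\phi(c^*)$ itself is non-empty, I would take $c_n\downarrow c^*$ with $\phi(c_n)\neq\emptyset$, choose $w_n\in\phi(c_n)\subseteq[0,1]^4$, extract a convergent subsequence by compactness, and pass to the limit in the inequalities defining $\phi$, which are linear in $w$ with coefficients continuous in $c$. Point identification of $c^*$ is immediate since $c^*$ depends only on $p_Z$ and the four conditional probabilities entering $\mathcal{H}_x$. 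Part~2 is then immediate: each $\Theta_x(c)$ is the intersection of at most eight closed linear half-spaces in $[0,1]^2$, hence a closed convex polytope, and Cartesian product preserves closedness, convexity, and the polytope property.

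For Part~3 I would establish upper and lower hemicontinuity of $\phi$ at every $c_0\in[c^*,1]$ separately. Upper hemicontinuity is a direct closed-graph argument: if $c_n\to c_0$ and $w_n\to w$ with $w_n\in\phi(c_n)$, every defining inequality has the form $g(c_n,w_n)\leq 0$ where $g$ is jointly continuous (linear in $w$ with coefficients continuous in $c$), so $g(c_0,w)\leq 0$ and $w\in\phi(c_0)$; combined with compactness of $[0,1]^4$ this gives upper hemicontinuity.

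Lower hemicontinuity is the main obstacle, because near the falsification point $\phi$ can collapse to a lower-dimensional face, so shrinking constraint sets can fail to contain any fixed $w\in\phi(c_0)$. I would split into cases. For sequences $c_n\geq c_0$, monotonicity immediately gives $\phi(c_0)\subseteq\phi(c_n)$ and we can take $w_n=w$; this fully handles $c_0=c^*$, since in the relative topology of $[c^*,1]$ only approaches from above are relevant there. For $c_0>c^*$ and $c_n\uparrow c_0$, the sets are defined by linear inequalities $A(c)w\leq b(c)$ in $w\in[0,1]^4$ with $A(c)$ and $b(c)$ jointly continuous in $c$, and $\phi$ is non-empty and uniformly bounded on $[c^*,1]$. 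Under these conditions the classical parametric polyhedron stability results (Hoffman's lemma, together with its extension by Walkup and Wets (1969) to jointly varying $A$ and $b$) imply that $c\mapsto\phi(c)$ is continuous in Hausdorff distance on $[c^*,1]$, yielding lower hemicontinuity. As a self-contained alternative to citing Walkup--Wets, I would pick any $\bar w\in\Theta_x(c^*)\subseteq\Theta_x(c_n)$ and perturb $w_n=(1-\lambda_n)w+\lambda_n\bar w$ with $\lambda_n\downarrow 0$ chosen to dominate the modulus $|k_z(c_n)-k_z(c_0)|$, verifying inequality by inequality that convexity together with the slack of $\bar w$ at $c_n>c^*$ absorbs the tightening of the constraints. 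The step I expect to require the most care is this verification at vertices of $\phi(c_0)$ where multiple constraints bind simultaneously, since the construction of $\lambda_n$ must respect all binding constraints at once.
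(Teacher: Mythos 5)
Your parts 1 and 2, your upper hemicontinuity argument, and your lower hemicontinuity arguments at $c^*$ and for approaches from above are correct and essentially coincide with the paper's route: the paper obtains the proposition as a corollary of its multi-instrument analogue and, for continuity at $c^*$, uses exactly your observation that in the relative topology of $[c^*,1]$ only approaches from above matter, so that monotonicity of $c \mapsto \Theta_0(c)\times\Theta_1(c)$ settles lower hemicontinuity there.

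The gap is in lower hemicontinuity from below at $c_0 \in (c^*,1]$, the one place where you diverge from the paper. First, the appeal to Hoffman/Walkup--Wets is not valid as stated: when the coefficient matrix varies with the parameter (here $k_0(c)$ and $k_1(c)$ multiply $a_0$ and $a_1$), non-emptiness and uniform boundedness alone do not give Hausdorff continuity of $\{w : A(c)w \le b(c)\}$; the map $c \mapsto \{x \in [0,1] : c x \le 0\}$ jumps from $[0,1]$ to $\{0\}$ as $c$ leaves $0$, so some constraint qualification is needed. Second, your self-contained perturbation presumes that an arbitrary $\bar w \in \Theta_x(c^*)$ acquires strict slack in the $\mathcal{D}(c_n)$-inequalities once $c_n > c^*$. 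It need not: the constraints $a_1 \ge a_0 k_0(c)$ and $a_0 \ge a_1 k_1(c)$ both bind at $(0,0)$ (and their counterparts bind at $(1,1)$) for \emph{every} $c$, so if $\bar w$ is such a common vertex the combination $(1-\lambda_n)w + \lambda_n \bar w$ gains nothing; for a target $w$ with $a_1 = a_0 k_0(c_0)$ and $a_0 > 0$, the tightened constraint $a_1 \ge a_0 k_0(c_n)$ then fails for every $\lambda_n \in [0,1)$. The difficulty is therefore not, as you suggest, at vertices of $\phi(c_0)$ where several constraints bind at $w$ (convexity handles that), but in the choice of $\bar w$: you need a Slater point $\bar w \in \mathcal{H}_x \cap \text{int}(\mathcal{D}(c'))$ for some $c' < c_0$, giving slack uniform in $n$. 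That existence claim is precisely the condition $\text{int}(\mathcal{D}(c)) \cap \text{int}(\mathcal{H}_x) \neq \emptyset$ under which the paper's proposition \ref{prop:PropertiesManyInstBounds} (via the cited theorem of Lechicki and Spakowski on intersections of correspondences) delivers lower hemicontinuity, and which the paper asserts holds for all $c > c^*$. So your outline isolates the right pressure point, but the proposed repair does not yet close it; producing the Slater point (or verifying the interior-intersection condition) is the missing step.
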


The value $c^*$ is the falsification point. In the simple model with a single binary instrument it is possible to give a closed form expression for the falsification point, but we omit it for brevity. For part 2, recall that a polytope in $[0,1]^4$ is a set that can be written as the intersection of finitely many half-spaces. We use continuity of the correspondence $\phi$ below to show that bounds on functionals like ATE are continuous in $c$.

We next show how to use theorem \ref{thm:BinaryYoneInstrumentCdep} to get identified sets for the counterfactual probabilities $\Prob(Y_x=1)$ and for the average treatment effect. By the law of total probability,
\[
	\Prob(Y_x=1) = \Prob(Y_x=1 \mid Z=0) \Prob(Z=0) + \Prob(Y_x=1 \mid Z=1) \Prob(Z=1).
\]
The weights $\Prob(Z=0)$ and $\Prob(Z=1)$ are point identified. The joint identified set for $\Prob(Y_x=1 \mid Z=0)$ and $\Prob(Y_x=1 \mid Z=1)$ is given by $\Theta_x(c)$. Thus we can simply minimize and maximize the above convex combination over this set to obtain the identified set for $\Prob(Y_x=1)$. Hence we define
\[
	\overline{P}_x(c) = \sup_{(a_0,a_1) \in \Theta_x(c)} \big( a_0 \Prob(Z=0) + a_1 \Prob(Z=1) \big)
\]
and
\[
	\underline{P}_x(c) = \inf_{(a_0,a_1) \in \Theta_x(c)} \big( a_0 \Prob(Z=0) + a_1 \Prob(Z=1) \big).
\]
These are both finite dimensional linear programs and hence can be computed easily.

\begin{corollary}\label{corr:hetTrtBinATEbounds}
Suppose the assumptions of theorem \ref{thm:BinaryYoneInstrumentCdep} hold. Then:
\begin{enumerate}
\item Let $c\in [c^*,1]$. The identified set for $(\Prob(Y_0=1), \Prob(Y_1=1))$ is $I_0(c) \times I_1(c)$ where $I_x =[\underline{P}_x(c), \overline{P}_x(c)]$ for $x \in \{ 0, 1\}$. 

\item For $x \in \{0,1\}$, the functions $\underline{P}_x(c)$ and $\overline{P}_x(c)$ are continuous and monotonic over $c \in [c^*,1]$.

\item Let $c \in [c^*,1]$. The identified set for ATE is $[\underline{\text{ATE}}(c),\overline{\text{ATE}}(c)]$ where 
\[
	\underline{\text{ATE}}(c) = \underline{P}_1(c) - \overline{P}_0(c)
	\qquad \text{and} \qquad
	\overline{\text{ATE}}(c) = \overline{P}_1(c) - \underline{P}_0(c).
\]
\end{enumerate}
\end{corollary}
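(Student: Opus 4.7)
The plan is to derive all three parts from Theorem \ref{thm:BinaryYoneInstrumentCdep} and Proposition \ref{prop:PropertiesOneInstBounds}, using the fact that by the law of total probability, $\Prob(Y_x=1)$ is a known linear functional of the conditional probabilities $\Prob(Y_x=1 \mid Z=z)$ with positive, identified weights $\Prob(Z=0)$ and $\Prob(Z=1)$.

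For part 1, I would argue by projection. Theorem \ref{thm:BinaryYoneInstrumentCdep} gives that the identified set for the four-dimensional vector of conditional counterfactual probabilities is the product $\Theta_0(c) \times \Theta_1(c)$. The mapping from this vector to $(\Prob(Y_0=1), \Prob(Y_1=1))$ is linear and, crucially, separable: the $x$th output coordinate depends only on the $x$th factor $\Theta_x(c)$. Hence the image is the Cartesian product of the two individual images. Each such image is the continuous linear image of a convex polytope (by part 2 of Proposition \ref{prop:PropertiesOneInstBounds}), hence a compact interval, and by the very definition of $\overline{P}_x(c)$ and $\underline{P}_x(c)$ this interval coincides with $I_x(c) = [\underline{P}_x(c), \overline{P}_x(c)]$.

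For part 2, continuity of $\underline{P}_x(c)$ and $\overline{P}_x(c)$ on $c\in[c^*,1]$ would follow from Berge's maximum theorem applied to the continuous linear objective $a_0\Prob(Z=0) + a_1\Prob(Z=1)$ over the continuous compact-valued correspondence $\phi$ from part 3 of Proposition \ref{prop:PropertiesOneInstBounds}. For monotonicity, I would observe from definition \eqref{eq:kandK} that $k_z(c)$ is weakly decreasing in $c$, so the parallelogram $\mathcal{D}(c)$ expands weakly with $c$; since $\mathcal{H}_x$ does not depend on $c$, the intersection $\Theta_x(c) = \mathcal{D}(c) \cap \mathcal{H}_x$ is also weakly set-increasing in $c$. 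This yields $\overline{P}_x(c)$ weakly increasing and $\underline{P}_x(c)$ weakly decreasing on $[c^*,1]$.

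For part 3, since part 1 shows the identified set for $(\Prob(Y_0=1), \Prob(Y_1=1))$ is the product of intervals $I_0(c) \times I_1(c)$, the identified set for the scalar parameter $\text{ATE} = \Prob(Y_1=1) - \Prob(Y_0=1)$ is the image of this product under the continuous linear map $(p_0,p_1)\mapsto p_1 - p_0$, which is exactly $[\underline{P}_1(c)-\overline{P}_0(c),\ \overline{P}_1(c)-\underline{P}_0(c)]$. The main subtlety is carefully justifying the product-of-identified-sets claim in part 1 via separability of the linear functional across the two factors; once this is in place, the remainder reduces to standard convex analysis and the maximum theorem.
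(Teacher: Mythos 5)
Your proof is correct and follows essentially the same route as the paper: the paper's proof of this corollary simply invokes the general multi-instrument version (corollary \ref{corr:hetTrtMultipleATEbounds}), whose proof uses exactly your ingredients --- separability of the linear functional across the product $\Theta_0(c)\times\Theta_1(c)$, compactness and convexity of each factor to get attained interval bounds, the Maximum Theorem for continuity of $\underline{P}_x$ and $\overline{P}_x$, and monotonicity of $\mathcal{D}(c)$ in $c$ for monotonicity of the bounds. Your appeal to part 3 of proposition \ref{prop:PropertiesOneInstBounds} correctly covers continuity at the endpoint $c=c^*$, which is the one place the single-instrument statement is slightly stronger than the general one.
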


The falsification adaptive set for $\Prob(Y_x=1)$ is $[\underline{P}_x(c^*),\overline{P}_x(c^*)]$. This will be a singleton for one $x \in \{0,1\}$ and will typically be an interval with a nonempty interior for the other value of $x$. This can be seen in figure \ref{fig:FFbinaryYsingleIV_falsification}. In the middle plot, $\Prob(Y_x=1)$ is point identified since $\Theta_x(c^*)$ is a singleton. There is an analogous plot, however, for $\Prob(Y_{1-x}=1 \mid Z=z)$. In this plot, the box $\mathcal{H}_x$ will typically not also have a singleton intersection with the diamond $\mathcal{D}(c^*)$. Thus $\Prob(Y_{1-x}=1)$ will typically be partially identified. This discussion implies that ATE will typically be partially identified at the falsification point. That is: The falsification adaptive set for ATE, $[\underline{\text{ATE}}(c^*), \overline{\text{ATE}}(c^*)]$, will generally be an interval with a nonempty interior.

\subsubsection*{Generalization to Multiple Discrete Instruments}

We next consider the case where there are multiple discrete instruments. We relax each instrument exogeneity condition separately. We derive identified sets for a given relaxation of the instrument exogeneity conditions. As before, this allows us to derive the falsification frontier and the falsification adaptive set.

Suppose we observe $L \geq 1$ instruments, $Z = (Z_1,\ldots,Z_L) \in \R^L$. Let $\supp(Z_\ell) = \{ z_\ell^1, \ldots, z_\ell^{J_\ell} \}$ for some constant $J_\ell$. For notational simplicity we only consider the case $J_\ell = J$ for all $\ell \in \{1,\ldots,L \}$. As in theorem \ref{thm:BinaryYoneInstrumentCdep}, we will derive the joint identified set for the probabilities
\[
	\Prob(Y_x=1 \mid Z_\ell = z_\ell^j)
\]
for $x \in \{0,1\}$, $\ell \in \{1,\ldots,L\}$, and $j \in \{1,\ldots,J \}$. For each $x \in \{0,1\}$, there are $LJ$ conditional probabilities. As in the single binary instrument case, we make the following assumption to rule out trivial cases.

\begin{assump}{B\ref{assump:NonTrivialinstrument}$^\prime$}
For all $\ell \in \{1,\ldots,L\}$ and $j \in \{1,\ldots,J\}$,
\begin{enumerate}
\item $\Prob(Z_\ell = z_\ell^j) \in (0,1)$.

\item $\Prob(X=1 \mid Z_\ell=z_\ell^j) \in (0,1)$.
\end{enumerate}
\end{assump}

Let $Z_{-\ell}$ denote the vector of instruments without $Z_\ell$. $Z$ has $J^L$ support points while $Z_{-\ell}$ has $J^{L-1}$ support points. Let $\supp(Z_{-\ell}) = \{ z_{-\ell}^1,\ldots, z_{-\ell}^{J^{L-1}} \}$. By the law of total probability,
\begin{align}\label{eq:manyIVlawTotProb}
	&\Prob(Y_x = 1 \mid Z_\ell = z_\ell^j) \notag \\
	&\quad = \Prob(Y = 1 \mid X=x, Z_\ell = z_\ell^j) \Prob(X=x \mid Z_\ell = z_\ell^j) \notag \\
	&\qquad + \sum_{k=1}^{J^{L-1}} \Prob(Y_x = 1 \mid X=1-x, Z_\ell = z_\ell^j, Z_{-\ell} =z_{-\ell}^k) \Prob(X=1-x, Z_{-\ell} =z_{-\ell}^k \mid Z_\ell = z_\ell^j) \notag \\
	&\quad = \Prob(Y = 1,X=x \mid  Z_\ell = z_\ell^j) \notag \\
	&\qquad + \sum_{k=1}^{J^{L-1}} \Prob(Y_x = 1 \mid X=1-x, Z_\ell = z_\ell^j, Z_{-\ell} =z_{-\ell}^k) \frac{\Prob(X=1-x, Z_\ell = z_\ell^j, Z_{-\ell} = z_{-\ell}^k)}{\Prob(Z_\ell = z_\ell^j)}.
\end{align}
The terms
\[
	\Prob(Y = 1,X=x \mid  Z_\ell = z_\ell^j)
	\qquad \text{and} \qquad
	\frac{\Prob(X=1-x, Z_\ell = z_\ell^j, Z_{-\ell} = z_{-\ell}^k)}{\Prob(Z_\ell = z_\ell^j)}
\]
are observed directly in the data. In contrast, the probabilities
\[
	\Prob(Y_x = 1 \mid X=1-x, Z_\ell = z_\ell^j, Z_{-\ell} =z_{-\ell}^k)
\]
are not observed. We do know that they must lie in the interval $[0,1]$, however. For each $x \in \{0,1\}$, there are $J^L$ of these unknown probabilities. Thus for each $x \in \{0,1\}$ the $LJ$ probabilities $\Prob(Y_x=1 \mid Z_\ell = z_\ell^j)$ are determined by a system of $LJ$ known linear functions of $J^L$ unknowns which are all bounded between 0 and 1. Denote this set by
\begin{equation}\label{eq:discreteInstrumentBoxSet}
	\mathcal{H}_x = \left\{ \textbf{a} \in [0,1]^{LJ} : \textbf{a} = \textbf{b}_x + \textbf{A}_x \textbf{q} \ \text{ for some } \textbf{q}\in [0,1]^{J^L}\right\}
\end{equation}
where $\textbf{b}_x \in \R^{LJ}$ is defined by
\[
	[\textbf{b}_x]_{(\ell-1)J + j} = \Prob(Y=1,X=x \mid Z_\ell = z_\ell^j)
\]
for $\ell \in\{1,\ldots,L\}$, $j\in \{1,\ldots,J\}$ and $\textbf{A}_x \in \R^{LJ \times J^{L}}$ is defined by
\begin{align*}
	[\textbf{A}_x]_{(\ell-1)J + j, m}
	&= \frac{\Prob(X=1-x, Z = z^m)}{\Prob(Z_\ell = z_\ell^j)} \indicator \big( [z^m]_\ell = z_\ell^j \big)
\end{align*}
for $\ell \in\{1,\ldots,L\}$, $j\in \{1,\ldots,J\}$, $m \in \{1,\ldots J^{L}\}$. Here we use $z^m$ to denote the $m$th element of the support of $Z$. 

Although this notation is a bit complicated, it is simply the vector description of equation \eqref{eq:manyIVlawTotProb} above: The elements of $\textbf{b}_x$ are the intercepts, while the non-zero elements of the matrix $\textbf{A}_x$ are elements of the form
\[
	\Prob(X=1-x, Z_{-\ell} = z_{-\ell}^k \mid Z_\ell = z_\ell^j)
\]
for some $\ell \in \{1,\ldots,L\}$, $j \in \{1,\ldots,J\}$, and $k \in \{1,\ldots, J^{L-1} \}$.

As in the single instrument case, this set $\mathcal{H}_x$ is the identified set for the joint vector of $\Prob(Y_x=1 \mid Z_\ell = z_\ell^j)$ probabilities with no assumption on the dependence between $Y_x$ and each instrument $Z_\ell$. We next consider constraints on this dependence. As before, we relax independence using $c$-dependence. Specifically, we make the following assumption.

\begin{assump}{B\ref{assump:marginalIndepZ}$^{\prime \prime}$}(Instrument partial exogeneity).
For each $\ell \in \{1,\ldots, L \}$, let $c_\ell$ be a scalar between 0 and 1. Then
\[
	| \Prob(Z_\ell = z_\ell^j \mid Y_x = y_x) - \Prob(Z_\ell = z_\ell^j) | \leq c_\ell
\]
for each $j \in \{1,\ldots,J\}$, $\ell \in \{1,\ldots,L\}$, $y_x \in \supp(Y_x)$, and $x\in\{0,1\}$.
\end{assump}

This assumption generalizes $c$-dependence to allow for multiple discrete instruments. As in the single binary instrument case, $c_\ell = 0$ implies $Y_x \indep Z_\ell$, while $c_\ell > 0$ allows for some dependence. Importantly, we allow $c_\ell$ to vary across instruments. Let $c = (c_1,\ldots,c_L)$.

As shown in theorem \ref{thm:BinaryYManyInstrumentCdep} below, the set
\[
	\mathcal{D}(c) = \prod_{\ell = 1}^L \mathcal{D}_\ell(c_\ell)
\]
where
\begin{align}\label{eq:generalDiamondSet}
	\mathcal{D}_\ell(c_\ell) = \Bigg\{ a\in [0,1]^{J} : \ 
	&\text{For all $j=1,\ldots,J$}, \\
	&\Prob(Z_\ell = z_\ell^j) a_{j} - \min\{\Prob(Z_\ell = z_\ell^j) + c_\ell,1\} \sum_{k=1}^J \Prob(Z_\ell = z_\ell^k) a_{k} \leq 0, \notag \\
	 &\Prob(Z_\ell = z_\ell^j) a_{j} - \max\{\Prob(Z_\ell = z_\ell^j) - c_\ell,0\} \sum_{k=1}^J \Prob(Z_\ell = z_\ell^k) a_{k} \geq 0, \notag \\
	&\Prob(Z_\ell = z_\ell^j) (1-a_{j}) - \min\{\Prob(Z_\ell = z_\ell^j) + c_\ell,1\} \sum_{k=1}^J \Prob(Z_\ell = z_\ell^k) (1-a_{k}) \leq 0,\notag \\
	&\Prob(Z_\ell = z_\ell^j) (1-a_{j}) - \max\{\Prob(Z_\ell = z_\ell^j) - c_\ell,0\} \sum_{k=1}^J \Prob(Z_\ell = z_\ell^k) (1-a_{k}) \geq 0 \Bigg\} \notag
\end{align}
characterizes the constraints that B\ref{assump:marginalIndepZ}$^{\prime \prime}$ imposes on the probabilities $\Prob(Y_x=1 \mid Z_\ell = z_\ell^j)$. This set $\mathcal{D}(c)$ generalizes the parallelogram defined by equation \eqref{eq:diamondSet}. As in the single binary instrument case, it is defined by a finite number of linear inequalities.

Let
\[
	\Theta_x(c) = \mathcal{D}(c) \cap \mathcal{H}_x
\]
denote the intersection of these generalized diamond and box sets.

\begin{theorem}\label{thm:BinaryYManyInstrumentCdep}
Consider the binary outcome, binary treatment model \eqref{eq:generalOutcomeEquation}. Suppose the joint distribution of $(Y,X,Z)$ is known. Suppose B\ref{assump:marginalIndepZ}$^{\prime \prime}$ and B\ref{assump:NonTrivialinstrument}$^\prime$ hold. For $x \in \{0,1\}$, let
\[
	p_x = \big\{ \Prob(Y_x=1 \mid Z_\ell = z_\ell^j): \ell \in \{1,\ldots,L\}, j \in \{j,\ldots,J \} \big\}.
\]
Then the identified set for $(p_0,p_1)$ is $\Theta_0(c) \times \Theta_1(c)$. Consequently, the model is falsified if and only if this set is empty. 
\end{theorem}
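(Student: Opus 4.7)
My plan is to prove the theorem by establishing two directions: (i) \emph{necessity}, that any $(p_0,p_1)$ generated by a distribution satisfying B\ref{assump:marginalIndepZ}$^{\prime\prime}$ and B\ref{assump:NonTrivialinstrument}$^\prime$ must lie in $\Theta_0(c)\times\Theta_1(c)$; and (ii) \emph{sufficiency}, that every point of this product set is achieved by some admissible joint distribution on $(Y_0,Y_1,X,Z)$ whose $(Y,X,Z)$-marginal equals the observed distribution.

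For necessity, fix $x\in\{0,1\}$ and show $p_x\in\mathcal{H}_x\cap\mathcal{D}(c)$. Membership in $\mathcal{H}_x$ is a direct consequence of the law of total probability already displayed in equation \eqref{eq:manyIVlawTotProb}: the vector $p_x$ is an affine image, with coefficients identified from the data, of the unknown counterfactual probabilities $\Prob(Y_x=1\mid X=1-x,Z=z)$, each of which lies in $[0,1]$. This is precisely the defining condition of $\mathcal{H}_x$ in equation \eqref{eq:discreteInstrumentBoxSet}. For membership in $\mathcal{D}(c)$, I would rewrite the absolute-value bound in B\ref{assump:marginalIndepZ}$^{\prime\prime}$ using Bayes' rule. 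Writing $p_{\ell,j}=\Prob(Z_\ell=z_\ell^j)$ and $a_{\ell,j}=\Prob(Y_x=1\mid Z_\ell=z_\ell^j)$, one has $\Prob(Z_\ell=z_\ell^j\mid Y_x=1) = p_{\ell,j}a_{\ell,j}/\sum_k p_{\ell,k}a_{\ell,k}$. Multiplying the inequalities $|\Prob(Z_\ell=z_\ell^j\mid Y_x=1)-p_{\ell,j}|\leq c_\ell$ through by the nonnegative denominator $\sum_k p_{\ell,k}a_{\ell,k}=\Prob(Y_x=1)$ produces exactly the first two linear inequalities defining $\mathcal{D}_\ell(c_\ell)$ in equation \eqref{eq:generalDiamondSet}; performing the analogous computation for the event $\{Y_x=0\}$ (replacing $a_{\ell,j}$ by $1-a_{\ell,j}$) yields the remaining two inequalities. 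Hence $p_x\in\mathcal{D}_\ell(c_\ell)$ for every $\ell$, so $p_x\in\mathcal{D}(c)$.

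For sufficiency, take any $(p_0,p_1)\in\Theta_0(c)\times\Theta_1(c)$. Since $p_x\in\mathcal{H}_x$, there exists $\mathbf{q}_x\in[0,1]^{J^L}$ with $p_x=\mathbf{b}_x+\mathbf{A}_x\mathbf{q}_x$; interpret the entries of $\mathbf{q}_x$ as candidate conditional probabilities $\Prob(Y_x=1\mid X=1-x,Z=z)$. Combining these with the observed quantities $\Prob(Y=1,X=x\mid Z=z)=\Prob(Y_x=1,X=x\mid Z=z)$ pins down a joint law for $(Y_x,X,Z)$ whose $(Y,X,Z)$-marginal matches the data and whose conditionals satisfy $\Prob(Y_x=1\mid Z_\ell=z_\ell^j)=[p_x]_{(\ell-1)J+j}$. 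I would then build a joint distribution on $(Y_0,Y_1,X,Z)$ by coupling the two marginal laws through conditional independence of $Y_0$ and $Y_1$ given $(X,Z)$; this is permissible because the model imposes no restriction on the joint law of $(Y_0,Y_1)$. Finally, since $p_x\in\mathcal{D}_\ell(c_\ell)$, the Bayes-rule manipulation of the necessity step run in reverse shows that $|\Prob(Z_\ell=z_\ell^j\mid Y_x=y_x)-\Prob(Z_\ell=z_\ell^j)|\leq c_\ell$ under the constructed law, so B\ref{assump:marginalIndepZ}$^{\prime\prime}$ holds.

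The main obstacle is the sufficiency direction: one must verify that a single coupling simultaneously (a) reproduces the observed distribution of $(Y,X,Z)$, (b) respects the $c_\ell$-dependence bound for every instrument marginally, and (c) admits a coherent joint law of $(Y_0,Y_1)$. The product form $\Theta_0(c)\times\Theta_1(c)$ in the conclusion reflects that no cross-restriction between $Y_0$ and $Y_1$ is imposed, so the conditional-independence coupling works; the only real content is checking that the $\mathcal{D}_\ell(c_\ell)$ inequalities, which are linear in the unknowns, are equivalent under Bayes' rule to the (nonlinear) $c$-dependence bound, and that the $\mathbf{q}_x$ supplied by $\mathcal{H}_x$-membership genuinely yields conditional probabilities in $[0,1]$ (which is immediate from the definition of $\mathcal{H}_x$). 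Ruling out the degenerate denominator $\Prob(Y_x=1)=0$ is handled by B\ref{assump:NonTrivialinstrument}$^\prime$ together with the observation that such boundary cases can be approximated and continuity of the constraints (cf.\ proposition \ref{prop:PropertiesOneInstBounds}) extends the conclusion to the closure.
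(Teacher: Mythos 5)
Your proposal is correct and follows essentially the same route as the paper's proof: necessity via the law of total probability (for $\mathcal{H}_x$) and a Bayes'-rule rewriting of the $c_\ell$-dependence bound into the linear inequalities of $\mathcal{D}_\ell(c_\ell)$, and sufficiency by reading the $\mathbf{q}_x$ certificate from $\mathcal{H}_x$-membership as counterfactual conditionals, coupling $Y_0$ and $Y_1$ arbitrarily given $(X,Z)$, and reversing the Bayes'-rule step. The only difference is cosmetic: the degenerate case $\Prob(Y_x=1)=0$ needs no approximation argument, since the multiplied-through linear inequalities hold trivially (both sides vanish) and the $c$-dependence condition is vacuous off the support of $Y_x$.
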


This result is a direct generalization of theorem \ref{thm:BinaryYoneInstrumentCdep}. All of the discussion and interpretation there applies here as well. One new issue arises: The baseline model, $c = (0,\ldots,0)$, is falsifiable even if none of the instruments are falsifiable themselves. That is, consider the model that imposes $Y_x \indep Z_\ell$ but no constraints on the statistical dependence between the other instruments $Z_{-\ell}$ and $Y_x$. We can obtain this model by setting $c = \iota - e_\ell$ where $\iota$ is a vector of 1's and $e_\ell$ is the unit vector. Suppose this model is not falsified, regardless of which instrument $\ell$ we choose to impose exogeneity on. Despite this, it is nonetheless possible that the baseline model $c = (0,\ldots,0)$ is falsified. In this case the instruments are incompatible with each other.

As in proposition \ref{prop:PropertiesOneInstBounds} in the single binary instrument case, we next state some useful properties of the identified set given in theorem \ref{thm:BinaryYManyInstrumentCdep}.

\begin{proposition}\label{prop:PropertiesManyInstBounds}
Suppose the assumptions of theorem \ref{thm:BinaryYManyInstrumentCdep} hold. Then:
\begin{enumerate}
\item There is a closed set $\mathcal{C} \subseteq [0,1]^L$ such that $\Theta_0(c)\times \Theta_1(c)$ is non-empty for all $c \in \mathcal{C}$. The set $\mathcal{C}$ is point identified.

\item For each $c \in \mathcal{C}$, the set $\Theta_0(c) \times \Theta_1(c)$ is a closed, convex polytope.

\item The correspondence $\phi: \mathcal{C} \rightrightarrows [0,1]^{2LJ}$ defined by $\phi(c) = \Theta_0(c)\times \Theta_1(c)$ is continuous at any $c \in \mathcal{C}$ such that $\text{int} (\mathcal{H}_x) \cap \text{int} (\mathcal{D}(c) ) \neq \emptyset$ for $x \in \{ 0,1 \}$. 
\end{enumerate}
\end{proposition}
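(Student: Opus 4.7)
The proposition has three parts: nonemptiness of the identified set on a closed, identified subset $\mathcal{C}$ of $[0,1]^L$; the polytope structure of $\Theta_0(c)\times\Theta_1(c)$ for each $c \in \mathcal{C}$; and continuity of the correspondence $\phi$ at points where a Slater-type interior condition holds. The common thread is that every inequality defining $\mathcal{D}(c)$ or $\mathcal{H}_x$ is (piecewise) linear in the decision variables and jointly continuous in $(\textbf{a},c)$, and that the ambient sets $[0,1]^{LJ}$ and $[0,1]^{J^L}$ are compact polytopes. I expect parts 1 and 2 to reduce to routine bookkeeping; the main effort will go into lower hemicontinuity in part 3.

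For part 1 I would simply define $\mathcal{C} = \{c \in [0,1]^L : \Theta_0(c) \times \Theta_1(c) \neq \emptyset\}$, which is identified since $\mathcal{D}(c)$ and $\mathcal{H}_x$ are functions of the distribution of $(Y,X,Z)$ and of $c$. Closedness follows from a compactness argument: if $c_n \to c$ with $c_n \in \mathcal{C}$, pick $(\textbf{a}_n^{(0)}, \textbf{a}_n^{(1)}) \in \Theta_0(c_n)\times \Theta_1(c_n)$ together with certificates $\textbf{q}_n^{(0)}, \textbf{q}_n^{(1)} \in [0,1]^{J^L}$ for membership in $\mathcal{H}_x$. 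Compactness of the product cube and continuity of the thresholds $\min\{\Prob(Z_\ell = z_\ell^j) + c_\ell, 1\}$ and $\max\{\Prob(Z_\ell = z_\ell^j) - c_\ell, 0\}$ in $c$ let all the inequalities in \eqref{eq:generalDiamondSet} and the affine relation $\textbf{a}^{(x)} = \textbf{b}_x + \textbf{A}_x \textbf{q}^{(x)}$ pass to the limit, giving $c \in \mathcal{C}$. For part 2, formula \eqref{eq:generalDiamondSet} exhibits $\mathcal{D}_\ell(c_\ell)$ as a finite intersection of half-spaces, so $\mathcal{D}(c)=\prod_\ell \mathcal{D}_\ell(c_\ell)$ is a polytope; $\mathcal{H}_x$ in \eqref{eq:discreteInstrumentBoxSet} is the intersection of $[0,1]^{LJ}$ with the affine image of the polytope $[0,1]^{J^L}$ under $\textbf{q} \mapsto \textbf{b}_x + \textbf{A}_x \textbf{q}$, hence also a polytope by Weyl--Minkowski. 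The intersection and Cartesian product of polytopes is a polytope, proving part 2.

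For part 3 I would split continuity into upper and lower hemicontinuity. Upper hemicontinuity holds at every $c \in \mathcal{C}$ by the same compactness-plus-continuity argument used in part 1: any convergent selection $(\textbf{a}_n^{(0)}, \textbf{a}_n^{(1)}) \to (\textbf{a}^{(0)}, \textbf{a}^{(1)})$ with $(\textbf{a}_n^{(0)}, \textbf{a}_n^{(1)}) \in \phi(c_n)$ and $c_n \to c$ yields a limit in $\phi(c)$. For lower hemicontinuity at a $c$ satisfying the interior condition, pick an interior witness $\tilde{\textbf{a}}^{(x)} \in \text{int}(\mathcal{H}_x) \cap \text{int}(\mathcal{D}(c))$ for each $x$. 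Since the inequalities in \eqref{eq:generalDiamondSet} are continuous in $c$ and satisfied strictly by $\tilde{\textbf{a}}^{(x)}$ at $c$, there is a neighborhood $N$ of $c$ and a slack $\varepsilon > 0$ such that $\tilde{\textbf{a}}^{(x)} \in \text{int}(\mathcal{D}(c'))$ with uniform slack $\varepsilon$ for every $c' \in N$; because $\mathcal{H}_x$ does not depend on $c$, this gives $\tilde{\textbf{a}}^{(x)} \in \Theta_x(c')$ as well. Now fix $(\textbf{a}^{(0)}, \textbf{a}^{(1)}) \in \phi(c)$ and $c_n \to c$ inside $N$. For each large $n$, convexity of $\Theta_x(c_n)$ (part 2) together with the uniform interior slack lets me choose $\lambda_n \downarrow 0$ so that $\textbf{a}_n^{(x)} := \lambda_n \tilde{\textbf{a}}^{(x)} + (1-\lambda_n)\textbf{a}^{(x)} \in \Theta_x(c_n)$: the deficit of $\textbf{a}^{(x)}$ in the constraints at $c_n$ relative to those at $c$ tends to zero with $c_n$, and a vanishing mixture with a uniformly interior point absorbs it. Then $(\textbf{a}_n^{(0)}, \textbf{a}_n^{(1)}) \to (\textbf{a}^{(0)}, \textbf{a}^{(1)})$, establishing lower hemicontinuity.

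The main obstacle is calibrating $\lambda_n$. One must argue that the right-hand sides of \eqref{eq:generalDiamondSet} change by $O(\|c_n - c\|)$ so that a mixing weight $\lambda_n$ of the same order restores feasibility against the uniform slack $\varepsilon$. Away from the kinks of $\min\{\cdot,1\}$ and $\max\{\cdot,0\}$ this is immediate from local affinity in $c_\ell$; at the kinks one-sided continuity still delivers an $O(\|c_n - c\|)$ bound, so the same prescription for $\lambda_n$ works. Combining upper and lower hemicontinuity yields continuity of $\phi$ at any $c$ satisfying the interior hypothesis, completing part 3 and the proposition.
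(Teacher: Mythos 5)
Your proof is correct, and while part 2 coincides with the paper's argument (linear inequalities for $\mathcal{D}(c)$, affine image of the cube for $\mathcal{H}_x$, products and intersections of polytopes), parts 1 and 3 take a genuinely more self-contained route. The paper first establishes, as a standalone lemma, that the correspondence $c \mapsto \mathcal{D}(c)$ is continuous on all of $[0,1]^L$ (with a separate upper/lower hemicontinuity argument that mixes toward the diagonal point $\tfrac{1}{2}\iota_J$), and then delegates the rest to cited results: an exercise in Border (1985) for closedness of $\{c : \mathcal{D}(c)\cap\mathcal{H}_x \neq \emptyset\}$, Aliprantis--Border for upper hemicontinuity of intersections and products, and Lechicki--Spakowski's Theorem B for lower hemicontinuity of the intersection under the interior condition. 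You instead prove everything directly: closedness of $\mathcal{C}$ by extracting convergent subsequences of points together with their certificates $\mathbf{q}_n^{(x)} \in [0,1]^{J^L}$ and passing the (jointly continuous) constraints to the limit; upper hemicontinuity by the same closed-graph argument; and lower hemicontinuity by the explicit Slater-point construction, mixing the target point with an interior witness using a weight $\lambda_n = O(\|c_n - c\|)$ calibrated against the uniform slack $\varepsilon$. Your version makes transparent exactly why the hypothesis $\text{int}(\mathcal{H}_x)\cap\text{int}(\mathcal{D}(c)) \neq \emptyset$ is needed (the interior slack absorbs the Lipschitz perturbation of the thresholds $\min\{\cdot,1\}$ and $\max\{\cdot,0\}$ in $c_\ell$), at the cost of redoing by hand what the cited correspondence-theoretic results package up; the paper's route is shorter on the page and yields a reusable continuity lemma for $\mathcal{D}(\cdot)$. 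One small point worth making explicit in your lower hemicontinuity step: a point of $\text{int}(\mathcal{D}(c))$ does satisfy every defining inequality strictly, because under B2$'$ none of the affine constraints in the definition of $\mathcal{D}_\ell(c_\ell)$ is identically zero, so an affine function that is $\leq 0$ near an interior point where it vanishes would have to be constant.
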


The set $\mathcal{C}$ is the set of all instrument partial exogeneity assumptions which do \emph{not} refute the model. The complement of this set is the set of all instrument partial exogeneity assumptions which are refuted. The falsification frontier in this case is defined as follows. Let
\[
	S_{< c} = \{ \tilde{c} \in [0,1]^L : \tilde{c} < c \}
\]
be the set of points which are smaller than $c$. Then 
\begin{equation}\label{eq:FFforHetModel}
	\text{FF} = \{ c \in [0,1]^L : S_{<c} \subseteq ([0,1]^L \setminus \mathcal{C}) \}.
\end{equation}
This is the set of all points $c$ such that any points smaller than $c$ are not in $\mathcal{C}$. 
This frontier is identified from the data since $\mathcal{C}$ is defined implicitly by $\{ \Theta_0(c) \times \Theta_1(c): c\in [0,1]^L\}$, which is point identified.

We next show how to use theorem \ref{thm:BinaryYManyInstrumentCdep} to get identified sets for counterfactual probabilities $\Prob(Y_x=1)$ and for the average treatment effect. Define
\[
	\overline{P}_x(c)
	= \sup_{\textbf{a} \in \Theta_x(c)} \; \sum_{j = 1}^J \sum_{\ell = 1}^L a_{(\ell-1)J + j}\Prob(Z_\ell = z_\ell^j)
\]
and
\[
	\underline{P}_x(c)
	= \inf_{\textbf{a} \in \Theta_x(c)} \; \sum_{j = 1}^J \sum_{\ell = 1}^L a_{(\ell-1)J + j}\Prob(Z_\ell = z_\ell^j).
\]
As in the single binary instrument case, these are both finite dimensional linear programs and hence can be computed easily.

\begin{corollary}\label{corr:hetTrtMultipleATEbounds}
Suppose the assumptions of theorem \ref{thm:BinaryYManyInstrumentCdep} hold. Then:
\begin{enumerate}
\item Let $c \in \mathcal{C}$. The identified set for $(\Prob(Y_0=1), \Prob(Y_1=1))$ is $I_0(c) \times I_1(c)$ where $I_x =[\underline{P}_x(c), \overline{P}_x(c)]$.

\item For $x \in \{0,1\}$, the functions $\underline{P}_x(c)$ and $\overline{P}_x(c)$ are continuous at any $c \in \mathcal{C}$ such that $\text{int} (\mathcal{H}_x) \cap \text{int} (\mathcal{D}(c) ) \neq \emptyset$. Moreover, they are monotonic in each component of $c$.

\item Let $c \in \mathcal{C}$. The identified set for ATE is $[\underline{\text{ATE}}(c),\overline{\text{ATE}}(c)]$ where
\[
	\underline{\text{ATE}}(c) = \underline{P}_1(c) - \overline{P}_0(c)
	\qquad \text{and} \qquad
	\overline{\text{ATE}}(c) = \overline{P}_1(c) - \underline{P}_0(c).
\]
\end{enumerate}
\end{corollary}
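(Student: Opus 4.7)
The plan is to reduce each of the three claims to Theorem \ref{thm:BinaryYManyInstrumentCdep}'s characterization of the identified set for the vector of conditional probabilities, and then exploit the linearity of the law of total probability together with convexity of $\Theta_x(c)$.

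For part 1, I would first observe that for each fixed instrument $\ell$, the law of total probability gives $\Prob(Y_x=1) = \sum_{j=1}^J \Prob(Y_x=1 \mid Z_\ell = z_\ell^j) \Prob(Z_\ell = z_\ell^j)$. A direct calculation using the definition \eqref{eq:discreteInstrumentBoxSet} shows that for any $\textbf{a} \in \mathcal{H}_x$ with representation $\textbf{a} = \textbf{b}_x + \textbf{A}_x \textbf{q}$, the sum $\sum_{j=1}^J a_{(\ell-1)J+j} \Prob(Z_\ell = z_\ell^j)$ equals $\Prob(Y=1,X=x) + \sum_m \Prob(X=1-x, Z = z^m) q_m$, which is independent of $\ell$. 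Hence the linear functional appearing in the definition of $\overline{P}_x(c)$ and $\underline{P}_x(c)$ tracks $\Prob(Y_x=1)$ on $\Theta_x(c)$. Since $\Theta_x(c)$ is nonempty and convex for $c \in \mathcal{C}$ by Proposition \ref{prop:PropertiesManyInstBounds}, and the objective is linear (hence continuous), the image is the closed interval $[\underline{P}_x(c), \overline{P}_x(c)]$. Theorem \ref{thm:BinaryYManyInstrumentCdep} further states that the joint identified set for $(p_0, p_1)$ factors as $\Theta_0(c) \times \Theta_1(c)$, so any pair in $I_0(c) \times I_1(c)$ is simultaneously achievable, and the joint identified set for $(\Prob(Y_0=1), \Prob(Y_1=1))$ is therefore the rectangle $I_0(c) \times I_1(c)$.

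For part 2, monotonicity follows by set inclusion. Inspecting \eqref{eq:generalDiamondSet}, each $c_\ell$ enters the defining inequalities of $\mathcal{D}_\ell(c_\ell)$ only as a relaxation parameter, so increasing $c_\ell$ weakly enlarges $\mathcal{D}_\ell(c_\ell)$, hence $\mathcal{D}(c)$, and hence $\Theta_x(c) = \mathcal{D}(c) \cap \mathcal{H}_x$. Taking supremum/infimum of a linear functional over a weakly enlarging set produces weakly monotone $\overline{P}_x$ and $\underline{P}_x$ in each component of $c$. For continuity at $c \in \mathcal{C}$ with $\text{int}(\mathcal{H}_x) \cap \text{int}(\mathcal{D}(c)) \neq \emptyset$, I would invoke Berge's maximum theorem: upper hemicontinuity of $c' \mapsto \Theta_x(c')$ is automatic because it is the intersection of a fixed polytope with one defined by continuous closed inequalities, while Proposition \ref{prop:PropertiesManyInstBounds}(3) supplies continuity of the joint correspondence $\phi$, and hence lower hemicontinuity of the factor $\Theta_x(\cdot)$ at such $c$. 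Berge's theorem applied to the continuous linear objective then delivers continuity of $\overline{P}_x$ and $\underline{P}_x$. Part 3 is then immediate: since the joint identified set for $(\Prob(Y_0=1), \Prob(Y_1=1))$ is the rectangle $I_0(c) \times I_1(c)$, the identified set for $\text{ATE} = \Prob(Y_1=1) - \Prob(Y_0=1)$ is the image of this rectangle under the continuous linear map $(p_0, p_1) \mapsto p_1 - p_0$, which is the interval $[\underline{P}_1(c) - \overline{P}_0(c), \overline{P}_1(c) - \underline{P}_0(c)]$.

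The main obstacle is the product-structure step in part 1, which is not a routine consequence of the linearity of the law of total probability but rather rests essentially on Theorem \ref{thm:BinaryYManyInstrumentCdep}'s separability conclusion for the conditional probability vectors; without it, the ATE bounds in part 3 could in principle be strictly tighter than those stated. A secondary subtlety is verifying lower hemicontinuity of the constraint correspondence $\Theta_x(\cdot)$ at the specified points of $c$, which is precisely where the interior-intersection hypothesis from Proposition \ref{prop:PropertiesManyInstBounds} is needed.
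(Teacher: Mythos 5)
Your proposal is correct and follows essentially the same route as the paper's proof: part 1 via linearity of the objective together with the convexity, closedness, and nonemptiness of $\Theta_x(c)$ from proposition \ref{prop:PropertiesManyInstBounds} and the product structure from theorem \ref{thm:BinaryYManyInstrumentCdep}; part 2 via the Maximum Theorem applied to the continuous correspondence $\Theta_x(\cdot)$; and part 3 via the rectangle structure of the joint identified set. You in fact go slightly beyond the paper's written proof by explicitly verifying monotonicity through nestedness of $\mathcal{D}(c)$ and by noting that the per-instrument sums $\sum_{j} a_{(\ell-1)J+j}\Prob(Z_\ell = z_\ell^j)$ agree across $\ell$ on $\mathcal{H}_x$, which is the observation that ties the stated objective to $\Prob(Y_x=1)$.
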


In the single binary instrument case, there is a simple closed form expression for the falsification point $c^*$. Consequently, computing the falsification adaptive set for ATE can be done easily by using linear programming to compute $\overline{P}_x(c^*)$ and $\underline{P}_x(c^*)$. In the multiple discrete instrument case, there is not a simple closed form expression for the falsification frontier. Nonetheless, it can be computed by a sequence of convex optimization problems. Specifically, the sets $\mathcal{D}(c)$ and $\mathcal{H}_x$ are polytopes in $\R^{LJ}$. Define the `distance' between these polytopes as
\[
	\text{dist}(\mathcal{D}(c), \mathcal{H}_x) = \inf_{(v_1,v_2) \in \mathcal{D}(c) \times \mathcal{H}_x} \; \| v_1 - v_2 \|.
\]
The model is refuted if and only if $\text{dist}(\mathcal{D}(c), \mathcal{H}_x) > 0$ for some $x \in \{0,1 \}$. For any $c$, the distance between these polytopes can be computed via convex optimization. Thus one can do a grid search over $c$ and collect all values where the distance is zero for both $x \in \{0,1\}$, and hence the model is not refuted. This is a numerical approximation to the set $\mathcal{C}$. We can then use this set to approximate the falsification frontier.

Given this approximate falsification frontier, we can compute the falsification adaptive set for ATE as follows. For any $c$, we can compute $\overline{P}_x(c)$ and $\underline{P}_x(c)$ via linear programming, since the objective function is linear and the constraint set is a polytope in $\R^{LJ}$. Thus we can simply compute these bounds for all $c \in \text{FF}$. The union of those bounds is the falsification adaptive set. Identified sets for ATE at values of $c$ beyond the falsification frontier can be computed similarly.

\subsection{Continuous Outcomes}\label{sec:hetTrtCts}

In section \ref{sec:hetTrtBinary} we considered binary outcomes, binary treatment, and multiple discrete instruments. In this case, the joint distribution of the data is characterized by a finite dimensional vector. It is well known that with discretely distributed data identified sets can often be computed using linear programming; see our literature review in appendix \ref{sec:relatedLiterature}. There are two advantages, however, to deriving analytical results like those of theorems \ref{thm:BinaryYoneInstrumentCdep} and \ref{thm:BinaryYManyInstrumentCdep}. First, they can explain the role of identifying assumptions, as illustrated in figures \ref{fig:FFbinaryYsingleIV} and \ref{fig:FFbinaryYsingleIV_falsification}. 
Second, they allow us to generalize the results to cases where brute force computation is costly or impossible. In this section, we show that the analytical results we derived under binary outcomes generalize to continuous outcomes. This leads us to a relatively simple and feasible approach for computing identified sets, falsification frontiers, and the falsification adaptive set under relaxations of instrument exogeneity with continuous outcomes.

We begin by assuming that outcomes are continuously distributed.

\begin{het2Assump}%
\label{assump:contsupport}
For any $x, x' \in \{0,1\}$ and $z \in \{0,1\}^L$, the distribution of $Y_x \mid X=x', Z=z$ is continuous with respect to the Lebesgue measure.
\end{het2Assump}

Assumption C\ref{assump:contsupport} supposes that, conditional on the treatment and instruments, potential outcomes are continuously distributed. It implies that, conditional on the treatment and instruments, observed outcomes are also continuously distributed. We also suppose there are $L$ binary instruments $Z = (Z_1,\ldots,Z_L)$. We can allow for discrete instruments as in section \ref{sec:hetTrtBinary}, but we only consider binary instruments to simplify the notation.

Let $\mathcal{P}(A)$ denote the set of densities (with respect to the Lebesgue measure) with support contained in $A$. This set can be written as 
\[
	\mathcal{P}(A) = \left\{ p: \int_A p(y) \; dy = 1, \ p(y) \geq 0 \text{ for all } y \in \R \right\}.
\]
We begin by considering the baseline case where the instruments are exogenous. When there is just a single instrument, this setting was studied in \cite{Kitagawa2009}. The following result is essentially his proposition 3.1.

\begin{proposition}\label{prop:kitagawa3point1}
Let $L=1$. Suppose C\ref{assump:contsupport} (continuous outcomes) and B\ref{assump:marginalIndepZ} (instrument exogeneity) hold. For each $x \in \{0,1\}$, let $\mathcal{Y}_x$ be a known set and suppose $\supp(Y_x) \subseteq \mathcal{Y}_x$. Then the identified set for $(f_{Y_1},f_{Y_0})$ is 
\[
	\left\{ f_1 \in \mathcal{P}(\mathcal{Y}_1): f_1(\cdot) \geq \max_{z=0,1} \; f_{Y,X|Z}(\cdot,1 \mid z)\right\}\\
	\times \left\{ f_0 \in \mathcal{P}(\mathcal{Y}_0): f_0(\cdot) \geq \max_{z=0,1} \; f_{Y,X|Z}(\cdot,0 \mid z)\right\}.
\]
Consequently, the model is refuted if
\[
	\int_{\mathcal{Y}_x} \max_{z=0,1} \; f_{Y,X|Z}(y,x \mid z) \; dy > 1
\]
for some $x \in \{0,1 \}$.
\end{proposition}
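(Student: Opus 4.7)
The plan is to prove the two directions of set equality and then read off the refutation condition as a corollary. Write $f_{Y_x,X|Z}(\cdot,\cdot\mid z)$ for the joint density/mass of the counterfactual pair $(Y_x,X)$ given $Z=z$. The starting observation is that when $X=x$ the observed outcome equals $Y_x$, so $f_{Y,X|Z}(y,x\mid z) = f_{Y_x,X|Z}(y,x\mid z)$ is identified directly from the data, while $f_{Y_x,X|Z}(y,1-x\mid z)$ is a latent, unobserved component that is only constrained to be a nonnegative function integrating (over $y\in\mathcal{Y}_x$) to $\Prob(X=1-x\mid Z=z)$.

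For the necessity direction, fix any $(f_{Y_1},f_{Y_0})$ generated by a model satisfying B\ref{assump:marginalIndepZ}. Exogeneity $Y_x \indep Z$ means $f_{Y_x|Z}(y\mid z) = f_{Y_x}(y)$ for every $z$. Marginalizing the joint over $X$ gives
\[
  f_{Y_x}(y) \;=\; f_{Y_x|Z}(y\mid z) \;=\; f_{Y,X|Z}(y,x\mid z) + f_{Y_x,X|Z}(y,1-x\mid z) \;\geq\; f_{Y,X|Z}(y,x\mid z),
\]
where the last step uses nonnegativity of the latent component. Since this holds for every $z\in\{0,1\}$, we get $f_{Y_x}(\cdot) \geq \max_{z\in\{0,1\}} f_{Y,X|Z}(\cdot,x\mid z)$. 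Combined with $f_{Y_x}\in \mathcal{P}(\mathcal{Y}_x)$, this places $f_{Y_x}$ in the claimed set for each $x$. Note that $Y_1$ and $Y_0$ are never jointly observed, so the identified set factorizes as a product across $x\in\{0,1\}$.

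For sufficiency, fix any $f_1$ and $f_0$ lying in the two claimed sets. For each $x$ and each $z$, define
\[
  g_x(y,1-x\mid z) \;=\; f_x(y) - f_{Y,X|Z}(y,x\mid z).
\]
This is nonnegative by the dominance constraint, and integrating over $\mathcal{Y}_x$ gives $1 - \Prob(X=x\mid Z=z) = \Prob(X=1-x\mid Z=z)$, so it is a valid latent component. Build the data-generating process by taking $f_{Y_x,X|Z}(\cdot,x\mid z)$ equal to the observed $f_{Y,X|Z}(\cdot,x\mid z)$ and $f_{Y_x,X|Z}(\cdot,1-x\mid z)$ equal to $g_x(\cdot,1-x\mid z)$; couple $Y_0$ and $Y_1$ arbitrarily (e.g.\ conditionally independent given $(X,Z)$). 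By construction the marginal of $Y_x$ given $Z=z$ equals $f_x$ for every $z$, delivering B\ref{assump:marginalIndepZ}; the observed joint $f_{Y,X|Z}$ is reproduced; and the marginals of $Y_1,Y_0$ are $f_1,f_0$.

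The refutation clause is then immediate: if $\int_{\mathcal{Y}_x} \max_{z} f_{Y,X|Z}(y,x\mid z)\,dy > 1$, any candidate $f_x\in\mathcal{P}(\mathcal{Y}_x)$ dominating $\max_z f_{Y,X|Z}(\cdot,x\mid z)$ would integrate to strictly more than $1$, a contradiction, so the $x$-th factor in the identified set is empty. The main conceptual obstacle is verifying that the constructed latent components can be pieced together into a coherent joint distribution of $(Y_1,Y_0,X,Z)$; this is resolved by the fact that no data point ever ties $Y_1$ and $Y_0$ together, so the product structure is truly attainable.
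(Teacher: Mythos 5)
Your proof is correct. The paper's own proof of this proposition is just a pointer to proposition 3.1 of \cite{Kitagawa2009}, but your decomposition-plus-construction argument (observed part identified, latent part only required to be nonnegative and to integrate to $\Prob(X=1-x\mid Z=z)$, then an arbitrary coupling of $Y_0$ and $Y_1$) is exactly the specialization to $L=1$ and $c=0$ of the argument the paper itself uses to prove theorem \ref{thm:ContYManyInstrumentCdep}, so it matches the intended approach.
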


The assumption that $\supp(Y_x) \subseteq \mathcal{Y}_x$ is not restrictive since we can let $\mathcal{Y}_x = \R$, but this notation allows us to impose bounds on the support if they are known.

We generalize proposition \ref{prop:kitagawa3point1} in two ways. First, we consider an arbitrary number $L$ of binary instruments. Second, we relax instrument exogeneity B\ref{assump:marginalIndepZ} to instrument partial exogeneity B\ref{assump:marginalIndepZ}$^{\prime \prime}$. As discussed above, our proof strategy generalizes the result in section \ref{sec:hetTrtBinary}. Independence of each instrument with each potential outcome requires that 
\[
	f_{Y_x|Z_\ell}(y \mid 1) = f_{Y_x|Z_\ell}(y \mid 0)
\]
for all $y \in \R$ and $\ell \in \{1, \ldots, L \}$. This constraint is analogous to the diagonal constraint in the left plots of figures \ref{fig:FFbinaryYsingleIV} and \ref{fig:FFbinaryYsingleIV_falsification}. The $c$-dependence assumption B\ref{assump:marginalIndepZ}$^{\prime \prime}$ does not require the densities $f_{Y_x | Z_\ell}(\cdot \mid z)$ to be the same for all $z \in \{0,1\}$, but it bounds the variation across $z$ by a magnitude determined by the vector $c$ of sensitivity parameters. We make this constraint precise below.

Analogously to section \ref{sec:hetTrtBinary}, we will derive the joint identified set for the conditional densities 
\[
	\left\{f_{Y_x|Z_\ell}(\cdot \mid z): x \in \{0,1\}, z \in\{0,1\}, \ell \in \{ 1, \ldots, L \} \right\}
\]
under instrument partial exogeneity. We define the following notation for these densities. For all $\ell \in \{1, \ldots, L\}$ and $x\in\{0,1\}$, let
\[
	\mathbf{f}_{Y_x|Z_\ell}
	=
	\begin{pmatrix}
		f_{Y_x|Z_\ell }(\cdot \mid 0) \\
		f_{Y_x|Z_\ell }(\cdot \mid 1)
	\end{pmatrix}
	\in \mathcal{P}(\supp(Y_x \mid Z_\ell = 0))\times \mathcal{P}(\supp(Y_x \mid Z_\ell = 1))
\]
and
\[
	(\mathbf{f}_{Y_x|Z_\ell})_{\text{all } \ell}
	= 
	\begin{pmatrix}
		\mathbf{f}_{Y_x|Z_1} \\
		\vdots \\
		\mathbf{f}_{Y_x|Z_L}
	\end{pmatrix}
	\in \prod_{\ell=1}^L \Big( \mathcal{P}(\supp(Y_x \mid Z_\ell = 0))\times \mathcal{P}(\supp(Y_x \mid Z_\ell = 1))\Big).
\]
Thus we will derive the joint identified set for the vectors $(\mathbf{f}_{Y_0|Z_\ell})_{\text{all } \ell}$ and $(\mathbf{f}_{Y_1|Z_\ell})_{\text{all } \ell}$.

This identified set has a structure analogous to that in section \ref{sec:hetTrtBinary}. Thus we first define the continuous outcome generalization of the set $\mathcal{D}(c)$ defined in equations \eqref{eq:diamondSet} and \eqref{eq:generalDiamondSet}. To do this, define
\[
	k_z^\ell(c_\ell) = \frac{\Prob(Z_\ell=z)\max\{\Prob(Z_\ell=1-z) - c_\ell,0\}}{\Prob(Z_\ell = 1-z)\min\{\Prob(Z_\ell=z)+c_\ell,1\}}
\]
for $\ell \in \{1, \ldots, L \}$ and $z \in \{0,1 \}$. This function generalizes equation \eqref{eq:kandK}. For each $x \in \{0,1\}$, let $\mathcal{Y}_x$ be a known set. We assume this set contains the support of $Y_x \mid X,Z$ for all values of the conditioning variables. Define
\begin{align*}
	\mathcal{D}_{x,\ell}(c_\ell) 
	&= \Big\{(f(\cdot \mid 0),f(\cdot \mid 1)) 
	\in \mathcal{P}(\mathcal{Y}_x)^2 :
	f(\cdot \mid 0) k_0^\ell(c_\ell) \leq f(\cdot \mid 1) \text{ and } f(\cdot \mid 1) k_1^\ell(c_\ell) \leq f(\cdot \mid 0) \Big\} %
\end{align*}
for $\ell \in \{1, \ldots, L \}$ and $x \in \{0,1\}$. This set generalizes equations \eqref{eq:diamondSet} and \eqref{eq:generalDiamondSet}. It depends on $x \in \{0,1\}$ only via the possibility that $\mathcal{Y}_x$
might depend on $x$. Finally, take the product to get
\[
	\mathcal{D}_x(c) = \prod_{\ell = 1}^L \mathcal{D}_{x,\ell}(c_\ell)
\]
We will show that $\mathcal{D}_x(c)$ is the set of densities $(\mathbf{f}_{Y_x|Z_\ell})_{\text{all } \ell}$ consistent with $c$-dependence (B\ref{assump:marginalIndepZ}$^{\prime \prime}$). Note that setting $c_\ell =0$ implies that $f_{Y_x|Z_\ell} = f_{Y_x}$, as required by the baseline model.

Next we generalize the set $\mathcal{H}_x$ defined in equations \eqref{eq:boxSet} and \eqref{eq:discreteInstrumentBoxSet}. With continuous outcomes this will be the set of proper densities $(\mathbf{f}_{Y_x|Z_\ell})_{\text{all } \ell}$ consistent with the observed data.
Let $\mathbf{A}_x$ be the $2L \times 2^L$ matrix such that for $\ell \in \{1, \ldots, L\}$ and $m \in \{ 1,\ldots,2^L \}$:
\[
	[\mathbf{A}_x]_{(\ell-1)2+1,m} = \frac{\Prob(X=1-x, Z = z^m)}{\Prob(Z_\ell = 0)} \indicator \big( [z^m]_\ell = 0 \big)
\]
and
\[
	[\mathbf{A}_x]_{(\ell-1)2+2,m} = \frac{\Prob(X=1-x, Z = z^m)}{\Prob(Z_\ell = 1)} \indicator \big( [z^m]_\ell = 1 \big).
\]
Here we use $z^m$ to denote the $m$th element of the support of $Z$. Let
\[
	\textbf{f}_{Y,X|Z_\ell}(\cdot,x) 
	= 
	\begin{pmatrix}
		f_{Y,X|Z_\ell}(\cdot,x \mid 0)\\
		f_{Y,X|Z_\ell}(\cdot,x \mid 1)
	\end{pmatrix}
	\qquad \text{and} \qquad
	(\mathbf{f}_{Y,X|Z_\ell}(\cdot,x) )_{\text{all } \ell} 
	=
	\begin{pmatrix}
		\mathbf{f}_{Y,X|Z_1}(\cdot,x) \\
		\vdots \\
		\mathbf{f}_{Y,X|Z_L}(\cdot,x)
	\end{pmatrix}.
\]
These are vectors of observed densities of outcomes and treatment conditional on different instruments. 
Define
\[
	\mathcal{H}_x
	= \left\{ \mathbf{f}_x \in \mathcal{P}(\mathcal{Y}_x)^{2L}
	: \mathbf{f}_x = (\mathbf{f}_{Y,X|Z_\ell}(\cdot,x))_{\text{all } \ell} + \mathbf{A}_x \mathbf{q} \ \text{ for some }  
	\mathbf{q} \in 
	\mathcal{P}(\mathcal{Y}_x)^{2^L}
	\right\}.
\]
This set generalizes equation \eqref{eq:discreteInstrumentBoxSet}. Finally, let
\[
	\Theta_x(c) = \mathcal{D}_x(c) \cap \mathcal{H}_x.
\]

\begin{theorem}\label{thm:ContYManyInstrumentCdep}
Consider the continuous outcome, binary treatment model \eqref{eq:generalOutcomeEquation}. Suppose the joint distribution of $(Y,X,Z)$ is known. Suppose C\ref{assump:contsupport}, B\ref{assump:marginalIndepZ}$^{\prime \prime}$, and B\ref{assump:NonTrivialinstrument}$^\prime$ hold. For each $x \in \{0,1\}$, let $\mathcal{Y}_x$ be a known set and suppose $\supp(Y_x \mid X=x',Z=z^m) \subseteq \mathcal{Y}_x$ for all $x' \in \{0,1\}$ and $m \in \{1,\ldots,2^L\}$. Then the identified set for $((\mathbf{f}_{Y_0|Z_\ell})_{\text{all } \ell}, (\mathbf{f}_{Y_1|Z_\ell})_{\text{all } \ell})$ is
\[
	\Theta_0(c) \times \Theta_1(c).
\]
Consequently, the model is falsified if and only if this set is empty. 
\end{theorem}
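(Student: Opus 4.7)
The plan is to adapt the argument for Theorem \ref{thm:BinaryYManyInstrumentCdep} from discrete to continuous outcomes, replacing conditional probability vectors by conditional density vectors and replacing nonnegativity of probabilities by pointwise nonnegativity of densities together with unit integrals. The argument splits into two directions: \emph{necessity} (any observationally equivalent vector of conditional densities satisfying B\ref{assump:marginalIndepZ}$^{\prime \prime}$ lies in $\Theta_0(c) \times \Theta_1(c)$) and \emph{sharpness} (every element of $\Theta_0(c) \times \Theta_1(c)$ can be rationalized by a joint distribution of $(Y_0,Y_1,X,Z)$ that reproduces the observed $F_{Y,X,Z}$ and satisfies the assumptions).

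For necessity, membership in $\mathcal{H}_x$ follows from the identity
\[
	f_{Y_x\mid Z_\ell}(y\mid z) = f_{Y,X\mid Z_\ell}(y,x\mid z) + \sum_{m:[z^m]_\ell = z} f_{Y_x\mid X=1-x,Z=z^m}(y)\,\frac{\Prob(X=1-x, Z=z^m)}{\Prob(Z_\ell=z)},
\]
obtained by applying the law of total probability conditional on $Z_\ell$ and then splitting over $Z_{-\ell}$; setting $q_m(\cdot) = f_{Y_x\mid X=1-x, Z=z^m}(\cdot) \in \mathcal{P}(\mathcal{Y}_x)$ exhibits the required $\mathbf{A}_x \mathbf{q}$ decomposition. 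Membership in $\mathcal{D}_x(c)$ is obtained by applying Bayes' rule to $\Prob(Z_\ell = z \mid Y_x=y)$, bounding this quantity within $c_\ell$ of $\Prob(Z_\ell=z)$ by B\ref{assump:marginalIndepZ}$^{\prime \prime}$, and rearranging, which yields the two ratio inequalities $f_{Y_x\mid Z_\ell}(y\mid 0)\, k_0^\ell(c_\ell) \leq f_{Y_x\mid Z_\ell}(y\mid 1)$ and $f_{Y_x\mid Z_\ell}(y\mid 1)\, k_1^\ell(c_\ell) \leq f_{Y_x\mid Z_\ell}(y\mid 0)$ a.e., precisely the defining inequalities of $\mathcal{D}_{x,\ell}(c_\ell)$.

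For sharpness, given any pair in $\Theta_0(c) \times \Theta_1(c)$, membership in $\mathcal{H}_x$ provides densities $\{q_m^x\}_{m=1}^{2^L} \subseteq \mathcal{P}(\mathcal{Y}_x)$ realizing the decomposition. I interpret $q_m^x$ as the counterfactual density $f_{Y_x \mid X=1-x, Z=z^m}$ and combine it with the observable $f_{Y \mid X=x, Z=z^m}$ (which serves as $f_{Y_x \mid X=x, Z=z^m}$ since $Y=Y_x$ when $X=x$) and with the observed $\Prob(X,Z)$ to build a joint distribution for $(Y_x, X, Z)$. I then couple $Y_0$ and $Y_1$ conditionally independently given $(X, Z)$ to obtain a joint distribution for $(Y_0, Y_1, X, Z)$. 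By construction this reproduces the observed $F_{Y,X,Z}$, and marginalizing over $(X, Z_{-\ell})$ reproduces the prescribed $\mathbf{f}_{Y_x \mid Z_\ell}$ for each $\ell$, since this is exactly what the $\mathbf{A}_x \mathbf{q}$ decomposition encodes. Running the necessity-direction Bayes' rule calculation in reverse, the $\mathcal{D}_{x,\ell}(c_\ell)$ inequalities translate into $|\Prob(Z_\ell=z \mid Y_x=y) - \Prob(Z_\ell=z)| \leq c_\ell$ a.e.\ on $\supp(Y_x)$, verifying B\ref{assump:marginalIndepZ}$^{\prime \prime}$.

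The main obstacle is simultaneously matching the prescribed $\mathbf{f}_{Y_x\mid Z_\ell}$ for all $L$ values of $\ell$ from a single underlying joint $(Y_x, X, Z)$: it is not a priori clear that the conditional densities for different $\ell$ are mutually compatible. The decomposition built into $\mathcal{H}_x$ resolves this, since the rows of $\mathbf{A}_x \mathbf{q}$ corresponding to different instruments are obtained by summing the \emph{same} collection of $q_m^x$'s over different subsets of support points, so a single choice of $\mathbf{q}$ determines all $L$ conditional densities consistently. A secondary technical point is that density-level inequalities must be read almost everywhere with respect to Lebesgue measure, which is compatible with C\ref{assump:contsupport}. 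Emptiness of $\Theta_0(c) \times \Theta_1(c)$ as a characterization of falsification then follows immediately from the sharpness of the identified set.
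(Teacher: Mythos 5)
Your proposal is correct and follows essentially the same route as the paper's proof: the same two-step structure (necessity via Bayes' rule for $\mathcal{D}_x(c)$ and the law-of-total-probability decomposition for $\mathcal{H}_x$; sharpness by interpreting the $\mathbf{q}$ components as counterfactual densities, coupling $Y_0$ and $Y_1$ conditional on $(X,Z)$, and reversing the Bayes' rule computation). Your coupling of $Y_0$ and $Y_1$ via conditional independence is a special case of the paper's ``any copula'' construction, and your observation that a single $\mathbf{q}$ consistently determines all $L$ conditional densities is exactly the point the paper's $\mathcal{H}_x$ decomposition is designed to encode.
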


As in proposition \ref{prop:kitagawa3point1}, knowledge of the set $\mathcal{Y}_x$ is not restrictive since we can let $\mathcal{Y}_x = \R$, but this notation allows us to impose bounds on its support if they are known. We could also generalize this set to allow it to vary with $x'$ and $z^m$, at the cost of additional notation.

The identified set $\Theta_0(c) \times \Theta_1(c)$ is a convex set of functions defined by linear equality and inequality constraints. It weakly expands as components of $c$ increase. It nests the identified set under the baseline independence assumption ($c=0$) and the identified set under no assumptions on the dependence between potential outcomes and instruments ($c = \iota_L$, an $L$-vector of ones).

For a given $c$, this identified set can be empty, which implies the model at that $c$ is falsified. Define
\[
	\mathcal{C} = \{c \in [0,1]^L: \Theta_0(c) \times \Theta_1(c) \neq \emptyset\}.
\]
This is the set of all partial exogeneity assumptions which are not refuted. It has the property that $c \in \mathcal{C}$ and $c' \geq c$ implies $c'\in\mathcal{C}$. Using this set, we can compute the falsification frontier as in equation \eqref{eq:FFforHetModel}.

As in section \ref{sec:hetTrtBinary}, we can obtain the identified set for functionals of the densities $f_{Y_x \mid Z_\ell}$ via convex optimization. 
For example, the identified set for $(f_{Y_0}(\cdot), f_{Y_1}(\cdot))$ is $I_0(c) \times I_1(c)$ where
\begin{align*}
	I_x(c) = \{&f \in \mathcal{P}(\mathcal{Y}_x) : f(\cdot) = f_{Y_x|Z_1}(\cdot \mid 0) \Prob(Z_1 = 0) + f_{Y_x|Z_1}(\cdot \mid 1) \Prob(Z_1 = 1) \\
	&\quad \text{ where $(f_{Y_x | Z_1}( \cdot \mid 0), f_{Y_x | Z_1}(\cdot \mid 1))$ is the first component of some $(\mathbf{f}_{Y_x | Z_\ell})_{\text{all } \ell}$ in $\Theta_x(c)$ } \}.
\end{align*}
Given this identified set, suppose we are interested in a linear functional
\[
	\Gamma(f_1,f_0) 
	= \int_{\mathcal{Y}_1} \omega_1(y_1) f_1(y_1) \; dy_1 
	+ \int_{\mathcal{Y}_0} \omega_0(y_0) f_0(y_0) \; dy_0
\]
where $\omega_1$ and $\omega_0$ are known weight functions. For example, with $\omega_1(y) = y$ and $\omega_0(y) = -y$, $\Gamma(f_1,f_0)$ equals the average treatment effect. Let
\[
	\overline{\Gamma}(c)
	= \sup_{f_1 \in I_1(c)} \int_{\mathcal{Y}_1} \omega_1(y_1)f_1(y_1) \; dy_1 
	+ \sup_{f_0 \in I_0(c)}\int_{\mathcal{Y}_0} \omega_0(y_0)f_0(y_0) \; dy_0
\]
and
\[
	\underline{\Gamma}(c) 
	= \inf_{f_1 \in I_1(c)} \int_{\mathcal{Y}_1} \omega_1(y_1)f_1(y_1) \; dy_1 
	+ \inf_{f_0 \in I_0(c)}\int_{\mathcal{Y}_0} \omega_0(y_0)f_0(y_0) \; dy_0.
\]
Then $[\underline{\Gamma}(c), \overline{\Gamma}(c)]$ is the closure of the identified set for $\Gamma(f_{Y_1},f_{Y_0})$. This follows by convexity of the identified set in theorem \ref{thm:ContYManyInstrumentCdep}. These bounds are weakly increasing in each component of $c$. We discuss computation next.

\subsubsection*{Computation}

The identified set $\Theta_0(c) \times \Theta_1(c)$ is infinite dimensional, since it is a set of nonparametric densities. Suppose we are interested in linear functionals $\Gamma(f_1,f_0)$. Even though the corresponding identified set---when it is nonempty---is an interval, we have characterized it via optimization over the infinite dimensional spaces $\Theta_x(c)$. Here we discuss one approach to computing these identified sets in practice. When the baseline model is falsified, this will also allow us to compute approximations to the falsification frontier and the falsification adaptive set. Specifically, we will approximate the infinite dimensional space of densities of interest by a finite dimensional sieve space. Similar approximations of identified sets have been used in \cite{MogstadSantosTorgovitsky2016}, for example. Alternatively, it is likely that the computational approach developed in \cite{ChristensenConnault2019} could be adapted to our setting. Unlike the sieve based approach we consider below, the dimension of their optimization problem does not depend on the precision of the density approximation. We leave the application of their approach to our problem to future work.

For simplicity, let $\mathcal{Y}_x = [0,1]$ for $x \in \{0,1\}$. This restriction can be relaxed by transforming the outcome variable to have support in the unit interval. We also assume that for any $x, x' \in \{0,1\}$ and $z \in \{0,1\}^L$ the density $f_{Y_x|X,Z}(\cdot \mid x',z)$ is continuous on $[0,1]$. Let $\widetilde{\mathcal{P}}([0,1])$ denote this set of continuous pdfs. Let 
\begin{align*}
	\mathcal{F}^{M} 
	&= \left\{f^{M}(y) = \sum_{m=0}^M a_{m}b_m(y): (a_{1},\ldots,a_{M}) = \textbf{a}^M \in \mathcal{A}_M\right\}\\
	&= \left\{f^M(y) = \textbf{a}^{M\prime}\textbf{b}^M(y): \textbf{a}^M \in \mathcal{A}_M\right\}
\end{align*}
where $\mathcal{A}_M \subseteq \R^M$ and $\{ b_m(\cdot) \}$ are known basis functions. We assume that every element of $\widetilde{\mathcal{P}}([0,1])$ can be approximated by a sequence of elements in $\mathcal{F}_{M}$ as $M\rightarrow\infty$. For example, $\mathcal{F}^M$ could be a set of Bernstein polynomials, which can uniformly approximate any continuous function on $[0,1]$.
For Bernstein polynomials, 
\[
	\mathcal{A}_M 
	= \left\{ \textbf{a}^M : a_m \geq 0, \sum_{m=0}^M a_m = 1\right\}.
\]
Hence $\mathcal{A}_M$ is a closed, bounded, and convex subset of $\R^M$.

To approximate $\Theta_x(c) = \mathcal{D}_x(c) \cap \mathcal{H}_x$, we approximate each of the sets $\mathcal{D}_x(c)$ and $\mathcal{H}_x$. We first consider a finite dimensional approximation to the set $\mathcal{D}_x(c)$. Since $\mathcal{Y}_x = [0,1]$ does not depend on $x$, we let $\mathcal{D}(c) = \mathcal{D}_x(c)$. Let 
\begin{align*}
	\mathcal{D}_\ell^M(c_\ell) 
	&= \Bigg\{(f(\cdot \mid 0),f(\cdot \mid 1)) = \left( \textbf{a}^M(0)'\textbf{b}^M(\cdot), \ \textbf{a}^M(1)'\textbf{b}^M(\cdot)\right) : \textbf{a}^M(0),\textbf{a}^M(1)\in \mathcal{A}_M, \\
	&\hspace{5cm} \left( \textbf{a}^M(0)k_0^\ell(c_\ell) - \textbf{a}^M(1) \right)' \textbf{b}^M(y)\leq 0 \ \text{ and} \\
	&\hspace{5cm} \left( \textbf{a}^M(1)k_1^\ell(c_\ell) - \textbf{a}^M(0) \right)' \textbf{b}^M(y)\leq 0 \ \text{ for all } y \in [0,1] \Bigg\}.
\end{align*}
Define the set 
\begin{align*}
	\mathcal{A}_\ell^M(c_\ell) 
	= \Bigg\{ \left( \textbf{a}^M(0), \textbf{a}^M(1) \right) \in \mathcal{A}_M \times \mathcal{A}_M : 
	&\left(\textbf{a}^M(0)k_0^\ell(c_\ell) - \textbf{a}^M(1) \right)' \textbf{b}^M(y)\leq 0 \ \text{ and} \\
	&\left(\textbf{a}^M(1)k_1^\ell(c_\ell) - \textbf{a}^M(0) \right)' \textbf{b}^M(y)\leq 0 \ \text{ for all } y \in [0,1] \Bigg\}.
\end{align*}
Then
\[
	\mathcal{D}_\ell^M(c_\ell)
	= \left\{(f(\cdot \mid 0),f(\cdot \mid 1)) = \left( \textbf{a}^M(0)'\textbf{b}^M(\cdot), \textbf{a}^M(1)'\textbf{b}^M(\cdot)\right): \left( \textbf{a}^M(0),\textbf{a}^M(1) \right) \in \mathcal{A}^M_\ell(c_\ell)\right\}.
\]
The set $\mathcal{A}_\ell^M(c_\ell)$ is closed and convex since it is defined by a finite number of weak inequalities. There are a continuum of inequalities, however. In practice, we also approximate $\mathcal{A}_\ell^M(c_\ell)$ as follows. Pick a grid of points $\{y_1,\ldots,y_N\} \subseteq [0,1]$ that becomes dense in $[0,1]$ as $N \rightarrow \infty$. Then let $\mathcal{A}_\ell^{M,N}(c_\ell)$ be the set of coefficients $\textbf{a}^M(0)$ and $\textbf{a}^M(1)$ satisfying
\[
	\left( \textbf{a}^M(0)k_0^\ell(c_\ell) - \textbf{a}^M(1) \right)' \textbf{b}^M(y_n)\leq 0
\]
and
\[
	\left( \textbf{a}^M(1)k_1^\ell(c_\ell) - \textbf{a}^M(0) \right)' \textbf{b}^M(y_n)\leq 0
\]
for all $n \in \{1 ,\ldots,N \}$. This is a finite set of linear inequalities. We keep the $N$ implicit in the notation from here on.

Approximate the overall set $\mathcal{D}(c)$ by the product
\[
	\mathcal{D}^M(c) = \prod_{\ell = 1}^L \mathcal{D}^M_\ell(c_\ell).
\]
This set is characterized by
\[
	\mathcal{A}^M(c) = \prod_{\ell = 1}^L \mathcal{A}^M_\ell(c_\ell),
\]
a closed, convex, and bounded set in $\R^{2ML}$.

We similarly approximate $\mathcal{H}_x$ by a finite dimensional set. Let
\[
	\mathbf{f}_{Y|X,Z_\ell}(\cdot \mid x)
	= 
	\begin{pmatrix}
		f_{Y|X,Z_\ell}(\cdot \mid x,0)\\
		f_{Y|X,Z_\ell}(\cdot \mid x,1)
	\end{pmatrix}
	\qquad \text{and} \qquad
	(\mathbf{f}_{Y|X,Z}(\cdot \mid x) )_{\text{all } \ell}
	= 
	\begin{pmatrix}
		\mathbf{f}_{Y|X,Z_1}(\cdot \mid x)\\
		\vdots\\
		\mathbf{f}_{Y|X,Z_L}(\cdot \mid x)
	\end{pmatrix}.
\]
These conditional pdfs are point identified directly from the data. For each $z \in \{0,1\}$, we approximate $f_{Y|X,Z_\ell}(\cdot \mid x,z)$ by
\begin{align*}
		f^M_{Y|X,Z_\ell}(\cdot \mid x,z) 
		&= \sum_{m=0}^M \xi_\ell(x,z)_m b_m(\cdot)\\
		&= \mathbf{\xi}_\ell(x,z)^{M\prime}\textbf{b}^M(\cdot)
\end{align*}
where $\{ b_m(\cdot) \}$ are known basis functions. When these basis functions are orthonormal, so that
\[
	\int_0^1 b_m(y) b_n(y) \; dy= \indicator(m=n), 
\]
this approximation can be obtained choosing
\[
	\xi_\ell(x,z)_m
	= \int_0^1 f_{Y|X,Z_\ell}(y \mid x,z)b_m(y) \; dy.
\]
More concretely, if we use a Bernstein polynomial approximation, then
\[
	\xi_\ell(x,z)_m = f_{Y|X,Z_\ell} \left( \frac{m}{M} \mid x,z \right).
\]
Here we have approximated the conditional distribution of $Y \mid X,Z$. But $\mathcal{H}_x$ is defined in terms of distributions of $(Y,X) \mid Z$. We can write the vector $(\textbf{f}_{Y,X|Z}(y,x))_{\text{all } \ell}$ as
\[
	(\textbf{f}_{Y,X|Z}(y,x))_{\text{all } \ell}
	= \textbf{B}_x (\textbf{f}_{Y|X,Z}(y \mid x))_{\text{all } \ell}
\]
where
\[
	\mathbf{B}_x
	= 
	\begin{pmatrix}
	\Prob(X=x \mid Z_1 = 0) & 0 & \ldots & 0\\
	0 & \Prob(X=x \mid Z_1 = 1) &\ldots &  0\\
	\vdots& \vdots & \ddots & \vdots \\
	0 & 0 & \ldots  & \Prob(X=x \mid Z_L = 1)
	\end{pmatrix}.
\]
Hence its approximation can be written as
\[
	(\textbf{f}_{Y,X|Z}(y,x))_{\text{all } \ell} = \textbf{B}_x \Xi^M \textbf{b}^M(y)
\]
where
\[
	\Xi^M =
	\begin{pmatrix}
	\xi^1_1(x,0) & \xi^2_1(x,0) & \ldots & \xi^M_1(x,0) \\ 
	\xi^1_1(x,1) & \xi^2_1(x,1) & \ldots & \xi^M_1(x,1) \\ 
	\vdots & \vdots &\ddots & \vdots \\
	\xi^1_L(x,0) & \xi^2_L(x,0) & \ldots & \xi^M_L(x,0) \\ 
	\xi^1_L(x,1) & \xi^2_L(x,1) & \ldots & \xi^M_L(x,1)
	\end{pmatrix}.
\]
Next, approximate $\mathcal{P}(\mathcal{Y}_x)^{2^L}$ by
\[
	\left\{ \textbf{f}(\cdot) = \Gamma \textbf{b}^M(\cdot): \gamma_1^M, \ldots, \gamma_{2^L}^M \in \mathcal{A}^M\right\}
\]
where
\[
	\Gamma = 
	\begin{pmatrix}
		\gamma_1^{M\prime}\\
		\vdots\\
		\gamma_{2^L}^{M\prime}
	\end{pmatrix}
\]
is a $2^L \times M$ matrix of coefficients.
Thus we approximate $\mathcal{H}_x$ by
\begin{align*}
	\mathcal{H}_x^M 
	&= \left\{\textbf{f}(y) = (\textbf{B}_x \Xi^M + \textbf{A}_x \Gamma)\textbf{b}^M(y): \gamma_1, \ldots,\gamma_{2^L} \in \mathcal{A}_M\right\}\\
	&= \{\textbf{f}(y) = \textbf{E}^M \textbf{b}^M(y): \textbf{E}^M \in \textbf{B}_x \Xi^M + \textbf{A}_x \mathcal{A}_M^{2^L}\}.
\end{align*}
If $\mathcal{A}_M$ is closed, convex, and bounded, the affine mapping
\[
	\textbf{B}_x \Xi^M + \textbf{A}_x \mathcal{A}_M^{2^L}
\]
will also be closed, convex, and bounded.

Thus we approximate the identified set for $((\mathbf{f}_{Y_0|Z_\ell})_{\text{all } \ell}, (\mathbf{f}_{Y_1|Z_\ell})_{\text{all } \ell})$ by
\[
	\Theta_0^M(c) \times \Theta_1^M(c)
	= \big( \mathcal{D}^M(c) \cap \mathcal{H}^M_0 \big)
	\times
	\big( \mathcal{D}^M(c) \cap \mathcal{H}^M_1 \big)
\]
where
\[
	\mathcal{D}^M(c) \cap \mathcal{H}^M_x
	= \left\{ \textbf{f}(y) = \textbf{E}^M \textbf{b}^M(y): \textbf{E}^M \in \mathcal{A}^M(c) \cap (\textbf{B}_x \Xi^M + \textbf{A}_x \mathcal{A}_M^{2^L}) \right\}.
\]
This approximate identified set is empty---and hence the approximate model is falsified---if the set
\[
	\mathcal{A}^M(c) \cap (\textbf{B}_x \Xi^M + \textbf{A}_x \mathcal{A}_M^{2^L})
\]
is empty for either $x=0$ or $x=1$. These two sets are finite dimensional polytopes. Hence it is computationally tractable to compute the distance between them,
\[
	\text{dist}(\mathcal{A}^M(c), \ \textbf{B}_x \Xi^M + \textbf{A}_x \mathcal{A}_M^{2^L})
	= \inf_{(\textbf{v}_1, \textbf{v}_2) \in \mathcal{A}^M(c) \times (\textbf{B}_x \Xi^M + \textbf{A}_x \mathcal{A}_M^{2^L}) }\|\textbf{v}_1 - \textbf{v}_2 \|,
\]
for any given $c$. Since they are closed, convex, and bounded, the approximate model will be non-refuted if and only if
\[
	\text{dist}(\mathcal{A}^M(c), \ \textbf{B}_x \Xi^M + \textbf{A}_x \mathcal{A}_M^{2^L}) = 0.
\]
This characterization can be used to compute an approximate falsification frontier as well as an approximate falsification adaptive set. 

Although we omit a full analysis, we expect the identified sets $\Theta_0^M(c) \times \Theta_1^M(c)$ will converge to $\Theta_0(c) \times \Theta_1(c)$ as $M \rightarrow \infty$ under suitable regularity conditions. Consequently, the approximate falsification frontier and falsification adaptive sets should also converge as $M \rightarrow \infty$. Finally, the value of $M$ can be chosen large enough so that the approximate objects of interest change by less than a preset tolerance for further increases in $M$.

\section{Conclusion}

\subsection*{Summary}

In this paper we outlined a systematic and constructive answer to the question ``What should researchers do when their baseline model is refuted?'' Our answer focuses on what can be learned from falsified models, rather than treating falsification as a nuisance to be ignored or as a fatal flaw which dooms a study. 

We gave four recommendations. First: Measure the extent of falsification. Do this by defining continuous relaxations of the key identifying assumptions of interest, and then relaxing these assumptions until the model is no longer falsified. We call the set of points at which this happens the falsification frontier. Second: Present the \emph{falsification adaptive set}, the identified set for the parameter of interest under the assumption that the true model lies somewhere on the falsification frontier. This second recommendation is a generalization of standard practice for non-refuted baseline models. Moreover, it does not require the researcher to select or calibrate sensitivity parameters. Third: Present the identified set for selected points of interest on the falsification frontier. Fourth: Present identified sets for points beyond the falsification frontier, as a further sensitivity analysis.

We illustrated these four recommendations in two different overidentified instrumental variable models. The first model imposes homogeneous treatment effects but allows for continuous treatments, while the second model allows for heterogeneous treatment effects but focuses on binary treatments. In both models, multiple instruments are observed and the key identifying assumptions are exclusion and exogeneity of each instrument. We considered continuous relaxations of these assumptions. We then characterized the falsification frontier, the falsification adaptive set, and identified sets for points beyond the frontier. In the homogeneous treatment effect model, the falsification adaptive set has a particularly simple closed form expression, depending only on the value of a handful of 2SLS regression coefficients. In the heterogeneous treatment effect model, the falsification adaptive set can be quickly computed using convex optimization.

We showed how to use our results in four substantively different empirical applications. There we emphasized that that falsification adaptive set is an informative complement to traditional overidentification test $p$-values: It directly summarizes the range of estimates corresponding to non-falsified alternative models. We also emphasized the importance of controlling for the possibly invalid instruments when considering alternative models.

\subsection*{Future Research}

In this paper we outlined our general approach and then applied it to the classical linear instrumental variable model. The falsification frontier and falsification adaptive sets, however, must generally be derived separately for each baseline model and class of relaxations from those baseline assumptions. Thus in future work we plan to do this kind of analysis for different baseline models and different classes of relaxations. Here we focused on instrumental variable models. Other possible applications include various kinds of placebo tests, like those discussed in \cite{HeckmanHotz1989}. It may also be possible to derive results for a general class of models, by applying the results of \cite{ChesherRosen2017} or
\cite{Torgovitsky2018}, for example. Such extensions are important since---just like the choice of the baseline assumptions---the choice of relaxation matters. Hence comparing empirical results for different relaxations will be informative.

Finally, in this paper we focused on population level analysis. We briefly discussed estimation and inference in sections \ref{subsec:estimationInference} and \ref{subsec:FASestimationLinearIV}, but a full analysis of this remains for future work.

\singlespacing
\bibliographystyle{econometrica}
\bibliography{FF_paper}

\newpage
\appendix

\begin{center}
{\LARGE \textbf{Appendices to Accompany \\[0.5em] ``Salvaging Falsified Instrumental Variable Models''}}
\end{center}
\vspace{1mm}
\begin{center}
\begin{tabular}[h]{ccc}
    Matthew A. Masten &\hspace{5mm}   &  Alexandre Poirier  \\[1.25mm]
    {\small Duke University} & \hspace{5mm} &  {\small Georgetown University}
\end{tabular}
\end{center}

\titleformat*{\section}{\large\bfseries}
\section*{Table of Contents}

\hypersetup{
    colorlinks=true,
    linkcolor=mylinkcolor,
    citecolor=black,
    filecolor=black,
    urlcolor=black,
}

This document contains twelve appendices:
\begin{enumerate}[label=\Alph*.]
\item \hyperref[sec:relatedLiterature]{A detailed review} of how our results relate to the previous literature.

\item \hyperref[sec:AltResponses]{A discussion} of five alternative responses to falsification from the literature.

\item \hyperref[sec:MachadoEtAl]{An alternative approach} to defining falsification points.

\item \hyperref[sec:BayesianApproach]{A discussion} of Bayesian approaches to falsification, and how it relates to our results.

\item \hyperref[sec:transformedInstruments]{An analysis} of the implications of using affine transformations of one's original instruments, in the the linear model.

\item \hyperref[sec:covariatesInLinearModel]{An extension} of our linear model analysis to allow for covariates. Here we consider both linear and nonparametric controls. We then use these results to show how to use our approach to falsification to analyze subgroups.

\item \hyperref[sec:ExoExclu]{A formal explanation} that exclusion and exogeneity are technically equivalent in linear models, although they are substantively distinct.

\item \hyperref[sec:comparing2SLSandFAS]{An analysis} of the 2SLS estimand in falsified models, and a comparison of it with the falsification adaptive set.

\item \hyperref[sec:GMMobjFun]{A discussion} of how to interpret non-falsified models, and how our results relate to the previous literature on this topic.

\item \hyperref[sec:additionalEmpirics]{Additional empirical results} for the  \cite{DurantonMorrowTurner2014} application.

\item \hyperref[sec:proofsHomogModel]{Proofs} for section \ref{sec:homogModel}.

\item \hyperref[sec:proofsHetModel]{Proofs} for section \ref{sec:hetModel}.

\end{enumerate}

\hypersetup{
    colorlinks=true,
    linkcolor=black,
    citecolor=black,
    filecolor=black,
    urlcolor=black,
}

\newpage
\section{Related Literature}\label{sec:relatedLiterature}

In this section, we review the related literature.

\subsubsection*{Falsification}

We begin with the literature on falsification. Typically a model entails many different assumptions. Although we can sometimes falsify all assumptions jointly, it is often impossible to know \emph{which} specific assumption is false. This result is known in philosophy as the Duhem-Quine thesis (\citealt{Bonk2008}, \citealt{Stanford2017}). For example, \citet[section 2.1]{Ariew2018} states that ``the thesis formed from the two sub-theses, that (i) since empirical statements are interconnected, they cannot be singly disconfirmed, and (ii) if we wish to hold a particular statement true we can always adjust another statement, has become known as the Duhem-Quine thesis.'' That is, the general idea that one can consider trade-offs between assumptions to avoid falsification of any given assumption is well known in philosophy. Our paper builds on this literature by providing a specific formalization of this trade-off using identification theory.

In econometrics, three papers are most closely related. Of these, the conceptually closest is  \cite{ManskiPepper2018}. They study identification of treatment effects under combinations of different ``bounded variation'' assumptions, which are continuous relaxations of certain baseline assumptions. In an empirical analysis of right-to-carry laws, they compute identified sets for various combinations of these relaxations; see their table 2. Our recommended responses to baseline falsification can be read from this table. First, the boundary between the dark gray identified sets in that table and the rest is essentially a discrete approximation to the falsification frontier in their model. Second, the collection of the singleton identified sets on this boundary is what we call the falsification adaptive set. Third, given all of the separate presented singleton identified sets on the frontier, one can focus on any single set of interest. Finally, they also present identified sets beyond the frontier. In our paper, we formally define the falsification frontier and falsification adaptive set in a general model and relate it to the problem of responding to specification test rejection. We emphasize that these concepts and the four responses apply to any falsified model, not just the specific one considered by \cite{ManskiPepper2018}. Finally, we illustrate these responses using refutable instrumental variable models, whereas \cite{ManskiPepper2018} studied a model with cross sectional and time series variation in treatment rather than variation due to instrumental variables.

\cite{Ramsahai2012} studies a heterogeneous treatment effect model with a binary outcome, a binary treatment, and a binary instrument. He defines a continuous relaxation of the instrument exogeneity assumption and then shows how to numerically compute identified sets for a single value of this relaxation. He notes that the model can be falsified even if exogeneity does not hold exactly, and that the model is not falsified if instrument exogeneity is completely relaxed. He does not formally define and derive the falsification point, however. He also does not consider multi-dimensional relaxations. On pages 842--843, he notes that ``it is not obvious how the methods described in [his] paper can be extended to compute bounds'' as a function of his relaxation. In our analysis, we show how to compute bounds for any value of our continuous relaxation, and thus also are able to derive falsification points, falsification frontiers, and the falsification adaptive set. Unlike him, we also consider multiple instruments as well as models for continuous outcomes and continuous treatments.

\cite{MachadoShaikhVytlacil2018} study the testable implications of instrument exogeneity, sometimes combined with various monotonicity assumptions. They only consider the case where the outcome, treatment, and instrument are all binary. Their proposition 4.1 characterizes the set of distributions of unobservables (potential outcomes, potential treatments, and the instrument) which falsify instrument exogeneity. In their appendix A, they illustrate this result by parameterizing the distribution of unobservables. From the set of falsified distributions of unobservables, they pick the distribution which has the smallest relaxation of the baseline assumption, in a certain sense. They call this the ``minimal detectable violation.'' This is equivalent to what we call a falsification point. We explain this equivalence formally in our appendix \ref{sec:MachadoEtAl}. Beyond this similarity, however, there are several main differences between our paper and theirs. First, they only consider the case where outcomes are binary, whereas we also consider the continuous outcome case. Second, even in the binary outcome case their falsification point analysis is based on parametric index model assumptions. Our analysis of that case does not use such assumptions. Third, they do not consider the linear case, which allows for continuous treatments. Fourth, they do not consider the multiple instrument case. Finally, unlike their paper, we characterize identified sets under relaxations of the baseline assumptions. This allows us to characterize the falsification adaptive set, as well as identified sets beyond the falsification frontier. In particular, this allows researchers to do sensitivity analysis even when the baseline model is not falsified.

Before these three papers, a large literature in asset pricing beginning with \cite{HansenJagannathan1991, HansenJagannathan1997} developed in response to refutation of models via overidentification tests. For example, \citet[page 239]{HansenHeatonLuttmer1995} state that their goal ``is to shift the focus of statistical analysis of asset pricing models away from whether the models are correctly specified and \emph{toward measurement of the extent to which they are misspecified}'' (emphasis added). Our paper is motivated by the same concern---the falsification frontier provides one way of measuring the extent to which one's baseline model is misspecified. Besides this shared motivation, however, the technical approach in our paper is substantially different from that taken in this asset pricing literature. For an explanation and further discussion of the approach in that literature, see \citet[section 3.3]{Ludvigson2013}.

Recent work by \cite{DHaultfoeuilleEtAl2018} also has a similar motivation. They characterize the constraints that the rational expectations assumption places on the marginal distributions of realized outcomes and subjective expectations. When rational expectations is refuted, they define minimal deviations from the rational expectations assumption. They then propose using these minimal deviations as an input to structural estimation. Again, besides a shared motivation, the technical approach in our paper is substantially different. We also study a different setting, instrumental variable models.

Finally, many researchers have studied falsifiable partially identified models. As with point identified models, however, the existing literature focuses on testing the baseline model. For example, see \cite{BontempsMagnacMaurin2012} and \cite{BugniCanayShi2015}. The results in \cite{ChernozhukovLeeRosen2013} can also be used for this purpose. The falsification frontier we develop builds on such baseline tests by characterizing how much the baseline assumptions must be relaxed before the model is no longer falsified.

\subsubsection*{The Testable Implications of Instrumental Variable Models}\label{sec:IVtestingLit}

A large literature has studied the testable implications of instrumental variable models, especially for models with heterogeneous treatment effects. \cite{FloresChen2018} and \cite{SwansonEtAl2018} provide excellent surveys. Here we discuss the results most related to our paper. 

Our analysis of the heterogeneous treatment effect case starts from \cite{Manski1990}. He derived identified sets on average treatment effects assuming potential outcomes are mean independent of the instrument. When the outcome, treatment, and instrument are all binary, \cite{BalkePearl1997} characterize when Manski's bounds are empty, and hence when the model is falsified.\footnote{\cite{BalkePearl1997} assume the instrument is independent of the potential outcomes jointly, whereas \cite{Manski1990} only assumed the instrument is independent of each potential outcome separately. (Here we suppose outcomes are binary so that mean independence is equivalent to statistical independence.) This difference does not affect whether the identified set is empty or not, given any fixed distribution of observables. Hence it does not change the testable implications of the model. When the identified set is not empty, however, this difference \emph{can} affect the size of the identified set. See the second paragraph of section 3 in \cite{SwansonEtAl2018} for further discussion.} \citet[proposition 3.1]{Kitagawa2009} generalizes this characterization to allow for continuous outcomes, still requiring the treatment and instrument to be binary. As \cite{Kitagawa2009} notes, his extension is an adaptation of corollary 2.2.1 in Manski's \citeyearpar{Manski2003} analysis of missing data. \citet[proposition 2.4]{BeresteanuMolchanovMolinari2012} further generalize this characterization to allow for continuous instruments and discrete treatment, for discrete or continuous outcomes. \citet[proposition 1]{KedagniMourifie2017} provide an alternative characterization when instruments and outcomes are continuous, treatment is binary, and under the stronger assumption that the instrument is independent of the potential outcomes jointly; also see proposition 2.5 of \cite{BeresteanuMolchanovMolinari2012} for a result under this stronger independence assumption.

Our paper builds on these results in two ways. First, and most importantly: These papers only consider falsification of the baseline model with valid instruments. We consider falsification of a class of models which nests the baseline case along with continuous relaxations of the baseline assumptions. This allows us to define and derive the falsification frontier and falsification adaptive set. Second, for non-falsified models, our results allow researchers to do sensitivity analysis by relaxing the instrument exogeneity assumption. 

Our paper complements these results in one subtle way: These previous papers focus on the single instrument case whereas we study multiple instruments. If one assumes that all instruments are jointly independent of potential outcomes, and that instruments are all discrete, then the multiple instrument case can be reduced to a single instrument, and thus the previous results apply.\footnote{Another way to directly apply previous results is by looking at one instrument at a time, obtaining bounds using that instrument only, and then intersecting each of those instrument-specific bounds. This generally does not yield sharp bounds, however.} If even one of the instruments is invalid, however, this new instrument will also typically be invalid. For this reason, we purposely do not combine multiple instruments into one new instrument. This allows us to relax each instrument exogeneity assumption separately. In our baseline model, we only assume that each instrument is independent of each potential outcome, rather than assuming joint independence of all instruments with each potential outcome. Thus our baseline model with multiple instruments is slightly weaker than the model obtained by combining the instruments into a single discrete instrument and then applying one of the papers mentioned above. Hence even the testable implications we derive in our baseline case are apparently new, although this is a minor point since our primary contribution is the falsification frontier analysis rather than the baseline analysis.

Finally, a large literature on the testable implications of instrument exogeneity combined with other assumptions has developed. Most notably, many papers have studied the testable implications of the monotonicity assumption of \cite{ImbensAngrist1994}. \cite{FloresChen2018} give a comprehensive review. In this paper we focus on instrument exogeneity only.

\subsubsection*{Sensitivity Analysis in Homogeneous Effect Instrumental Variable Models}\label{sec:litReviewLinearIVsensitivity}

In a series of papers from the mid-1950's to early 1960's, Ta-Chung Liu criticized the identifying assumptions used in the Cowles Commission approach to simultaneous equations models (\citealt{Liu1955,Liu1960,Liu1963}). Among other criticisms, he argued that exclusion restrictions are unlikely to hold exactly, and thus no valid instruments actually exist. Frank Fisher called this a ``disturbing argument,'' since ``its premises apparently cannot be doubted and because its conclusions, if accepted, imply that the hope of structural estimation by any techniques whatsoever is forlorn indeed'' (\citealt{Fisher1961}, page 139). Fisher countered this argument, however, by providing perhaps the first sensitivity analysis of an instrumental variable model. He considered a linear model where the supposedly excluded variable actually has a nonzero coefficient, with the magnitude of this coefficient measuring the scale of the deviation from the baseline assumption. He showed that the bias of $k$-class estimators---which includes 2SLS---is continuous in this deviation from the exclusion restriction. He thus argued that it suffices that failures of the exclusion restriction are sufficiently small---it is not necessary for exclusion to hold exactly. Although Fisher called this paper his ``most important contribution to econometrics'' (\citealt{Fisher2005}, page 545), it apparently was mostly forgotten for over 40 years.

Around the early 2000's, however, several papers renewed the study of sensitivity analysis in linear instrumental variable models. This includes \cite{AngristKrueger1994}, \cite{AltonjiElderTaber2005}, \cite{Small2007}, \cite{ConleyHansenRossi2012}, \cite{Ashley2009}, \cite{Kraay2012}, \cite{AshleyParmeter2015}, and \cite{vanKippersluisRietveld2017,vanKippersluisRietveld2018}. In section \ref{sec:homogModel}, we study the same classical linear model as all of these papers, along with similar relaxations of the baseline assumptions. In that section, our main contribution is the analysis of falsification points, frontiers, and the falsification adaptive set. In contrast, this previous literature has focused on sensitivity analysis for statistical inference on the treatment variable coefficient. Our results build on Small's \citeyearpar{Small2007} equation (8), which characterizes the set of instrument coefficients consistent with the data. Small focuses on the implications of this characterization for \emph{non-rejection} of the overidentification test (see his proposition 1), and then uses this result to do sensitivity analysis for the treatment effect. In contrast, we focus on the implications of this characterization for falsification. 

\subsubsection*{Sensitivity Analysis in Heterogeneous Effect Instrumental Variable Models}

Unlike the literature on models with homogeneous treatment effects, there are far fewer papers on sensitivity analysis in instrumental variable models with heterogeneous treatment effects. Specifically, few papers consider continuous relaxations of the baseline instrumental variable assumptions while still allowing for heterogeneous treatment effects.

An important early paper was \cite{HotzMullinSanders1997}, who use a mixture model to allow for relaxations of the baseline assumptions. There are three main differences with our paper: First, they only study sensitivity, not falsification. Second, their mixture relaxation is different from the kind of relaxation we consider. Finally, they focus on the average effect of treatment on the treated, whereas our sensitivity analysis allows for a broader set of parameters of interest.

\label{litReview:linearProg}
When all variables are discrete, it is well known that identified sets can often be computed using linear programming. For example, see the literature review in \cite{Torgovitsky2018}. This approach has been used in several papers to do sensitivity analysis. One paper is \cite{Ramsahai2012}, which we already discussed above. \citet[section 4]{Laffers2018CE} considers continuous relaxations of instrument exogeneity. He then computes identified sets for ATE for several values of this relaxation. In \cite{Laffers2018EE} he applies this approach to various additional forms of continuous relaxations. These three papers all require all variables to be discrete. A key contribution of our paper is that our results allow for continuous outcome variables. Furthermore, these papers focus on sensitivity analysis, rather than falsification (see our discussion of \citealt{Ramsahai2012} above, however).

Finally, the paper by \cite{Huber2014} is somewhat related. He studies deviations from the baseline instrument exogeneity, exclusion, and monotonicity assumptions. These deviations are not relaxations of the baseline assumption; rather, they point identify LATE given the values of the sensitivity parameters. He does not consider parameters besides LATE. Moreover, his sensitivity analysis does not account for the possibility that the model is falsified. Hence his analysis is substantially different from ours.

\subsubsection*{Discrete Relaxations of Instrumental Variable Assumptions}

Above we focused on the previous literature which studied continuous relaxations of the baseline instrumental variables assumptions, like we do. Several papers instead consider discrete relaxations. \cite{ManskiPepper2000, ManskiPepper2009} propose and study the monotone instrumental variable assumption. \cite{BlundellEtAl2007} use this assumption in their study of wage distributions. \cite{KreiderPepper2007}, \cite{KreiderPepperGundersenJolliffe2012}, \cite{GundersenKreiderPepper2012}, and \cite{KreiderPepperRoy2016} use this assumption combined with assumptions on measurement error to do sensitivity analysis in various empirical applications. \cite{ChenFloresFloresLagunes2016} also study identification under monotonicity type assumptions on the instrument. \cite{NevoRosen2012} suppose the correlation between the instrument and the unobservable is of the same sign and smaller than the correlation between treatment and the unobservable.

As with the other papers discussed above, these papers' primary goal is to obtain identified sets on various parameters under assumptions weaker than the standard instrumental variables assumptions. While our results allow for similar analyses, our primary focus is on falsified models.

\subsubsection*{Local Versus Global Approaches to Sensitivity Analysis}

Several papers use local asymptotics to study various kinds of sensitivity to misspecification. For example, see \cite{AndrewsGentzkowShapiro2017}, \cite{BonhommeWeidner2018}, and \cite{ArmstrongKolesar2019}. The local asymptotic approach assumes the baseline model is approximately correct, in the sense that the magnitude of model misspecification is about the same size as the magnitude of sampling uncertainty. When analyzing falsifiable models, this means that the local approach is designed to handle misspecification which is difficult to detect with specification tests. For example, \cite{BonhommeWeidner2018} cite the following quote from \citet[page 294]{HuberRonchetti2009}:
\begin{quote}
``[such local] neighborhoods make eminent sense, since the standard goodness-of-fit tests are just able to detect deviations of this order. \emph{Larger deviations should be taken care of by diagnostic and modeling}, while smaller ones are difficult to detect and should be covered (in the insurance sense) by [local] robustness.'' [emphasis added]
\end{quote}
In this paper we focus on models which are clearly falsified---the ``larger deviations'' which Huber and Ronchetti refer to. In these larger deviations, it is known that the model is not approximately correct. For this reason we take a global approach to sensitivity analysis, which does not rely on linking the size of model misspecification to the size of sampling uncertainty.

Finally, note that this distinction is empirical relevant since many baseline models are strongly rejected in practice. For example, both \cite{AndrewsGentzkowShapiro2017} and \cite{ArmstrongKolesar2019} study the classic automobile data used by  \cite{BerryLevinsohnPakes1995} to estimate a baseline random coefficients logit model. This model is strongly rejected by the data, however. Specifically, \cite{ArmstrongKolesar2019} report that ``The $J$-statistic for testing the hypothesis that all moments are correctly specified equals 404.7. Consequently, the hypothesis is rejected at the usual significance levels'' (page 34). We deal with empirical findings of large misspecification like this by taking a global approach.

\section{Alternative Responses to Falsification}\label{sec:AltResponses}

In this paper we provided four constructive responses to refutation of a baseline model. In this section, we briefly discuss five alternative responses from the literature.
\begin{enumerate}
\item \emph{Present the no assumption bounds.} This response is nested as a special case of our analysis. Specifically, in the heterogeneous treatment effects model, the no assumption bounds are obtained by allowing $c_\ell$ to be sufficiently large for all $\ell \in \{1,\ldots, L \}$. In the homogeneous treatment effects model, the no assumption bounds are $(-\infty,\infty)$, which is obtained as $\delta_\ell \rightarrow \infty$ for all $\ell \in \{1,\ldots,L \}$.

\item \emph{Present identified sets under alternative, discrete relaxations of the baseline assumptions.} For example, one could use the analysis of \cite{ManskiPepper2000, ManskiPepper2009} or \cite{NevoRosen2012}. Even these relaxations are falsifiable, however. When these weaker assumption are falsified, researchers again face the question of what to do. Our identification assumptions are not nested with these alternative relaxations. Hence they are complementary and can be used together with these previous results.

\item \emph{Focus on a non-falsified subgroup.} We discuss this point formally in section \ref{sec:covariatesInLinearModel}, so here we give an informal explanation. Suppose a vector of covariates $W$ is observed. Then the model may be refuted conditional on some values $w \in \supp(W)$ but not conditional on other values. So one could present results for those subgroups where the model is not refuted. This is reasonable under either of two assumptions:
\begin{enumerate}
\item You assume treatment effects do not depend on covariates. Thus the subgroup effect equals the effect for the entire population.

\item You are specifically interested in a certain subgroup, which happens to be one which does not refute the model, and you are not interested in other subgroups.
\end{enumerate}
If, however, you care about other subgroups, or the entire population, and you want to allow for coefficients that are functions of the covariates, then it is not sufficient to focus on non-falsified subgroups. In section \ref{sec:covariatesInLinearModel} we formally discuss how you can use our results to handle the subgroups where the model is refuted and then aggregate up to get identified sets for an overall treatment effect.

\item \emph{Use a specific estimator to redefine the parameter of interest.} For example, in a fully parametric model one could declare the maximum likelihood estimand to be the parameter of interest. When the model is falsified, this parameter is called the pseudo-true parameter, while its estimator is called the quasi-MLE. Note that this approach absolves the researcher from any need to do specification testing, since interest in the parameter is motivated by the estimator rather than an underlying model. While this approach is sometimes used, it was not advocated in White's original analysis of quasi-MLE. To the contrary, he used quasi-MLE as a tool for constructing specification tests, emphasizing that
\begin{quote}
``[if] one has an indication that the parameter estimator is inconsistent for the parameters of interest'' then ``the model specification must be carefully re-examined.'' (\citealt{White1982}, pages 16--17)
\end{quote}
He reaches a similar conclusion in his 1994 book:
\begin{quote}
``In many cases, the goal of empirical economic research is to test hypotheses about parameters to which one wishes to attribute economic meaning. It is our view that this is inappropriate and unjustified without first establishing that the model within which the hypotheses are being tested is correctly specified to at least some extent. Otherwise, one may only have confidence that one is testing hypotheses about [the pseudo-true parameter] $\theta^*$; the economic interpretation desired is untenable.'' (\citealt{White1994}, pages 346--347)
\end{quote}
We concur with this sentiment and focus on methods which hold the parameter of interest fixed but vary the model assumptions to avoid falsification.

\item \emph{Drop the research and move on to a different project.} This response has three problems: (a) It creates a kind of publication bias, since we would never observe analyses with refuted baseline models. (b) It prevents us from answering certain empirical questions when there are no good alternative models or datasets.
And (c) It treats all falsified models equally; in contrast, our analysis distinguishes between models which are just barely falsified and those which are strongly falsified. 
\end{enumerate}

\section{An Alternative Definition of the Falsification Point}\label{sec:MachadoEtAl}

In this section we discuss an alternative method of defining the falsification point or frontier. This alternative approach is motivated by the analysis in appendix A of \cite{MachadoShaikhVytlacil2018}. 
For concreteness, we focus on the binary outcome heterogeneous treatment effects model of section \ref{sec:hetTrtBinary}. Let
\[
	\mathcal{Q} = \{ F_{Y_1,Y_0,X,Z} : \text{This distribution is consistent with the observed data $F_{Y,X,Z}$} \}.
\]
That is, $Q \in \mathcal{Q}$ if and only if $Q_{X,Z} = F_{X,Z}$ and
\[
	Q_{Y_x \mid X,Z}(\cdot \mid x, \cdot) = F_{Y \mid X,Z}(\cdot \mid x, \cdot)
\]
for each $x \in \{0,1 \}$, so that equation \eqref{eq:generalOutcomeEquation} holds. For each $Q \in \mathcal{Q}$, define
\[
	c_x(Q) = \sup_{y \in \{ 0,1 \}} | \Prob_Q(Z=1 \mid Y_x=y) - \Prob_Q(Z=1) |
\]
Then compute
\[
	c_x^* = \inf_{Q \in \mathcal{Q}} \; c_x(Q).
\]

\begin{proposition}\label{prop:MachadoEtAlrelationship}
Consider the model of section \ref{sec:hetTrtBinary}. The falsification point is $c^* = \max \{ c_0^*, c_1^* \}$.
\end{proposition}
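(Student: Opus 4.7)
The plan is to show the two inequalities separately, relying on the observation from Theorem \ref{thm:BinaryYoneInstrumentCdep} that the model is non-falsified at level $c$ if and only if there exists some $Q \in \mathcal{Q}$ with $Z$ being $c$-dependent with both $Y_0$ and $Y_1$. First I would rewrite
\[
    c^* = \inf\{c \in [0,1] : \exists\, Q \in \mathcal{Q} \text{ with } c_0(Q) \leq c \text{ and } c_1(Q) \leq c\} = \inf_{Q \in \mathcal{Q}} \max\{c_0(Q), c_1(Q)\}.
\]
The inequality $c^* \geq \max\{c_0^*, c_1^*\}$ is then immediate: for every $Q \in \mathcal{Q}$ and each $x \in \{0,1\}$, $\max\{c_0(Q), c_1(Q)\} \geq c_x(Q) \geq c_x^*$, and taking the infimum in $Q$ followed by the max in $x$ gives the desired bound.

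For the reverse inequality, I would exploit the fact that the observed distribution $F_{Y,X,Z}$ pins down $F_{Y_x \mid X = x, Z}$ for each $x$ but leaves the counterfactual conditionals $Q_{Y_0 \mid X=1, Z}$ and $Q_{Y_1 \mid X=0, Z}$ free (subject to being proper distributions on $\{0,1\}$). Crucially, $c_0(Q)$ depends on $Q$ only through the marginal $Q_{Y_0, Z}$, which is a function of the free choice $Q_{Y_0 \mid X=1, Z}$; symmetrically, $c_1(Q)$ depends only on $Q_{Y_1 \mid X=0, Z}$. These two choices are completely independent within $\mathcal{Q}$.

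Hence, letting $Q^{(0)}, Q^{(1)} \in \mathcal{Q}$ be (approximate) minimizers of $c_0(\cdot)$ and $c_1(\cdot)$ respectively, I would construct $Q^* \in \mathcal{Q}$ by taking $Q^*_{Y_0 \mid X=1, Z} = Q^{(0)}_{Y_0 \mid X=1, Z}$ and $Q^*_{Y_1 \mid X=0, Z} = Q^{(1)}_{Y_1 \mid X=0, Z}$, and completing $Q^*$ to a full joint distribution on $(Y_0, Y_1, X, Z)$ by, e.g., taking $Y_0$ and $Y_1$ conditionally independent given $(X,Z)$. By construction, $c_0(Q^*) = c_0(Q^{(0)}) = c_0^*$ and $c_1(Q^*) = c_1(Q^{(1)}) = c_1^*$, so $\max\{c_0(Q^*), c_1(Q^*)\} = \max\{c_0^*, c_1^*\}$, giving $c^* \leq \max\{c_0^*, c_1^*\}$.

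The main obstacle is the existence step—verifying that the infimum in $c_x^*$ is attained so that the construction of $Q^*$ is well-defined. I would handle this by noting that for binary $Y_x$ and $Z$ the feasible set of counterfactual conditionals is a compact (finite-dimensional) subset of the simplex and $c_x(\cdot)$ is a continuous function of these parameters, so a minimizer exists; if one prefers to avoid attainment, the same conclusion follows by taking sequences achieving $c_x^*$ up to $\varepsilon$ and sending $\varepsilon \to 0$, using the closedness properties in Proposition \ref{prop:PropertiesOneInstBounds}(1). A small bookkeeping step is to confirm that the arbitrary coupling of $(Y_0, Y_1)$ given $(X, Z)$ leaves the marginals of $(Y_0, Z)$ and $(Y_1, Z)$—and hence $c_0(Q^*)$ and $c_1(Q^*)$—unchanged, which is immediate.
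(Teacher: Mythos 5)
Your proof is correct and follows essentially the same route as the paper's: both directions rest on the observation that $c_0(Q)$ and $c_1(Q)$ depend on disjoint free components of $Q$, so (near-)minimizers for each can be glued into a single $Q^* \in \mathcal{Q}$ via an arbitrary coupling of $(Y_0,Y_1)$ given $(X,Z)$. Your explicit treatment of whether the infimum defining $c_x^*$ is attained is a point the paper's proof passes over silently, but it does not change the argument.
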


In sections \ref{sec:homogModel} and \ref{sec:hetModel}, we obtain falsification frontiers by characterizing identified sets under a given relaxation of the assumptions, and then checking when they are empty. A benefit of this approach is that we can present identified sets for values of the sensitivity parameters on or beyond the falsification frontier. The approach discussed in the present section provides the falsification point but does not characterize identified sets for values $c \geq c^*$.

\begin{proof}[Proof of proposition \ref{prop:MachadoEtAlrelationship}]
Let $c \geq c^*$. Then, $c \geq c_x^*$ for $x \in \{ 0,1 \}$. Thus, by definition of $c_x^*$, there exists a distribution $Q^x \in \mathcal{Q}$ under which $Z$ is $c$-dependent with $Y_x$, for each $x \in \{0,1\}$. Define $Q_{Y_1,Y_0,X,Z}$ as follows: Let $Q_{X,Z} = F_{X,Z}$. Next, define
\[
	Q_{Y_1,Y_0\mid X,Z} = Q^1_{Y_1 \mid X,Z} \cdot Q^0_{Y_0 \mid X,Z}.
\]
Then (a) $Q \in \mathcal{Q}$, so that it is consistent with the data, and (b) under $Q$, $Z$ is $c$-dependent with $Y_x$ for each $x \in \{0,1\}$. Thus the model under partial exogeneity with $c$-dependence is not falsified.

Next suppose $c < c^*$. Then $c_x(Q) > c$ for all $Q \in \mathcal{Q}$. That is, all distributions of the unobservables that are consistent with the data have
\[
	\sup_{y \in \{ 0,1 \}} | \Prob_Q(Z=1 \mid Y_x=y) - \Prob_Q(Z=1) | > c
\]
for each $x \in \{0,1\}$, and so cannot satisfy $c$-dependence. Thus the model under partial exogeneity with $c$-dependence is falsified.
\end{proof}

\section{A Bayesian Perspective on Falsification}\label{sec:BayesianApproach}

In this section we briefly discuss how one might respond to falsification of a baseline model from a Bayesian perspective.
To do this, it is first helpful to describe our analysis from a different perspective than that used in the main paper. After that, we describe an alternative Bayesian perspective. We then compare the two approaches.

For concreteness, we focus on the classical linear instrumental variable model of section \ref{sec:homogModel}. Several papers study sensitivity analysis in this model from a Bayesian perspective, including \cite{Ashley2009}, \cite{ConleyHansenRossi2012}, \cite{Kraay2012}, and \cite{ChanTobias2015}. Also see \cite{DitragliaGarciaJimeno2019} for a detailed analysis of priors in instrumental variable models. Here we focus on the problem of falsification.

\subsubsection*{An Alternative Perspective of Our Analysis}

Suppose we impose A\ref{assump:homog:relevance:gen}, A\ref{assump:homog:nonsing:gen}, and A\ref{assump:exogeneity:gen}, but we make no assumptions about $\gamma$. Then the model is no longer falsifiable. Instead, there is always a nonempty identified set for $(\beta,\gamma)$. Proposition \ref{prop:IdentSetForGammas} in appendix \ref{sec:GMMobjFun} below shows that the identified set for $\gamma$ is a line in $\R^L$. Figure \ref{fig:GammaIdentSet} shows two examples in the two instrument case. Falsification of the baseline model occurs when this line does not pass through the origin.

For simplicity, focus on the two instrument case, as depicted in figure \ref{fig:GammaIdentSet}. Let $\gamma_1^*$ denote the horizontal intercept of this line. Let $\gamma_2^*$ denote the vertical intercept. Then $\delta_1^* = \gamma_1^*$ and $\delta_2^* = \gamma_2^*$, where $\delta_1^*$ and $\delta_2^*$ are the falsification points defined in section \ref{sec:linearModelFF}. The falsification frontier corresponds to the line segment with endpoints $(0, \gamma_2^*)$ and $(\gamma_1^*, 0)$.

Each point in the identified set for $\gamma$ corresponds to a unique value of $\beta$, namely, the value $b$ that satisfies
\[
	\gamma = \psi - b \Pi.
\]
See proposition \ref{prop:IdentSetForGammas}. Thus the falsification adaptive set is simply the values of $b$ corresponding to $\gamma$'s in the line segment between $(0, \gamma_2^*)$ and $(\gamma_1^*, 0)$. As we emphasize in our fourth recommendation, researchers may want to do sensitivity analysis beyond the falsification frontier. That is, they may want to start by presenting the falsification adaptive set, but also present larger identified sets corresponding to values of $\delta$ that lie beyond the falsification frontier. In this case, the identified set for $\beta$ simply expands to include values of $b$ that correspond to $\gamma$'s beyond the line segment between $(0, \gamma_2^*)$ and $(\gamma_1^*, 0)$.

\subsubsection*{A Bayesian Perspective}

With that as background, we next discuss a Bayesian approach. The baseline model corresponds to a dogmatic prior that places all of its mass on $\gamma = 0$. Instead, suppose we specify a prior $\pi$ on $\gamma$ which is centered at zero, but has full support. Suppose now that we observe the distribution of the data $F_{Y,X,Z}$. And we see that this distribution is incompatible with the baseline model. That is, the identified set for $\gamma$ described in proposition \ref{prop:IdentSetForGammas} does not pass through the origin. A prior which was dogmatic on zero would be refuted. But a full support prior would not. Instead, we would update our prior so that our posterior will place support only on the identified set for $\gamma$. Given that our prior had full support, our posterior will have full support on this line in $\R^L$. Since each point on this line corresponds to a unique value of $\beta$, this posterior on $\gamma$ will imply a posterior on $\beta$. This posterior will have full support on $\R$, but of course will give larger weight to some values than to others. 

Note that there is no need to specify a prior on $\beta$, since we are working with population level data. The only uncertainty that is being captured by the prior is the model uncertainty over the precise value of $\gamma$, rather than sampling uncertainty. That said, the prior on $\gamma$ implies a prior on $\beta$, which depends on the values $\psi$ and $\Pi$. These values themselves come from the data. Thus, before seeing the population data, it may still be useful to consider one's prior on $\beta$ and on the values of $\psi$ and $\Pi$ when thinking about the choice of the prior on $\gamma$. We leave a full analysis of prior selection to future work.

\subsubsection*{A Comparison and Integration of Both Approaches}

Now that we have described both approaches, let's compare them. As noted above, the Bayesian approach yields a posterior for $\beta$ that has full support on $\R$. Assuming the prior on $\gamma$ places most of its mass near zero, the posterior on $\beta$ will put more mass on the values corresponding to $\gamma$'s closer to the origin. Thus means that it will generally put large amount of mass on the falsification adaptive set. In general, researchers taking a Bayesian perspective may find it useful to compute the posterior probability that $\beta$ lies within the falsification adaptive set. If we view the falsification adaptive set as a partial summary of the identified set for $\gamma$, reporting this posterior probability can help researchers understand which aspects of the conclusions come from the data and which come from the prior. This is similar to Moon and Schorfheide's \citeyearpar{MoonSchorfheide2012} suggestion that researchers reported estimated identified sets along with Bayesian posteriors. 

As researchers relax their model beyond the falsification frontier, the identified set for $\beta$ will expand beyond the falsification adaptive set. Bayesian researchers may also find it helpful to compare the size of credible sets to the falsification adaptive set. More generally, they could compute a kind of Bayesian falsification frontier: For $\alpha \in (0,1)$, what is the set of $\delta$'s that yield a posterior for $\beta$ which places probability $1-\alpha$ on the identified set $\mathcal{B}(\delta)$?

In this section we have briefly compared and contrasted our approach with a Bayesian approach, and shown that they can complement each other. There are many open questions, however, like how to choose priors and how to analyze finite sample versions of this analysis. In particular, in finite samples we never know for sure that our model is falsified. Thus it will be interesting to see how Bayesian shrinkage methods, like those in \cite{FesslerKasy2019}, can be applied to analyze falsifiable models. We leave this and related questions to future work.

\section{Affine Transformations of the Instruments in the Linear Model}\label{sec:transformedInstruments}

In this section we discuss the implications of using affine transformations of the original instruments $Z$. Specifically, let $A$ denote an invertible $L \times L$ matrix. Let $b \in \R^L$. Then $\widetilde{Z} = A Z + b$ is an affine transformation of $Z$. Let
\[
	\widetilde{\psi} = \var(\widetilde{Z})^{-1} \cov(\widetilde{Z},Y)
	\qquad \text{and} \qquad
	\widetilde{\pi} = \var(\widetilde{Z})^{-1} \cov(\widetilde{Z},X).
\]
Here we assume $K = 1$ for simplicity.

\begin{proposition}\label{prop:scaleInvariance}
Suppose $A$ is diagonal. Then
\[
	\frac{\widetilde{\psi}_\ell}{\widetilde{\pi}_\ell} = \frac{\psi_\ell}{\pi_\ell}.
\]
Hence the FAS using $Z$ as instruments equals the FAS using $\widetilde{Z}$ as instruments.
\end{proposition}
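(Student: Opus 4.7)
The plan is to prove this by direct computation on the reduced form and first stage coefficients, then invoke theorem \ref{thm:identSetOnFFhomogTrt} to pass from equality of the ratios to equality of the falsification adaptive sets.

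First I would compute how $\widetilde{\psi}$ and $\widetilde{\pi}$ relate to $\psi$ and $\pi$ under the affine transformation $\widetilde{Z} = AZ + b$ for a general invertible $A$. Since affine shifts drop out of covariances, we have $\var(\widetilde{Z}) = A\var(Z)A'$, $\cov(\widetilde{Z},Y) = A\cov(Z,Y)$, and $\cov(\widetilde{Z},X) = A\cov(Z,X)$. Substituting into the definitions gives $\widetilde{\psi} = (A')^{-1}\psi$ and $\widetilde{\pi} = (A')^{-1}\pi$. This intermediate identity is the heart of the argument and it actually does not yet require $A$ to be diagonal.

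Second, I would specialize to the case where $A$ is diagonal. Then $(A')^{-1}$ is also diagonal with entries $1/A_{\ell\ell}$, so componentwise $\widetilde{\psi}_\ell = \psi_\ell / A_{\ell\ell}$ and $\widetilde{\pi}_\ell = \pi_\ell / A_{\ell\ell}$. The common scalar factor $1/A_{\ell\ell}$ cancels in the ratio, giving $\widetilde{\psi}_\ell/\widetilde{\pi}_\ell = \psi_\ell/\pi_\ell$ whenever $\pi_\ell \neq 0$. I would also note that $A_{\ell\ell} \neq 0$ by invertibility, so the set of relevant instruments $\{\ell : \pi_\ell \neq 0\}$ is preserved under the transformation.

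Third, I would apply theorem \ref{thm:identSetOnFFhomogTrt}, which characterizes the FAS as the interval spanned by $\{\psi_\ell/\pi_\ell : \pi_\ell \neq 0\}$. Since this collection of ratios (and its index set) is invariant under the transformation, the FAS computed using $\widetilde{Z}$ coincides with the FAS computed using $Z$. The main (minor) obstacle is simply being careful with the transpose-inverse manipulations in the first step, and with the edge cases where some $\pi_\ell$ equal zero; once those are handled, the rest is an immediate cancellation.
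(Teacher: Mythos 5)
Your proposal is correct and follows essentially the same route as the paper's proof: derive $\widetilde{\psi} = (A')^{-1}\psi$ and $\widetilde{\pi} = (A')^{-1}\pi$ from the covariance algebra, observe that for diagonal $A$ the scalar $1/A_{\ell\ell}$ cancels in the ratio, and invoke theorem \ref{thm:identSetOnFFhomogTrt}. Your added remark that invertibility of $A$ preserves the index set $\{\ell : \pi_\ell \neq 0\}$ is a small but worthwhile point of care that the paper leaves implicit.
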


\begin{proof}[Proof of proposition \ref{prop:scaleInvariance}]
We have
\begin{align*}
	\widetilde{\psi}_\ell
		&= e_\ell' \widetilde{\psi} \\
		&= e_\ell' \var(\widetilde{Z})^{-1} \cov(\widetilde{Z},Y) \\
		&= e_\ell' \var(AZ + b)^{-1} \cov(A Z + b,Y) \\
		&= e_\ell' \var(AZ)^{-1} \cov(A Z,Y) \\
		&= e_\ell' [ A \var(Z) A']^{-1} A \cov(Z,Y) \\
		&= e_\ell' (A')^{-1} \var(Z)^{-1} \cov(Z,Y) \\
		&= e_\ell' (A')^{-1} \psi.
\end{align*}
Similarly,
\[
	\widetilde{\pi}_\ell = e_\ell' (A')^{-1} \pi.
\]
Thus
\begin{align*}
	\frac{\widetilde\psi_\ell}{\widetilde\pi_\ell}
	&= \frac{e_\ell' (A')^{-1}\psi}{e_\ell' (A')^{-1}\pi} \\
	&= \frac{a_\ell e_\ell' \psi}{a_\ell e_\ell' \pi} \\
	&= \frac{e_\ell'\psi}{e_\ell'\pi} \\
	&= \frac{\psi_\ell}{\pi_\ell}.
\end{align*}
The second line follows since $A = \diag(a_1,\ldots,a_L)$, so that $e_\ell' (A')^{-1} = a_\ell e_\ell'$. 
\end{proof}

Proposition \ref{prop:scaleInvariance} shows that the just-identified 2SLS estimand using $\widetilde{Z}_\ell$ as an instrument and $\widetilde{Z}_{-\ell}$ as controls is equivalent to the just-identified 2SLS estimand using $Z_\ell$ as an instrument and $Z_{-\ell}$ as controls. As stated, this implies that the falsification adaptive set is the same too. When $A$ is invertible and diagonal, $\widetilde{Z}_\ell = a_\ell Z + b_\ell$ for some $a_\ell \neq 0$. So the proposition shows that the falsification adaptive set is invariant to rescaling each instrument and changing their location.

When $A$ is not diagonal, however, these just-identified 2SLS estimands $\psi_\ell / \pi_\ell$ are not invariant to affine transformations. As we show next, this is not surprising since using a non-diagonal affine transformation is equivalent to using the original instruments, but with a different way of relaxing the exclusion restriction.

To see this, consider the case where $A$ is a rotation matrix and $b = 0$. From the outcome equation we have
\begin{align*}
	Y
		&= \beta' X + \gamma' Z + U \\
		&= \beta' X + \gamma' A^{-1} A Z + U \\
		&= \beta' X + (A^{-1} \gamma)' \widetilde{Z} + U \\
		&= \beta' X + \widetilde{\gamma}' \widetilde{Z} + U.
\end{align*}
In the last line we defined $\widetilde{\gamma} = A^{-1} \gamma$. $A^{-1}$ is also a rotation matrix, and hence $\widetilde{\gamma}$ is a rotation of $\gamma$. To illustrate this, consider the two instrument case and suppose $A$ represents a 45 degree counter-clockwise rotation:
\[
	A =
	\begin{pmatrix}
		\cos (\pi/4) & -\sin (\pi/4) \\
		\sin (\pi/4) & \cos (\pi/4)
	\end{pmatrix}.
\]
Then
\[
	A^{-1} = A' = 
	\begin{pmatrix}
		\cos (-\pi/4) & -\sin (-\pi/4) \\
		\sin (-\pi/4) & \cos (-\pi/4)
	\end{pmatrix}
\]
represents a 45 degree clockwise rotation.
\begin{figure}[!t]
\centering
\includegraphics[width=0.18\linewidth]{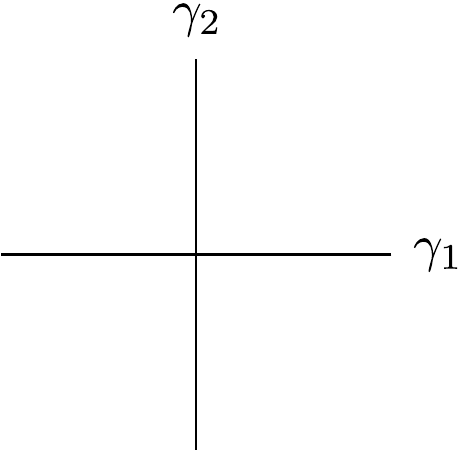}
\hspace{0.01\linewidth}
\includegraphics[width=0.18\linewidth]{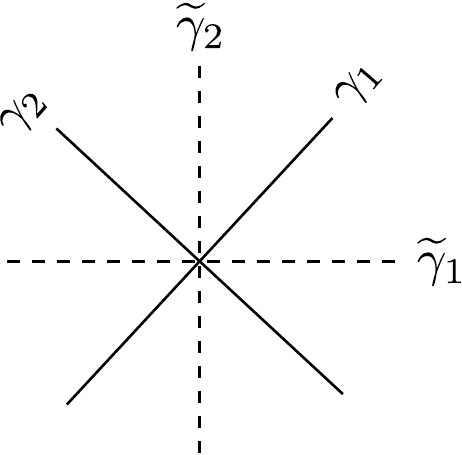}
\hspace{0.01\linewidth}
\includegraphics[width=0.18\linewidth]{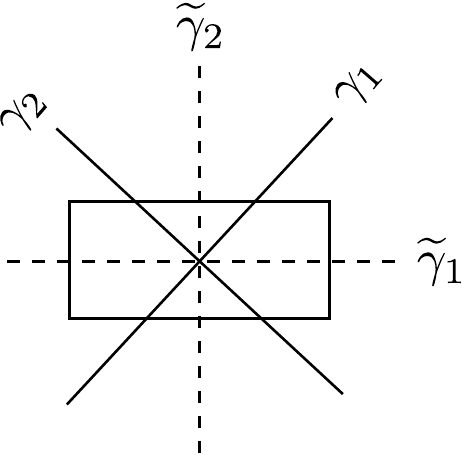}
\hspace{0.01\linewidth}
\includegraphics[width=0.18\linewidth]{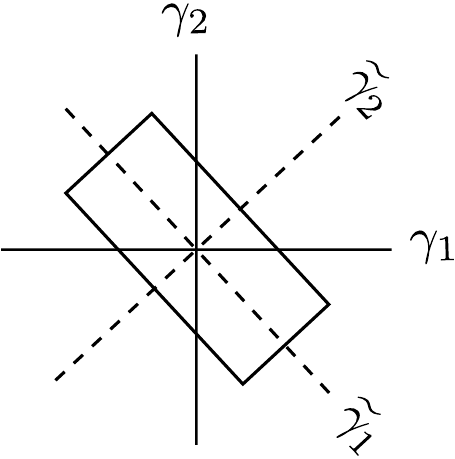}
\hspace{0.01\linewidth}
\includegraphics[width=0.18\linewidth]{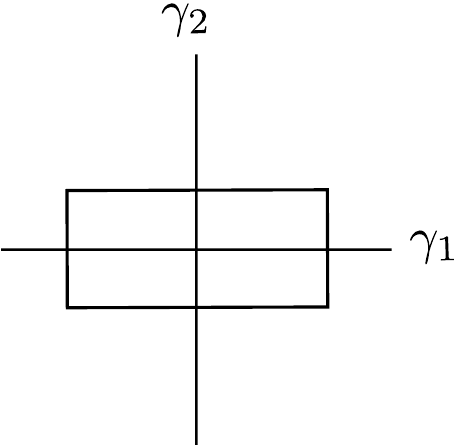}
\caption{Rotating the instruments leads to a different class of exclusion restriction relaxations. See text for explanation.}
\label{fig:rotations}
\end{figure}
The first plot in figure \ref{fig:rotations} shows the original axes for measuring $(\gamma_1, \gamma_2)$. The second plot shows the rotated axes. Suppose we now apply our results directly using $\widetilde{Z}$ as the instruments. That is, we define relaxations of exclusion $\widetilde{\gamma} = 0$ by
\[
	| \widetilde{\gamma}_\ell | \leq \delta_\ell.
\]
Then we measure relaxations of $\widetilde{\gamma} = 0$ by placing rectangles around the origin, in the rotated coordinate system. The third plot in figure \ref{fig:rotations} shows an example of this constraint set, for a given value of $\delta$. The fourth plot in figure \ref{fig:rotations} shows this same constraint set, but now in our original coordinate system. Thus we see that relaxations of the exclusion restriction $\gamma = 0$ are now measured in terms of rotated rectangles around the origin. In contrast, suppose we instead used our original instruments $Z$ and measured relaxations of exclusion by $| \gamma_\ell | \leq \delta_\ell$. The fifth and final plot in figure \ref{fig:rotations} shows this constraint set, for the same $\delta$. 

Comparing the last two plots in figure \ref{fig:rotations} thus shows why rotating the instruments will generally yield different a falsification adaptive sets: It is equivalent to using a different class of relaxations of the exclusion restriction with the original instruments. Different classes of relaxations of one's assumptions will typically lead to different falsification adaptive sets. This is exactly the same situation as one's baseline analysis, where different baseline assumptions typically lead to different baseline estimands.

\section{Including Covariates in the Linear Model and Subgroup Analysis}\label{sec:covariatesInLinearModel}

In this section we discuss how to include additional covariates in the analysis of section \ref{sec:homogModel}. We consider two approaches: In the first approach we include covariates linearly. In the second approach we condition nonparametrically on the covariates. Finally, we discuss how to use our results in combination with the subgroup analysis approach discussed in section \ref{sec:AltResponses}.

\subsubsection*{Linear Controls}

Generalize equation \eqref{eq:constantCoeffOutcomeEqGen} as follows:
\[
	Y = X'\beta + Z'\gamma + W'\alpha + U
	\tag{\ref{eq:constantCoeffOutcomeEqGen}$^\prime$}
\]
where $W\in \R^{\text{dim}(W)}$ is a $\text{dim}(W)$-vector of additional covariates and $\alpha$ is a $\text{dim}(W)$-vector of unknown constant coefficients. In addition to assumptions A\ref{assump:homog:relevance:gen}--A\ref{assump:exogeneity:gen}, we make two further assumptions:
\begin{enumerate}
\item The controls are exogenous: $\cov(W,U) = 0$.

\item The vector $\widetilde{Z} = (Z,W)$ has a nonsingular variance matrix.
\end{enumerate}
In the framework considered in section \ref{sec:homogModel}, we can think of the vector $W$ as a vector of instruments whose exclusion violation is not restricted at all. That is, we set $\delta_\ell = \infty$ for the elements of $\widetilde{Z}$ corresponding to the $W$ covariates. Hence we do not place any constraints on the magnitude of $\alpha$. It is nonetheless helpful to keep $Z$ and $W$ separate in the notation.

Under equation (\ref{eq:constantCoeffOutcomeEqGen}$^\prime$) and the exogenous controls assumption, we can obtain the identified set for $\beta$ by simply applying the analysis of section \ref{sec:homogModel}, but with an initial step where the exogenous covariates are partialled out from instruments $Z$ and endogenous regressors $X$. To see this, write
\begin{align*}
	\cov(Z,Y) &= \cov(Z,X)\beta + \var(Z)\gamma + \cov(Z,W)\alpha \\
	\cov(W,Y) &= \cov(W,X)\beta + \cov(W,Z)\gamma + \var(W)\alpha.
\end{align*}
Solve the second equation for $\alpha$ to get
\[
	\alpha = \var(W)^{-1} \cov(W,Y) - \var(W)^{-1} \cov(W,X) \beta - \var(W)^{-1} \cov(W,Z) \gamma.
\]
Substitute this into the first equation to get
\begin{align*}
	\cov(Z - \cov(Z,W)\var(W)^{-1}W,Y) 
	&= \cov(Z-\cov(Z,W)\var(W)^{-1}W,X)\beta \\
	&\quad + \var(Z - \cov(Z,W)\var(W)^{-1}W)\gamma.
\end{align*}
By the assumption that $\widetilde{Z}$ has a nonsingular variance matrix, solve for $\gamma$ to get
\begin{align*}
	\gamma = 
	\var(Z - \cov(Z,W)\var(W)^{-1}W)^{-1}
	\big( &\cov(Z - \cov(Z,W)\var(W)^{-1}W,Y) \\
	&\quad - \cov(Z-\cov(Z,W)\var(W)^{-1}W,X) \beta \big).
\end{align*}
Thus we can write the identified set for $\beta$ as 
\[
	\mathcal{B}(\delta) = \{b\in\R^K : -\delta \leq \widetilde{\Psi} - \widetilde{\Pi}\beta \leq \delta\}
\]
where
\[
	\widetilde{\Psi} = \var(Z - \cov(Z,W)\var(W)^{-1}W)^{-1}\cov(Z - \cov(Z,W)\var(W)^{-1}W,Y)
\]
is the vector of population OLS coefficients on $Z$ in a regression of $Y$ on $(1,Z,W)$ and
\[
	\widetilde{\Pi} = \var(Z - \cov(Z,W)\var(W)^{-1}W)^{-1}\cov(Z - \cov(Z,W)\var(W)^{-1}W,X)
\]
is the vector of population OLS coefficients on $Z$ in a regression of $X$ on $(1,Z,W)$.

From this characterization of the identified set for $\beta$, we can adapt the proof of theorem \ref{thm:identSetOnFFhomogTrt} to see that the falsification adaptive set is
\[
	\bigcup_{\delta \in \text{FF}} \mathcal{B}(\delta) 
	= \left[ \min_{\ell =1,\ldots,L} \; \frac{\widetilde{\psi}_\ell}{\widetilde{\pi}_{\ell}}, \
	\max_{\ell =1,\ldots,L} \; \frac{\widetilde{\psi}_\ell}{\widetilde{\pi}_{\ell}}\right]
\]
where $\widetilde{ \psi }_\ell$ is the $\ell$th component of $\widetilde{\Psi}$ and $\widetilde{\pi}_\ell$ is the $\ell$th component of $\widetilde{\Pi}$. Here we assume $\widetilde{\pi}_\ell \neq 0$ for all $\ell \in \{1,\ldots,L\}$. Moreover, as in lemma \ref{lemma:interpretingPsiOverPi}, $\widetilde{\psi}_\ell / \widetilde{\pi}_\ell$ equals the population 2SLS coefficient on $X$ using $Z_\ell$ as the excluded instrument and $(Z_{-\ell},W)$ as controls. Thus, under the above assumptions, the presence of these additional controls poses no additional technical challenges for estimation and inference since the falsification adaptive set can still be computed via standard linear 2SLS regressions.

\subsubsection*{Nonparametric Controls}

Above we included controls linearly and assumed they were exogenous. An alternative approach is to condition on them nonparametrically. This allows the value of the covariates to affect the coefficients on $X$ and $Z$, and also allows the controls to be endogenous. Thus we now generalize equation \eqref{eq:constantCoeffOutcomeEqGen} as follows:
\[
	Y = X'\beta(W) + Z'\gamma(W) + U
	\tag{\ref{eq:constantCoeffOutcomeEqGen}$^{\prime \prime}$}
\]
where $W$ is a $\dim(W)$-vector of controls, and $\beta(\cdot)$ and $\gamma(\cdot)$ are unknown functions. We replace A\ref{assump:homog:relevance:gen}--A\ref{assump:exclusion:gen} with their conditional-on-$W$ versions:
\begin{enumerate}
\item[A\ref{assump:homog:relevance:gen}$^*$.] $\cov(Z,X \mid W=w)$ has rank $K$ for almost all $w \in \supp(W)$.

\item[A\ref{assump:homog:nonsing:gen}$^*$.] $\var(Z \mid W=w)$ has full rank for almost all $w \in \supp(W)$.

\item[A\ref{assump:exogeneity:gen}$^*$.] For all $\ell \in \{1,\ldots,L\}$, $\cov(Z_\ell,U \mid W=w) = 0$ for almost all $w \in \supp(W)$.

\item[A\ref{assump:exclusion:gen}$^*$.] For all $\ell \in \{1,\ldots,L\}$, $\gamma_\ell(w) = 0$ for almost all $w \in \supp(W)$.
\end{enumerate}
We relax A\ref{assump:exclusion:gen}$^*$ by
\begin{enumerate}
\item[A\ref{assump:exclusion:gen}$^{**}$.] There are known functions $\delta_\ell(w) \geq 0$ such that $| \gamma_\ell(w) | \leq \delta_\ell(w)$ for almost all $w \in \supp(W)$, for all $\ell \in \{1,\ldots,L \}$. Let $\delta(w) = (\delta_1(w),\ldots,\delta_L(w))'$.
\end{enumerate}
Write
\begin{align*}
	\gamma(w) &= \var(Z \mid W=w)^{-1} \big( \cov(Z,Y \mid W=w) - \cov(Z,X \mid W=w)\beta(w) \big)
\end{align*}
for any $w \in \supp(W)$.

By derivations similar to those in section \ref{sec:homogModel}, the identified set for $\beta(w)$ is
\begin{multline*}
	\mathcal{B}(\delta(w) \mid w) \\
	= 
	\big\{ b \in \R^K: -\delta(w) \leq \var(Z \mid W=w)^{-1} \big( \cov(Z,Y \mid W=w) - \cov(Z,X \mid W=w) b \big) \leq \delta(w) \big\}
\end{multline*}
for any $w \in \supp(W)$. For a given $\delta(\cdot)$, the model is refuted if the set $\{ w \in \supp(W) : \mathcal{B}(\delta(w) \mid w) = \emptyset \}$ has positive probability.

\subsubsection*{Subgroup Analysis}

Suppose that the baseline model discussed in section \ref{sec:homogModel} is refuted. Or suppose the model with covariates based on equation (\ref{eq:constantCoeffOutcomeEqGen}$^\prime$) is refuted. As discussed in section \ref{sec:AltResponses}, one response is to switch to the nonparametric controls model based on equation (\ref{eq:constantCoeffOutcomeEqGen}$^{\prime \prime}$), and then find the set of conditioning covariates for which the model is not refuted. Specifically, define
\[
	\mathcal{W}(0) = \{w \in \supp(W) : \mathcal{B}(0 \mid w) \neq \emptyset \}.
\]
This set is point identified. If we impose A\ref{assump:homog:relevance:gen}$^*$--A\ref{assump:exclusion:gen}$^*$ only for $w \in \mathcal{W}(0)$, then the model is not refuted. Thus one approach to salvaging the model is to focus on the subgroups $\mathcal{W}(0)$, and to report $\beta(w)$ for $w \in \mathcal{W}(0)$.

Alternatively, one could make specific assumptions to extrapolate from $\beta(w)$ for $w \in \mathcal{W}(0)$ to $\beta(w)$ for $w$ in the complement $\mathcal{W}(0)^c$. For example, suppose $\beta(w)$ is observed to be constant across $w \in \mathcal{W}(0)$. Then one could assume that it is also constant across $w \in \mathcal{W}(0)^c$, and that this constant value is the same as the observed constant value across $w \in \mathcal{W}(0)$.

Suppose, however, that we want to allow for $\beta(w)$ to vary over $w$, and that we are interested in the entire population, not just the subgroups in $\mathcal{W}(0)$. In this case, a natural parameter is the mean coefficient, $\Exp[ \beta(W) ]$. We can construct the falsification adaptive set for this parameter as follows. For simplicity, suppose $K=1$. We have
\begin{align*}
	\Exp[ \beta(W) ]
		&= \int_{w \in \supp(W)} \beta(w) \; dF_W(w) \\
		&= \int_{w \in \mathcal{W}(0)} \beta(w) \; dF_W(w)
		+ \int_{w \notin \mathcal{W}(0)} \beta(w) \; dF_W(w).
\end{align*}
The first term is point identified. For the second term, we can construct the falsification adaptive set for $\beta(w)$ for each $w \in \mathcal{W}(0)$ by applying the results of section \ref{sec:homogModel}. Denote these bounds by $[L(w),U(w)]$. Then the falsification adaptive set for $\Exp[\beta(W)]$ is $[L,U]$ where
\[
	L = \int_{w \in \mathcal{W}(0)} \beta(w) \; dF_W(w) + \int_{w \notin \mathcal{W}(0)} L(w) \; dF_W(w)
\]
and
\[
	U = \int_{w \in \mathcal{W}(0)} \beta(w) \; dF_W(w) + \int_{w \notin \mathcal{W}(0)} U(w) \; dF_W(w).
\]
Keep in mind that this falsification adaptive set relaxes exclusion by using values $\delta_\ell(w)$ that vary across $w$. In particular, $\delta_\ell(w) = 0$ for all $w \in \mathcal{W}(0)$. If we require $\delta_\ell(w) = \delta_\ell$ to be constant across $w$, then the falsification adaptive set will be weakly larger.

\section{Technical Equivalence of Exclusion and Exogeneity in Linear Models}\label{sec:ExoExclu}

In section \ref{sec:homogModel} we explicitly modeled relaxations of the instrument exclusion restriction, holding instrument exogeneity fixed. In this section we show that the same formal results can be used to relax exogeneity alone or to relax exogeneity and exclusion simultaneously. This analysis is essentially a variation of \cite{Small2007}.

Let $A$ and $U$ denote unobserved random variables. Suppose the true model for potential outcomes is
\[
	Y(x,z,a) = x' \beta + z' \lambda + \phi a + U.
\]
Thus observed outcomes are
\begin{align*}
	Y &= Y(X,Z,A) \\
		&= X' \beta + Z' \lambda + (\phi A + U).
\end{align*}
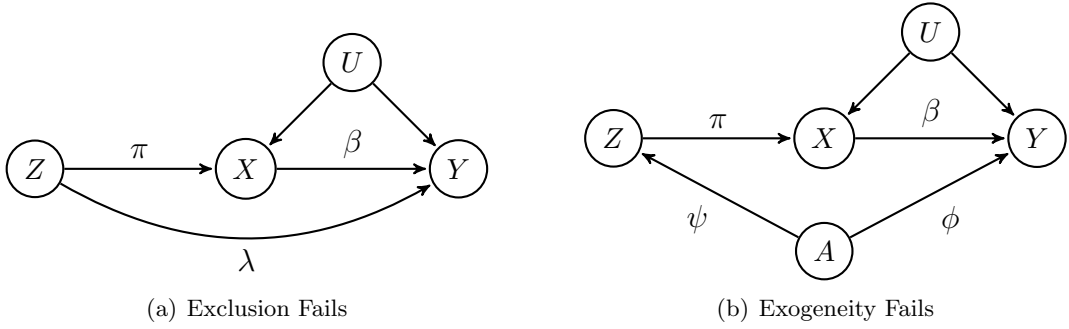
\begin{figure}[t]
\label{fig:IVgraphs}
\centering
\subfigure[Exclusion Fails]{
\begin{tikzpicture}[->,>=stealth',shorten >=1pt,auto,node distance=2cm,
  thick,main node/.style={circle,fill=white,draw,font=\sffamily\bfseries}]

\node[main node] (1) {$U$};
\node[main node] (2) [below left of = 1] {$X$};
\node[main node] (3) [below right of = 1]{$Y$};
\node[main node] (4) [left=2cm of 2]{$Z$};

\path[every node/.style={font=\sffamily\large}]
(1) edge node [above right] {} (3)
(1) edge node [above right] {} (2)
(2) edge node [above] {$\beta$} (3)
(4) edge node [above] {$\pi$} (2)
(4) edge [bend right] node [below] {$\lambda$} (3);
\end{tikzpicture}
}
\hspace{.3in}
\subfigure[Exogeneity Fails]{
\begin{tikzpicture}[->,>=stealth',shorten >=1pt,auto,node distance=2cm,
  thick,main node/.style={circle,fill=white,draw,font=\sffamily\bfseries}]

\node[main node] (1) {$U$};
\node[main node] (2) [below left of = 1] {$X$};
\node[main node] (3) [below right of = 1]{$Y$};
\node[main node] (4) [left=2cm of 2]{$Z$};
\node[main node] (5) [below=0.7cm of 2]{$A$};

\path[every node/.style={font=\sffamily\large}]
(1) edge node [above right] {} (3)
(1) edge node [above right] {} (2)
(2) edge node [above] {$\beta$} (3)
(4) edge node [above] {$\pi$} (2)
(5) edge node [below left] {$\psi$} (4)
(5) edge node [below right] {$\phi$} (3);
\end{tikzpicture}
}
\caption{Two ways for the standard instrumental variable assumptions to fail.}
\end{figure}
This equation allows for violations of exclusion if $\lambda$ is nonzero. It also allows for violations of exogeneity if $A$ and $Z$ are correlated. Figure \ref{fig:IVgraphs} shows these two kinds of failures. Let $\psi$ be the coefficients on $Z$ in a linear projection of $A$ onto $Z$ and let $\widetilde{U}$ be the projection residual. Thus
\[
	A = Z' \psi + \widetilde{U}
\]
where $\cov(Z_\ell, \widetilde{U}) = 0$ by construction. Then
\begin{align*}
	Y
		&= X' \beta + Z' \lambda + \phi A + U \\
		&= X' \beta + Z' \lambda + \phi(Z' \psi + \widetilde{V})  + U \\
		&= X' \beta + Z' (\lambda + \phi \psi) + (\widetilde{U} + U) \\
		&\equiv X' \beta + Z' \gamma + V.
\end{align*}
In the last line we defined
\[
	\gamma = \lambda + \phi \psi
\]
and $V = \widetilde{U} + U$. For all $\ell \in \{1,\ldots,L\}$, we have $\cov(Z_\ell,V) = 0$ since $\cov(Z_\ell,U) = 0$ by A\ref{assump:exogeneity:gen} and $\cov(Z_\ell, \widetilde{U}) = 0$ by construction.

Thus the formal results in section \ref{sec:homogModel} apply to relaxations of exogeneity as well. The key difference is that the \emph{interpretation} of the coefficient on $Z$ changes depending on whether we are interested in exclusion or exogeneity failure. For example, suppose $\lambda = 0$ so that we are only worried about exogeneity failures. Then the interpretation of $\gamma$ depends on the size of the causal effect of $\phi$ and the relationship between $Z$ and $A$. This is the case that \cite{Small2007} focuses on. To aid interpretation of these parameters, he adapts the ideas of \cite{Imbens2003} to phrase constraints on $\phi$ and $\| \psi \|$ in terms of $R^2$ measures.

\section{Comparing the 2SLS Estimand with the Falsification Adapative Set}\label{sec:2SLSvsFAS}\label{sec:comparing2SLSandFAS}

The following proposition gives the relationship between the 2SLS estimand
\[
	\beta_{\textsc{2SLS}} = \frac{\cov(X,Z)\var(Z)^{-1}\cov(Z,Y)}{\cov(X,Z)\var(Z)^{-1}\cov(Z,X)}
\]
and the falsification adaptive set when there is a single endogenous variable. Here we assume all instruments are relevant.

\begin{proposition}\label{prop:2SLSandFAS}
Suppose $K=1$. Suppose $\pi_\ell \neq 0$ for all $\ell =1,\ldots,L$. Then
\[
	\beta_{\textsc{2SLS}} = \sum_{\ell = 1}^L W_\ell  \frac{\psi_\ell}{\pi_\ell}
\]
where
\[
	W_\ell = \frac{\cov(\sum_{j=1}^L \pi_j Z_j,\pi_\ell Z_\ell) }{\var(\sum_{j=1}^L \pi_j Z_j)}.
\]
\end{proposition}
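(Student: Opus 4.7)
The plan is to verify the identity by direct computation, writing the 2SLS numerator and denominator in terms of the reduced-form vector $\psi$ and first-stage vector $\pi$, and then rewriting $W_\ell$ using the same quantities.

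First I would simplify the two matrix products appearing in $\beta_{\textsc{2SLS}}$. Since $\psi = \var(Z)^{-1}\cov(Z,Y)$ and $\pi = \var(Z)^{-1}\cov(Z,X)$, the numerator of the 2SLS estimand is $\cov(X,Z)\psi = \sum_{\ell=1}^L \cov(X,Z_\ell)\,\psi_\ell$ and the denominator is $\cov(X,Z)\pi = \sum_{\ell=1}^L \cov(X,Z_\ell)\,\pi_\ell$. So it suffices to show that
\[
\beta_{\textsc{2SLS}} = \frac{\sum_\ell \cov(X,Z_\ell)\,\psi_\ell}{\sum_\ell \cov(X,Z_\ell)\,\pi_\ell}
= \sum_{\ell=1}^L W_\ell \cdot \frac{\psi_\ell}{\pi_\ell}.
\]

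Second, I would simplify the weights $W_\ell$. The denominator $\var(\sum_j \pi_j Z_j) = \pi'\var(Z)\pi$ equals $\pi'\cov(Z,X) = \cov(X,Z)\pi$, using $\var(Z)\pi = \cov(Z,X)$. For the numerator, pulling the constant $\pi_\ell$ out of the covariance and using bilinearity gives
\[
\cov\Bigl(\sum_{j=1}^L \pi_j Z_j,\; \pi_\ell Z_\ell\Bigr) = \pi_\ell \sum_{j=1}^L \pi_j \cov(Z_j,Z_\ell) = \pi_\ell\,[\var(Z)\pi]_\ell = \pi_\ell\,\cov(Z_\ell,X).
\]
Thus $W_\ell = \pi_\ell \cov(Z_\ell,X) / \cov(X,Z)\pi$.

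Third, I would substitute this expression for $W_\ell$ into the weighted sum:
\[
\sum_{\ell=1}^L W_\ell \cdot \frac{\psi_\ell}{\pi_\ell}
= \frac{1}{\cov(X,Z)\pi}\sum_{\ell=1}^L \cov(Z_\ell,X)\,\psi_\ell
= \frac{\cov(X,Z)\psi}{\cov(X,Z)\pi}
= \beta_{\textsc{2SLS}},
\]
which is the desired identity. The assumptions $\pi_\ell \neq 0$ ensure each ratio $\psi_\ell/\pi_\ell$ is well defined, and A\ref{assump:homog:relevance:gen}--A\ref{assump:homog:nonsing:gen} ensure the denominator $\cov(X,Z)\pi = \pi'\var(Z)\pi$ is strictly positive since $\pi \neq 0$. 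There is no real obstacle here---the proposition is essentially a bookkeeping identity relating the 2SLS combination of instruments to the individual just-identified estimands via the projection weights from the first-stage equation.
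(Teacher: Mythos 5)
Your proof is correct and is essentially the same direct computation as the paper's: both reduce to the identity $\var(Z)\pi = \cov(Z,X)$, which the paper expresses by writing the first stage as $X = \sum_j \pi_j Z_j + V$ with $\cov(Z_\ell,V)=0$ and you express via matrix algebra. The only cosmetic difference is that you simplify the weights $W_\ell$ first and then sum, whereas the paper manipulates the 2SLS numerator and denominator separately; the underlying argument is identical.
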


Recall that $\psi_\ell / \pi_\ell$ is the 2SLS estimand using $Z_\ell$ as an instrument and the remaining instruments $Z_{-\ell}$ as controls. The falsification adaptive set is the interval between the smallest and largest such estimands. Proposition \ref{prop:2SLSandFAS} shows that the 2SLS estimand is a weighted average of the estimands $\psi_\ell / \pi_\ell$. The weights sum to one, $\sum_{\ell = 1}^L W_\ell = 1$, but the weights are not necessarily positive. For example, this may happen if covariances between instruments are negative, or if some $\pi_\ell < 0$. 

Thus an important implication of this result is that the 2SLS estimand does not have to be inside the falsification adaptive set. Indeed, it can be arbitrarily far from it. Put differently, in a falsified linear IV model, the 2SLS estimand can be arbitrarily far from the set of parameters which correspond to minimally non-falsified models.

\begin{proof}[Proof of proposition \ref{prop:2SLSandFAS}]
Recall that
\[
	\underset{(L \times 1)}{\psi} \equiv \var(Z)^{-1}\cov(Z,Y)
	\qquad \text{and} \qquad
	\underset{(L \times 1)}{\pi} \equiv \var(Z)^{-1}\cov(Z,X).
\]
Thus
\[
	\beta_\textsc{2SLS} = \frac{\cov(X,Z) \psi}{\cov(X,Z) \pi}.
\]
Since $\pi$ is the vector of coefficients from a linear projection of $X$ onto the instruments, we can write
\[
	X = \sum_{\ell = 1}^L \pi_\ell Z_\ell + V
\]
where $\cov(Z_\ell,V) = 0$ for all $\ell$. Consider the numerator of $\beta_{\textsc{2SLS}}$:
\begin{align*}
	\cov(X,Z) \psi
		&= \sum_{\ell =1}^L \cov(X,Z_\ell) \psi_\ell \\
		&= \sum_{\ell =1}^L \cov \left( \sum_{j=1}^L \pi_j Z_j + V,Z_\ell \right) \psi_\ell \\
		&= \sum_{\ell =1}^L \cov \left( \sum_{j=1}^L \pi_j Z_j,Z_\ell \right) \psi_\ell \\
		&= \sum_{\ell =1}^L \frac{\psi_\ell}{\pi_\ell} \cov \left( \sum_{j=1}^L \pi_j Z_j, \pi_\ell Z_\ell \right).
\end{align*}
Next consider the denominator of $\beta_{\textsc{2SLS}}$:
\begin{align*}
	\cov(X,Z) \pi
		&= \sum_{\ell =1}^L \cov(X,Z_\ell) \pi_\ell \\
		&= \sum_{\ell =1}^L \cov \left(\sum_{s = 1}^L \pi_s Z_s + V,Z_\ell \right) \pi_\ell \\
		&= \sum_{\ell =1}^L \sum_{s = 1}^L \cov ( \pi_s Z_s,\pi_\ell Z_\ell) \\
		&= \var \left( \sum_{j=1}^L \pi_j Z_j \right).
\end{align*}
Putting these two results together yields the result.
\end{proof}

\section{Further Remarks on Interpreting Non-Falsified Models}\label{sec:GMMobjFun}

In this section we elaborate on the discussion in section \ref{subsec:FFinterp}. We only consider the linear instrumental variable model of section \ref{sec:homogModel} with one endogenous variable and two instruments. \cite{Newey1985} analyzes the problem in a much broader class of moment equality models.

With just two instruments, the outcome equation is
\begin{equation}\label{eq:outcomeEqForNonFalseInterpSec}
	Y = \alpha + \beta X + \gamma_1 Z_1 + \gamma_2 Z_2 + U.
\end{equation}
Here we include a constant intercept $\alpha$, unlike in section \ref{sec:homogModel}. So we also impose the normalization $\Exp(U) = 0$. This normalization combined with the exogeneity assumption A\ref{assump:exogeneity:gen} gives us the following three moment conditions:
\begin{align*}
	\Exp[U] &= 0 \\
	\Exp[Z_1 U] &= 0 \\
	\Exp[Z_2 U] &= 0.
\end{align*}
We can use these moments to define a GMM objective function
\[
	Q(b) = m(b)' W m(b)
\]
where $W$ is a symmetric positive definite weighting matrix, $m(b) = (m_1(b),m_2(b))'$, and 
\[
	m_\ell(b) = \cov(Z_\ell,Y) - b \cov(Z_\ell,X).
\]
for $\ell \in \{1,2\}$. This leads to the following standard result.

\begin{proposition}\label{prop:GMMobjFun}
Consider the outcome equation \eqref{eq:outcomeEqForNonFalseInterpSec}. Suppose $\Exp(U) = 0$. Suppose A\ref{assump:homog:relevance:gen}, A\ref{assump:exogeneity:gen}, and A\ref{assump:exclusion:gen} hold; that is, suppose the baseline model is true. Then $Q$ is uniquely minimized at $\beta$. Moreover, $Q(\beta) = 0$.
\end{proposition}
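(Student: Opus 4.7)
The plan is to directly compute $m(b)$ using the outcome equation under the baseline assumptions, and then exploit positive definiteness of $W$ together with relevance to pin down the minimizer.

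First I would substitute the exclusion restriction $\gamma_1 = \gamma_2 = 0$ (A\ref{assump:exclusion:gen}) into equation \eqref{eq:outcomeEqForNonFalseInterpSec} to get $Y = \alpha + \beta X + U$. Taking covariances with $Z_\ell$ and applying A\ref{assump:exogeneity:gen} yields $\cov(Z_\ell, Y) = \beta \cov(Z_\ell, X)$ for $\ell \in \{1,2\}$, so $m_\ell(\beta) = 0$ for both $\ell$. Hence $m(\beta) = 0$ and $Q(\beta) = 0$. Since $W$ is positive semidefinite (in fact positive definite), $Q(b) \geq 0$ for all $b$, so $\beta$ is a minimizer.

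For uniqueness, I would observe that $m(\cdot)$ is affine in $b$: writing $c = \cov(Z,X) \in \R^2$, we have $m(b) - m(\beta) = -(b-\beta)\, c$, so $m(b) = -(b-\beta)\, c$. Substituting gives
\[
    Q(b) = (b-\beta)^2\, c' W c.
\]
By A\ref{assump:homog:relevance:gen}, $c \neq 0$, and by positive definiteness of $W$, $c' W c > 0$. Therefore $Q(b) > 0$ whenever $b \neq \beta$, which establishes the uniqueness of $\beta$ as the minimizer.

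There is no real obstacle here; the result is essentially a direct calculation once exclusion and exogeneity are imposed. The only subtle point is ensuring uniqueness, which requires both the positive definiteness of $W$ and the relevance condition ruling out $\cov(Z,X) = 0$. I expect the proof to be just a few lines.
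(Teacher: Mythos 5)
Your proof is correct and follows essentially the same route as the paper: impose exclusion, use exogeneity to get $\cov(Z_\ell,Y) - \beta\cov(Z_\ell,X) = 0$, and deduce uniqueness from relevance plus linearity of the moments. The only cosmetic differences are that you take covariances directly (which handles the intercept automatically, where the paper explicitly profiles out $\alpha$ using $\Exp(U)=0$) and that you write out the quadratic form $Q(b) = (b-\beta)^2\, c'Wc$ explicitly, which the paper leaves implicit.
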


This result immediately provides a way of checking whether the model is falsified. Let
\[
	Q^* = \min_{b \in \R} \; Q(b).
\]

\begin{corollary}\label{corr:checkGMMobjFunZeroForFalse}
Suppose $Q^* > 0$. Then the model is false.
\end{corollary}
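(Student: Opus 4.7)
The plan is simply to take the contrapositive of Proposition \ref{prop:GMMobjFun}. That proposition states that if the baseline model (A\ref{assump:homog:relevance:gen}, A\ref{assump:exogeneity:gen}, A\ref{assump:exclusion:gen}, together with the normalization $\Exp(U) = 0$) is true, then $Q$ is uniquely minimized at $b = \beta$ and $Q(\beta) = 0$.

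First I would observe that because $W$ is symmetric positive definite, $Q(b) = m(b)' W m(b) \geq 0$ for every $b \in \R$, so $Q^* = \min_{b \in \R} Q(b) \geq 0$. Second, if the baseline model were true, Proposition \ref{prop:GMMobjFun} would give $Q(\beta) = 0$, hence $Q^* \leq Q(\beta) = 0$, and combined with the previous inequality, $Q^* = 0$. Therefore $Q^* > 0$ directly contradicts truth of the baseline model, so at least one of A\ref{assump:homog:relevance:gen}, A\ref{assump:exogeneity:gen}, or A\ref{assump:exclusion:gen} must fail, i.e., the model is refuted.

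There is no real obstacle here: the corollary is essentially just the contrapositive statement of Proposition \ref{prop:GMMobjFun}, using only nonnegativity of the quadratic form induced by the positive definite weighting matrix $W$. The only thing worth noting explicitly in the write-up is that $Q^*$ is well-defined (and equals $0$ whenever it is attained at $\beta$) because $Q$ is a nonnegative continuous function, so the infimum over $b$ is attained or at worst bounded below by zero; this suffices for the contrapositive argument.
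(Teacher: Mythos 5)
Your proof is correct and takes essentially the same route as the paper: the paper's own proof states that the corollary "is simply the contrapositive of proposition \ref{prop:GMMobjFun}," which is exactly your argument (the paper merely adds, for concreteness, an explicit restatement of when falsification occurs in terms of the covariance ratios). Your additional observation that $Q \geq 0$ by positive definiteness of $W$, so that $Q^* = 0$ under the baseline model, is a harmless and correct elaboration of the same step.
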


As we discussed in section \ref{subsec:FFinterp}, however, the inverse statement is \emph{not} true. That is, observing $Q^* = 0$ does not imply the model is true. It merely implies that there exist values of the unobservables that are consistent with the assumptions and the data. But, as is well known, there \emph{also} exist values of the unobservables that are consistent with the data but which are \emph{not} consistent with the assumptions. To see this, we will study the value of $Q^*$ as the true coefficients $(\gamma_1,\gamma_2)$ on the instruments vary. For simplicity, suppose we use the identity weight matrix. Then 
\begin{equation}\label{eq:equalWeightGMMobjFn}
	Q(b) = \big( \cov(Z_1,Y) - b \cov(Z_1,X) \big)^2 + \big( \cov(Z_2,Y) - b \cov(Z_2,X) \big)^2.
\end{equation}

\begin{proposition}\label{prop:valueFunctionExpression}
Suppose the true model is given by \eqref{eq:outcomeEqForNonFalseInterpSec} with coefficients $(\beta,\gamma_1,\gamma_2)$. That is, the exclusion restriction does not necessarily hold. Suppose A\ref{assump:homog:relevance:gen} and A\ref{assump:exogeneity:gen} hold. Then the value of $Q$ evaluated at its optimizer is
\[
	Q^*(\gamma_1,\gamma_2)
	= f(\gamma_1,\gamma_2)^2 \left[ (1-\omega)^2 + \omega^2 \left( \frac{\cov(Z_2,X)}{\cov(Z_1,X)} \right)^2 \right]
\]
where
\[
	f(\gamma_1,\gamma_2) = \gamma_1 \left[ \var(Z_1) - \cov(Z_1,Z_2) \frac{\cov(Z_1,X)}{\cov(Z_2,X)} \right] + \gamma_2 \left[ \cov(Z_1,Z_2) - \var(Z_2) \frac{\cov(Z_1,X)}{\cov(Z_2,X)} \right]
\]
and
\[
	\omega = \frac{\cov(X,Z_1)^2}{\cov(X,Z_1)^2 + \cov(X,Z_2)^2}.
\]
\end{proposition}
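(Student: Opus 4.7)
The plan is to reduce the minimization to an explicit quadratic in $b$ and then verify the claimed algebraic identity. First, using $Y = \alpha + \beta X + \gamma_1 Z_1 + \gamma_2 Z_2 + U$ together with $\cov(Z_\ell,U)=0$ from A\ref{assump:exogeneity:gen}, I would compute $\cov(Z_\ell,Y) = \beta\cov(Z_\ell,X) + \gamma_1\cov(Z_\ell,Z_1) + \gamma_2\cov(Z_\ell,Z_2)$ for $\ell\in\{1,2\}$. Writing $c_\ell = \cov(Z_\ell,X)$ and $r_\ell = \gamma_1\cov(Z_\ell,Z_1) + \gamma_2\cov(Z_\ell,Z_2)$, each moment residual in \eqref{eq:equalWeightGMMobjFn} then becomes $m_\ell(b) = (\beta-b)c_\ell + r_\ell$.

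Since $Q(b) = m_1(b)^2 + m_2(b)^2$ is a convex quadratic in $b$, its first-order condition $m_1(b)c_1 + m_2(b)c_2 = 0$ yields the unique minimizer $\beta - b^* = -(r_1 c_1 + r_2 c_2)/(c_1^2+c_2^2)$. The denominator is strictly positive by the relevance assumption A\ref{assump:homog:relevance:gen}. Substituting $b^*$ back and collecting terms gives
\[
 m_1(b^*) = \frac{c_2(r_1 c_2 - r_2 c_1)}{c_1^2+c_2^2}, \qquad m_2(b^*) = -\frac{c_1(r_1 c_2 - r_2 c_1)}{c_1^2+c_2^2},
\]
so that $Q^* = (r_1 c_2 - r_2 c_1)^2/(c_1^2+c_2^2)$.

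Next I would check that the cross-difference factors cleanly as $r_1 c_2 - r_2 c_1 = c_2 f(\gamma_1,\gamma_2)$: substituting the definitions, $r_1 c_2 - r_2 c_1 = \gamma_1[c_2\var(Z_1) - c_1\cov(Z_1,Z_2)] + \gamma_2[c_2\cov(Z_1,Z_2) - c_1\var(Z_2)]$, and pulling out $c_2$ reproduces exactly the bracketed coefficients of $f$ in the statement. Thus $Q^* = c_2^2 f(\gamma_1,\gamma_2)^2/(c_1^2+c_2^2)$. To finish, I would verify the algebraic identity $c_2^2/(c_1^2+c_2^2) = (1-\omega)^2 + \omega^2 (c_2/c_1)^2$ with $\omega = c_1^2/(c_1^2+c_2^2)$ and $1-\omega = c_2^2/(c_1^2+c_2^2)$: the right-hand side equals $[c_2^4 + c_1^2 c_2^2]/(c_1^2+c_2^2)^2 = c_2^2(c_1^2+c_2^2)/(c_1^2+c_2^2)^2$, which simplifies as required.

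The main obstacle is not any deep step but careful bookkeeping: substituting the structural equation into the two moment residuals, invoking exogeneity at the right points, and then matching the compact closed-form expression for $Q^*$ with the particular parametrization using $f$ and $\omega$ given in the statement. An ancillary point worth noting is that A\ref{assump:homog:relevance:gen} guarantees $c_1^2+c_2^2 > 0$, which is all that is needed to justify the unique minimizer; it is not required that both $c_1$ and $c_2$ be nonzero individually, though the form stated for $\omega$ and $f$ implicitly assumes $c_1 \neq 0$ for the ratio $c_2/c_1$ to be well-defined.
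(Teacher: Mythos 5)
Your proof is correct and follows essentially the same route as the paper's: both minimize the convex quadratic $Q$ via its first-order condition, substitute the optimizer back in, and then use the structural equation to express the value function in terms of $(\gamma_1,\gamma_2)$; the only difference is bookkeeping order (you substitute the outcome equation before optimizing, which yields the compact intermediate form $Q^* = (r_1 c_2 - r_2 c_1)^2/(c_1^2+c_2^2)$, whereas the paper first writes $b^*$ as a convex combination of the two single-instrument 2SLS estimands and substitutes the structural equation at the end). Your closing remark that the stated parametrization implicitly needs $\cov(Z_1,X)\neq 0$ (and, for $f$ itself, $\cov(Z_2,X)\neq 0$) is a fair observation that the paper also leaves implicit.
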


Notice that $Q^*(\gamma_1,\gamma_2)$ does not depend on the value of $\beta$. The left plot in figure \ref{fig:GMMobjFnBaseline} shows an example of $Q^*(\gamma_1,\gamma_2)$.\footnote{This figure plots the function $Q_\textsc{mt}^*(\delta, \gamma_1, \gamma_2)$ which we discuss below. When evaluated at $\delta = 0$, and when $\var(Z_1) = \var(Z_2) = 1$ as used in the figure, we have $ Q_\textsc{mt}^*(0, \gamma_1,\gamma_2) = Q^*(\gamma_1,\gamma_2)$. In general, $Q_\textsc{mt}^*(0, \gamma_1,\gamma_2)$ equals the GMM objective function with weight matrix $W = \text{diag}(1/\var(Z_1)^2, 1/\var(Z_2)^2)$.} 
From this proposition, we see that $Q^*(\gamma_1,\gamma_2) = 0$ if and only if $f(\gamma_1,\gamma_2) = 0$. This equation defines a line in $\R^2$. Thus the model is not refuted if and only if $(\gamma_1,\gamma_2)$ falls on this line. Only one of these points---the origin---corresponds to the baseline model. All other points on this line are false models which are observationally equivalent to a set of structural parameters which satisfy the model assumptions.

\begin{figure}[t]
\centering
\includegraphics[width=52mm]{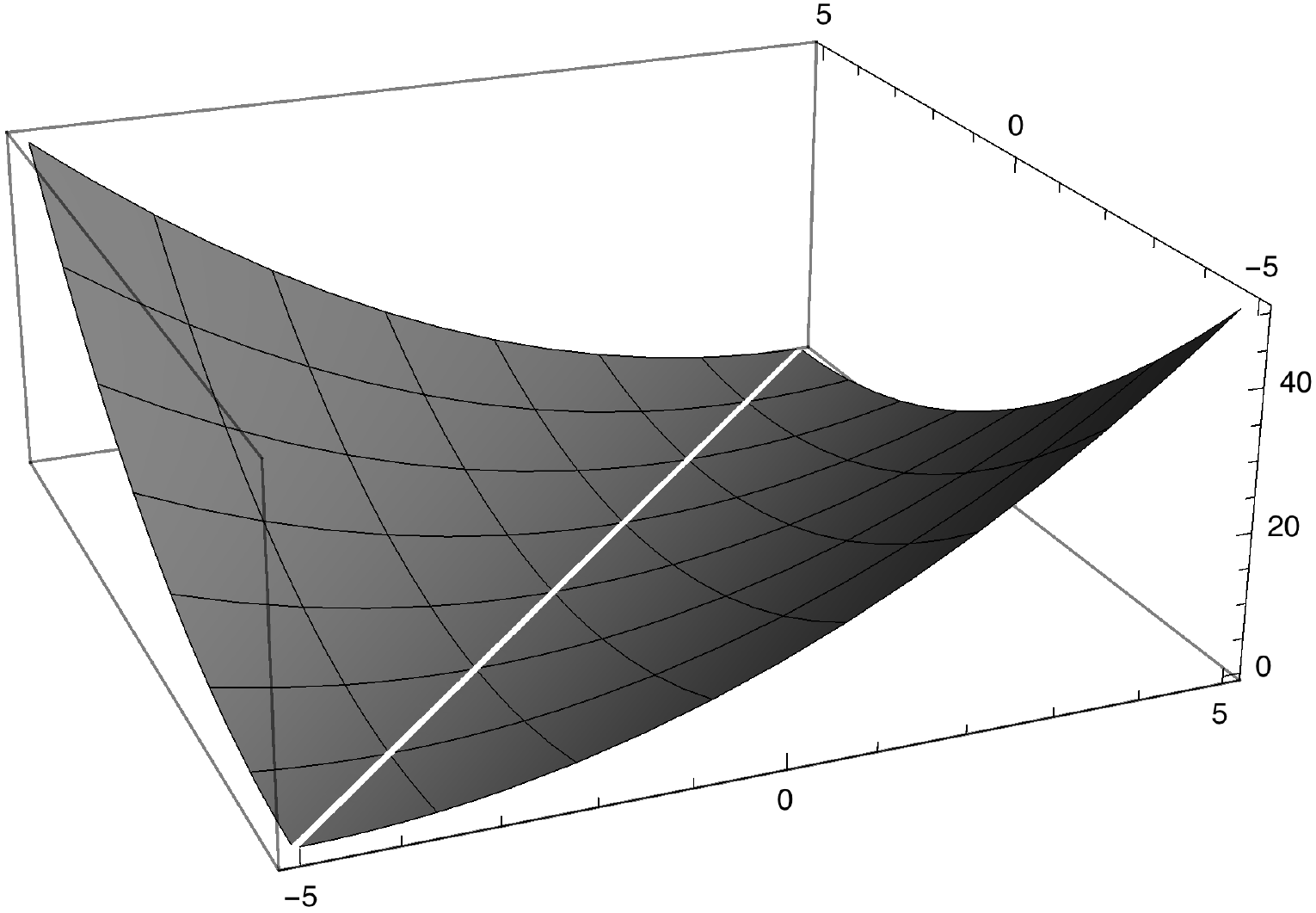}
\hspace{1mm}
\includegraphics[width=52mm]{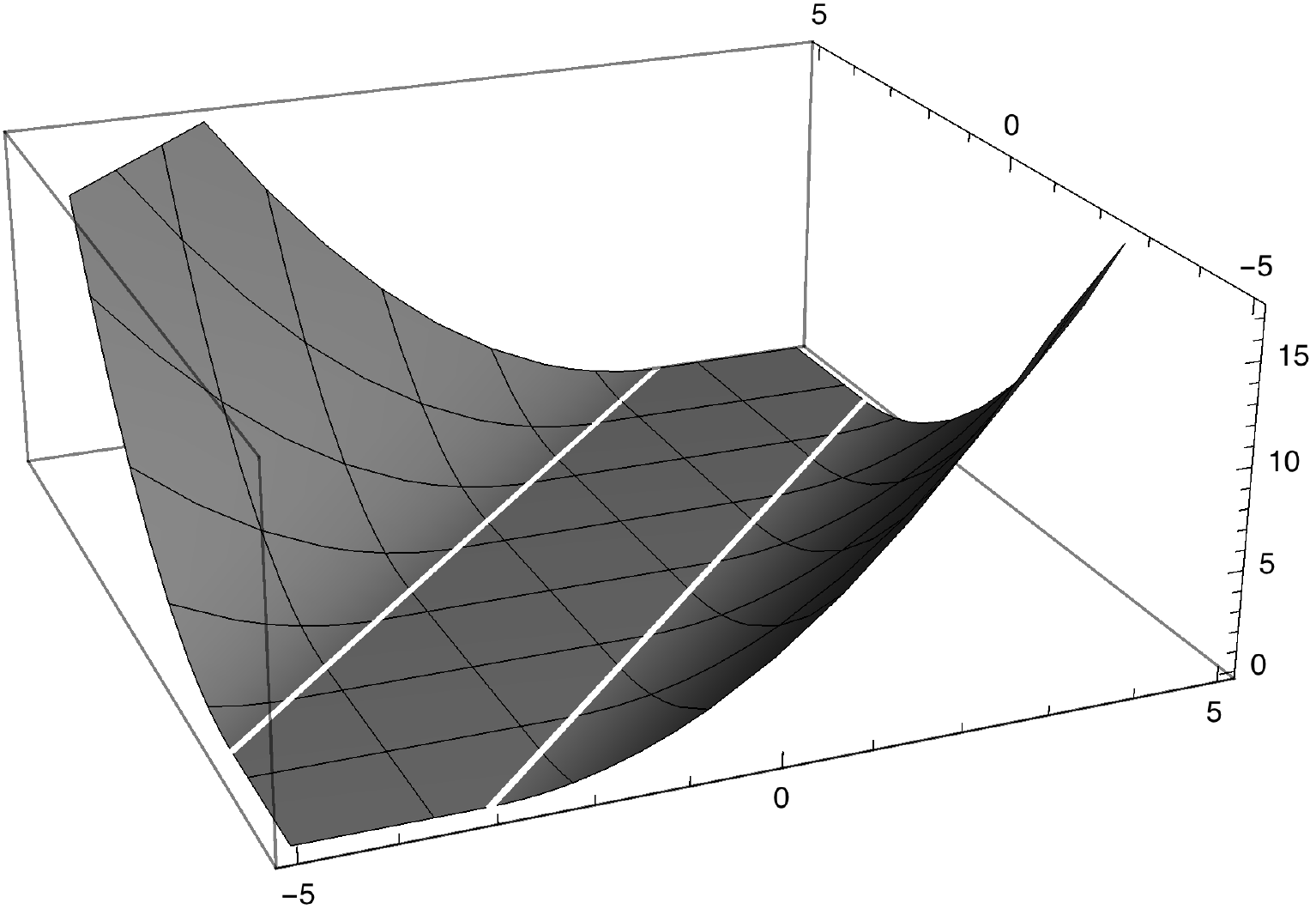}
\hspace{1mm}
\includegraphics[width=52mm]{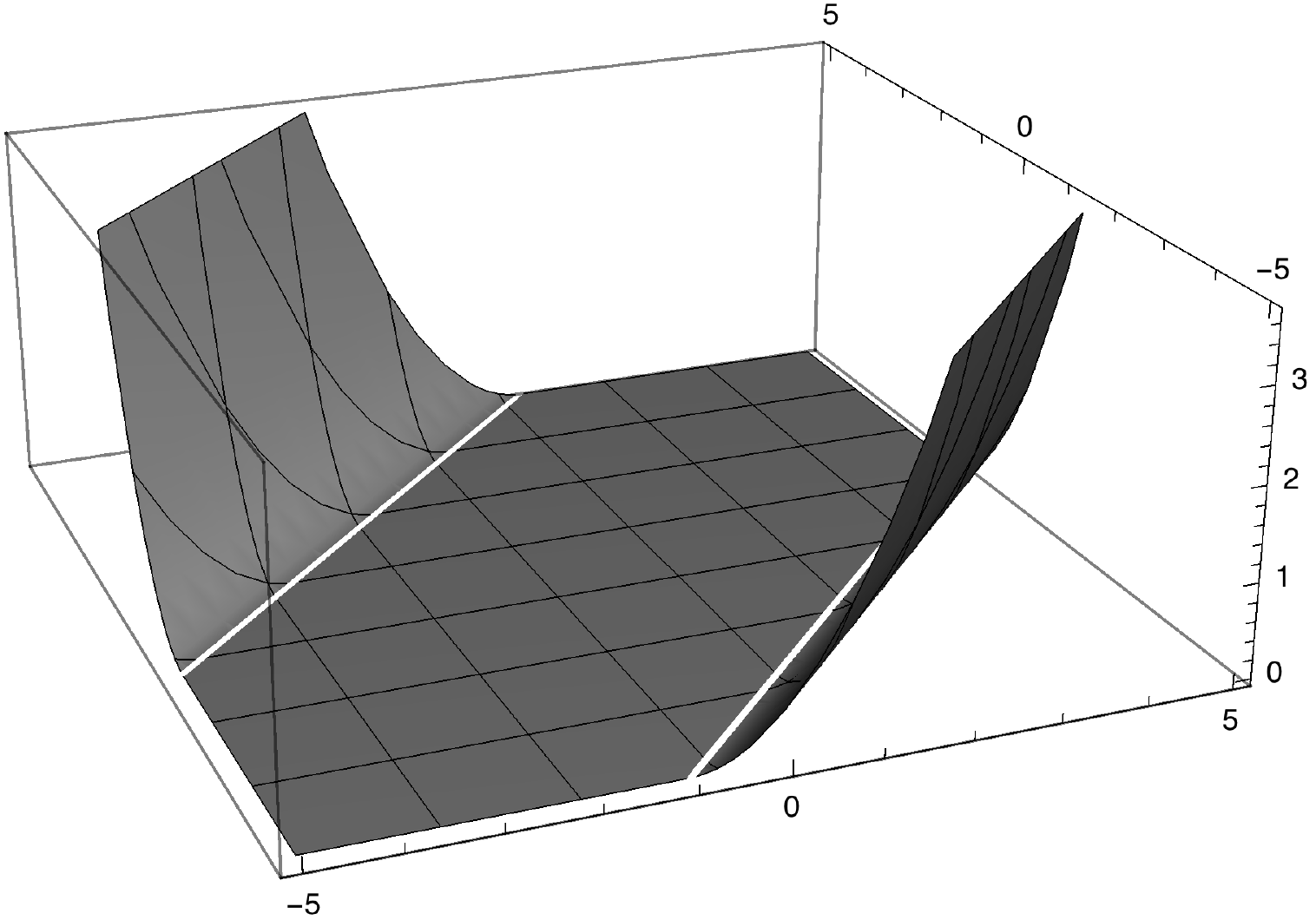}
\caption{Plots of $Q_\textsc{mt}^*(\delta,\gamma_1,\gamma_2)$ as functions of $(\gamma_1,\gamma_2)$ for three different values of $\delta$. Left: The baseline case, $\delta_1 = \delta_2 = 0$. Middle: A relaxed model with $\delta_1 = \delta_2 = 1$. Right: A relaxed model with $\delta_1 = \delta_2 = 2$. All three plots set $\var(Z_1) = \var(Z_2) = 1$, $\cov(Z_1,Z_2) = 0$, and $\cov(Z_1,X) = \cov(Z_2,X) = 0.5$. }
\label{fig:GMMobjFnBaseline}
\end{figure}

Next we consider the relaxed model which replaces exclusion A\ref{assump:exclusion:gen} with partial exclusion A\ref{assump:exclusion:gen}$^\prime$. Following \cite{ManskiTamer2002} and \cite{ChernozhukovHongTamer2007}, we can again characterize falsification of the model in terms of a certain objective function. Under partial exclusion, theorem \ref{thm:idset:homog:gen} shows that the identified set $\mathcal{B}(\delta)$ is characterized by four moment inequalities:
\begin{align*}
	(\psi_\ell - \pi_\ell \beta) + \delta_\ell &\geq 0 \\
	-(\psi_\ell - \pi_\ell \beta) + \delta_\ell &\geq 0
\end{align*}
for $\ell \in \{1,2\}$. Recall that $\psi$ and $\pi$ were defined on page \pageref{eq:defOfPsiPi}. Define
\[
	Q_\textsc{mt}(b,\delta) = \sum_{\ell=1}^2 \Big( \max \big\{ - [(\psi_\ell - \pi_\ell b) + \delta_\ell], 0 \big\}^2 + \max \big\{ - [-(\psi_\ell - \pi_\ell b) + \delta_\ell], 0 \big\}^2 \Big).
\]
Then $Q_\textsc{mt}(b,\delta) = 0$ if and only if $b \in \mathcal{B}(\delta)$. This objective function generalizes equation \eqref{eq:equalWeightGMMobjFn}. Recall the notation $B_\ell(\delta) = [\underline{b}_\ell(\delta), \overline{b}_\ell(\delta)]$ from equation \eqref{eq:K1identBetaSetLinearIV}. The following proposition characterizes this objective function at its minimal value.

\begin{proposition}\label{prop:QoptimalValueIdentSet}
Let $Q_\textsc{mt}^*(\delta) = \min_{b \in \R} Q_\textsc{mt}(b,\delta)$. Then
\[
	Q_\textsc{mt}^*(\delta)
	=
	\begin{cases}
		0 
		&\text{if $\max \{ \underline{b}_1(\delta), \underline{b}_2(\delta) \} \leq \min \{ \overline{b}_1(\delta), \overline{b}_2(\delta) \}$} \\[1em]
		\dfrac{1}{2} \big( \underline{b}_2(\delta) - \overline{b}_1(\delta) \big)^2 
		&\text{if $\overline{b}_1(\delta) < \underline{b}_2(\delta)$} \\[1em]
		\dfrac{1}{2} \big( \underline{b}_1(\delta) - \overline{b}_2(\delta) \big)^2 
		&\text{if $\overline{b}_2(\delta) < \underline{b}_1(\delta)$}.
	\end{cases}
\]
\end{proposition}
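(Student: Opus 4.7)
The strategy is to rewrite $Q_\textsc{mt}(b,\delta)$ as a piecewise convex quadratic whose zero set is exactly $B_1(\delta)\cap B_2(\delta)$, and then to minimize over $b$ case by case, corresponding to the three geometric configurations of the two intervals on the real line.

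First, for each $\ell\in\{1,2\}$ I would observe that at most one of the two $\max\{\cdot,0\}$ terms inside the $\ell$-summand is positive for any given $b$, since $\delta_\ell\geq 0$ forces the two quantities $(\psi_\ell-\pi_\ell b)\pm\delta_\ell$ to have consistent signs. A direct unpacking then shows the $\ell$-summand equals $(|\psi_\ell-\pi_\ell b|-\delta_\ell)_+^2$, which vanishes iff $b\in B_\ell(\delta)=[\underline{b}_\ell(\delta),\overline{b}_\ell(\delta)]$ by \eqref{eq:K1identBetaSetLinearIV}. When $b\notin B_\ell$, the $\ell$-summand reduces to a quadratic in $b$ of the form $\pi_\ell^2\cdot d(b,B_\ell(\delta))^2$, where $d(b,B_\ell(\delta))$ is the distance from $b$ to the interval. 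From this it is immediate that $Q_\textsc{mt}^*(\delta)=0$ iff $B_1(\delta)\cap B_2(\delta)\neq\emptyset$, which is equivalent to $\max\{\underline{b}_1,\underline{b}_2\}\leq\min\{\overline{b}_1,\overline{b}_2\}$. This handles the first case.

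Next I would consider $\overline{b}_1(\delta)<\underline{b}_2(\delta)$, so $B_1$ lies strictly to the left of $B_2$ and the two intervals are disjoint. I would argue the minimizer must lie in the intermediate interval $[\overline{b}_1,\underline{b}_2]$: for $b\leq\overline{b}_1$ the $\ell=1$ term vanishes but the $\ell=2$ term is at least $\pi_2^2(\underline{b}_2-\overline{b}_1)^2$, which strictly exceeds the value achieved at any interior stationary point (and symmetrically for $b\geq\underline{b}_2$). On the intermediate interval both constraints are violated and
\[
Q_\textsc{mt}(b,\delta)=\pi_1^2(b-\overline{b}_1(\delta))^2+\pi_2^2(\underline{b}_2(\delta)-b)^2,
\]
a scalar convex quadratic. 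The first-order condition locates its minimizer and yields a closed-form minimum; under the normalization used here this reduces to $\tfrac{1}{2}(\underline{b}_2(\delta)-\overline{b}_1(\delta))^2$, attained at the midpoint of the gap. The final case $\overline{b}_2(\delta)<\underline{b}_1(\delta)$ is then obtained by swapping the roles of indices $1$ and $2$.

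The main obstacle will be the bookkeeping to show the global minimizer lies in the intermediate interval rather than at an endpoint or inside either $B_\ell$; this reduces to a monotonicity check of the one-sided quadratic pieces on $(-\infty,\overline{b}_1]$ and $[\underline{b}_2,\infty)$ and is not deep. The conceptual content is entirely in the opening observation that $Q_\textsc{mt}(\cdot,\delta)$ is a sum of squared distances to the two intervals $B_\ell(\delta)$, after which everything reduces to elementary one-variable convex minimization.
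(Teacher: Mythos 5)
Your proposal is correct and follows essentially the same route as the paper's proof: show that $Q_\textsc{mt}(\cdot,\delta)$ vanishes exactly on $B_1(\delta)\cap B_2(\delta)$, then, when the intervals are disjoint, localize the minimizer to the gap and minimize the resulting convex quadratic. Your justification that the minimizer lies in the gap is, if anything, more explicit than the paper's, which simply asserts that no $b$ outside $[L_1,U_2]$ can be a minimizer.

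The one substantive point of divergence is your careful bookkeeping of the slopes. You correctly write the $\ell$-summand as $\pi_\ell^2\, d(b,B_\ell(\delta))^2$, so that on the gap the objective is $\pi_1^2(b-\overline{b}_1(\delta))^2+\pi_2^2(\underline{b}_2(\delta)-b)^2$, whose minimum is $\frac{\pi_1^2\pi_2^2}{\pi_1^2+\pi_2^2}\big(\underline{b}_2(\delta)-\overline{b}_1(\delta)\big)^2$ attained at the $\pi_\ell^2$-weighted combination of $\overline{b}_1(\delta)$ and $\underline{b}_2(\delta)$, not at the midpoint. This reduces to the stated $\tfrac12\big(\underline{b}_2(\delta)-\overline{b}_1(\delta)\big)^2$ only when $|\pi_1|=|\pi_2|$ (e.g.\ unit first-stage coefficients). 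You hedge by appealing to ``the normalization used here,'' but no such normalization is imposed in this section; the paper's own proof instead writes the gap objective as $(b-U_1)^2+(b-L_2)^2$, silently dropping the $|\pi_\ell|$ factors. So your derivation is the more careful one, and it exposes that the proposition's closed form as stated either presupposes an unstated normalization or should carry the weight $\frac{\pi_1^2\pi_2^2}{\pi_1^2+\pi_2^2}$ in place of $\tfrac12$.
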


The identified set $\mathcal{B}(\delta)$ is the intersection of two intervals, $B_1(\delta)$ and $B_2(\delta)$. When these two intervals overlap, the model is not refuted and $Q_\textsc{mt}^*(\delta) = 0$. This is the first case. When the intervals do not overlap, $Q_\textsc{mt}^*(\delta)$ is strictly positive. The second case above occurs when $B_1(\delta)$ is completely below $B_2(\delta)$. The third case above is the reverse. That is, if $Q_\textsc{mt}^*(\delta) > 0$, then the model is false.

As in proposition \ref{prop:valueFunctionExpression} above, we can view $Q_\textsc{mt}^*(\delta)$ as a function of the true underlying parameters, which we write as
\[
	Q_\textsc{mt}^*(\delta, \gamma_1, \gamma_2).
\]
We give an explicit expression for this function on page \pageref{proof:derivationsOfPartialIdentQstar}. The middle and right plots in figure \ref{fig:GMMobjFnBaseline} show two examples of $Q_\textsc{mt}^*(\delta,\gamma_1,\gamma_2)$, for two different values of $\delta$. Like the baseline case, $(\delta_1,\delta_2) = (0,0)$, there is a set of values $(\gamma_1,\gamma_2)$ where the objective function is zero. This set contains the values in the square $[-\delta_1,\delta_1] \times [-\delta_2,\delta_2]$ which are consistent with partial exclusion A\ref{assump:exclusion:gen}$^\prime$. But this set also contains values outside of this square. Each of these values is a false model which is observationally equivalent to a set of structural parameters which satisfy the model assumptions. As the components of $\delta$ increase, this set of $(\gamma_1,\gamma_2)$ values such that  $Q_\textsc{mt}^*(\delta,\gamma_1,\gamma_2) = 0$ grows.

Finally, we relate these results to the falsification frontier. In figure \ref{fig:GMMobjFnBaseline}, as we vary $(\gamma_1,\gamma_2)$ in the domain, the distribution of the observed data changes. This follows since $\cov(Z_1,Y)$ and $\cov(Z_2,Y)$ change when $(\gamma_1,\gamma_2)$ change while holding all else fixed. In our analysis in sections \ref{sec:generalFF} and \ref{sec:homogModel} we fix a single distribution of the observed data $(Y,X,Z_1,Z_2)$. We then check whether the model is falsified by that distribution. For example, we check whether the objective function in equation \eqref{eq:equalWeightGMMobjFn} is zero at its optimizer. Suppose it is not. Then the model is falsified. This means that, in the left plot of figure \ref{fig:GMMobjFnBaseline}, we observed a distribution of $(Y,X,Z_1,Z_2)$ off the diagonal defined by $f(\gamma_1,\gamma_2) = 0$. As we increase the components of $\delta$, the value function $Q_\textsc{mt}^*(\delta)$ will flatten, until eventually we have $Q_\textsc{mt}^*(\delta) = 0$. The smallest set of $\delta$'s at which this happens is what we call the falsification frontier.

\subsubsection*{The Identified Set for $(\beta,\gamma)$}

An alternative way to view these arguments is to study the identified set for $(\beta,\gamma)$ when no constraints on the values of $\gamma$ are imposed. Specifically, we have the following result.

\begin{proposition}\label{prop:IdentSetForGammas}
Suppose A\ref{assump:homog:relevance:gen}, A\ref{assump:homog:nonsing:gen}, and A\ref{assump:exogeneity:gen} hold. Suppose $K=1$. Then
\[
	\Gamma = \left\{ \gamma \in \R^L : \gamma = \psi - b \Pi \text{ for some $b \in \R$} \right\}
\]
is the identified set for $\gamma$. Moreover, the identified set for $(\beta,\gamma)$ is
\[
	\{ (\beta,\gamma) \in \R^{1+L} : \psi - \beta \Pi = \gamma \text{ for some $\gamma \in \Gamma$} \}.
\]
\end{proposition}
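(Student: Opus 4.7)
The plan is to prove both identification claims together by reducing the structural equation to a single linear constraint between $\beta$ and $\gamma$ via the exogeneity assumption.

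First, I would establish the necessary direction. Start from the structural equation $Y = X\beta + Z'\gamma + U$ (with $K=1$) and take covariances with $Z$ on both sides. Using A\ref{assump:exogeneity:gen} to kill the term $\cov(Z,U)=0$, this yields
\[
	\cov(Z,Y) = \cov(Z,X)\beta + \var(Z)\gamma.
\]
Since $\var(Z)$ is invertible by A\ref{assump:homog:nonsing:gen}, premultiplying gives the identity
\[
	\psi = \beta\,\Pi + \gamma,
\]
equivalently $\gamma = \psi - \beta\,\Pi$. Thus any $(\beta,\gamma)$ consistent with the model and the observed distribution lies on the line $\{(b,\psi-b\Pi):b\in\R\}$, showing the identified set for $(\beta,\gamma)$ is contained in the claimed set, and (by projecting out $\beta$) the identified set for $\gamma$ is contained in $\Gamma$.

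Second, for sharpness I would show every pair $(b,g)\in\R^{1+L}$ with $g=\psi-b\Pi$ can be rationalized. Given such a pair, \emph{define} $U := Y - bX - Z'g$. By construction the structural equation $Y = bX + Z'g + U$ holds. Computing the exogeneity moment,
\[
	\cov(Z,U) = \cov(Z,Y) - b\,\cov(Z,X) - \var(Z)\,g,
\]
and substituting $\var(Z)\,g = \var(Z)(\psi - b\Pi) = \cov(Z,Y) - b\,\cov(Z,X)$, we get $\cov(Z,U)=0$, so A\ref{assump:exogeneity:gen} is satisfied. Since A\ref{assump:homog:relevance:gen} and A\ref{assump:homog:nonsing:gen} are statements about the distribution of observables alone, they are unaffected by the choice of $(b,g)$. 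Thus each such $(b,g)$ is observationally equivalent to some admissible structure.

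Combining the two directions, the identified set for $(\beta,\gamma)$ equals $\{(b,\psi-b\Pi):b\in\R\}$, and projecting onto the last $L$ coordinates gives $\Gamma = \{\psi - b\Pi : b\in\R\}$, as claimed. The only potential obstacle is the sharpness step: one must confirm that no further restriction on $\gamma$ (such as a magnitude bound like A\ref{assump:exclusion:gen}$^\prime$) is being imposed, so that the residual $U$ constructed above is an admissible unobservable. Since the proposition assumes only A\ref{assump:homog:relevance:gen}--A\ref{assump:exogeneity:gen}, no such restriction applies and the construction goes through immediately.
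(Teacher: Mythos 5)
Your proof is correct and follows essentially the same route as the paper: the paper's proof of this proposition simply points to the proof of Theorem \ref{thm:idset:homog:gen}, which uses exactly your two steps---deriving $\psi = \beta\Pi + \gamma$ from the covariance of the outcome equation with $Z$ under exogeneity, and establishing sharpness by constructing $U = Y - bX - Z'g$ and verifying $\cov(Z,U)=0$. No gaps.
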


Recall from page \pageref{eq:defOfPsiPi} that $\psi$ is the vector of reduced form regression coefficients of $Y$ on $Z$. Similarly, $\Pi$ is the vector of first stage regression coefficients of $X$ on $Z$. Proposition \ref{prop:IdentSetForGammas} is essentially a restatement of equation (8) in \cite{Small2007}. 

The identified set $\Gamma$ is never empty, which means that without any constraints on the direct causal effects of the instruments, the model is not refutable. With a single instrument ($L = 1$), the identified set $\Gamma$ always contains 0. That is, the exclusion restriction $\gamma = 0$ is always consistent with the data and assumptions A\ref{assump:homog:relevance:gen}, A\ref{assump:homog:nonsing:gen}, and A\ref{assump:exogeneity:gen}. This is simply the well known result that the classical linear IV model with one instrument is not refutable.

With $L \geq 2$ instruments, however, the identified set $\Gamma$ does not necessarily contain the origin. Thus the baseline model (the outcome equation \eqref{eq:constantCoeffOutcomeEqGen} plus assumptions A\ref{assump:homog:relevance:gen}--A\ref{assump:exclusion:gen}) with two or more instruments is refutable: The baseline model is refuted if and only if the origin is not an element of the identified set $\Gamma$. Figure \ref{fig:GammaIdentSet} shows two examples of the identified sets $\Gamma$, for the two instrument case. On the left, the baseline model is not refuted since the set $\Gamma$ passes through the origin. On the right, the baseline model is refuted since the set $\Gamma$ does not pass through the origin. %

\begin{figure}[t]
\centering
\includegraphics[width=80mm]{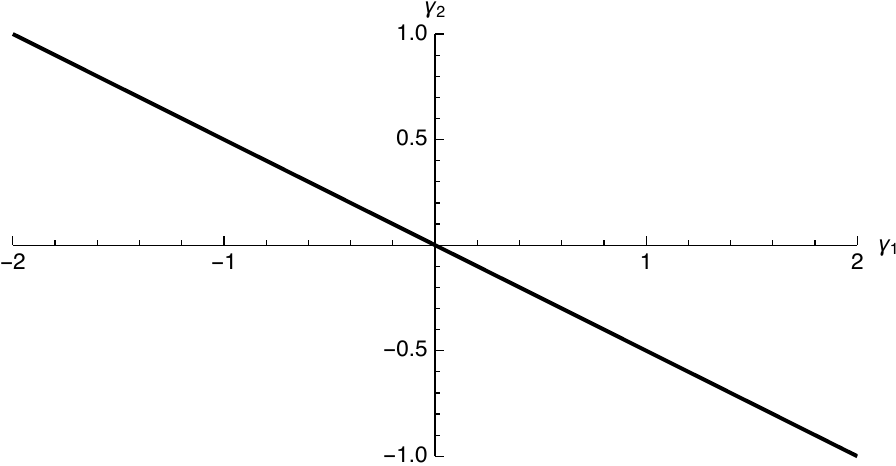}
\hspace{1mm}
\includegraphics[width=80mm]{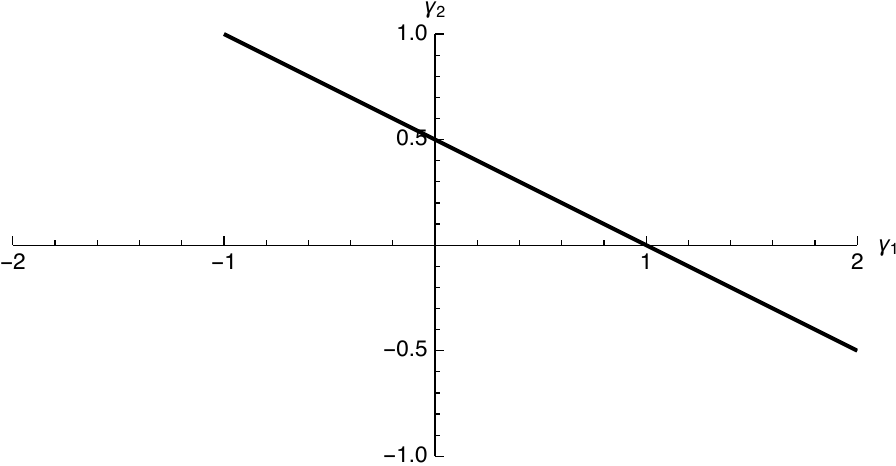}
\caption{Example identified sets for $(\gamma_1,\gamma_2)$. Left: A distribution of $(Y,X,Z_1,Z_2)$ which does not refute the baseline model. Right: A distribution of $(Y,X,Z_1,Z_2)$ which does refute the baseline model.}
\label{fig:GammaIdentSet}
\end{figure}

Importantly, even when the baseline model is not refuted, the identified set for $\gamma$ always contains elements which are inconsistent with the exclusion restriction. This is the same point we discussed above, by analyzing the GMM objective function.

To relate this discussion to the falsification frontier, fix a distribution of the observed data $(Y,X,Z)$. We then check whether the baseline model is falsified by that distribution. For example, we check whether $\Gamma$ contains the origin. Suppose it does not. Then the baseline model is falsified. We then compute the falsification frontier. While this set was given by proposition \ref{prop:K1FFhomogTrt}, it can alternatively be obtained as the set
\[
	\text{FF} = \left\{ \delta \in [0,\infty)^L : \left( \prod_{\ell=1}^L [-\delta_\ell, \delta_\ell] \right) \cap \Gamma \text{ is a singleton} \right\}.
\]
Essentially, we consider boxes around the origin. We then find all boxes that have a singleton intersection with the line $\Gamma$.

At any point $\delta$ on the falsification frontier, the relaxed model (the outcome equation \eqref{eq:constantCoeffOutcomeEqGen} plus assumptions A\ref{assump:homog:relevance:gen}--A\ref{assump:exogeneity:gen} and A\ref{assump:exclusion:gen}$^\prime$) is not refuted and there is a unique $\gamma$ value consistent with the relaxed model assumptions. If we drop the partial exclusion assumption A\ref{assump:exclusion:gen}$^\prime$, however, there are infinitely many values of $\gamma$ consistent with the remaining assumptions and the data---all the elements of the set $\Gamma$.

\subsubsection*{Proofs for section \ref{sec:GMMobjFun}}

\begin{proof}[Proof of proposition \ref{prop:GMMobjFun}]
This result is standard, but we include the proof for completeness. Substituting the outcome equation \eqref{eq:outcomeEqForNonFalseInterpSec} into our system of three moments and using exclusion yields the moments
\begin{align*}
	\Exp[ Y - \alpha - \beta X ] &= 0 \\
	\Exp[Z_1 (Y - \alpha - \beta X)] &= 0 \\
	\Exp[Z_2 (Y - \alpha - \beta X)] &= 0.
\end{align*}
Next we profile out the intercept. Solving the first equation for $\alpha$ yields $\alpha = \Exp(Y) - \beta \Exp(X)$. Substituting this into the second two equations gives
\begin{align*}
	\Exp[Z_1 (Y - \Exp(Y) + \beta \Exp(X) - \beta X)] &= 0 \\
	\Exp[Z_2 (Y - \Exp(Y) + \beta \Exp(X) - \beta X)] &= 0.
\end{align*}
We can rewrite this system as
\begin{align*}
	\cov(Z_1,Y) - \beta \cov(Z_1,X) &= 0 \\
	\cov(Z_2, Y) - \beta \cov(Z_2,X) &= 0.
\end{align*}
This is a system of two moment equalities with one unknown, $\beta$. The left hand side is the vector $m(\beta)$. This explains the definition of $Q$ and shows that $Q(\beta) = 0$. The fact that $Q$ is uniquely minimized at $\beta$ follows by the relevance assumption A\ref{assump:homog:relevance:gen} and since the moments are linear in $\beta$.
\end{proof}

\begin{proof}[Proof of corollary \ref{corr:checkGMMobjFunZeroForFalse}]
This is simply the contrapositive of proposition \ref{prop:GMMobjFun}. For concreteness, however, notice that the model is falsified if and only if both instruments are relevant, so that $\cov(Z_\ell,X) \neq 0$ for $\ell \in \{1,2\}$, and 
\[
	\frac{\cov(Z_1,Y)}{\cov(Z_1,X)} \neq \frac{\cov(Z_2,Y)}{\cov(Z_2,X)}.
\]
This happens when there does not exist a solution to the population moment equations.
\end{proof}

\begin{proof}[Proof of proposition \ref{prop:valueFunctionExpression}]
Solving the first order conditions for minimizing $Q$ yields
\begin{align*}
	b^*
		&= \frac{\cov(Z_1,Y) \cov(Z_1,X) + \cov(Z_2, Y) \cov(Z_2,X)}{\cov(Z_1,X)^2 + \cov(Z_2,X)^2} \\
		&= \frac{\cov(Y,Z_1)}{\cov(X,Z_1)} \frac{\cov(X,Z_1)^2}{\cov(X,Z_1)^2 + \cov(X,Z_2)^2} + \frac{ \cov(Y,Z_2)}{ \cov(X,Z_2)} \frac{\cov(X,Z_2)^2}{\cov(X,Z_1)^2 + \cov(X,Z_2)^2} \\
		&\equiv \beta_{Z_1}^\textsc{2sls} \omega + \beta_{Z_2}^\textsc{2sls} (1-\omega)
\end{align*}
where $\omega \in [0,1]$. That is, the optimizer is a convex combination of the two individual 2SLS estimands. Thus the value function is
\begin{align*}
	Q^* 
		&= Q(b^*) \\
		&= \big( \cov(Z_1,Y) - b^* \cov(Z_1,X) \big)^2 + \big( \cov(Z_2,Y) - b^* \cov(Z_2,X) \big)^2 \\
		&= \big( \cov(Z_1,Y) - [\beta_{Z_1}^\textsc{2sls} \omega + \beta_{Z_2}^\textsc{2sls} (1-\omega)] \cov(Z_1,X) \big)^2 \\
		&\qquad + \big( \cov(Z_2,Y) - [\beta_{Z_1}^\textsc{2sls} \omega + \beta_{Z_2}^\textsc{2sls} (1-\omega)] \cov(Z_2,X) \big)^2 \\
		&= \left( \cov(Z_1,Y) - \cov(Z_1,Y) \omega - (1-\omega) \beta_{Z_2}^\textsc{2sls} \cov(Z_1,X) \right)^2 \\
		&\qquad + \left( \cov(Z_2,Y) - (1-\omega) \cov(Z_2,Y) - \omega \beta_{Z_1}^\textsc{2sls} \cov(Z_2,X) \right)^2 \\
		&= (1-\omega)^2 \left( \cov(Z_1,Y) - \cov(Z_1,X) \beta_{Z_2}^\textsc{2sls} \right)^2 +\omega^2 \left( \cov(Z_2,Y) - \cov(Z_2,X) \beta_{Z_1}^\textsc{2sls} \right)^2 \\
		&= (1-\omega)^2 \left( \cov(Z_1,Y) - \cov(Z_2,Y) \frac{\cov(Z_1,X)}{\cov(Z_2,X)} \right)^2 \\
		&\quad +\omega^2 \left( \cov(Z_2,Y) - \cov(Z_1,Y) \frac{\cov(Z_2,X)}{\cov(Z_1,X)} \right)^2.
\end{align*}
To understand how this value function changes as the underlying dgp changes, we substitute the outcome equation \eqref{eq:outcomeEqForNonFalseInterpSec} for $Y$. Since $Y$ only enters via its covariance with $Z_\ell$, this yields
\begin{align*}
	\cov(Z_\ell, Y)
		&= \cov \left( Z_\ell, \alpha + \beta X + \gamma_1 Z_1 + \gamma_2 Z_2 + U \right) \\
		&= \beta \cov(Z_\ell, X) + \sum_{j \in \{1,2\}} \gamma_j \cov(Z_\ell, Z_j).
\end{align*}
Thus
\begin{align*}
	&Q^*(\gamma_1,\gamma_2) \\
		&= (1-\omega)^2 \Bigg( \beta \cov(Z_1,X) + \gamma_1 \var(Z_1) + \gamma_2 \cov(Z_1,Z_2) \\
		&\qquad\qquad\qquad - [\beta \cov(Z_2,X) + \gamma_1 \cov(Z_1,Z_2) + \gamma_2 \var(Z_2)] \frac{\cov(Z_1,X)}{\cov(Z_2,X)} \Bigg)^2 \\
		&\quad +\omega^2 \Bigg( \beta \cov(Z_2,X) + \gamma_1 \cov(Z_1,Z_2) + \gamma_2 \var(Z_2) \\
		&\quad\qquad\qquad -[\beta \cov(Z_1,X) + \gamma_1 \var(Z_1) + \gamma_2 \cov(Z_1,Z_2) ] 
		  \frac{\cov(Z_2,X)}{\cov(Z_1,X)} \Bigg)^2 \\
		&= (1-\omega)^2 \left( \gamma_1 \left[ \var(Z_1) - \cov(Z_1,Z_2) \frac{\cov(Z_1,X)}{\cov(Z_2,X)} \right] + \gamma_2 \left[ \cov(Z_1,Z_2) - \var(Z_2) \frac{\cov(Z_1,X)}{\cov(Z_2,X)} \right] \right)^2 \\
		&\quad +\omega^2 \left( 
		\gamma_1 \left[ \cov(Z_1,Z_2) - \var(Z_1) \frac{\cov(Z_2,X)}{\cov(Z_1,X)} \right]
		+ \gamma_2 \left[ \var(Z_2) - \cov(Z_1,Z_2) \frac{\cov(Z_2,X)}{\cov(Z_1,X)} \right] \right)^2 \\
		&= (1-\omega)^2 \left( \gamma_1 \left[ \var(Z_1) - \cov(Z_1,Z_2) \frac{\cov(Z_1,X)}{\cov(Z_2,X)} \right] + \gamma_2 \left[ \cov(Z_1,Z_2) - \var(Z_2) \frac{\cov(Z_1,X)}{\cov(Z_2,X)} \right] \right)^2 \\
		&\quad +\omega^2 \left( \frac{\cov(Z_2,X)}{\cov(Z_1,X)} \right)^2 \left( 
		\gamma_1 \left[ \var(Z_1) - \cov(Z_1,Z_2) \frac{\cov(Z_1,X)}{\cov(Z_2,X)} \right]
		+ \gamma_2 \left[ \cov(Z_1,Z_2) - \var(Z_2) \frac{\cov(Z_1,X)}{\cov(Z_2,X)} \right] \right)^2.
\end{align*}
\end{proof}

\begin{proof}[Proof of proposition \ref{prop:QoptimalValueIdentSet}]
When the model is not refuted, $Q_\textsc{mt}(b,\delta) = 0$ by definition. This gives us the first case in the definition of $Q_\textsc{mt}^*(\delta)$. So suppose the model is refuted. Then the intervals $[\underline{b}_1(\delta), \overline{b}_1(\delta)]$ and $[\underline{b}_2(\delta), \overline{b}_2(\delta)]$ do not intersect. Hence one of them lies strictly to the left of the other. Suppose that 
\[
	[L_1,U_1] \equiv [\underline{b}_1(\delta), \overline{b}_1(\delta)]
\]
lies to the left of
\[
	[L_2,U_2] \equiv [\underline{b}_2(\delta), \overline{b}_2(\delta)],
\]
meaning that $U_1 < L_2$. First we show that $Q_\textsc{mt}(b,\delta)$ is uniquely minimized at the midpoint between $U_1$ and $L_2$,
\[
	b^* = \frac{U_1 + L_2}{2}.
\]
First note that any value $b$ outside of $[L_1,U_2]$ can never be a minimizer. Furthermore, $b^*$ must lie inside $[U_1,L_2]$. For $b$'s in this range, the objective function is
\begin{align*}
	Q_\textsc{mt}(b,\delta)
		&= \max \{ b-U_1, 0 \}^2 + \max \{ -(b - L_2), 0 \}^2 \\
		&= (b - U_1)^2 + (b - L_2)^2.
\end{align*}
The FOC is
\[
	2(b - U_1)(-1) + 2(b - L_2)(-1) = 0.
\]
Solve for $b$ to get the midpoint. This essentially follows since our objective function gives equal weight to violations of each of the two inequalities. Next, to obtain the value of the objective function at the optimizer, simply plug $b^*$ back in to get
\begin{align*}
	Q_\textsc{mt}(b^*)
		&= \left( \frac{U_1+L_2}{2} - U_1 \right)^2 + \left( \frac{U_1+L_2}{2} - L_2 \right)^2 \\
		&= \frac{(L_2 - U_1)^2}{4} + \frac{(U_1-L_2)^2}{4} \\
		&= \frac{1}{2} (U_1-L_2)^2.
\end{align*}
This gives us the second case in the definition of $Q_\textsc{mt}^*(\delta)$. The third case occurs when the interval $[\underline{b}_2(\delta), \overline{b}_2(\delta)]$ lies to the left of $[\underline{b}_1(\delta), \overline{b}_1(\delta)]$. The result follows by following the derivations of the second case, but reversing the role of the two intervals.
\end{proof}

\begin{proof}[Derivation of $Q_\textsc{mt}^*(\delta,\gamma_1,\gamma_2)$]\label{proof:derivationsOfPartialIdentQstar}
First substitute the definitions of the bounds $\underline{b}_\ell(\delta)$ and $\overline{b}_\ell(\delta)$ into the expression for $Q_\textsc{mt}^*(\delta)$. This yields $Q_\textsc{mt}^*(\delta) = 0$ if 
\[
	 \max \left\{ \dfrac{\psi_1}{\pi_1} - \dfrac{\delta_1}{|\pi_1|}, \dfrac{\psi_2}{\pi_2} - \dfrac{\delta_2}{| \pi_2 |} \right\} \leq \min \left\{ \dfrac{\psi_1}{\pi_1} + \dfrac{\delta_1}{|\pi_1|}, \dfrac{\psi_2}{\pi_2} + \dfrac{\delta_2}{| \pi_2 |}  \right\}
\]
and
\[
	Q_\textsc{mt}^*(\delta)
	=
	\begin{cases}
		\dfrac{1}{2} \left[ \left( \dfrac{\psi_2}{\pi_2} - \dfrac{\delta_2}{| \pi_2 |} \right) - \left( \dfrac{\psi_1}{\pi_1} + \dfrac{\delta_1}{| \pi_1 |} \right) \right]^2
		&\text{if $\left( \dfrac{\psi_2}{\pi_2} - \dfrac{\delta_2}{| \pi_2 |} \right) > \left( \dfrac{\psi_1}{\pi_1} + \dfrac{\delta_1}{| \pi_1 |} \right)$} \\[1em]
		\dfrac{1}{2} 
		\left[ \left( \dfrac{\psi_1}{\pi_1} - \dfrac{\delta_1}{|\pi_1|} \right) -
		\left( \dfrac{\psi_2}{\pi_2} + \dfrac{\delta_2}{|\pi_2|} \right) \right]^2
		&\text{if $\left( \dfrac{\psi_1}{\pi_1} - \dfrac{\delta_1}{|\pi_1|} \right) >
		\left( \dfrac{\psi_2}{\pi_2} + \dfrac{\delta_2}{|\pi_2|} \right)$}.
	\end{cases}
\]
Of all of these terms, only $\psi_\ell$ depends on $(\beta,\gamma_1,\gamma_2)$. Specifically,
\begin{align*}
	\psi
		&= \var(Z)^{-1} \cov(Z,Y) \\
		&=
		\begin{pmatrix}
			\var(Z_1) & \cov(Z_1,Z_2) \\
			\cov(Z_1,Z_2) & \var(Z_2)
		\end{pmatrix}^{-1}
		\begin{pmatrix}
			\cov(Z_1,Y) \\
			\cov(Z_2,Y)
		\end{pmatrix} \\
		&= \frac{1}{\var(Z_1) \var(Z_2) - \cov(Z_1,Z_2)^2}
		\begin{pmatrix}
			\var(Z_2) & -\cov(Z_1,Z_2) \\
			-\cov(Z_1,Z_2) & \var(Z_1)
		\end{pmatrix}
		\begin{pmatrix}
			\cov(Z_1,Y) \\
			\cov(Z_2,Y)
		\end{pmatrix}.
\end{align*}
So
\[
	\psi_1
	=
	\frac{\cov(Z_1,Y) \var(Z_2) - \cov(Z_2,Y) \cov(Z_1,Z_2)}{\var(Z_1) \var(Z_2) - \cov(Z_1,Z_2)^2}
\]
and
\[
	\psi_2
	=
	\frac{\cov(Z_2,Y) \var(Z_1) - \cov(Z_1,Y) \cov(Z_1,Z_2)}{\var(Z_1) \var(Z_2) - \cov(Z_1,Z_2)^2}.
\]
Now substitute
\[
	\cov(Z_\ell,Y) = \beta^* \cov(Z_\ell,X) + \sum_{j \in \{1,2\}} \gamma_j^* \cov(Z_\ell,Z_j)
\]
to obtain the desired function $Q_\textsc{mt}^*(\delta,\gamma_1,\gamma_2)$. As in the baseline case, notice that this function does not actually depend on the value of $\beta$.
\end{proof}

\begin{proof}[Proof of proposition \ref{prop:IdentSetForGammas}]
This follows immediately from the proof of theorem \ref{thm:idset:homog:gen}.
\end{proof}

\section{\cite{DurantonMorrowTurner2014}: Analysis of export value}\label{sec:additionalEmpirics}

\cite{DurantonMorrowTurner2014} consider two different outcome variables: Propensity to export weight and propensity to export value. In section \ref{sec:DMT2014} we focused on export weight. Here we briefly discuss the results for export value. Panel A of table \ref{DMTtable3} reproduces columns 5--8 of table 5 in \cite{DurantonMorrowTurner2014}. These are their main results for export value. As with our analysis of export weight, we add the falsification adaptive set to the last row of panel A. In panel B we present results using plan and exploration as instruments, controlling for railroads. Compared to their baseline results, our modified baseline results in panel B suggest that the effect of within city highways on propensity to export value is potentially large and negative. Their theoretical model, however, predicts that this effect should be small and positive (their comparative static 3). That said, there is substantial model uncertainty: The falsification adaptive set in column 4 is $[-0.19, 0.047]$, which includes small positive values along with the large negative values. There is substantial statistical uncertainty as well. Thus the data does not allow us to draw a definitive conclusion about the validity of their model's predicted comparative static.

\begin{table}[!thtp]
\scriptsize
\caption[]{\label{DMTtable3} Baseline 2SLS results for \cite{DurantonMorrowTurner2014}: The effect of highways on export value. Non-highlighted parts reproduce results from their paper. Highlighted parts are new. Panel A reproduces columns 1--4 of their table 5. It also shows the estimated falsification adaptive set. Panel B uses only two of their instruments, controlling for the other. See text for discussion.}
\vspace{2mm}

\begin{adjustwidth}{-0.25in}{-0.25in}

\setlength{\linewidth}{.1cm}
\newcommand{\contents}{
\centering

\begin{tabular}{lcccc}
\hline
\mystrut
& \multicolumn{4}{c}{Dependent variable: Export value} \\[0.3em]
 & (1) & (2) & (3) & (4) \\[0.4em]
 \hline
\multicolumn{5}{l}{Panel A. Plan, exploration, and railroads used as instruments} \mystrut \\[0.5em]
\hline
\mystrut
log highway km & 1.10*** & 0.17 & 0.070 & -0.026 \\
 & (0.17) & (0.16) & (0.14) & (0.12) \\[0.4em]
log employment &  & 0.91*** & 1.19** & 0.90** \\
 &  & (0.091) & (0.58) & (0.42) \\[0.4em]
Market access (export) &  & -0.19 & -0.38*** & -0.36*** \\
 &  & (0.12) & (0.14) & (0.11) \\[0.4em]
log 1920 population &  &  & -0.34 & -0.23 \\
 &  &  & (0.30) & (0.30) \\[0.4em]
log 1950 population &  &  & 0.95* & 0.49 \\
 &  &  & (0.48) & (0.51) \\[0.4em]
log 2000 population &  &  & -0.84 & -0.14 \\
 &  &  & (0.72) & (0.58) \\[0.4em]
log \% manuf. emp. &  &  &  & 0.83*** \\
 &  &  &  & (0.16) \\[0.4em]
 First-stage $F$ stat. & 97.5 & 90.3 & 80 & 84.8 \\[0.4em]
Overid. $p$-value & 0.081 & 0.071 & 0.28 & 0.55 \\[0.4em]
\rowcolor{lightgray}  FAS & [0.24, 0.78] & [-0.85, -0.085] & [-0.85, -0.01] & [-0.19, 0.047] \\[0.4em]
 \hline
\rowcolor{lightgray} \multicolumn{5}{l}{Panel B. Plan and exploration used as instruments, controlling for railroads} \mystrut \\[0.5em]
\hline
\rowcolor{lightgray}  \mystrut
 log highway km & 0.69** & -0.22 & -0.16 & -0.14 \\
\rowcolor{lightgray}  & (0.30) & (0.23) & (0.18) & (0.15) \\[0.4em]
\rowcolor{lightgray} log 1898 railroad km & 0.41** & 0.32** & 0.22* & 0.11 \\
\rowcolor{lightgray}  & (0.20) & (0.14) & (0.13) & (0.11) \\[0.4em]
\rowcolor{lightgray}  First-stage $F$ stat. & 61.1 & 65.4 & 77.8 & 82.2 \\[0.4em]
\rowcolor{lightgray} Overid. $p$-value & 0.60 & 0.47 & 0.41 & 0.74 \\[0.4em]
\hline
\multicolumn{5}{p{\linewidth}}{\emph{Notes}: 66 observations per column. All specifications include a constant. Heteroskedasticity robust standard errors in parentheses. ***, **, *: statistically significant at 1\%, 5\%, 10\%.}
\end{tabular}
}

\setbox0=\hbox{\contents}
\setlength{\linewidth}{\wd0-2\tabcolsep-.25em}
\contents

\end{adjustwidth}
\end{table}

\section{Proofs for section \ref{sec:homogModel}}\label{sec:proofsHomogModel}

\begin{proof}[Proof of proposition \ref{prop:testableImplicationsOfLinearIV}]
This result is well known, but we include a proof for completeness. Suppose equation \eqref{eq:classicalSarganEqs} holds for all $m, \ell \in \{ 1,\ldots,L\}$. We will construct a joint distribution $(Y,X,Z,\widetilde{U})$ and a parameter $\widetilde{\beta}$ consistent with the data and assumptions A\ref{assump:homog:relevance:gen}--A\ref{assump:exclusion:gen}.

By the relevance assumption A\ref{assump:homog:relevance:gen}, there exists an $\ell$ such that $\cov(X,Z_\ell) \neq 0$. Let
\[
	\widetilde{\beta} = \frac{\cov(Y,Z_\ell)}{\cov(X,Z_\ell)}.
\]
Let $\widetilde{U} = Y - X \widetilde{\beta}$. For every $m \in \{1,\ldots,L\}$,
\begin{align*}
	\cov(\widetilde{U}, Z_m) 
	&= \cov(Y,Z_m) - \cov(X,Z_m) \widetilde{\beta} \\
	&= \cov(Y,Z_m) - \cov(X,Z_m) \frac{\cov(Y,Z_\ell)}{\cov(X,Z_\ell)}\\
	&= \frac{\cov(Y,Z_m) \cov(X,Z_\ell) - \cov(Y,Z_\ell)\cov(X,Z_m)}{\cov(X,Z_\ell)}\\
	&= 0.
\end{align*}
Thus A\ref{assump:exogeneity:gen} holds. A\ref{assump:exclusion:gen} holds by definition of $\widetilde{U}$. A\ref{assump:homog:nonsing:gen} holds automatically. Thus the model is not refuted.

Next suppose the model is not refuted. Then there exists a joint distribution of $(Y,X,Z,U)$ and a value $\beta$ consistent with the model assumptions and the data. By A\ref{assump:exogeneity:gen} we have
\begin{align*}
	0
		&= \cov(U,Z_\ell) \\
		&= \cov(Y - X \beta, Z_\ell) \\
		&= \cov(Y,Z_\ell) - \beta \cov(X,Z_\ell)
\end{align*}
for all $\ell \in \{1,\ldots,L \}$. Suppose $\beta = 0$. Then $\cov(Y,Z_\ell) = 0$ for all $\ell$, and hence equation \eqref{eq:classicalSarganEqs} holds for all $m, \ell \in \{1,\ldots,L\}$. 

Suppose $\beta \neq 0$. Suppose $\cov(X,Z_\ell) = 0$. Then the above equation implies that we must have $\cov(Y,Z_\ell) = 0$ as well. Hence equation \eqref{eq:classicalSarganEqs} holds for this $\ell$ and any other $m \in \{ 1,\ldots, L \}$.

Finally, consider any pair $m$ and $\ell$ such that $\cov(X,Z_m) \neq 0$ and $\cov(X,Z_\ell) \neq 0$. Then the above equation implies
\[
	\frac{\cov(Y,Z_\ell)}{\cov(X,Z_\ell)} = \beta = \frac{\cov(Y,Z_m)}{\cov(X,Z_m)}.
\]
Thus we have shown that equation \eqref{eq:classicalSarganEqs} holds for all $m, \ell \in \{1,\ldots,L\}$. 
\end{proof}

\begin{proof}[Proof of theorem \ref{thm:idset:homog:gen}]
First we show that any value of $\beta$ consistent with the model must lie in $\mathcal{B}(\delta)$. By the outcome equation \eqref{eq:constantCoeffOutcomeEqGen} and the instrument exogeneity A\ref{assump:exogeneity:gen},
\begin{align*}
	\cov(Z,Y)
		&= \cov(Z, X' \beta + Z' \gamma + U) \\
		&= \cov(Z,X)\beta + \var(Z)\gamma.
\end{align*}
By A\ref{assump:homog:nonsing:gen},
\[
	\gamma = \var(Z)^{-1} ( \cov(Z,Y) - \cov(Z,X) \beta ).
\]
Since $-\delta \leq \gamma\leq \delta$ (component-wise) by A\ref{assump:exclusion:gen}$^\prime$, we have $\beta \in \mathcal{B}(\delta)$.

Next we show that $\mathcal{B}(\delta)$ is sharp. Let $b \in \mathcal{B}(\delta)$. Define
\[
	\gamma = \var(Z)^{-1}(\cov(Z,Y) - \cov(Z,X)b).
\]
Then $\gamma$ satisfies A\ref{assump:exclusion:gen}$^\prime$ by definition of $\mathcal{B}(\delta)$. Next, define $\widetilde{U} \equiv Y - X'b - Z'\gamma$. Then
\begin{align*}
	\cov(Z, \widetilde{U})
		&= \cov(Z,Y - X'b - Z' \gamma) \\
		&= \cov(Z,Y - X'b - Z'\var(Z)^{-1}(\cov(Z,Y) - \cov(Z,X)b))\\
		&= \cov(Z,Y) - \cov(Z,X)b - \var(Z)\var(Z)^{-1}(\cov(Z,Y) - \cov(Z,X)b)\\
		&= 0.
\end{align*}
Thus A\ref{assump:exogeneity:gen} holds. Hence $\mathcal{B}(\delta)$ is sharp. That the model is refuted if and only if this set is empty follows by the definition of the (sharp) identified set.
\end{proof}

\begin{proof}[Proof of corollary \ref{corr:K1identBetaSetLinearIV}]
Write the identified set from theorem \ref{thm:idset:homog:gen} as
\begin{align*}
	\mathcal{B}(\delta)
	&= \left\{ b\in\R: -\delta \leq \psi - b \pi \leq \delta\right\}\\
	&= \left\{ b \in \R: -\delta_\ell \leq \psi_\ell - b \pi_\ell \leq \delta_\ell, \ \ell = 1,\ldots, L \right\}\\
	&= \left\{ b \in \R: \psi_\ell-\delta_\ell \leq b \pi_\ell \leq \psi_\ell + \delta_\ell, \ \ell = 1,\ldots, L \right\}.
\end{align*}
Equation \eqref{eq:K1identBetaSetLinearIV} follows immediately by considering the three cases $\pi_\ell = 0$, $\pi_\ell < 0$, and $\pi_\ell > 0$ separately.
\end{proof}

\begin{proof}[Proof of lemma \ref{lemma:interpretingPsiOverPi}]
Without loss of generality, let $\ell = 1$. The result for $\ell \neq 1$ can be obtained by permuting the components of the vector $Z$. Then $\widetilde{X} = (X,Z_2,\ldots,Z_L)$. Hence
\[
	\cov(Z,\widetilde{X}_1)
	= 
	\begin{pmatrix}
	\cov(Z_1,X) & \cov(Z_1, Z_{-1})\\
	\cov(Z_{-1},X) & \var(Z_{-1})
	\end{pmatrix}.
\]
By block matrix inversion, the first row of $\cov(Z,\widetilde{X}_1)^{-1}$ is
\begin{multline*}
	e_1'\cov(Z,\widetilde{X}_1)^{-1}
	= \\
	\begin{pmatrix}
		(\cov(Z_1,X) - \cov(Z_1,Z_{-1}) \var(Z_{-1})^{-1} \cov(Z_{-1},X))^{-1}  \\[0.5em]
	 	-(\cov(Z_1,X) - \cov(Z_1,Z_{-1}) \var(Z_{-1})^{-1} \cov(Z_{-1},X))^{-1} \cov(Z_1,Z_{-1})\var(Z_{-1})^{-1}
	\end{pmatrix}'.
\end{multline*}
Hence
\begin{align*}
	&e_1'\cov(Z,\widetilde{X}_1)^{-1} \cov(Z,Y) \\
	&= \frac{\cov(Z_1,Y) - \cov(Z_1,Z_{-1})\var(Z_{-1})^{-1}\cov(Z_{-1},Y)}{\cov(Z_1,X) - \cov(Z_1,Z_{-1}) \var(Z_{-1})^{-1} \cov(Z_{-1},X)}\\
	&= \frac{\cov(Z_1 -  \cov(Z_1,Z_{-1})\var(Z_{-1})^{-1} Z_{-1},Y)}{\cov(Z_1 - \cov(Z_1,Z_{-1}) \var(Z_{-1})^{-1} Z_{-1},X)}\\
	&= \frac{\cov(\widetilde{Z}_1,Y)}{\cov(\widetilde{Z}_1,X)}\\
	&= \frac{\cov(\widetilde{Z}_1,Y)}{\var(\widetilde{Z}_1)} \Bigg/ \frac{\cov(\widetilde{Z}_1,X)}{\var(\widetilde{Z}_1)} \\
	&= \frac{\psi_1}{\pi_1}.
\end{align*}
Here we defined $\widetilde{Z}_1 = Z_1 -  \cov(Z_1,Z_{-1})\var(Z_{-1})^{-1} Z_{-1}$. This is the population residual of $Z_1$ after removing the projection of $Z_1$ onto $Z_{-1}$. The last line thus follows by the partitioned regression formula.
\end{proof}

\begin{proof}[Proof of equation \eqref{eq:2SLSwithAndWithoutControls} on page \pageref{eq:2SLSwithAndWithoutControls}]
As in the proof of lemma \ref{lemma:interpretingPsiOverPi}, without loss of generality suppose $\ell = 1$. Suppose the baseline model ($\delta = 0$) holds. Then
\begin{align*}
	\frac{\psi_1}{\pi_1}
	&= \frac{\cov(\widetilde{Z}_1,Y)}{\cov(\widetilde{Z}_1,X)} \\
	&= \frac{\cov(\widetilde{Z}_1, \beta X + U)}{\cov(\widetilde{Z}_1,X)} \\
	&= \beta \frac{\cov(\widetilde{Z}_1,X) }{\cov(\widetilde{Z}_1,X)} \\
	&= \beta.
\end{align*}
Similarly, $\cov(Z_1,Y) / \cov(Z_1,X) = \beta$. Equation \eqref{eq:2SLSwithAndWithoutControls} follows.
\end{proof}

We use the following lemma in the proofs of proposition \ref{prop:K1FFhomogTrt} and theorem \ref{thm:identSetOnFFhomogTrt}. It says that the identified set for $\beta$ is a singleton at any point $\delta$ in the set FF defined in equation \eqref{eq:linearIVgeneralFF}.

\begin{lemma}\label{lemma:FFisSingleton}
Suppose A\ref{assump:homog:relevance:gen}--A\ref{assump:exogeneity:gen} hold. Suppose $K=1$. Let
\[
	b \in \left[\min_{\ell=1,\ldots,L:\pi_\ell \neq 0} \frac{\psi_\ell}{\pi_\ell}, \max_{\ell=1,\ldots,L:\pi_\ell \neq 0} \frac{\psi_\ell}{\pi_\ell}\right].
\]
Define $\delta(b) = (|\psi_1 - b\pi_1|, \ldots, |\psi_L - b \pi_L|)$. Then $\mathcal{B}(\delta(b)) = \{ b \}$.
\end{lemma}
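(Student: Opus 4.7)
The plan is to reduce the lemma to Corollary \ref{corr:K1identBetaSetLinearIV}, which expresses $\mathcal{B}(\delta)$ as an intersection of simple intervals. First I would note that by A\ref{assump:homog:relevance:gen} and $K=1$, at least one $\pi_\ell$ is nonzero, so the min and max in the statement are taken over a nonempty index set. Let $\mathcal{L} = \{\ell : \pi_\ell \neq 0\}$, and pick $\ell_{\min} \in \arg\min_{\ell \in \mathcal{L}} \psi_\ell/\pi_\ell$ and $\ell_{\max} \in \arg\max_{\ell \in \mathcal{L}} \psi_\ell/\pi_\ell$.

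Next I would show $b \in \mathcal{B}(\delta(b))$. For $\ell \in \mathcal{L}$, Corollary \ref{corr:K1identBetaSetLinearIV} gives $B_\ell(\delta_\ell(b)) = [\psi_\ell/\pi_\ell - |\psi_\ell/\pi_\ell - b|, \; \psi_\ell/\pi_\ell + |\psi_\ell/\pi_\ell - b|]$, which plainly contains $b$. For $\ell \notin \mathcal{L}$, $\delta_\ell(b) = |\psi_\ell|$, so $0 \in [\psi_\ell - \delta_\ell(b), \psi_\ell + \delta_\ell(b)]$ and $B_\ell(\delta_\ell(b)) = \R$. Taking the intersection, $b$ lies in $\mathcal{B}(\delta(b))$.

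The main step is to show $\mathcal{B}(\delta(b)) \subseteq \{b\}$. Here I would use the two distinguished instruments $\ell_{\min}$ and $\ell_{\max}$. Since $b \geq \psi_{\ell_{\min}}/\pi_{\ell_{\min}}$, the absolute value in $B_{\ell_{\min}}(\delta_{\ell_{\min}}(b))$ resolves to $b - \psi_{\ell_{\min}}/\pi_{\ell_{\min}}$, giving $B_{\ell_{\min}}(\delta_{\ell_{\min}}(b)) = [2\psi_{\ell_{\min}}/\pi_{\ell_{\min}} - b,\; b]$, whose upper endpoint is exactly $b$. Symmetrically, $b \leq \psi_{\ell_{\max}}/\pi_{\ell_{\max}}$ yields $B_{\ell_{\max}}(\delta_{\ell_{\max}}(b)) = [b, \; 2\psi_{\ell_{\max}}/\pi_{\ell_{\max}} - b]$, whose lower endpoint is exactly $b$. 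Intersecting just these two intervals already forces membership in $\{b\}$, so $\mathcal{B}(\delta(b)) \subseteq B_{\ell_{\min}}(\delta_{\ell_{\min}}(b)) \cap B_{\ell_{\max}}(\delta_{\ell_{\max}}(b)) = \{b\}$.

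I do not expect any real obstacle: once Corollary \ref{corr:K1identBetaSetLinearIV} is invoked, the argument is a one-line observation that the two extremal just-identified 2SLS estimands pin $b$ from above and below. The only mild subtlety is the degenerate case $\min_{\ell \in \mathcal{L}}\psi_\ell/\pi_\ell = \max_{\ell \in \mathcal{L}}\psi_\ell/\pi_\ell$, which forces $b$ to equal this common value and every $\delta_\ell(b)$ for $\ell \in \mathcal{L}$ to be zero; then each $B_\ell(\delta_\ell(b))$ is the singleton $\{b\}$, and the conclusion is immediate.
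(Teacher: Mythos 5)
Your proposal is correct and follows essentially the same route as the paper's proof: both invoke Corollary \ref{corr:K1identBetaSetLinearIV} to write $\mathcal{B}(\delta(b))$ as an intersection of intervals centered at $\psi_\ell/\pi_\ell$ with half-width $\left|\psi_\ell/\pi_\ell - b\right|$, each of which therefore has $b$ as an endpoint, so the intersection collapses to $\{b\}$. Your version is if anything slightly tighter, since you make explicit both where the hypothesis $b \in [\min_\ell \psi_\ell/\pi_\ell,\ \max_\ell \psi_\ell/\pi_\ell]$ is used (the two extremal instruments pin $b$ from above and below) and why the $\pi_\ell = 0$ instruments impose no constraint, two points the paper's chain of equalities leaves implicit.
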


\begin{proof}[Proof of lemma \ref{lemma:FFisSingleton}]
We have
\begin{align*}
		\mathcal{B}(\delta(b))
		&= \bigcap_{\ell =1,\ldots,L:\pi_\ell \neq 0} \left[\dfrac{\psi_\ell}{\pi_\ell} - \dfrac{\delta_\ell(b)}{|\pi_\ell|}, \dfrac{\psi_\ell}{\pi_\ell} + \dfrac{\delta_\ell(b)}{|\pi_\ell|} \right]\\
		&= \bigcap_{\ell =1,\ldots,L:\pi_\ell \neq 0} 
		\left[\dfrac{\psi_\ell}{\pi_\ell} - \dfrac{|\psi_\ell - b \pi_\ell|}{|\pi_\ell|}, \ \dfrac{\psi_\ell}{\pi_\ell} + \dfrac{|\psi_\ell - b\pi_\ell|}{|\pi_\ell|} \right]\\
		&= \bigcap_{\ell =1,\ldots,L:\pi_\ell \neq 0}  \left[\dfrac{\psi_\ell}{\pi_\ell} - \left|\frac{\psi_\ell}{\pi_\ell} -b \right|, \dfrac{\psi_\ell}{\pi_\ell} + \left|\frac{\psi_\ell}{\pi_\ell} -b \right| \right]\\
		&= \left(\bigcap_{\ell =1,\ldots,L: \psi_\ell \geq b \pi_\ell, \pi_\ell \neq 0}
		\left[\dfrac{\psi_\ell}{\pi_\ell} - \left|\frac{\psi_\ell}{\pi_\ell} -b \right|, \dfrac{\psi_\ell}{\pi_\ell} + \left|\frac{\psi_\ell}{\pi_\ell} -b \right| \right]\right) \\
		&\qquad \bigcap 
		\left(\bigcap_{\ell =1,\ldots,L: \psi_\ell < b \pi_\ell, \pi_\ell \neq 0}
		\left[\dfrac{\psi_\ell}{\pi_\ell} - \left|\frac{\psi_\ell}{\pi_\ell} -b \right|, \dfrac{\psi_\ell}{\pi_\ell} + \left|\frac{\psi_\ell}{\pi_\ell} -b \right| \right]\right) \\
		&= \left(\bigcap_{\ell =1,\ldots,L: \psi_\ell \geq b \pi_\ell, \pi_\ell \neq 0}
		\left[b, 2\frac{\psi_\ell}{\pi_\ell} - b \right]\right)
		\bigcap 
		\left(\bigcap_{\ell =1,\ldots,L: \psi_\ell < b \pi_\ell, \pi_\ell \neq 0}
		\left[2\dfrac{\psi_\ell}{\pi_\ell} - b, b\right]\right)\\
		&= \{b\}.
\end{align*}
The first line follows by equation \eqref{eq:K1identBetaSetLinearIV}, and by the definition of $\delta(b)$. The second line also uses the definition of $\delta(b)$. The remaining lines follow by considering two cases so that we can eliminate the absolute values.
\end{proof}

\begin{proof}[Proof of proposition \ref{prop:K1FFhomogTrt}]
Let $\text{FF}$ denote the true falsification frontier. Let
\[
	\text{FF}^\text{guess} = \left\{ \delta \in \R^L_{\geq 0}: \delta_\ell = | \psi_\ell - b \pi_\ell |, \ \ell=1,\ldots,L, \ b \in \left[\min_{\ell=1,\ldots,L:\pi_\ell \neq 0} \frac{\psi_\ell}{\pi_\ell}, \max_{\ell=1,\ldots,L:\pi_\ell \neq 0} \frac{\psi_\ell}{\pi_\ell}\right]\right\}.
\]
We will show $\text{FF} = \text{FF}^\text{guess}$. We split the proof in three parts. The first two parts together show that $\text{FF}^\text{guess} \subseteq \text{FF}$. The third part shows that $\text{FF}^\text{guess} \supseteq \text{FF}$.
\begin{enumerate}
\item We first show that if $\delta \in \text{FF}^\text{guess}$, then the identified set $\mathcal{B}(\delta)$ is not empty. This follows immediately from lemma \ref{lemma:FFisSingleton}.

\item We next show that $\delta'<\delta$ for $\delta \in \text{FF}^\text{guess}$ implies that $\mathcal{B}(\delta')$ is empty. So let $\delta' < \delta$ where $\delta \in \text{FF}^\text{guess}$ and $\delta' \geq 0$. Consider two cases.
\begin{enumerate}
\item First suppose $\delta'_\ell < \delta_\ell$ for some $\ell$ such that $\pi_\ell = 0$. By the definition of $\text{FF}^\text{guess}$, $\delta_\ell = | \psi_\ell |$.
\begin{itemize}
\item Suppose $\psi_\ell > 0$. Then $\delta_\ell = \psi_\ell$ and hence $[\psi_\ell - \delta_\ell, \psi_\ell + \delta_\ell] = [0, 2 \psi_\ell]$. Since $\delta_\ell' < \delta_\ell$, $\psi_\ell - \delta_\ell' > 0$ and so 0 is not an element of $[\psi_\ell - \delta_\ell', \psi_\ell + \delta_\ell']$. Hence $B_\ell(\delta'_\ell) = \emptyset$ by equation \eqref{eq:K1identBetaSetLinearIV}.

\item Suppose $\psi_\ell < 0$. Then $\delta_\ell = -\psi_\ell$ and hence $[\psi_\ell - \delta_\ell, \psi_\ell + \delta_\ell] = [2 \psi_\ell, 0]$. Since $\delta_\ell' < \delta_\ell$, $\psi_\ell + \delta_\ell' < 0$ and so 0 is not an element of $[\psi_\ell - \delta_\ell', \psi_\ell + \delta_\ell' ]$. Hence $B_\ell(\delta'_\ell) = \emptyset$ by equation \eqref{eq:K1identBetaSetLinearIV}.

\item Suppose $\psi_\ell = 0$. Then $\delta_\ell = 0$ and so we cannot have $\delta_\ell' < \delta_\ell$, since $\delta_\ell' \geq 0$.
\end{itemize}
Thus in this case we must have $\mathcal{B}(\delta') = \emptyset$.

\item Next suppose $\delta'_\ell < \delta_\ell$ for some $\ell$ such that $\pi_\ell \neq 0$. $\delta' < \delta$ implies that $\mathcal{B}(\delta') \subseteq \mathcal{B}(\delta)$. By lemma \ref{lemma:FFisSingleton}, $\mathcal{B}(\delta) = \{ b^* \}$ for some value $b^*$. Thus it suffices to show that $b^* \notin \mathcal{B}(\delta')$. That will imply that $\mathcal{B}(\delta') = \emptyset$.

\medskip

To show that $b^* \notin \mathcal{B}(\delta')$ it suffices to show that $b^* \notin B_\ell(\delta')$ for some $\ell$, since $\mathcal{B}(\delta')$ is the intersection of these sets over all $\ell$'s, by corollary \ref{corr:K1identBetaSetLinearIV}. From that corollary we have
\[
	B_\ell(\delta') = \left[\dfrac{\psi_\ell}{\pi_\ell} - \dfrac{\delta_\ell'}{|\pi_\ell|}, \dfrac{\psi_\ell}{\pi_\ell} + \dfrac{\delta_\ell'}{|\pi_\ell|} \right].
\]
Consider two cases:
\begin{enumerate}
\item Suppose $b^* \leq \psi_\ell / \pi_\ell$. Then
\begin{align*}
	\dfrac{\psi_\ell}{\pi_\ell} - \dfrac{\delta_\ell}{|\pi_\ell|}
	&= \dfrac{\psi_\ell}{\pi_\ell} - \dfrac{|\psi_\ell - b^*\pi_\ell|}{|\pi_\ell|} \\
	&= \dfrac{\psi_\ell}{\pi_\ell} - \left|\frac{\psi_\ell}{\pi_\ell} -b^*\right| \\
	&= b^*.
\end{align*}
Hence
\begin{align*}
	\dfrac{\psi_\ell}{\pi_\ell} - \dfrac{\delta_\ell'}{|\pi_\ell|} 
		&> \dfrac{\psi_\ell}{\pi_\ell} - \dfrac{\delta_\ell}{|\pi_\ell|} \\
		&= b^*.
\end{align*}
Thus $b^* \notin B_\ell(\delta')$.

\medskip

\item Suppose $b^* > \psi_\ell / \pi_\ell$. Then
\begin{align*}
	\dfrac{\psi_\ell}{\pi_\ell} + \dfrac{\delta_\ell}{|\pi_\ell|}
		&= \dfrac{\psi_\ell}{\pi_\ell} + \dfrac{|\psi_\ell - b^*\pi_\ell|}{|\pi_\ell|} \\
		&= \dfrac{\psi_\ell}{\pi_\ell} + \left|\frac{\psi_\ell}{\pi_\ell} -b^*\right| \\
		&= b^*.
\end{align*}
Hence
\begin{align*}
	\dfrac{\psi_\ell}{\pi_\ell} + \dfrac{\delta_\ell'}{|\pi_\ell|} 
		&< \dfrac{\psi_\ell}{\pi_\ell} + \dfrac{\delta_\ell}{|\pi_\ell|} \\
		&= b^*.
\end{align*}
Thus $b^* \notin B_\ell(\delta')$.
\end{enumerate}
\end{enumerate}
Steps 1 and 2 together imply that $\text{FF}^\text{guess} \subseteq \text{FF}$.

\item Finally, we show that $\text{FF}^\text{guess} \supseteq \text{FF}$. We show the contrapositive: $\delta \notin \text{FF}^\text{guess}$ implies that $\delta \notin \text{FF}$. So let $\delta \notin \text{FF}^\text{guess}$. There are three cases to consider.
\begin{enumerate}
\item Suppose $\mathcal{B}(\delta) = \emptyset$. Then $\delta \notin \text{FF}$ by definition.

\item Denote
\[
	b_\text{min} = \min_{\ell=1,\ldots,L:\pi_\ell \neq 0} \frac{\psi_\ell}{\pi_\ell}
	\qquad \text{and} \qquad
	b_\text{max} = \max_{\ell=1,\ldots,L:\pi_\ell \neq 0} \frac{\psi_\ell}{\pi_\ell}.
\]
Suppose $\mathcal{B}(\delta) \neq \emptyset$ and $\mathcal{B}(\delta) \subseteq [b_\text{min}, b_\text{max}]$. We'll show that we can find a $\delta' < \delta$ such that $\mathcal{B}(\delta') \neq \emptyset$, and hence $\delta \notin \text{FF}$. We use two observations:

\begin{enumerate}
\item Since $\delta \notin \text{FF}^\text{guess}$ we must have $\delta \neq \delta(b)$ for all $b \in [b_\text{min}, b_\text{max}]$. In particular, $\delta \neq \delta(b)$ for all $b \in \mathcal{B}(\delta)$, since we're considering the case where $\mathcal{B}(\delta) \subseteq [b_\text{min}, b_\text{max}]$.

\medskip

\item Let $b \in \mathcal{B}(\delta)$. By the characterization of $\mathcal{B}(\delta)$ in theorem \ref{thm:idset:homog:gen} (specifically, see equation \eqref{eq:CalBisIntersectionOfHalfspaces} in our discussion below), this implies that $| \psi_\ell - b \pi_\ell | \leq \delta_\ell$ for all $\ell =1,\ldots,L$. That is, $\delta(b) \leq \delta$.
\end{enumerate}
Let $b'$ be any element of $\mathcal{B}(\delta)$. This exists by assumption. Let $\delta' = \delta(b')$. Observation (i) implies $\delta' \neq \delta$. Observation (ii) implies $\delta' \leq \delta$. Hence $\delta' < \delta$. Moreover, we have $\mathcal{B}(\delta') = \mathcal{B}(\delta(b')) = \{ b' \} \neq \emptyset$ by lemma \ref{lemma:FFisSingleton}. Thus $\delta \notin \text{FF}$, by definition of the falsification frontier.

\item Suppose $\mathcal{B}(\delta)$ contains an element $b \notin [b_\text{min}, b_\text{max}]$. Suppose $b > b_\text{max}$. Let $\delta' = \delta(b_\text{max})$. 
\begin{itemize}
\item Observation (i) above implies $\delta' \neq \delta$.

\medskip

\item Next we show $\delta' \leq \delta$. Observation (ii) above implies $\delta_\ell(b) = | \psi_\ell - b \pi_\ell | \leq \delta_\ell$ for all $\ell=1,\ldots,L$. So it suffices to show
\[
	\delta_\ell(b_\text{max}) = | \psi_\ell - b_\text{max} \pi_\ell | \leq | \psi_\ell - b \pi_\ell | = \delta_\ell(b)
\]
for all $\ell$. This holds immediately for all $\ell$ with $\pi_\ell = 0$. For the other $\ell$'s, it follows by the structure of the $\delta_\ell(\cdot)$ function, the definition
\[
	b_\text{max} = \max_{\ell=1,\ldots,L:\pi_\ell \neq 0} \frac{\psi_\ell}{\pi_\ell},
\]
and the fact that $b > b_\text{max}$. Specifically, notice that for $\ell$ such that $\pi_\ell \neq 0$, $\delta_\ell(\cdot)$ is zero at $\psi_\ell / \pi_\ell$ and otherwise is strictly increasing as the distance between $\psi_\ell / \pi_\ell$ and its argument increases. (Specifically, it is a V-shaped function, whose slope is determined by $\pi_\ell$, but this specific shape is not important here.) Since $b > b_\text{max}$, $b > \psi_s / \pi_s$ for all $s$ with $\pi_s \neq 0$. Thus we must have $\delta_\ell(b) > \delta_\ell(\psi_s/\pi_s)$ for all $\ell$ and $s$ with $\pi_\ell \neq 0$ and $\pi_s \neq 0$. In particular, it holds for the $s$ such that $b_\text{max} = \psi_s / \pi_s$.
\end{itemize}
Hence $\delta' < \delta$. Moreover, $\mathcal{B}(\delta') = \mathcal{B}(\delta(b_\text{max})) = \{ b_\text{max} \} \neq \emptyset$ by lemma \ref{lemma:FFisSingleton}. So $\delta \notin \text{FF}$, by definition of the falsification frontier. A similar argument applies if instead we have $b < b_\text{min}$.
\end{enumerate} 
\end{enumerate}
\end{proof}

\begin{proof}[Proof of corollary \ref{corr:K1L2FFhomogTrt}]
The falsification frontier is given by
\begin{align*}
		\text{FF}
	&= \left\{ \delta \in \R^2_{\geq 0}: \delta_\ell = | \psi_\ell - b \pi_\ell |, \ \ell=1,2, \ b \in \left[\min_{\ell=1,2} \frac{\psi_\ell}{\pi_\ell}, \max_{\ell=1,2} \frac{\psi_\ell}{\pi_\ell}\right]\right\},
\end{align*}
a line segment in $\R^2$. We can directly see that $\text{FF}$ is the line segment between
\[
	\left( 0, \left|\psi_2 - \frac{\psi_1}{\pi_1}\pi_2\right| \right)
	\qquad \text{and} \qquad
	\left( \left|\psi_1 - \frac{\psi_2}{\pi_2}\pi_1\right|, 0 \right)
\]
which corresponds the equation
\[
	| \pi_2 | \left| \frac{\psi_1}{\pi_1} - \frac{\psi_2}{\pi_2} \right| - \left| \frac{\pi_2}{\pi_1} \right| \delta_1 = \delta_2
\]
for nonnegative values of $\delta_1$ and $\delta_2$.
\end{proof}

\begin{proof}[Proof of theorem \ref{thm:identSetOnFFhomogTrt}]
We have
\begin{align*}
\bigcup_{\delta\in\text{FF}}\mathcal{B}(\delta) &= \bigcup_{b \in \left[\min_{\ell=1,\ldots,L:\pi_\ell \neq 0} \frac{\psi_\ell}{\pi_\ell}, \max_{\ell=1,\ldots,L:\pi_\ell \neq 0} \frac{\psi_\ell}{\pi_\ell}\right]} \mathcal{B}(\delta(b))\\
&= \bigcup_{b \in \left[\min_{\ell=1,\ldots,L:\pi_\ell \neq 0} \frac{\psi_\ell}{\pi_\ell}, \max_{\ell=1,\ldots,L:\pi_\ell \neq 0} \frac{\psi_\ell}{\pi_\ell}\right]} \{b\}\\
 &= \left[\min_{\ell=1,\ldots,L:\pi_\ell \neq 0} \frac{\psi_\ell}{\pi_\ell}, \max_{\ell=1,\ldots,L:\pi_\ell \neq 0} \frac{\psi_\ell}{\pi_\ell}\right].
\end{align*}
The first equality follows by the characterization of the falsification frontier in proposition \ref{prop:K1FFhomogTrt}. The second equality follows by lemma \ref{lemma:FFisSingleton}.
\end{proof}

\begin{proof}[Proof of proposition \ref{prop:directionalFalsificationPoint}]
Let
\[
	m^* =
	\max_{\ell,\ell' \in \{1,\ldots,L\}} \;
	\frac{\dfrac{\psi_\ell}{\pi_\ell} - \dfrac{\psi_{\ell'}}{\pi_{\ell'}}
	}{
	\dfrac{d_\ell}{|\pi_\ell|} + \dfrac{d_{\ell'}}{|\pi_{\ell'}|}
	}.
\]
Suppose $m < m^*$. Then there is an $s$ and an $s'$ such that
\[
	m <
	\frac{
	\dfrac{\psi_s}{\pi_s} - \dfrac{\psi_{s'}}{\pi_{s'}} 
	}{
	\dfrac{d_s}{| \pi_s |} - \dfrac{d_{s'}}{| \pi_{s'} |}
	}.
\]
Rearranging yields
\[
	\frac{\psi_s}{\pi_s} - \frac{m d_s}{| \pi_s |}
	>
	\frac{\psi_{s'}}{\pi_{s'}} + \frac{m d_{s'}}{| \pi_{s'} |}.
\]
Hence
\[
	\max_{\ell = 1,\ldots,L} \left( \frac{\psi_\ell}{\pi_\ell} - \frac{m d_\ell}{| \pi_\ell |} \right)
	>
	\min_{\ell' = 1,\ldots,L} \left( \frac{\psi_{\ell'}}{\pi_{\ell'}} + \frac{m d_{\ell'}}{| \pi_{\ell'} |} \right).
\]
But recall from corollary \ref{eq:K1identBetaSetLinearIV} that for any $\delta$,
\begin{align*}
	\mathcal{B}(\delta)
		&= \bigcap_{\ell=1,\ldots,L} \left[\dfrac{\psi_\ell}{\pi_\ell} - \dfrac{\delta_\ell}{|\pi_\ell|}, \dfrac{\psi_\ell}{\pi_\ell} + \dfrac{\delta_\ell}{|\pi_\ell|} \right] \\
		&= \left[ \max_{\ell = 1,\ldots,L} \left( \frac{\psi_\ell}{\pi_\ell} - \frac{m d_\ell}{| \pi_\ell |} \right), \min_{\ell' = 1,\ldots,L} \left( \frac{\psi_{\ell'}}{\pi_{\ell'}} + \frac{m d_{\ell'}}{| \pi_{\ell'} |} \right) \right].
\end{align*}
We just showed, however, that for $\delta = m \cdot d$, the left endpoint is larger than the right endpoint. Hence $\mathcal{B}(m \cdot d)$ is empty. 

Now, let $\delta = m^*\cdot d$. Let $s, s'$ be such that 
\[
m^* = 
	\frac{
	\dfrac{\psi_s}{\pi_s} - \dfrac{\psi_{s'}}{\pi_{s'}} 
	}{
	\dfrac{d_s}{| \pi_s |} - \dfrac{d_{s'}}{| \pi_{s'} |}
	}.
	\]
Rearranging yields
\[
	\frac{\psi_s}{\pi_s} - \frac{m^* d_s}{| \pi_s |}
	=
	\frac{\psi_{s'}}{\pi_{s'}} + \frac{m^* d_{s'}}{| \pi_{s'} |}.
\]
Let $b^*$ denote this common value.

Next let $\ell \in \{1,\ldots,L\}$ be arbitrary. We will show that $b^* \in B_\ell(\delta)$. We prove this by contradiction. Consider two cases.
\begin{enumerate}
\item Suppose
\[
	\frac{\psi_\ell}{\pi_\ell} + \frac{m^* d_\ell}{|\pi_\ell|} < b^*.
\]
Rearranging this equation yields
\[
	\frac{
	\dfrac{\psi_s}{\pi_s} - \dfrac{\psi_{\ell}}{\pi_{\ell}} 
	}{
	\dfrac{d_s}{| \pi_s |} + \dfrac{d_{\ell}}{| \pi_{\ell} |}
	} > m^*.
\]
This contradicts the definition of $s$.

\item Suppose
\[
	b^* < \frac{\psi_\ell}{\pi_\ell} + \frac{m^* d_\ell}{|\pi_\ell|}.
\]
Rearranging this equation yields
\[
	\frac{
	\dfrac{\psi_\ell}{\pi_\ell} - \dfrac{\psi_{s'}}{\pi_{s'}} 
	}{
	\dfrac{d_\ell}{| \pi_\ell |} + \dfrac{d_{s'}}{| \pi_{s'} |}
	} > m^*.
\]
This contradicts the definition of $s'$.
\end{enumerate}
Thus $b^* \in B_\ell(\delta)$ for any $\ell\in\{1,\ldots,L\}$. Hence $b^* \in \mathcal{B}(\delta)$. Thus implies that $\mathcal{B}(\delta)$ is not empty, as desired.
\end{proof}

\begin{proof}[Proof of corollary \ref{corr:identSetDirectionalFP}]
This result follows immediately from corollary \ref{corr:K1identBetaSetLinearIV}, proposition \ref{prop:K1FFhomogTrt}, and proposition \ref{prop:directionalFalsificationPoint}.
\end{proof}

\newpage
\subsubsection*{Proofs for section \ref{sec:linearModelGeneralK}: $K$ endogenous variables}

To prove the results in this section, we first review the structure of the identified set for $\beta$. From theorem \ref{thm:idset:homog:gen}, this set is
\begin{align*}
	\mathcal{B}(\delta)
	&= \left\{b\in \R^K: - \delta \leq \psi - \Pi b \leq \delta \right\}\\
	&= \bigcap_{\ell = 1}^L \left\{b\in \R^K: - \delta_\ell \leq \psi_\ell - \pi_\ell' b \leq \delta_\ell \right\}
\end{align*}
where $\pi_\ell'$ is the $\ell$th row of the matrix $\Pi$. So if we let
\[
	B_\ell(\delta_\ell) = \left\{b\in \R^K: - \delta_\ell \leq \psi_\ell - \pi_\ell' b \leq \delta_\ell \right\}
\]
then
\begin{equation}\label{eq:CalBisIntersectionOfHalfspaces}
	\mathcal{B}(\delta) = \bigcap_{\ell = 1}^L B_\ell(\delta_\ell).
\end{equation}
In our baseline model with $\delta_\ell = 0$, $B_\ell(0)$ is a hyperplane in $\R^K$:
\[
	B_\ell(0) = \{b\in\R^K: \psi_\ell = \pi_\ell' b \}.
\]
As we relax exclusion, this set goes from being a hyperplane to the space between two hyperplanes. Figures \ref{fig:LisKplus1FAS_inside}, \ref{fig:LisKplus1FAS_outside}, and \ref{fig:generalLinearFAS} plot examples of these spaces. The overall identified set $\mathcal{B}(\delta)$ is just the intersection of these spaces.

With that as background, we next prove our main results: proposition \ref{prop:generalLinearFF}, our characterization of the falsification frontier, and theorem \ref{thm:generalLinearFAS}, our characterization of the falsification adaptive set. We do this using a sequence of lemmas. 

We begin by showing a basic geometric fact about the set $\text{FAS}^*$ when $L = K+1$. Here and elsewhere we use the notation $\beta^\textsc{2sls}_{-\ell} = \beta_{\{1,\ldots,L\} \setminus \{ \ell \}}^\textsc{2sls}$.

\begin{lemma}\label{lemma:singletonFAS}
Suppose A\ref{assump:homog:relevance:gen}$^\prime$, A\ref{assump:homog:nonsing:gen}, and A\ref{assump:exogeneity:gen} hold. Suppose $L = K +1$. Then exactly one of the following holds:
\begin{enumerate}
\item $\beta^\textsc{2sls}_{-\ell} = \beta^\textsc{2sls}_{-\ell'}$ for all $\ell,\ell'\in\{1,\ldots,L\}$.

\item $\pi_\ell'\beta^\textsc{2sls}_{-\ell} \neq \psi_\ell$ for all $\ell \in \{1,\ldots,L\}$.
\end{enumerate}
\end{lemma}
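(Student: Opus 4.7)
The plan is to prove mutual exclusivity and exhaustiveness separately, using the geometric interpretation of $\beta^\textsc{2sls}_{-\ell}$ as the unique intersection point of $K$ hyperplanes in $\R^K$. Recall that $B_\ell(0) = \{b \in \R^K : \pi_\ell' b = \psi_\ell\}$ is a hyperplane, and by A\ref{assump:homog:relevance:gen}$^\prime$.1 the matrix $\Pi_{\mathcal{L}}$ has full rank for every $\mathcal{L}$ with $|\mathcal{L}|=K$. Therefore, for any $\ell$, the intersection $\bigcap_{\ell' \neq \ell} B_{\ell'}(0)$ consists of a unique point, which by construction is $\beta^\textsc{2sls}_{-\ell} = \Pi_{-\ell}^{-1} \psi_{-\ell}$.

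For mutual exclusivity, I would observe that if case 1 holds with common value $b^*$, then $b^* = \beta^\textsc{2sls}_{-\ell}$ lies in $\bigcap_{\ell' \neq \ell} B_{\ell'}(0)$ for every $\ell$. Taking the union over $\ell$, $b^*$ lies on all $L$ hyperplanes, so in particular $\pi_\ell' b^* = \psi_\ell$ for all $\ell$, which contradicts case 2.

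For exhaustiveness, the main (and essentially only) step is to show that if case 2 fails, then case 1 holds. So suppose there exists $\ell_0$ with $\pi_{\ell_0}' \beta^\textsc{2sls}_{-\ell_0} = \psi_{\ell_0}$, i.e., $\beta^\textsc{2sls}_{-\ell_0} \in B_{\ell_0}(0)$. Combined with the fact that $\beta^\textsc{2sls}_{-\ell_0} \in \bigcap_{\ell \neq \ell_0} B_\ell(0)$ by definition, this shows that $\beta^\textsc{2sls}_{-\ell_0}$ lies in all $L$ hyperplanes $B_1(0),\ldots,B_L(0)$. Now fix any other index $\ell' \neq \ell_0$. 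Then $\beta^\textsc{2sls}_{-\ell_0} \in \bigcap_{\ell \neq \ell'} B_\ell(0)$, a set which by A\ref{assump:homog:relevance:gen}$^\prime$.1 is a singleton equal to $\{\beta^\textsc{2sls}_{-\ell'}\}$. Hence $\beta^\textsc{2sls}_{-\ell'} = \beta^\textsc{2sls}_{-\ell_0}$ for every $\ell'$, which is case 1.

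The key conceptual point, and the main obstacle to articulate cleanly, is the geometric identification of $\beta^\textsc{2sls}_{-\ell}$ with the unique intersection of the hyperplanes $\{B_{\ell'}(0) : \ell' \neq \ell\}$; once this is in hand, the dichotomy becomes the simple observation that the set $\{\beta^\textsc{2sls}_{-\ell}\}_{\ell=1}^L$ either consists of one common point lying on all $L$ hyperplanes, or no single element of it lies on its own omitted hyperplane. There is no interesting intermediate configuration because the uniqueness guaranteed by A\ref{assump:homog:relevance:gen}$^\prime$.1 forces the collapse as soon as one such incidence occurs.
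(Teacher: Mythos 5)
Your proof is correct and follows essentially the same route as the paper's: both arguments hinge on the observation that if some $\beta^\textsc{2sls}_{-\ell}$ satisfies its own omitted constraint $\pi_\ell' b = \psi_\ell$, then it lies on all $L$ hyperplanes and hence, by the uniqueness of the just-identified solution guaranteed by A1$^\prime$.1, must coincide with every other $\beta^\textsc{2sls}_{-\ell'}$. The paper phrases this as an "iff" between case 1 and the negation of case 2, while you split it into exclusivity and exhaustiveness, but the content is identical.
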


\begin{proof}[Proof of lemma \ref{lemma:singletonFAS}]
We will show that (1) holds if and only if (2) does not hold.
\begin{enumerate}
\item $(\Rightarrow$) Suppose statement (1) holds. So $\beta^\textsc{2sls}_{-\ell} = \beta^\textsc{2sls}_{-\ell'}$ for all $\ell,\ell'\in\{1,\ldots,L\}$. Then $\pi_\ell'\beta^\textsc{2sls}_{-\ell} = \pi_\ell'\beta^\textsc{2sls}_{-\ell'} = \psi_\ell$ for any $\ell$ and any $\ell'\neq \ell$. Therefore statement (2) does not hold.

\item $(\Leftarrow)$. Suppose (2) does not hold. Then there exists an $\ell$ such that $\pi_\ell'\beta^\textsc{2sls}_{-\ell} = \psi_\ell$. By definition of $\beta_{-\ell}^\textsc{2sls}$, we have $\pi_{\ell'}'\beta^\textsc{2sls}_{-\ell} = \psi_{\ell'}$ for all $\ell' \neq \ell$. But now we know it also holds for $\ell' = \ell$. Thus $\beta^\textsc{2sls}_{-\ell}$ is an element of \emph{every} hyperplane $B_{\ell'}(0)$.  This implies that
\[
	\left(\Pi_{\{1,\ldots,L\}\setminus\{\ell'\}} \right)b = \psi_{\{1,\ldots,L\}\setminus\{\ell'\}}
\]
is satisfied by $b = \beta^\textsc{2sls}_{-\ell}$. But this equation is also satisfied by $b = \beta^\textsc{2sls}_{-\ell'}$, by definition of this estimand. By A\ref{assump:homog:relevance:gen}$^\prime$, this equation has a unique solution. Thus $\beta^\textsc{2sls}_{-\ell'} = \beta^\textsc{2sls}_{-\ell}$. This argument holds for all $\ell' \in \{1,\ldots,L \} \setminus \{ \ell \}$. Thus (1) holds.
\end{enumerate}
\end{proof}

The next lemma shows that, when $L=K+1$ and $\text{FAS}^*$ is not a singleton, we can write any element of $\R^K$ as a weighted sum of our just-identified 2SLS estimands.

\begin{lemma}\label{lemma:weightedAverageOf2SLS}
Suppose A\ref{assump:homog:relevance:gen}$^\prime$ holds. Suppose $L = K+1$. Assume that $\text{FAS}^*$ is not a singleton. Then for any $b \in \R^K$ there exist weights $w_1(b),\ldots,w_L(b)$ such that
\[
	b = \sum_{\ell =1}^L w_\ell(b) \beta_{-\ell}^\textsc{2sls}
\]
and the weights that sum to one, $\sum_{\ell =1}^L w_\ell (b) = 1$.
\end{lemma}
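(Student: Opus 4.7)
The plan is to show that the $K+1$ points $\{\beta_{-\ell}^{\textsc{2sls}}\}_{\ell=1}^{L}$ are affinely independent in $\R^K$. Since $L = K+1$, affine independence implies that these points are the vertices of a non-degenerate simplex whose affine hull is all of $\R^K$. Then any $b \in \R^K$ admits a (unique) representation as an affine combination of them, i.e., with barycentric weights $w_1(b),\ldots,w_L(b)$ summing to one, which is exactly the conclusion of the lemma.

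To establish affine independence, I would argue by contradiction. Suppose there exist scalars $\alpha_1,\ldots,\alpha_L$, not all zero, satisfying $\sum_{\ell=1}^L \alpha_\ell = 0$ and $\sum_{\ell=1}^L \alpha_\ell \beta_{-\ell}^{\textsc{2sls}} = 0$. Fix any $j \in \{1,\ldots,L\}$ and take the inner product of this identity with $\pi_j$. By definition of the just-identified 2SLS estimand $\beta_{-\ell}^{\textsc{2sls}}$ (solving $\pi_m'b = \psi_m$ for all $m \neq \ell$), we have $\pi_j' \beta_{-\ell}^{\textsc{2sls}} = \psi_j$ for every $\ell \neq j$, so
\[
0 = \sum_{\ell=1}^L \alpha_\ell \pi_j' \beta_{-\ell}^{\textsc{2sls}} = \Big(\sum_{\ell \neq j} \alpha_\ell\Big)\psi_j + \alpha_j \pi_j' \beta_{-j}^{\textsc{2sls}} = -\alpha_j \psi_j + \alpha_j \pi_j' \beta_{-j}^{\textsc{2sls}} = \alpha_j\big(\pi_j' \beta_{-j}^{\textsc{2sls}} - \psi_j\big),
\]
using $\sum_\ell \alpha_\ell = 0$ in the third equality.

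Here the hypothesis that $\text{FAS}^*$ is not a singleton enters: by lemma \ref{lemma:singletonFAS}, case (2) must hold, i.e., $\pi_j' \beta_{-j}^{\textsc{2sls}} \neq \psi_j$ for every $j \in \{1,\ldots,L\}$. Therefore $\alpha_j = 0$ for all $j$, contradicting the assumption that some $\alpha_\ell$ was nonzero. This proves affine independence, and hence the barycentric representation.

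The main (minor) obstacle is precisely the algebraic manipulation in the displayed equation above: one has to exploit simultaneously the defining property of $\beta_{-\ell}^{\textsc{2sls}}$ (which gives $\pi_j'\beta_{-\ell}^{\textsc{2sls}} = \psi_j$ for $\ell \neq j$) and the constraint $\sum_\ell \alpha_\ell = 0$ to isolate the single term $\alpha_j(\pi_j'\beta_{-j}^{\textsc{2sls}} - \psi_j)$. Once that term is isolated, lemma \ref{lemma:singletonFAS} does the rest of the work; the remainder of the argument is the standard fact that $K+1$ affinely independent points in $\R^K$ generate $\R^K$ as their affine hull.
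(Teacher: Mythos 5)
Your proof is correct and rests on the same two ingredients as the paper's: the identity $\pi_j'\beta_{-\ell}^{\textsc{2sls}} = \psi_j$ for $\ell \neq j$ obtained by premultiplying the dependence relation by $\pi_j'$, and lemma \ref{lemma:singletonFAS} to rule out $\pi_j'\beta_{-j}^{\textsc{2sls}} = \psi_j$. The only difference is packaging: the paper parametrizes the weights with $w_L$ eliminated, reduces to nonsingularity of the matrix of differences $\beta_{-\ell}^{\textsc{2sls}} - \beta_{-L}^{\textsc{2sls}}$, and then iterates the premultiplication argument through $\pi_L, \pi_{L-1},\ldots$ to kill the coefficients one partial sum at a time, whereas your symmetric affine-independence formulation uses $\sum_\ell \alpha_\ell = 0$ from the outset and isolates each $\alpha_j$ in a single step -- a cleaner route to the same conclusion.
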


\begin{proof}[Proof of lemma \ref{lemma:weightedAverageOf2SLS}]
Since we want the weights to sum to one, the last weight is always defined by
\[
	1 - \sum_{\ell=1}^{L=1} w_\ell(b)
\]
once we've chosen the first $L-1$ weights. Thus we want to find $(w_1(b),\ldots,w_{L-1}(b))$ such that
\[
	b =
	\begin{pmatrix}
		\beta_{-1}^\textsc{2sls} & \cdots & \beta_{-L}^\textsc{2sls}
	\end{pmatrix}
	\begin{pmatrix}
		w_1(b)\\
		\vdots\\
		1 - \sum_{\ell =1}^{L-1} w_\ell(b)
	\end{pmatrix}
\]
holds. Multiplying this out we get
\[
	b = \sum_{s=1}^{L-1} \beta_{-s}^\textsc{2sls} w_s(b) + \left( \beta_{-L}^\textsc{2sls} - \beta_{-L}^\textsc{2sls} \sum_{s=1}^{L-1} w_s(b) \right)
\]
or
\[
	b - \beta_{-L}^\textsc{2sls} =
	\sum_{s=1}^{L-1} (\beta_{-s}^\textsc{2sls} - \beta_{-L}^\textsc{2sls}) w_s(b).
\]
Thus in matrix notation we have
\[
	b - \beta_{-L}^\textsc{2sls}
	= \underbrace{
	\begin{pmatrix}
		\beta_{-1}^\textsc{2sls} - \beta_{-L}^\textsc{2sls} &
		\cdots &
		\beta_{-(L-1)}^\textsc{2sls} - \beta_{-L}^\textsc{2sls}
	\end{pmatrix}
	}_{(L-1)\times(L-1)}
	\begin{pmatrix}
		w_1(b)\\
		\vdots\\
		w_{L-1}(b)
	\end{pmatrix}.
\]
This system has a unique solution if the matrix
\[
	\begin{pmatrix}
	\beta_{-1}^\textsc{2sls} - \beta_{-L}^\textsc{2sls} & \cdots & \beta_{-(L-1)}^\textsc{2sls} - \beta_{-L}^\textsc{2sls}
	\end{pmatrix}
\]
has full rank. So that's what we'll show.

This matrix doesn't have full rank if we can find a $c \in \R^{L-1}$ with $c \neq \0_{L-1}$ such that
\begin{equation}\label{eq:checkingRankForWeightedAvgLemma}
	\begin{pmatrix}
	\beta_{-1}^\textsc{2sls} - \beta_{-L}^\textsc{2sls} & \cdots & \beta_{-(L-1)}^\textsc{2sls} - \beta_{-L}^\textsc{2sls}
	\end{pmatrix}
	c
	= \mathbf{0}_{L-1}.
\end{equation}
Multiplying out we see that this equality is equivalent to
\[
	\sum_{\ell =1}^{L-1} c_\ell \beta_{-\ell}^\textsc{2sls}
	=
	\beta_{-L}^\textsc{2sls} \sum_{\ell =1}^{L-1} c_\ell.
\]
Premultiplying by $\pi_L'$ gives
\begin{align*}
	\pi_L' \beta_{-L}^\textsc{2sls} \sum_{\ell =1}^{L-1} c_\ell 
	&= \sum_{\ell =1}^{L-1} c_\ell \pi_L' \beta_{-\ell}^\textsc{2sls} \\
	&= \psi_L \sum_{\ell =1}^{L-1} c_\ell.
\end{align*}
The second line follows since $\beta_{-\ell}^\textsc{2sls}$ lies on the line $\psi_L = \pi_L' \tilde{b}$ for all $\ell \neq L$, by definition of these 2SLS estimands. Thus
\[
	(\psi_L - \pi_L' \beta_{-L}^\textsc{2sls}) \sum_{\ell =1}^{L-1} c_\ell
	= 0.
\]
By lemma \ref{lemma:singletonFAS}, $\psi_L \neq \pi_L' \beta_{-L}^\textsc{2sls}$. So we must have
\[
	\sum_{\ell =1}^{L-1} c_\ell =0.
\]
We can now go back and substitute this into equation \eqref{eq:checkingRankForWeightedAvgLemma} to get
\[
	\begin{pmatrix}
	\beta_{-1}^\textsc{2sls} - \beta_{-L}^\textsc{2sls} & \cdots & \beta_{-(L-1)}^\textsc{2sls} - \beta_{-L}^\textsc{2sls}
	\end{pmatrix}
	\begin{pmatrix}
		c_1 \\
		\vdots \\
		- \sum_{\ell-1}^{L-2} c_\ell
	\end{pmatrix}
	= \mathbf{0}_{L-1}.
\]
Multiplying this out yields
\[
	\sum_{\ell=1}^{L-2} c_\ell \beta_{-\ell}^\textsc{2sls} - \beta_{-L}^\textsc{2sls} \sum_{\ell=1}^{L-2} c_\ell
	+ (\beta_{-(L-1)} - \beta_{-L}) \left( - \sum_{\ell=1}^{L-2} c_\ell \right) = \mathbf{0}_{L-1}.
\]
Simplifying gives
\[
	  \beta_{-(L-1)}^\textsc{2sls} \sum_{\ell =1}^{L-2}c_\ell
	  =  \sum_{\ell =1}^{L-2}c_\ell \beta_{-\ell}^\textsc{2sls}.
\]
Premultiply both sides by $\pi_{L-1}$ and repeat the arguments above to get
\[
	\sum_{\ell =1}^{L-2}c_\ell = 0.
\]
Repeat this argument iteratively to see that $c = \textbf{0}_{L-1}$. Thus the matrix
\[
	\begin{pmatrix}
	\beta_{-1}^\textsc{2sls} - \beta_{-L}^\textsc{2sls} & \cdots & \beta_{-(L-1)}^\textsc{2sls} - \beta_{-L}^\textsc{2sls}
	\end{pmatrix}
\]
is nonsingular.
\end{proof}

Now recall our general definition of the falsification frontier: It is the set of values $\delta \in \R^L_{\geq 0}$ such that $\mathcal{B}(\delta) \neq \emptyset$ and $\mathcal{B}(\delta') = \emptyset$ for any $\delta' < \delta$. We first show that, in the $L=K+1$ case, the identified set for $\beta$ is a singleton at any point $\delta$ in the conjectured falsification frontier defined in equation \eqref{eq:KisLplus1FF}. For this we define
\[
	\delta_\ell(b) = | \psi_\ell - \pi_\ell' b |
\]
for all $\ell =1,\ldots,L$. Let $\delta(b) = (\delta_1(b),\ldots,\delta_L(b))$.

\begin{lemma}\label{lemma:generalLinearSingletonOnFF}
Suppose A\ref{assump:homog:relevance:gen}$^\prime$, A\ref{assump:homog:nonsing:gen}, and A\ref{assump:exogeneity:gen} hold. Suppose $L = K +1 $. Let $b \in \text{FAS}^*$. Then $\mathcal{B}(\delta(b)) = \{ b \}$.
\end{lemma}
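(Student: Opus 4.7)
The plan is to show the two inclusions $\{b\} \subseteq \mathcal{B}(\delta(b))$ and $\mathcal{B}(\delta(b)) \subseteq \{b\}$. The first is immediate from the definition: $|\psi_\ell - \pi_\ell' b| = \delta_\ell(b)$ for every $\ell$, so $b$ satisfies all the defining inequalities of $\mathcal{B}(\delta(b))$ with equality. The real content is the reverse inclusion, and I would split on the two cases provided by lemma \ref{lemma:singletonFAS}.

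In the easy case (1), all $\beta_{-\ell}^\textsc{2sls}$ coincide with the singleton $b$, and since each $\beta_{-\ell}^\textsc{2sls}$ by definition satisfies $\pi_{\ell'}' \beta_{-\ell}^\textsc{2sls} = \psi_{\ell'}$ for every $\ell' \neq \ell$, cycling $\ell$ through $\{1,\ldots,L\}$ gives $\pi_{\ell}'b = \psi_\ell$ for all $\ell$. Hence $\delta(b)=0$ and $\mathcal{B}(\delta(b)) = \{\tilde b : \Pi \tilde b = \psi\}$, which is a singleton because $\Pi$ has rank $K$ by A\ref{assump:homog:relevance:gen}$^\prime$.1.

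The interesting case (2) assumes $\alpha_\ell := \pi_\ell' \beta_{-\ell}^\textsc{2sls} - \psi_\ell \neq 0$ for every $\ell$. Here I would invoke lemma \ref{lemma:weightedAverageOf2SLS} to write every $\tilde b \in \R^K$ \emph{uniquely} as $\tilde b = \sum_\ell w_\ell(\tilde b)\beta_{-\ell}^\textsc{2sls}$ with $\sum_\ell w_\ell(\tilde b) = 1$. The key computation is then to plug this representation into $\pi_\ell' \tilde b$ and use $\pi_\ell' \beta_{-\ell'}^\textsc{2sls} = \psi_\ell$ for $\ell' \neq \ell$, which collapses the sum to
\[
\psi_\ell - \pi_\ell' \tilde b = -w_\ell(\tilde b)\,\alpha_\ell.
\]
Applied at $\tilde b = b$ (where the weights are nonnegative because $b \in \text{FAS}^*$), this gives $\delta_\ell(b) = w_\ell(b)\,|\alpha_\ell|$.

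The conclusion then follows from an elegant bookkeeping argument, which I expect to be the only subtle step. For any $\tilde b \in \mathcal{B}(\delta(b))$, the constraint $|\psi_\ell - \pi_\ell' \tilde b| \le \delta_\ell(b)$ becomes $|w_\ell(\tilde b)|\,|\alpha_\ell| \le w_\ell(b)\,|\alpha_\ell|$, and because $\alpha_\ell \neq 0$ we can cancel to get $w_\ell(\tilde b) \le |w_\ell(\tilde b)| \le w_\ell(b)$ for every $\ell$. Summing these inequalities yields $1 = \sum_\ell w_\ell(\tilde b) \le \sum_\ell w_\ell(b) = 1$, so equality must hold term by term, $w_\ell(\tilde b) = w_\ell(b)$, and uniqueness of the representation gives $\tilde b = b$. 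The main obstacle was recognizing that the weights are unique (lemma \ref{lemma:weightedAverageOf2SLS}) and that the $L$ slab constraints translate cleanly into coordinate-wise bounds on these weights; once this is in place, the sum-to-one normalization finishes the argument.
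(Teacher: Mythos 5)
Your proof is correct and rests on the same machinery as the paper's: the case split from lemma \ref{lemma:singletonFAS}, the unique affine representation from lemma \ref{lemma:weightedAverageOf2SLS}, and the identity $\psi_\ell - \pi_\ell' \tilde b = w_\ell(\tilde b)\,(\psi_\ell - \pi_\ell' \beta_{-\ell}^\textsc{2sls})$, which is exactly the paper's lemma \ref{lemma:substitutingWeightedAverage}. The only divergence is in the endgame. The paper first rewrites $\text{FAS}^*$ as an intersection of half-spaces with a sign normalization making $\psi_\ell - \pi_\ell' b \le 0$ on $\text{FAS}^*$, splits $\mathcal{B}(\delta(b))$ into two one-sided polyhedra $\mathcal{P}_1(b) \cap \mathcal{P}_2(b)$, and derives a contradiction from $b^* \in \mathcal{P}_1(b)$ alone. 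You instead keep both sides of each slab as the single inequality $|w_\ell(\tilde b)|\,|\alpha_\ell| \le w_\ell(b)\,|\alpha_\ell|$, cancel $|\alpha_\ell| \neq 0$, and close with the sum-to-one normalization. This avoids the sign-normalization bookkeeping entirely and is marginally cleaner; the two arguments are otherwise the same computation. One point you state without proof but which does hold: since the affine representation is unique and $b \in \text{FAS}^*$ admits a convex-combination representation, the weights $w_\ell(b)$ must coincide with those convex weights and hence are nonnegative (including the boundary case $w_\ell(b)=0$, where your inequality correctly forces $w_\ell(\tilde b)=0$).
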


Figure \ref{fig:LisKplus1FAS_inside} illustrates the geometric intuition for this lemma.

\begin{proof}[Proof of lemma \ref{lemma:generalLinearSingletonOnFF}]
By lemma \ref{lemma:singletonFAS}, there are two cases to consider: $\text{FAS}^*$ is either a singleton or a nondegenerate simplex in $\R^K$.

\bigskip

\textbf{Case 1}. Suppose $\text{FAS}^* =\{b\}$ is a singleton. By the definition of $\text{FAS}^*$, this implies that $b = \beta_\mathcal{L}^\textsc{2SLS}$ for any $\mathcal{L} \subseteq \{ 1,\ldots, L \}$ with $| \mathcal{L} | = K$. Moreover, for any such $\mathcal{L}$,
\begin{align*}
	\bigcap_{\ell \in \mathcal{L}} B_\ell(0) 
		&= \{ \tilde{b} \in \R^K : \Pi_\mathcal{L} \tilde{b} = \psi_\mathcal{L} \} \\
		&= \{ b \}.
\end{align*}
Let $\mathcal{L}_1$ and $\mathcal{L}_2$ be subsets of $\{1,\ldots,L\}$ such that $\mathcal{L}_1 \cup \mathcal{L}_2 = \mathcal{L}$. Then
\begin{align*}
	\mathcal{B}(0)
		&= \bigcap_{\ell=1}^L B_\ell(0) \\
		&= \left( \bigcap_{\ell \in \mathcal{L}_1} B_\ell(0)  \right) \bigcap \left( \bigcap_{\ell \in \mathcal{L}_2} B_\ell(0)  \right) \\
		&= \{ b \}.
\end{align*}
The first line holds by equation \eqref{eq:CalBisIntersectionOfHalfspaces}.
Thus $\mathcal{B}(\delta(b)) = \mathcal{B}(0) = \{ b \}$.

\bigskip

\textbf{Case 2}. Suppose $\text{FAS}^*$ is not a singleton. Then $\pi_\ell'\beta^\textsc{2sls}_{-\ell} \neq \psi_\ell$ for all $\ell \in \{1,\ldots,L\}$. We prove equality of sets by showing that both directions of set inclusion hold.

\medskip

\textbf{Step 1 ($\supseteq$)}. First we show that $\mathcal{B}(\delta(b)) \supseteq \{ b \}$. By definition of $\delta_\ell(\cdot)$,
\[
	\psi_\ell - \pi_\ell' b \in [-\delta_\ell(b),\delta_\ell(b)]
\]
for all $\ell$. Thus, by the characterization of $\mathcal{B}(\cdot)$ in theorem \ref{thm:idset:homog:gen}, $b \in \mathcal{B}(\delta(b))$.

\bigskip

\textbf{Step 2 ($\subseteq$)}. Next we show that $\mathcal{B}(\delta(b)) \subseteq \{ b \}$. First suppose $\delta_\ell(b) = 0$ for all $\ell$. In this case the baseline model is not refuted. $b$ is then the unique value that satisfies all of the instrument constraints. So $\text{FAS}^*$ is a singleton. This is a contradiction. So we must have $\delta_\ell(b) > 0$ for some $\ell$. 

We will show that any element $b^* \neq b$ is not in $\mathcal{B}(\delta(b))$. The set $\text{FAS}^*$ is a polytope. Consider its alternative half-space representation. The half-spaces correspond to one side of the hyperplanes $B_\ell(0)$. For example, the shaded areas in figure \ref{FAS_K2L3} can be written as the intersection of half-spaces formed by the plotted lines. Formally, write
\begin{equation}\label{eq:halfSpaceFAS}
	\text{FAS}^* = \bigcap_{\ell =1}^L \{ \tilde{b} \in \R^K: \psi_\ell - \pi_\ell' \tilde{b} \leq 0\}.
\end{equation}
Here we assume without loss of generality that all of the inequalities go in the same direction ($\leq 0$).
\begin{itemize}
\item If one of the half-spaces originally has the form $\psi_\ell - \pi_\ell' \tilde{b} \geq 0$, it can be rewritten as $-\psi_\ell - (-\pi_\ell' \tilde{b}) \leq 0$. This is equivalent to replacing $Z_\ell$ with $- Z_\ell$. As shown in appendix \ref{sec:transformedInstruments}, changing the sign of an instrument does not affect the estimands $\beta_\mathcal{L}^\textsc{2sls}$ and therefore does not affect the set $\text{FAS}^*$.
\end{itemize}
Now recall that
\begin{align*}
	\mathcal{B}(\delta)
	&= \bigcap_{\ell = 1}^L \left\{ \tilde{b} \in \R^K: - \delta_\ell \leq \psi_\ell - \pi_\ell' \tilde{b} \leq \delta_\ell \right\} \\
	&= \left(\bigcap_{\ell =1}^L
	\{ \tilde{b} \in \R^K: \psi_\ell - \pi_\ell' \tilde{b} \geq - \delta_\ell \}\right) 
	\bigcap 
	\left(\bigcap_{\ell =1}^L
	\{\tilde{b} \in \R^K: \psi_\ell - \pi_\ell' \tilde{b} \leq \delta_\ell \}\right).
\end{align*}
Evaluating this expression at $\delta(b)$ yields
\begin{align*}
	\mathcal{B}(\delta(b))
	&= \left(\bigcap_{\ell =1}^L \{ \tilde{b} \in \R^K: \psi_\ell - \pi_\ell' \tilde{b} \geq - | \psi_\ell - \pi_\ell' b |\}\right) 
	\bigcap 
	\left(\bigcap_{\ell =1}^L\{\tilde{b} \in \R^K: \psi_\ell - \pi_\ell' \tilde{b} \leq |\psi_\ell - \pi_\ell' b |\}\right) \\
	&\equiv \mathcal{P}_1(b) \cap \mathcal{P}_2(b).
\end{align*}
By lemma \ref{lemma:weightedAverageOf2SLS}, any element in $\R^K$ can be written as a linear combination of our $L$ different just-identified 2SLS estimands. In particular, we can write $b^*$ in this way:
\[
	b^* = \sum_{\ell =1}^L w_\ell(b^*) \beta^\textsc{2sls}_{-\ell}
\]
where $\beta^\textsc{2sls}_{-\ell} = \beta_{\{1,\ldots,L\} \setminus \{ \ell \}}^\textsc{2sls}$ and where $w_\ell(b^*)$ are weights that sum to one, $\sum_{\ell =1}^L w_\ell (b^*) = 1$. (These weights are not necessarily non-negative.) Remember that our goal is to show that $b^* \neq b$ implies $b^* \notin \mathcal{B}(\delta(b))$. By our decomposition above, it suffices to show that $b^* \notin \mathcal{P}_1(b)$. 

Since $b \in \text{FAS}^*$,
\[
	\psi_\ell - \pi_\ell' b \leq 0
\]
for all $\ell$. This follows directly from our half-space representation of $\text{FAS}^*$. Thus
\[
	-|\psi_\ell - \pi_\ell' b | = \psi_\ell - \pi_\ell' b
\]
for all $\ell$. Hence 
\begin{align*}
	 \mathcal{P}_1(b)
	 	&= \bigcap_{\ell =1}^L \{ \tilde{b} \in \R^K: \psi_\ell - \pi_\ell' \tilde{b} \geq - | \psi_\ell - \pi_\ell' b |\}\ \\
	 	&= \bigcap_{\ell =1}^L \{ \tilde{b} \in \R^K: \psi_\ell - \pi_\ell' \tilde{b} \geq \psi_\ell - \pi_\ell' b \} \\
	 	&= \bigcap_{\ell =1}^L \{ \tilde{b} \in \R^K: \pi_\ell' (\tilde{b} - b) \leq 0 \}.
\end{align*}
So $b^* \in \mathcal{P}_1(b)$ if and only if $\pi_\ell' (b^* - b) \leq 0$ for all $\ell$. Focus on just one $\ell$ for a moment. Then
\begin{align*}
	\pi_{\ell}' (b^* - b) 
	&= \sum_{s =1}^{L} (w_s(b^*) - w_s(b))\pi_{\ell}' \beta^\textsc{2sls}_{-s}\\
	&= \sum_{s =1}^{L} (w_s(b^*) - w_s(b))\pi_{\ell}' \beta^\textsc{2sls}_{-s} - \pi_{\ell}' \beta^\textsc{2sls}_{-\ell} \sum_{s=1}^L (w_s(b^*) - w_s(b)) \\
	&= \sum_{s \neq \ell} (w_s(b^*) - w_s(b))\pi_{\ell}' \beta^\textsc{2sls}_{-s}
	- \pi_{\ell}' \beta^\textsc{2sls}_{-\ell} \sum_{s \neq \ell} (w_s(b^*) - w_s(b)) \\
	&= \sum_{s \neq \ell} (w_s(b^*) - w_s(b)) \psi_\ell
	- \pi_{\ell}' \beta^\textsc{2sls}_{-\ell} \sum_{s \neq \ell} (w_s(b^*) - w_s(b)) \\
	&= (\psi_{\ell}  - \pi_{\ell}' \beta^\textsc{2sls}_{-\ell})
	\sum_{s \neq \ell} (w_s(b^*) - w_s(b)).
\end{align*}
The first line follows by substituting in our linear combination representation of any element in $\R^K$ for both $b^*$ and $b$. The second line follows since the weights sum to one, so the sum of the difference in weights is zero. The third line comes from cancelling the $\ell$th term in the two summations. The fourth line follows since, for all $s \neq \ell$, $\beta_{-s}^\textsc{2sls}$ lies on the line $\psi_\ell = \pi_\ell' \tilde{b}$. This holds by the definition of these 2SLS estimands.

Next notice that $\psi_{\ell}  - \pi_{\ell}' \beta^\textsc{2sls}_{-\ell} < 0$. This follows from $\beta^\textsc{2sls}_{-\ell}\in \text{FAS}^*$, the half-space representation of $\text{FAS}^*$, and the fact that $\text{FAS}^*$ is a nondegenerate simplex. So suppose by way of contradiction that $b^* \in \mathcal{P}_1(b)$. Then $\pi_\ell' (b^* - b ) \leq 0$ for all $\ell$. We've just seen that this implies
\[
	\sum_{s \neq \ell} (w_s(b^*) - w_s(b)) \geq 0
\]
for all $\ell$. But now we have
\begin{align*}
	0
		&= \sum_{s=1}^L (w_s(b^*) - w_s(b)) \\
		&= \sum_{s \neq \ell} (w_s(b^*) - w_s(b)) + (w_\ell(b^*) - w_\ell(b)).
\end{align*}
Thus
\begin{align*}
	w_\ell(b^*) - w_\ell(b)
		&= - \sum_{s \neq \ell} (w_s(b^*) - w_s(b)) \\
		&\leq 0.
\end{align*}
This holds for all $\ell$. But the only way for the sum of non-positive numbers to equal zero is if they're all zero. Thus $w_\ell(b^*) = w_\ell(b)$ for all $\ell$. This implies that $b^* = b$, a contradiction. Therefore $b^* \notin \mathcal{P}_1(b)$. As discussed above, this suffices to see that $b^* \notin \mathcal{B}(\delta(b))$. 
\end{proof}

\begin{figure}[!t]
\centering
\includegraphics[width=0.32\linewidth]{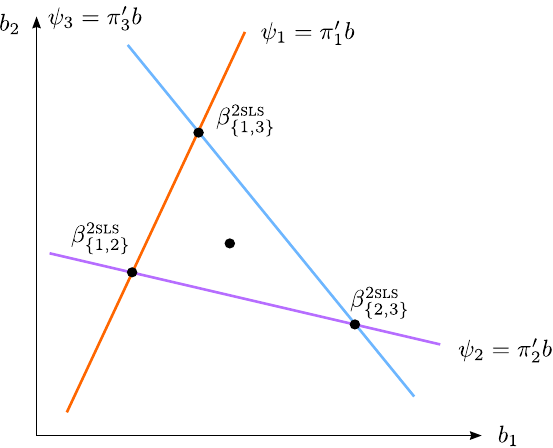}
\includegraphics[width=0.32\linewidth]{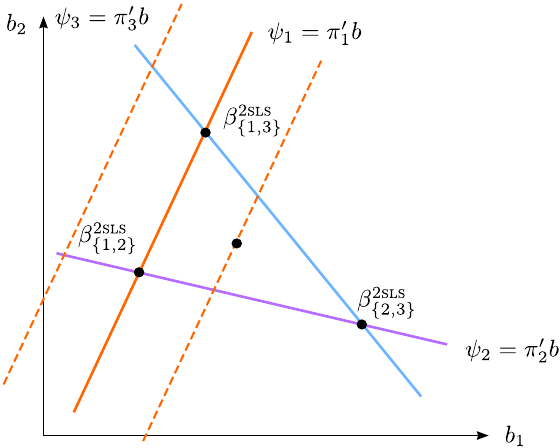}
\includegraphics[width=0.32\linewidth]{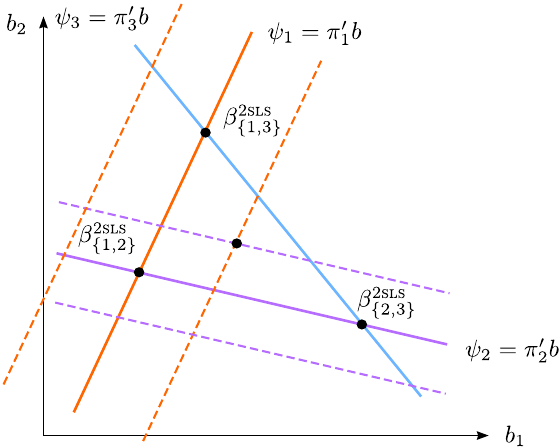} \\[1em]
\includegraphics[width=0.32\linewidth]{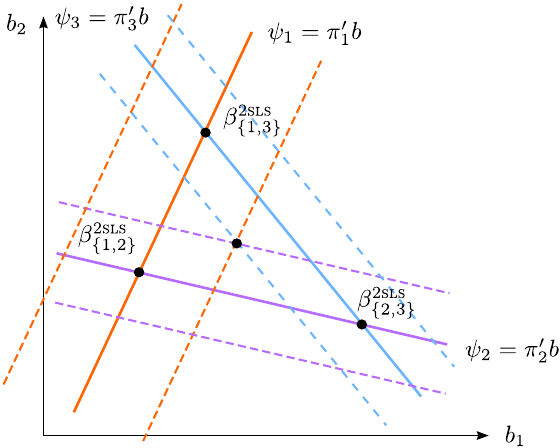}
\hspace{0.01\linewidth}
\includegraphics[width=0.32\linewidth]{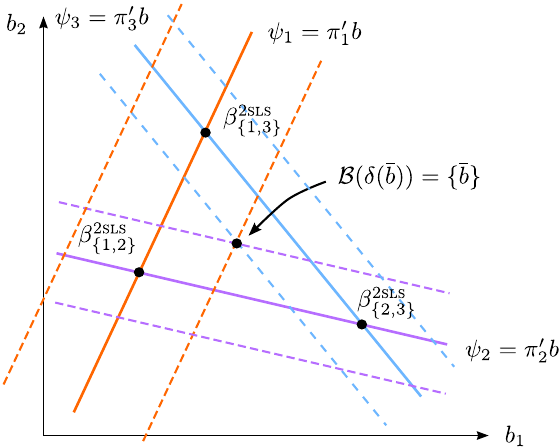}
\caption{Geometric intuition for a key step in the proof of theorem \ref{thm:LisKplus1FAS}. Start at the top left. Here we have $K=2$ and $L=3$, as in the left plot of figure \ref{FAS_K2L3}. We pick a point inside the convex hull of the three just identified 2SLS estimands. Call it $\overline{b}$. We'll now illustrate the intuition for lemma \ref{lemma:generalLinearSingletonOnFF}, which shows that $\mathcal{B}(\delta(\overline{b})) = \{ \overline{b} \}$. We next plot the sets $B_\ell(\delta_\ell(\overline{b})) = \{ b \in \R^K : -\delta_\ell(\overline{b}) \leq \psi_\ell - \pi_\ell' b \leq \delta_\ell(\overline{b}) \}$, adding one at a time. We look at relaxations $\delta_\ell(\overline{b}) = | \psi_\ell - \pi_\ell' \overline{b} |$ which are chosen to be as small as possible, while still making sure that $\overline{b}$ is included in each of the tubes centered at the different hyperplanes $B_\ell(0)$. The result is that the identified set for $\beta$ given the relaxation $\delta(\overline{b})$---the intersection of these three tubes---is the singleton $\{ \overline{b} \}$.}
\label{fig:LisKplus1FAS_inside}
\end{figure}

\begin{figure}[!t]
\centering
\includegraphics[width=0.275\linewidth]{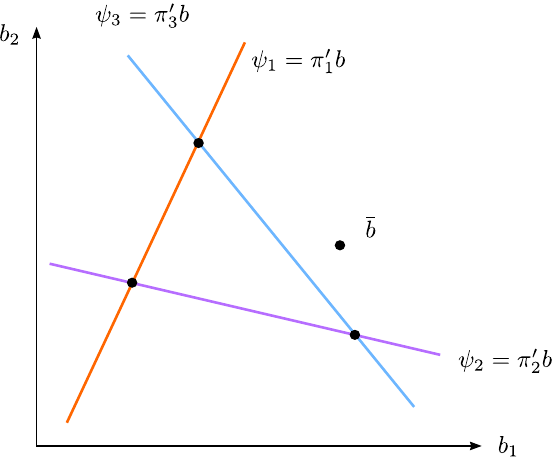}
\hspace{0.01\linewidth}
\includegraphics[width=0.32\linewidth]{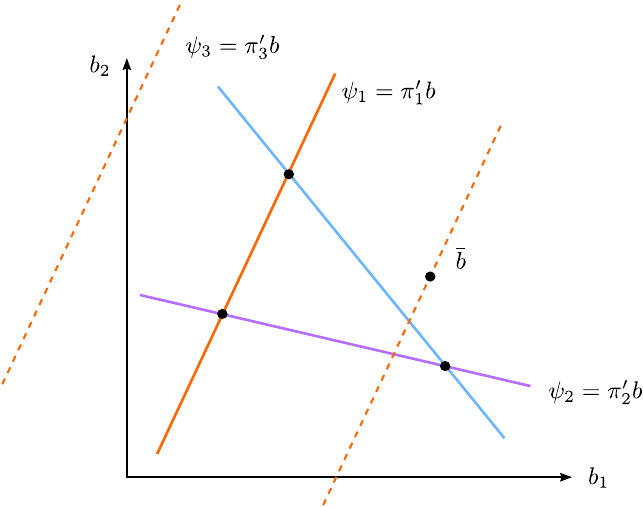}
\hspace{0.01\linewidth}
\includegraphics[width=0.32\linewidth]{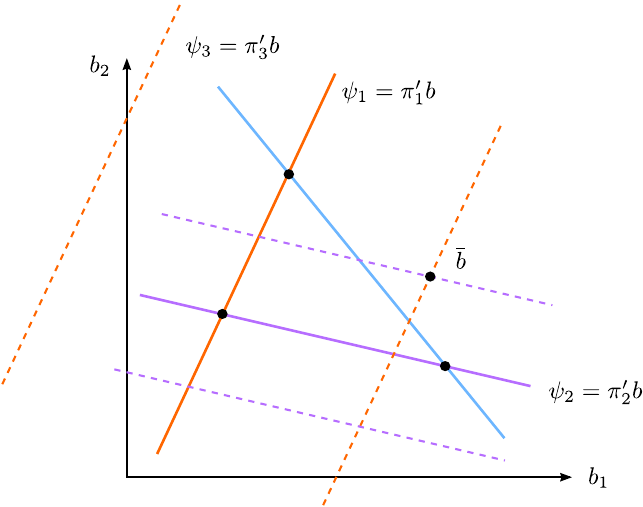} \\
\includegraphics[width=0.32\linewidth]{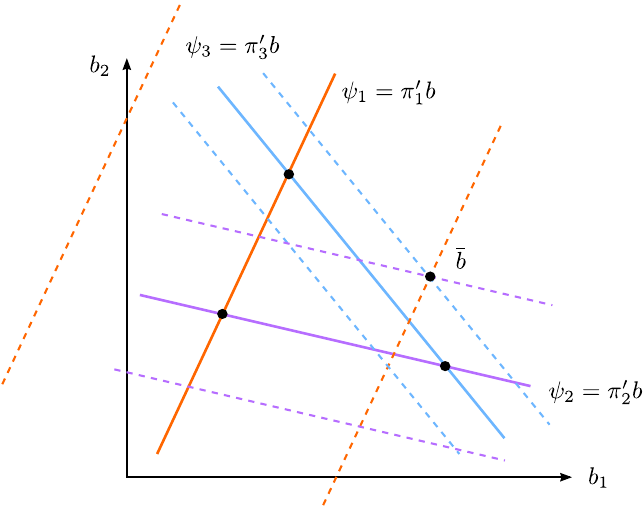}
\hspace{0.01\linewidth}
\includegraphics[width=0.32\linewidth]{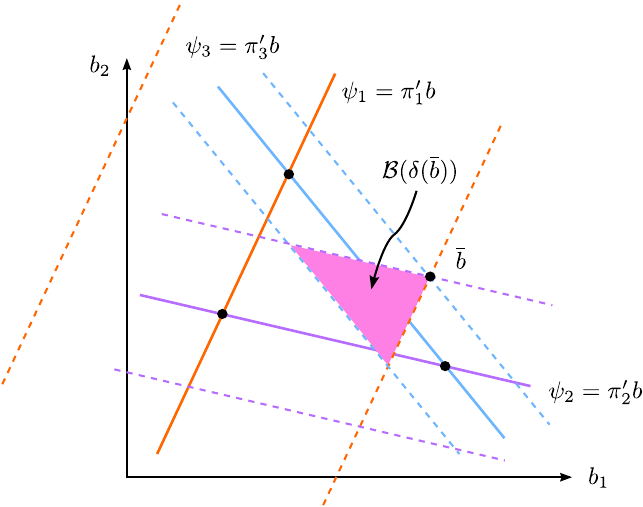}
\caption{Geometric intuition for a key step in the proof of theorem \ref{thm:LisKplus1FAS}, continued from figure \ref{fig:LisKplus1FAS_inside}. Here we instead pick a point \emph{outside} the convex hull of the three just identified 2SLS estimands. We again relax each exclusion restriction just enough so that $\overline{b}$ is inside each of the tubes centered at the different hyperplanes $B_\ell(0)$. But now when we look at the intersection of these tubes we get a non-singleton set. Because this set has an interior, we've relaxed exclusion too much---we can strengthen some of the exclusion restrictions without obtaining a falsified model. Thus the point $\overline{b}$ is not in the falsification adaptive set.}
\label{fig:LisKplus1FAS_outside}
\end{figure}

\begin{lemma}\label{lemma:singletonOnFASgeneralCase}
Suppose A\ref{assump:homog:relevance:gen}$^\prime$, A\ref{assump:homog:nonsing:gen}, and A\ref{assump:exogeneity:gen} hold. Suppose $L \geq K+1$. Let $b \in \mathcal{P}$. Then $\mathcal{B}(\delta(b)) = \{ b \}$.
\end{lemma}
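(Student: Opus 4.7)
The plan is to reduce this lemma directly to the already-established $L=K+1$ case in lemma \ref{lemma:generalLinearSingletonOnFF}. Because $b \in \mathcal{P}$, by definition there exists a subset $\mathcal{L} \subseteq \{1,\ldots,L\}$ with $|\mathcal{L}|=K+1$ such that $b \in \mathcal{P}_\mathcal{L} = \text{conv}\big(\{\beta^\textsc{2sls}_{\mathcal{L}\setminus\{\ell\}} : \ell \in \mathcal{L}\}\big)$. The key observation is that $\mathcal{P}_\mathcal{L}$ coincides with the set $\text{FAS}^*$ of the \emph{sub-model} that uses only the $K+1$ instruments indexed by $\mathcal{L}$: when there are exactly $K+1$ instruments, every $K$-subset of them has the form $\mathcal{L} \setminus \{\ell\}$ for some $\ell \in \mathcal{L}$.

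First I would verify that the hypotheses of lemma \ref{lemma:generalLinearSingletonOnFF} are satisfied by the sub-model. A\ref{assump:homog:relevance:gen}$^\prime$.1 for the full model states that every $K$-row submatrix $\Pi_{\mathcal{L}'}$ with $|\mathcal{L}'|=K$ has full rank; in particular this holds for all $K$-subsets of $\mathcal{L}$, so the sub-model's relevance condition holds. Similarly, A\ref{assump:homog:relevance:gen}$^\prime$.2 applied to $\mathcal{L}$ itself gives the affine-independence requirement for the sub-model. A\ref{assump:homog:nonsing:gen} and A\ref{assump:exogeneity:gen} are inherited directly (after trivially restricting to $Z_\mathcal{L}$; since we could always condition on the remaining instruments to obtain the same $\psi_\ell,\pi_\ell$ for $\ell \in \mathcal{L}$).

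For the $\supseteq$ direction, note that by construction $|\psi_\ell - \pi_\ell' b | = \delta_\ell(b)$ for every $\ell$, so $b$ trivially lies in $\mathcal{B}(\delta(b))$ via the characterization \eqref{eq:CalBisIntersectionOfHalfspaces}. For the $\subseteq$ direction, let $b^* \in \mathcal{B}(\delta(b))$. Then $b^*$ satisfies $|\psi_\ell - \pi_\ell' b^*| \leq \delta_\ell(b)$ for \emph{all} $\ell \in \{1,\ldots,L\}$; in particular it satisfies these inequalities for all $\ell \in \mathcal{L}$. Hence $b^*$ lies in the sub-model identified set $\mathcal{B}_\mathcal{L}(\delta_\mathcal{L}(b)) = \bigcap_{\ell \in \mathcal{L}} B_\ell(\delta_\ell(b))$. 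Since $b \in \mathcal{P}_\mathcal{L}$ is exactly the sub-model's $\text{FAS}^*$, lemma \ref{lemma:generalLinearSingletonOnFF} applied to the sub-model gives $\mathcal{B}_\mathcal{L}(\delta_\mathcal{L}(b)) = \{b\}$, which forces $b^* = b$.

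The only potential obstacle is the bookkeeping to confirm that the sub-model's just-identified 2SLS estimands and its $\text{FAS}^*$ genuinely agree with $\{\beta_{\mathcal{L}\setminus\{\ell\}}^\textsc{2sls}\}$ and $\mathcal{P}_\mathcal{L}$ defined for the original model. This is immediate because $\beta_{\mathcal{L}\setminus\{\ell\}}^\textsc{2sls} = \Pi_{\mathcal{L}\setminus\{\ell\}}^{-1}\psi_{\mathcal{L}\setminus\{\ell\}}$ depends only on $(\psi_{\mathcal{L}\setminus\{\ell\}},\Pi_{\mathcal{L}\setminus\{\ell\}})$, and these are the same objects whether we view them inside the full model or the sub-model using $Z_\mathcal{L}$. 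Beyond this, the argument is essentially just a projection onto the coordinates indexed by $\mathcal{L}$, so no new ideas are required.
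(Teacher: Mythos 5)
Your proof is correct and follows essentially the same route as the paper's: both reduce to the $L=K+1$ case by picking the $(K+1)$-subset $\mathcal{L}$ with $b \in \mathcal{P}_\mathcal{L}$, applying lemma \ref{lemma:generalLinearSingletonOnFF} to the subsystem of constraints indexed by $\mathcal{L}$ to get $\mathcal{B}_\mathcal{L}(\delta(b)) = \{b\}$, and using $\mathcal{B}(\delta(b)) \subseteq \mathcal{B}_\mathcal{L}(\delta(b))$. Your extra bookkeeping (verifying the sub-model hypotheses and that $\beta_{\mathcal{L}\setminus\{\ell\}}^\textsc{2sls}$ is the same object in both views) is sound and, if anything, slightly more careful than the paper's own write-up.
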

 
\begin{proof}[Proof of lemma \ref{lemma:singletonOnFASgeneralCase}]
As in the proof of lemma \ref{lemma:generalLinearSingletonOnFF}, we prove set equality by showing that both directions of set inclusion hold.

\bigskip

\textbf{Step 1 ($\supseteq$)}. The proof of this step from lemma \ref{lemma:generalLinearSingletonOnFF} applies without modification.

\bigskip

\textbf{Step 2 ($\subseteq$)}. Since $b \in \mathcal{P}$ there is some $\mathcal{L} \subseteq \{1,\ldots,L \}$ with $| \mathcal{L} | = K$ such that $b \in \mathcal{P}_\mathcal{L}$. Let
\[
	\mathcal{B}_{\mathcal{L}}(\delta) = \bigcap_{\ell \in \mathcal{L}} B_\ell(\delta).
\]
This is the identified set for $\beta$ when we impose partial exclusion for only the instruments with indices in $\mathcal{L}$. By lemma \ref{lemma:generalLinearSingletonOnFF}, $\mathcal{B}_\mathcal{L}(\delta(b)) = \{ b \}$. Moreover, $\mathcal{B}_{\mathcal{L}}(\delta) \supseteq \mathcal{B}(\delta)$. Thus $\mathcal{B}(\delta(b)) \subseteq \{ b \}$.
\end{proof}

The following variation on Farkas' lemma (for example, corollary 22.3.1 on page 200 of \citealt{Rockafellar1970}; \citealt{Border2019} provides an extensive discussion) is helpful.

\begin{lemma}[Variation on Farkas' Lemma]\label{lemma:Farkas}
Let $x_1,\ldots,x_n \in \R^K$. Then $\0 \notin \text{conv}(\{x_1,\ldots,x_n\})$ if and only if there exists a point $p \in \R^K$ such that $p'x_i >0$ for all $i=1,\ldots,n$.
\end{lemma}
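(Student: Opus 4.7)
The plan is to prove this by recognizing it as a direct consequence of the strict separating hyperplane theorem applied to a compact convex polytope, with the trivial direction following from the linearity of convex combinations.

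For the ``if'' direction, I would argue as follows. Suppose there exists $p \in \R^K$ such that $p'x_i > 0$ for each $i$. Take any element $x \in \text{conv}(\{x_1,\ldots,x_n\})$, written as $x = \sum_{i=1}^n \lambda_i x_i$ with $\lambda_i \geq 0$ and $\sum_i \lambda_i = 1$. Then
\[
    p'x = \sum_{i=1}^n \lambda_i (p'x_i) > 0,
\]
since at least one weight $\lambda_i$ is strictly positive and each $p'x_i > 0$. In particular $p' \cdot 0 = 0$, so $0 \neq x$, and hence $0 \notin \text{conv}(\{x_1,\ldots,x_n\})$.

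For the ``only if'' direction, suppose $0 \notin C := \text{conv}(\{x_1,\ldots,x_n\})$. The set $C$ is the convex hull of a finite collection of points in $\R^K$, hence it is a compact convex subset of $\R^K$. Since $\{0\}$ is also compact and convex and disjoint from $C$, the strict separating hyperplane theorem (for a point and a disjoint compact convex set) yields a vector $p \in \R^K$ and a scalar $\alpha > 0$ such that $p'y \geq \alpha$ for every $y \in C$ and $p'\cdot 0 = 0 < \alpha$. Specializing to $y = x_i$ gives $p'x_i \geq \alpha > 0$ for each $i=1,\ldots,n$, which is the desired conclusion.

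The main (and only) potentially subtle step is invoking the \emph{strict} separation of a point from a compact convex set; once that is in hand, the rest is bookkeeping. Since $C$ is a polytope---in particular closed and bounded---compactness is immediate, so strict separation applies without any need to pass to closures or weaken the inequalities. Everything else reduces to elementary manipulations of convex combinations.
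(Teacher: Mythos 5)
Your proof is correct, but it takes a genuinely different route from the paper's. The paper proves this lemma by homogenization plus the algebraic form of Farkas' lemma: it observes that $\0 \in \text{conv}(\{x_1,\ldots,x_n\})$ is equivalent to solvability of the system
\[
	\begin{pmatrix} 1 & \cdots & 1 \\ x_1 & \cdots & x_n \end{pmatrix}\lambda = \begin{pmatrix} 1 \\ \0 \end{pmatrix}, \qquad \lambda \geq 0,
\]
applies the classical theorem of the alternative to get a vector $\tilde p = (p_0, p)$ with $p_0 + p'x_i \geq 0$ for all $i$ and $p_0 < 0$, and then reads off $p'x_i \geq -p_0 > 0$; the converse direction requires choosing $p_0 < 0$ sufficiently close to zero, which works because there are only finitely many $i$. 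You instead invoke strict separation of the point $\0$ from the compact convex polytope $\text{conv}(\{x_1,\ldots,x_n\})$, which gives the existence of $p$ directly, and you handle the easy direction by the same elementary convex-combination computation. Your geometric argument is slightly cleaner in that it avoids both the appended row of ones and the choice of $p_0$; the paper's version has the advantage of reducing to a precisely citable off-the-shelf statement of Farkas' lemma, which matches how the lemma is framed. Your one implicit step --- that the convex hull of finitely many points is compact --- is standard (it is the continuous image of the unit simplex), so there is no gap.
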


In particular, this lemma shows that if you pick a point outside a convex hull of $\{ x_1,\ldots,x_n \}$, then you can find a vector $p$ pointing towards all of the vectors $x_1,\ldots,x_n$.

\begin{proof}[Proof of lemma \ref{lemma:Farkas}]
$\0 \notin \text{conv}(\{x_1,\ldots,x_n\})$ if and only if the system 
\begin{align*}
	\begin{pmatrix}
		1 & \cdots & 1\\
		x_1 & \cdots & x_n
	\end{pmatrix}\lambda &= \begin{pmatrix}
	1\\
	\0
	\end{pmatrix} \\
	\lambda &\geq 0
\end{align*}
has no solutions. By Farkas' lemma (for example, theorem 6 in \citealt{Border2019}), this system does not have a solution if and only if there exists a point $\tilde{p} = (p_0, p) \in \R^{K+1}$ such that
\begin{align*}
	\tilde{p}'\begin{pmatrix}
		1 & \cdots & 1\\
		x_1 & \cdots & x_n
	\end{pmatrix} &\geq 0\\
	\tilde{p}' \begin{pmatrix}
	1\\
	\0
	\end{pmatrix} <0.
\end{align*}
That is,
\begin{align*}
	\begin{pmatrix}
		p_0 & p'
	\end{pmatrix}
	\begin{pmatrix}
		1 & \cdots & 1\\
		x_1 & \cdots & x_n
	\end{pmatrix} &\geq 0\\
	\begin{pmatrix}
		p_0 & p'
	\end{pmatrix}
	\begin{pmatrix}
	1\\
	\0
	\end{pmatrix} <0
\end{align*}
which is equivalent to
\begin{align}\label{eq:farkasLastSystem}
	p_0 + p'x_i &\geq 0 \qquad \text{for all $i=1,\ldots,n$} \notag \\
	p_0 &<0.
\end{align}
For the $\Rightarrow$ direction, this immediately implies that $p'x_i > 0$ for all $i=1,\ldots,n$. For the $\Leftarrow$ direction, we take $p' x_i > 0$ for all $i=1,\ldots,n$ as our assumption. We can then always pick a $p_0 < 0$ sufficiently close to zero that the system \eqref{eq:farkasLastSystem} holds.
\end{proof}

\begin{lemma}\label{lemma:substitutingWeightedAverage}
Suppose A\ref{assump:homog:relevance:gen}$^\prime$ holds. Suppose $L = K+1$. Suppose $\text{FAS}^*$ is not a singleton. Then $\psi_\ell - \pi_\ell' b = w_\ell(b) (\psi_\ell - \pi_\ell' \beta_{-\ell}^\textsc{2sls})$ for all $\ell =1,\ldots,L$.
\end{lemma}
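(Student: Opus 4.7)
My plan is that this lemma is a short algebraic identity rather than a set-theoretic argument, so I would prove it by direct calculation using the two defining properties of the weights $w_\ell(b)$ together with the defining property of the just-identified 2SLS estimands $\beta_{-s}^\textsc{2sls}$.

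The first step is to recall that, because $L = K+1$ and $\beta_{-s}^\textsc{2sls} = \Pi_{\{1,\ldots,L\}\setminus\{s\}}^{-1}\psi_{\{1,\ldots,L\}\setminus\{s\}}$ by A\ref{assump:homog:relevance:gen}$^\prime$, we have the key ``all-but-one'' identity $\pi_\ell' \beta_{-s}^\textsc{2sls} = \psi_\ell$ whenever $s \neq \ell$. This is the same relation that the proof of lemma \ref{lemma:weightedAverageOf2SLS} leaned on and it is the only non-trivial ingredient here.

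Next, since $\text{FAS}^*$ is not a singleton, lemma \ref{lemma:weightedAverageOf2SLS} supplies weights $w_1(b),\ldots,w_L(b)$ summing to one with $b = \sum_{s=1}^L w_s(b)\beta_{-s}^\textsc{2sls}$. I would premultiply this expression by $\pi_\ell'$ and split off the $s=\ell$ term:
\begin{align*}
\pi_\ell' b
&= w_\ell(b)\,\pi_\ell'\beta_{-\ell}^\textsc{2sls} + \sum_{s\neq\ell} w_s(b)\,\pi_\ell'\beta_{-s}^\textsc{2sls} \\
&= w_\ell(b)\,\pi_\ell'\beta_{-\ell}^\textsc{2sls} + \psi_\ell \sum_{s\neq\ell} w_s(b) \\
&= w_\ell(b)\,\pi_\ell'\beta_{-\ell}^\textsc{2sls} + \psi_\ell\bigl(1 - w_\ell(b)\bigr),
\end{align*}
where the second line uses the all-but-one identity and the third uses that the weights sum to one.

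Finally, rearranging gives $\psi_\ell - \pi_\ell' b = w_\ell(b)\bigl(\psi_\ell - \pi_\ell'\beta_{-\ell}^\textsc{2sls}\bigr)$, which is the claim. There is no real obstacle here; the only thing to double-check is that lemma \ref{lemma:weightedAverageOf2SLS} truly produces weights summing to one (so the substitution $\sum_{s\neq\ell}w_s(b) = 1 - w_\ell(b)$ is valid) and that the hypothesis ``$\text{FAS}^*$ not a singleton'' is invoked solely to invoke lemma \ref{lemma:weightedAverageOf2SLS}.
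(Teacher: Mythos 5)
Your proof is correct and follows essentially the same route as the paper's: both apply the representation $b = \sum_{s} w_s(b)\beta_{-s}^\textsc{2sls}$ from lemma \ref{lemma:weightedAverageOf2SLS}, premultiply by $\pi_\ell'$, invoke the identity $\pi_\ell'\beta_{-s}^\textsc{2sls} = \psi_\ell$ for $s \neq \ell$, and use that the weights sum to one. Your closing observations are also accurate: the weights do sum to one by construction in lemma \ref{lemma:weightedAverageOf2SLS}, and the non-singleton hypothesis is needed only to invoke that lemma.
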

{
\allowdisplaybreaks
\begin{proof}[Proof of lemma \ref{lemma:substitutingWeightedAverage}]
For $\ell' \in\{1,\ldots,L\}$ we have
\begin{align*}
	\psi_{\ell'} - \pi_{\ell'}' b
		&= \psi_{\ell'} - \pi_{\ell'}' \left( \sum_{\ell=1}^L w_\ell(b) \beta_{-\ell}^\textsc{2sls} \right) \\
		&= \psi_{\ell'} - \pi_{\ell'}' \left( \sum_{\ell \neq {\ell'}} w_\ell(b) \beta_{-\ell}^\textsc{2sls} + w_{\ell'}(b) \beta_{-{\ell'}}^\textsc{2sls} \right) \\
		&= \psi_{\ell'} - \pi_{\ell'}' \left( \sum_{\ell \neq {\ell'}} w_\ell(b) \beta_{-\ell}^\textsc{2sls} + \left( 1 - \sum_{\ell \neq {\ell'}} w_\ell(b) \right) \beta_{-{\ell'}}^\textsc{2sls} \right) \\
		&= \psi_{\ell'} - \sum_{\ell \neq {\ell'}} w_\ell(b) \pi_{\ell'}' \beta_{-\ell}^\textsc{2sls} - \left( 1 - \sum_{\ell \neq {\ell'}} w_\ell(b) \right) \pi_{\ell'}' \beta_{-{\ell'}}^\textsc{2sls} \\
		&= \psi_{\ell'} - \sum_{\ell \neq {\ell'}} w_\ell(b) \psi_{\ell'} - \left( 1 - \sum_{\ell \neq {\ell'}} w_\ell(b) \right) \pi_{\ell'}' \beta_{-{\ell'}}^\textsc{2sls} \\
		&= \left( 1 - \sum_{\ell \neq {\ell'}} w_\ell(b) \right) \psi_{\ell'} - \left( 1 - \sum_{\ell \neq {\ell'}} w_\ell(b) \right) \pi_{\ell'}' \beta_{-{\ell'}}^\textsc{2sls} \\
		&= \left( 1 - \sum_{\ell \neq {\ell'}} w_\ell(b) \right) ( \psi_{\ell'} - \pi_{\ell'}' \beta_{-{\ell'}}^\textsc{2sls} ) \\
		&= w_{\ell'}(b) ( \psi_{\ell'} - \pi_{\ell'}' \beta_{-{\ell'}}^\textsc{2sls} ).
\end{align*}
In the first line we used the representation $b = \sum_{\ell =1}^L w_\ell(b) \beta_{-\ell}^\textsc{2sls}$, by lemma \ref{lemma:weightedAverageOf2SLS}. The last equality follows since the weights sum to one.
\end{proof}
}

\begin{lemma}\label{lemma:EisEmpty}
Suppose A\ref{assump:homog:relevance:gen}$^\prime$ holds. Suppose $L = K+1$. Suppose $\text{FAS}^*$ is not a singleton. Without loss of generality, write
\[
	\text{FAS}^* = \bigcap_{\ell =1}^L \{ b \in\R^K: \psi_\ell - \pi_\ell' b \leq 0\}.
\]
(See equation \eqref{eq:halfSpaceFAS} in the proof of lemma \ref{lemma:generalLinearSingletonOnFF}, and the surrounding discussion.) Then there are no $b \in \R^K$ such that $\psi_\ell - \pi_\ell' b \geq 0$ for all $\ell =1,\ldots,L$.
\end{lemma}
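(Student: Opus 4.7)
The plan is to prove the contrapositive-style statement by contradiction, leveraging the three preceding lemmas (\ref{lemma:singletonFAS}, \ref{lemma:weightedAverageOf2SLS}, and \ref{lemma:substitutingWeightedAverage}), which together give a clean algebraic handle on arbitrary points $b \in \R^K$ expressed as affine combinations of the just-identified 2SLS estimands $\beta_{-\ell}^\textsc{2sls}$.

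First I would assume for contradiction that there exists $b^* \in \R^K$ with $\psi_\ell - \pi_\ell' b^* \geq 0$ for every $\ell = 1,\ldots,L$. The key preparatory observation is to quantify the sign of $\psi_\ell - \pi_\ell' \beta_{-\ell}^\textsc{2sls}$ for each $\ell$: since $\beta_{-\ell}^\textsc{2sls} \in \text{FAS}^*$, the half-space representation immediately gives $\psi_\ell - \pi_\ell' \beta_{-\ell}^\textsc{2sls} \leq 0$; and since $\text{FAS}^*$ is assumed not to be a singleton, lemma \ref{lemma:singletonFAS} forces $\psi_\ell - \pi_\ell' \beta_{-\ell}^\textsc{2sls} \neq 0$, so in fact $\psi_\ell - \pi_\ell' \beta_{-\ell}^\textsc{2sls} < 0$ strictly for every $\ell$.

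Next I would decompose $b^*$. Because $\text{FAS}^*$ is not a singleton, lemma \ref{lemma:weightedAverageOf2SLS} yields a representation
\[
	b^* = \sum_{\ell=1}^L w_\ell(b^*)\, \beta_{-\ell}^\textsc{2sls}, \qquad \sum_{\ell=1}^L w_\ell(b^*) = 1,
\]
and then lemma \ref{lemma:substitutingWeightedAverage} converts this into the coordinatewise identity
\[
	\psi_\ell - \pi_\ell' b^* = w_\ell(b^*)\,(\psi_\ell - \pi_\ell'\beta_{-\ell}^\textsc{2sls})
\]
for every $\ell$. Combining the two sign facts---$\psi_\ell - \pi_\ell' b^* \geq 0$ by the standing assumption and $\psi_\ell - \pi_\ell'\beta_{-\ell}^\textsc{2sls} < 0$ by the preparatory step---we read off $w_\ell(b^*) \leq 0$ for every $\ell$.

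The contradiction then drops out by summing: $1 = \sum_{\ell=1}^L w_\ell(b^*) \leq 0$. I expect no serious obstacle in carrying this out; the whole argument is essentially a sign computation once the three lemmas are in place. The only point that warrants a sentence of care is verifying that the strict inequality $\psi_\ell - \pi_\ell' \beta_{-\ell}^\textsc{2sls} < 0$ really holds for \emph{all} $\ell$ (not just some), which is exactly the content of the ``either-or'' dichotomy in lemma \ref{lemma:singletonFAS} under the hypothesis that $\text{FAS}^*$ is not a singleton.
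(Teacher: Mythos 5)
Your proposal is correct and is essentially identical to the paper's own proof: both argue by contradiction, invoke lemma \ref{lemma:weightedAverageOf2SLS} to write the candidate point as an affine combination of the $\beta_{-\ell}^\textsc{2sls}$, apply lemma \ref{lemma:substitutingWeightedAverage} to get $\psi_\ell - \pi_\ell' b = w_\ell(b)(\psi_\ell - \pi_\ell'\beta_{-\ell}^\textsc{2sls})$, and use lemma \ref{lemma:singletonFAS} to get the strict sign $\psi_\ell - \pi_\ell'\beta_{-\ell}^\textsc{2sls} < 0$, forcing all weights to be nonpositive and contradicting that they sum to one. No gaps.
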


\begin{proof}[Proof of lemma \ref{lemma:EisEmpty}]
Let
\[
	\mathcal{E} = \bigcap_{\ell =1}^L \{ \tilde{b} \in\R^K: \psi_\ell - \pi_\ell' \tilde{b} \geq 0\}.
\]
Suppose by way of contradiction that there is some element $b \in \mathcal{E}$. Write
\[
	b = \sum_{\ell =1}^L w_\ell(b) \beta_{-\ell}^\textsc{2sls}
\]
where $\sum_{\ell =1}^L w_\ell(b) = 1$. By lemma \ref{lemma:substitutingWeightedAverage} (which we can use since $L=K+1$ and $\text{FAS}^*$ is not a singleton),
\[
	\psi_\ell - \pi_\ell' b = w_\ell(b) (\psi_\ell - \pi_\ell' \beta_{-\ell}^\textsc{2sls}).
\]
This term is $\geq 0$ since $b \in \mathcal{E}$. Since $\beta_{-\ell}^\textsc{2sls} \in \text{FAS}^*$ and by lemma \ref{lemma:singletonFAS} we have $\psi_\ell - \pi_\ell' \beta_{-\ell}^\textsc{2sls} < 0$. This implies $w_\ell(b) \leq 0$. This must hold for all $\ell$. That is a contradiction to $\sum_{\ell =1}^L w_\ell(b) = 1$. Hence $\mathcal{E} = \emptyset$.
\end{proof}

\begin{lemma}\label{lemma:convexHullsOfBandPis}
Suppose A\ref{assump:homog:relevance:gen}$^\prime$, A\ref{assump:homog:nonsing:gen}, and A\ref{assump:exogeneity:gen} hold. Suppose $L = K+1$. Then there exists a normalization of the hyperplanes $\{b \in \R^K : \pi_\ell'b = \psi_\ell\}$ with $\psi_\ell \geq 0$ and such that
\[
	\0 \in \text{conv}(\{\pi_\ell: \ell = 1,\ldots,L\})
	\qquad \Rightarrow \qquad
	 \0 \in \text{conv}(\{\beta^\textsc{2sls}_{-\ell}: \ell=1,\ldots,L\}).
\]
\end{lemma}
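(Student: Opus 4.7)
\textbf{Proof plan for lemma \ref{lemma:convexHullsOfBandPis}.} I will first specify the normalization and then prove the implication by representing $\0$ as a weighted average of the 2SLS estimands. The plan is to use lemma \ref{lemma:substitutingWeightedAverage} to get a closed-form expression for the weights $w_\ell(\0)$, and then use the hypothesis $\0\in\text{conv}(\{\pi_\ell\})$ to show these weights are nonnegative.

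\smallskip

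The normalization is straightforward: for each $\ell$, if $\psi_\ell<0$, replace $Z_\ell$ by $-Z_\ell$. This flips both $\pi_\ell$ and $\psi_\ell$, but leaves each just-identified 2SLS estimand $\beta_{-\ell}^\textsc{2sls}$ unchanged (as noted in appendix \ref{sec:transformedInstruments}), and yields $\psi_\ell\geq 0$ for every $\ell$. With this normalization fixed, suppose that $\sum_{\ell=1}^L \lambda_\ell \pi_\ell = \0$ for some $\lambda_\ell\geq 0$ summing to $1$. A preliminary step is to show that in fact $\lambda_\ell>0$ for every $\ell$: indeed, if some $\lambda_\ell=0$, then $\sum_{\ell'\neq\ell}\lambda_{\ell'}\pi_{\ell'}=\0$ involves $K$ vectors, which are linearly independent by A\ref{assump:homog:relevance:gen}$^\prime$.1, forcing all those $\lambda_{\ell'}=0$ and contradicting $\sum\lambda=1$.

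\smallskip

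Next, I split into the two cases identified by lemma \ref{lemma:singletonFAS}. In the \emph{singleton} case FAS$^*=\{\beta^*\}$, every hyperplane passes through $\beta^*$, so $\pi_\ell'\beta^*=\psi_\ell$ for all $\ell$. Taking the inner product of $\sum_\ell\lambda_\ell\pi_\ell=\0$ with $\beta^*$ yields $\sum_\ell\lambda_\ell\psi_\ell=0$; since $\lambda_\ell>0$ and $\psi_\ell\geq 0$, each $\psi_\ell=0$. Then $\pi_\ell'\beta^*=0$ for every $\ell$, and since any $K$ of the $\pi_\ell$ span $\R^K$ by A\ref{assump:homog:relevance:gen}$^\prime$.1, we obtain $\beta^*=\0$, so $\0\in\{\beta^*\}=\text{conv}(\{\beta_{-\ell}^\textsc{2sls}\})$. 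In the \emph{non-singleton} case, lemma \ref{lemma:weightedAverageOf2SLS} provides unique weights with $\0=\sum_\ell w_\ell(\0)\beta_{-\ell}^\textsc{2sls}$ and $\sum_\ell w_\ell(\0)=1$; the task is to show $w_\ell(\0)\geq 0$ for every $\ell$. Lemma \ref{lemma:substitutingWeightedAverage} (at $b=\0$) gives
\[
  w_\ell(\0)\;=\;\frac{\psi_\ell}{\psi_\ell-\pi_\ell'\beta_{-\ell}^\textsc{2sls}},
\]
with the denominator nonzero by lemma \ref{lemma:singletonFAS}. To sign this ratio, take the inner product of $\sum_{\ell'}\lambda_{\ell'}\pi_{\ell'}=\0$ with $\beta_{-\ell}^\textsc{2sls}$ and use the identities $\pi_{\ell'}'\beta_{-\ell}^\textsc{2sls}=\psi_{\ell'}$ for $\ell'\neq\ell$:
\[
  \lambda_\ell\,\pi_\ell'\beta_{-\ell}^\textsc{2sls}
  \;=\;-\sum_{\ell'\neq\ell}\lambda_{\ell'}\psi_{\ell'}\;\leq\;0.
\]
Since $\lambda_\ell>0$, this gives $\pi_\ell'\beta_{-\ell}^\textsc{2sls}\leq 0$, hence $\psi_\ell-\pi_\ell'\beta_{-\ell}^\textsc{2sls}\geq\psi_\ell\geq 0$, and the inequality is strict by lemma \ref{lemma:singletonFAS}. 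Therefore both numerator and denominator are $\geq 0$, so $w_\ell(\0)\geq 0$, concluding $\0\in\text{conv}(\{\beta_{-\ell}^\textsc{2sls}\})$.

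\smallskip

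The main obstacle I anticipate is handling the degenerate case where some $\lambda_\ell$ could vanish on the boundary of $\text{conv}(\{\pi_\ell\})$; the preliminary step above handles this by leveraging A\ref{assump:homog:relevance:gen}$^\prime$.1 to rule it out under the $L=K+1$ counting. A secondary subtlety is the singleton case, which is not covered by lemma \ref{lemma:substitutingWeightedAverage} (since its hypothesis excludes a singleton FAS$^*$); the direct argument via $\pi_\ell'\beta^*=\psi_\ell$ circumvents this and again relies on A\ref{assump:homog:relevance:gen}$^\prime$.1 to conclude $\beta^*=\0$.
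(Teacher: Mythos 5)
Your proof is correct, and it takes a genuinely different route from the paper's. The paper argues the contrapositive: assuming $\0 \notin \text{conv}(\{\beta_{-\ell}^\textsc{2sls}\})$, it splits on lemma \ref{lemma:singletonFAS} and, in the singleton case, perturbs $b^*$ to a point $b^*(\bar{t})$, re-normalizes the signs of the $\pi_\ell$ with $\psi_\ell = 0$ so that $\pi_\ell' b^*(\bar{t}) > 0$ for every $\ell$, and invokes the Farkas variant (lemma \ref{lemma:Farkas}); in the non-singleton case it uses the \emph{unique} affine weights $\lambda$ (which is where A\ref{assump:homog:relevance:gen}.2$^\prime$ enters), the half-space representation of $\text{FAS}^*$, and lemma \ref{lemma:EisEmpty} to drive $\lambda$ to $\0$ iteratively. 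You instead prove the implication directly. Your preliminary observation---that every barycentric coefficient of $\0$ in $\text{conv}(\{\pi_\ell\})$ is strictly positive, by A\ref{assump:homog:relevance:gen}.1$^\prime$ and the $L=K+1$ count---does the work the paper extracts from uniqueness of $\lambda$ and the iterative sign argument. Your singleton case is then a two-line inner-product computation ($\sum_\ell \lambda_\ell \psi_\ell = 0$ forces $\psi = \0$, hence $\beta^* = \0$), and your non-singleton case exhibits $\0$ as an explicit convex combination of the $\beta_{-\ell}^\textsc{2sls}$ with closed-form weights $w_\ell(\0) = \psi_\ell/(\psi_\ell - \pi_\ell'\beta_{-\ell}^\textsc{2sls})$, signed via the same inner-product trick and lemma \ref{lemma:substitutingWeightedAverage}; the denominator is nonzero by lemma \ref{lemma:singletonFAS}, exactly as you say. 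What your route buys: it is shorter, it never uses lemmas \ref{lemma:Farkas} or \ref{lemma:EisEmpty} or assumption A\ref{assump:homog:relevance:gen}.2$^\prime$, and it establishes the implication for \emph{every} normalization with $\psi_\ell \geq 0$ (the signs of the $\pi_\ell$ with $\psi_\ell = 0$ turn out to be irrelevant), so the existential quantifier in the statement---and the careful sign selection in the paper's Case 1---is not actually needed. What the paper's route buys is reuse of the half-space and Farkas machinery that also drives lemma \ref{lemma:case2:FFguessGeneralSupersetFF} and the rest of the $K$-endogenous-variable analysis.
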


\begin{proof}[Proof of lemma \ref{lemma:convexHullsOfBandPis}]
We prove this by contrapositive. So suppose $\0 \notin \text{conv}(\{\beta^\textsc{2sls}_{-\ell}: \ell =1,\ldots,L\})$. By $L = K+1$ and lemma \ref{lemma:singletonFAS}, there are two cases to consider.
\begin{enumerate}
\item Suppose $\text{FAS}^* = \text{conv}(\{\beta^\textsc{2sls}_{-\ell}: \ell =1,\ldots,L\})$ is a singleton. Let $b^*$ denote this common value. By definition of $\beta_{-\ell}^\textsc{2sls}$, $\pi_\ell' \beta_{-\ell}^\textsc{2sls} = \psi_\ell$ for all $\ell$. Thus $\pi_\ell' b^* = \psi_\ell$ for all $\ell$. Since $\0$ is not in the convex hull of the $\beta_{-\ell}^\textsc{2sls}$ points by assumption, we have $b^* \neq \0$. Recall also that we normalized $\psi_\ell \geq 0$ for all $\ell$. So we have $\pi_\ell' b^* \geq 0$ for all $\ell$.

\medskip

Let $\mathcal{L}_0$ denote the indices $\ell$ such that $\psi_\ell = 0$. Note that $\pi_\ell' b^* = 0$ for $\ell \in \mathcal{L}_0$. The set
\[
	\bigcup_{\ell \in \mathcal{L}_0} \{b\in\R^K:\pi_\ell'b = 0\}
\]
does not equal $\R^K$. This follows since it is a finite union of $K-1$ dimensional hyperplanes. So there is a point $v \in \R^K$ which is not in this set. That is, there exists a vector $v \in \R^K$ such that $\pi_\ell'v \neq 0$ for all $\ell \in \mathcal{L}_0$.

\medskip

Note that hyperplanes corresponding to $\ell \in \mathcal{L}_0$ are invariant to sign normalizations of $\pi_\ell$ since
\[
	\{b\in\R^K:\pi_\ell'b = 0\} = \{b\in\R^K:(-\pi_\ell)'b = 0\}.
\]
The element $v$ is independent of the normalization since if $\pi_\ell'v \neq 0$, then $(-\pi_\ell)'v \neq 0$ as well. Let
\[
	b^*(t) = b^* + tv
\]
for $t \geq 0$. At $t = 0$ we have
\[
	\pi_\ell'b^*(0) = \psi_\ell > 0
\]
for all $\ell \in \mathcal{L}\setminus\mathcal{L}_0$. By continuity, for small enough $t > 0$ we also have
\[
	\pi_\ell'b^*(t) = \pi_\ell'b^* + t\pi_{\ell}'v > 0
\]
for all $\ell \in \mathcal{L} \setminus \mathcal{L}_0$. Denote this point by $\bar{t} > 0$. Next, for all $\ell \in \mathcal{L}_0$ we have
\begin{align*}
	\pi_\ell'b^*(\bar{t})
	&= \pi_\ell'b^* + \bar{t} \pi_\ell'v \\
	&= \bar{t} \pi_\ell'v \\
	&\neq 0.
\end{align*}
Therefore, given that the sign of $\pi_{\ell}$ for $\ell \in \mathcal{L}_0$ can be renormalized, we can normalize $\pi_\ell$ for $\ell \in \mathcal{L}_0$ such that $\pi_\ell'b^*(\bar{t}) >0$ for all $\ell \in \mathcal{L}_0$. So now we have found a normalization and a point $b^*(\bar{t})$ such that $\pi_\ell' b^*(\bar{t}) > 0$ for all $\ell$. Lemma \ref{lemma:Farkas} thus implies that $\0 \notin \text{conv}(\{\pi_\ell:\ell =1,\ldots,L\})$.

\item Suppose $\text{conv}(\{\beta^\textsc{2sls}_{-\ell}: \ell =1,\ldots,L\})$ is not a singleton. By A\ref{assump:homog:relevance:gen}.2$^\prime$, the equation
\begin{equation}\label{eq:mat_tilde}
	\begin{pmatrix}
		1 & \ldots &  1\\
		\pi_1 & \ldots & \pi_L
	\end{pmatrix}\begin{pmatrix}
		\lambda_1\\
		\vdots\\
		\lambda_L
	\end{pmatrix}
	=
	\begin{pmatrix}
		1\\
		\0_K
	\end{pmatrix}
\end{equation}
has a unique solution $\lambda \in \R^L$. We want to show that this solution does \emph{not} satisfy $\lambda \geq 0$. 

Recall that we can write $\text{FAS}^* = \text{conv}(\{\beta^\textsc{2sls}_{-\ell}: \ell =1,\ldots,L\})$ as an intersection of halfspaces $\{ \tilde{b} \in \R^K : \pi_\ell' \tilde{b} \leq \psi_\ell \}$. Without loss of generality, write
\[
	\text{FAS}^* = \left( \bigcap_{\ell \in \mathcal{L}_1} \{ \tilde{b} \in \R^K : \psi_\ell - \pi_\ell' \tilde{b} \leq 0 \} \right) \bigcap \left( \bigcap_{\ell \in \mathcal{L}_2} \{ \tilde{b} \in \R^K : \psi_\ell - \pi_\ell' \tilde{b} \geq 0 \} \right)
\]
for some partition $\mathcal{L}_1 \cup \mathcal{L}_2 = \mathcal{L}$. Since we have normalized $\psi_\ell \geq 0$, we \emph{cannot} have $\mathcal{L}_1 = \emptyset$ because this would imply that $\0 \in \text{FAS}^*$. So this implies that there is some index $k_1$ such that
\[
	\text{conv}(\{\beta^\textsc{2sls}_{-\ell}: \ell =1,\ldots,L\})
	\subseteq
	\{b \in \R^K : \pi_{k_1}'b \geq \psi_{k_1}\}.
\]
However, we also cannot have $\mathcal{L}_1 = \{1,\ldots,L\}$.
\begin{itemize}
\item If $\mathcal{L}_1 = \{1,\ldots,L \}$ then
\[
	\text{FAS}^* =  \bigcap_{\ell \in \{1,\ldots,L\}} \{ \tilde{b} \in \R^K : \psi_\ell - \pi_\ell' \tilde{b} \leq 0 \}.
\]
By lemma \ref{lemma:EisEmpty}, this implies that
\[
	\bigcap_{\ell \in \{1,\ldots,L\}} \{ \tilde{b} \in \R^K : \psi_\ell - \pi_\ell' \tilde{b} \geq 0 \}
\]
is empty. This is a contradiction since
\[
	\0 \in \bigcap_{\ell \in \{1,\ldots,L\}} \{ \tilde{b} \in \R^K : \psi_\ell - \pi_\ell' \tilde{b} \geq 0 \}
\]
by the normalization $\psi_\ell \geq 0$.
\end{itemize}
Hence there is some index $k_2 \in \mathcal{L}_2$. That is, there is a $k_2$ such that
\[
	\{b \in \R^K : \pi_{k_2}'b \leq \psi_{k_2}\} \supseteq \text{conv}(\{\beta^\textsc{2sls}_{-\ell}: \ell =1,\ldots,L\}).
\]
Now, from the last row of equation \eqref{eq:mat_tilde} we have
\[
	\sum_{\ell=1}^L \lambda_\ell \pi_\ell = \0_K.
\]
Multiply this equation by $(\beta^\textsc{2sls}_{-k})'$ to get
\[
	(\beta^\textsc{2sls}_{-k})' \sum_{\ell =1}^L \pi_\ell \lambda_\ell = 0.
\]
Separate the $\ell = k$ from the sum to get
\[
	\sum_{\ell \neq k} \lambda_\ell \pi_\ell' \beta_{-k}^\textsc{2sls} + \lambda_k \pi_k' \beta_{-k}^\textsc{2sls} = 0.
\]
Recall that, by the definition of these 2SLS estimands, $\pi_k' \beta_{-\ell}^\textsc{2sls} = \psi_k$ for $k \neq \ell$. Hence
\begin{equation}\label{eq:lin_tilde}
	\sum_{\ell \neq k} \lambda_\ell \psi_k + \lambda_k \pi_k' \beta_{-k}^\textsc{2sls} = 0.
\end{equation}
Consider this equation for our two different indices $k_1, k_2 \in \{1,\ldots,L\}$,
\[
	\sum_{\ell \neq k_1} \lambda_\ell \psi_{k_1} + \lambda_{k_1} \pi_{k_1}' \beta_{-k_1}^\textsc{2sls} = 0
	\qquad \text{and} \qquad
	\sum_{\ell \neq k_2} \lambda_\ell \psi_{k_2} + \lambda_{k_2} \pi_{k_2}' \beta_{-k_2}^\textsc{2sls} = 0.
\]
Subtract the left equation from the right to get
\[
	0 = \lambda_{k_2} (\pi_{k_2}'\beta^\textsc{2sls}_{-k_2} - \psi_{k_2}) + \lambda_{k_1} (\psi_{k_1} - \pi_{k_1}'\beta^\textsc{2sls}_{-k_1}).
\]
As mentioned earlier, lemma \ref{lemma:singletonFAS} implies that $\pi_{k_2}'\beta^\textsc{2sls}_{-k_2} - \psi_{k_2} < 0$ and $\psi_{k_1} - \pi_{k_1}'\beta^\textsc{2sls}_{-k_1} < 0$. This equation implies that $\lambda_{k_2}$ and $\lambda_{k_1}$ cannot both have the same signs. If one of them is $> 0$ then the other must be $< 0$. In this case, we're done. Here we use the fact that $\lambda$ is the \emph{unique} solution to equation \eqref{eq:mat_tilde}.

So suppose both of them are zero. Let $k_3 \in \{1,\ldots,L\}\setminus\{k_1,k_2\}$. Then by similar derivations we have
\[
	0 = \lambda_{k_2} (\pi_{k_2}'\beta^\textsc{2sls}_{-k_2} - \psi_{k_2}) + \lambda_{k_3} (\psi_{k_1} - \pi_{k_3}'\beta^\textsc{2sls}_{-k_3}).
\]
We know that $\lambda_{k_2} = 0$ already and thus
\[
	0 = \lambda_{k_3} (\psi_{k_1} - \pi_{k_3}'\beta^\textsc{2sls}_{-k_3}).
\]
We know that $\psi_{k_1} - \pi_{k_3}'\beta^\textsc{2sls}_{-k_3} \neq 0$ and thus we must have $\lambda_{k_3} = 0$. Continuing in this way we obtain $\lambda = \0$. This contradicts $\sum_{\ell = 1}^L \lambda_\ell = 1$. Thus $\lambda$ cannot be nonnegative. So equation \eqref{eq:mat_tilde} then implies that $\0 \notin \text{conv}(\{\pi_\ell:\ell =1,\ldots,L\})$.
\end{enumerate}
\end{proof}

\begin{lemma}\label{lemma:case2:FFguessGeneralSupersetFF}
Suppose A\ref{assump:homog:relevance:gen}$^\prime$, A\ref{assump:homog:nonsing:gen}, and A\ref{assump:exogeneity:gen} hold. Suppose $L \geq K + 1$. Let $b \notin \mathcal{P}$. Then there exists a $\delta' < \delta(b)$ such that $\mathcal{B}(\delta') \neq \emptyset$.
\end{lemma}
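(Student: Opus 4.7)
The plan is to prove the contrapositive-style statement that $b \notin \mathcal{P}$ forces $\mathcal{B}(\delta(b))$ to contain a point $\tilde b \neq b$, and then build $\delta'$ from $\tilde b$. Granted such a $\tilde b$, consider the segment $b_\alpha = (1-\alpha) b + \alpha \tilde b$ for small $\alpha > 0$. Each $\delta_\ell(\cdot) = |\psi_\ell - \pi_\ell'\cdot|$ is convex, so $\delta_\ell(b_\alpha) \le \max\{\delta_\ell(b),\delta_\ell(\tilde b)\} = \delta_\ell(b)$ because $\tilde b \in \mathcal{B}(\delta(b))$ means $\delta_\ell(\tilde b) \le \delta_\ell(b)$. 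Moreover, by A\ref{assump:homog:relevance:gen}$'$ the vectors $\{\pi_\ell\}$ span $\R^K$, so $\pi_\ell'(\tilde b - b) \neq 0$ for at least one $\ell$; for that $\ell$, the map $\alpha \mapsto \psi_\ell - \pi_\ell' b_\alpha$ is a non-constant linear function, and since it cannot exit the interval $[-\delta_\ell(b),\delta_\ell(b)]$ on $[0,1]$, it must strictly shrink in absolute value for small $\alpha > 0$, giving $\delta_\ell(b_\alpha) < \delta_\ell(b)$. Taking $\delta' = \delta(b_\alpha)$ and noting $b_\alpha \in \mathcal{B}(\delta')$ finishes the construction.

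It therefore suffices to show $\mathcal{B}(\delta(b)) = \{b\} \Rightarrow b \in \mathcal{P}$. Write $\mathcal{L}_0 = \{\ell : \psi_\ell = \pi_\ell' b\}$ and, for $\ell \notin \mathcal{L}_0$, set $s_\ell = \sign(\psi_\ell - \pi_\ell' b) \in \{\pm 1\}$. Expressing $\mathcal{B}(\delta(b))$ as the intersection of the slabs $\{x : |\psi_\ell - \pi_\ell' x| \le \delta_\ell(b)\}$ and examining first-order behaviour along directions $v$ from $b$ shows that $\mathcal{B}(\delta(b)) \supsetneq \{b\}$ holds iff there exists a nonzero $v \in \R^K$ with
\[
s_\ell \pi_\ell' v \ge 0 \text{ for } \ell \notin \mathcal{L}_0, \qquad \pi_\ell' v = 0 \text{ for } \ell \in \mathcal{L}_0.
\]
By Tucker's strict-complementarity version of the theorem of alternatives, the non-existence of such $v$ is equivalent to the existence of weights $\lambda_\ell > 0$ for every $\ell \notin \mathcal{L}_0$ and $\mu_\ell \in \R$ for $\ell \in \mathcal{L}_0$ with
\[
\sum_{\ell \notin \mathcal{L}_0} \lambda_\ell\, s_\ell \pi_\ell \;+\; \sum_{\ell \in \mathcal{L}_0} \mu_\ell\, \pi_\ell \;=\; 0.
\]

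Finally, I translate coordinates so that $b$ moves to the origin, replacing $\psi_\ell$ by $\psi_\ell - \pi_\ell' b$; this is the reduction used implicitly throughout section \ref{sec:linearModelGeneralK} and preserves every just-identified 2SLS estimand's position relative to $b$, hence preserves $\mathcal{P}_\mathcal{L}$ up to translation. Set $\tilde\pi_\ell = s_\ell \pi_\ell$ and $\tilde\psi_\ell = s_\ell(\psi_\ell - \pi_\ell' b) \ge 0$ for $\ell \notin \mathcal{L}_0$; this matches the sign normalization needed by lemma \ref{lemma:convexHullsOfBandPis}. The displayed identity above becomes $\0 \in \operatorname{conv}\{\tilde\pi_\ell\}_{\ell \notin \mathcal{L}_0} + \operatorname{span}\{\pi_\ell\}_{\ell \in \mathcal{L}_0}$. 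Applying Caratheodory's theorem in $\R^K$, together with A\ref{assump:homog:relevance:gen}.2$'$ (affine independence of any $K+1$ instrument vectors), I can prune the support to a subset $\mathcal{L} \subseteq \{1,\ldots,L\}$ of exactly $K+1$ indices on which the reduced model has $L = K+1$; lemma \ref{lemma:convexHullsOfBandPis} applied to this sub-system then converts $\0 \in \operatorname{conv}\{\tilde\pi_\ell\}_{\ell \in \mathcal{L}}$ into $\0 \in \operatorname{conv}\{\beta^\textsc{2sls}_{\mathcal{L}\setminus\{\ell\}} - b : \ell \in \mathcal{L}\}$, i.e., $b \in \mathcal{P}_\mathcal{L} \subseteq \mathcal{P}$, contradicting the hypothesis. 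The hard part is the bookkeeping in this last paragraph: ensuring the Caratheodory reduction respects the partition into $\mathcal{L}_0$ and its complement so that the pruned identity still fits the form required to invoke lemma \ref{lemma:convexHullsOfBandPis}, and verifying that the sign normalization $\tilde\psi_\ell \geq 0$ is compatible with the one used in that lemma.
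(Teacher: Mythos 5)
Your opening reduction is correct and is a nice repackaging of the lemma: if some $\tilde b\neq b$ lies in $\mathcal{B}(\delta(b))$, then convexity of $\delta_\ell(\cdot)=|\psi_\ell-\pi_\ell'\cdot|$ gives $\delta_\ell(b_\alpha)\leq\delta_\ell(b)$ for all $\ell$ along the segment, and for any $\ell$ with $\pi_\ell'(\tilde b-b)\neq 0$ (which exists since the $\pi_\ell$ span $\R^K$ under A\ref{assump:homog:relevance:gen}$^\prime$, and which forces $\delta_\ell(b)>0$) the linear map $\alpha\mapsto\psi_\ell-\pi_\ell'b_\alpha$ starts at $\pm\delta_\ell(b)$ and must move strictly toward zero, so $\delta(b_\alpha)<\delta(b)$ for small $\alpha$ and $\delta'=\delta(b_\alpha)$ works. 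Your cone characterization of $\mathcal{B}(\delta(b))=\{b\}$ is also correct (the upper bounds $s_\ell\pi_\ell'v\leq 2\delta_\ell(b)$ can be dropped by rescaling $v$). The problem is that everything after the theorem of the alternative is a sketch of intentions rather than a proof, and the step you defer — "the bookkeeping in this last paragraph" — is exactly where the content of the lemma lives. Concretely: (i) the dual certificate is $\0\in\operatorname{conv}\{s_\ell\pi_\ell\}_{\ell\notin\mathcal{L}_0}+\operatorname{span}\{\pi_\ell\}_{\ell\in\mathcal{L}_0}$, which is not of the form $\0\in\operatorname{conv}\{\pi_\ell\}_{\ell\in\mathcal{L}}$ that lemma \ref{lemma:convexHullsOfBandPis} accepts; converting the free multipliers $\mu_\ell$ into convex weights requires choosing signs $\operatorname{sign}(\mu_\ell)$ for the $\mathcal{L}_0$ indices. (ii) Lemma \ref{lemma:convexHullsOfBandPis} only asserts that \emph{there exists} a normalization of the zero-intercept hyperplanes for which its implication holds — its proof constructs a particular one — so you cannot invoke it with the signs your certificate hands you without reworking its proof. (iii) The Caratheodory pruning may select a $(K+1)$-subset mixing $\mathcal{L}_0$ and non-$\mathcal{L}_0$ indices, and you have not verified the pruned identity still meets the lemma's hypotheses. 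None of these is obviously fatal, but none is resolved, so the proof is incomplete at its crux.

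For comparison, the paper runs the same toolkit (lemma \ref{lemma:convexHullsOfBandPis}, Caratheodory, Farkas) in the primal rather than dual direction, which lets it dodge the normalization conflict: it first treats the case $\delta_\ell(b)>0$ for all $\ell$ (so $\mathcal{L}_0=\emptyset$ and the sign normalization is forced), uses the \emph{contrapositive} of lemma \ref{lemma:convexHullsOfBandPis} plus Caratheodory to conclude $\0\notin\operatorname{conv}\{\pi_\ell\}_{\ell=1}^L$ from $b\notin\mathcal{P}$, and then applies lemma \ref{lemma:Farkas} to produce a direction $\bar b$ with $\pi_\ell'\bar b>0$ for every $\ell$; a small step in that direction strictly decreases every $\delta_\ell$ simultaneously. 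The case with some $\delta_\ell(b)=0$ is handled separately by noting $|\mathcal{L}_0|\leq K-1$ (else $b$ would be a just-identified 2SLS estimand and hence in $\mathcal{P}$) and restricting the whole argument to the subspace $\{v:\pi_\ell'v=0,\ \ell\in\mathcal{L}_0\}$. If you want to salvage your dual route, the cleanest fix is to import that same subspace reduction before invoking the theorem of the alternative, so that the certificate involves only strictly positive weights on a single sign-normalized family.
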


Figures \ref{fig:LisKplus1FAS_outside} and \ref{fig:generalLinearFAS} illustrate the geometric intuition for this lemma.

\begin{proof}[Proof of lemma \ref{lemma:case2:FFguessGeneralSupersetFF}]
Without loss of generality, suppose $b = \0$. This follows since we can simply translate our coordinate system so that the origin is at $b$. Put differently, we map all $x \in \R^K$ to $x - b$. In particular, the hyperplanes
\[
	B_\ell(0) = \{ \tilde{b} \in \R^K: \psi_\ell = \pi_\ell' \tilde{b} \}
\]
get mapped to
\[
	B_\ell(0) - b
	= \{ \tilde{b} \in \R^K : \psi_\ell - \pi_\ell' b = \pi_\ell' \tilde{b} \}.
\]
Thus we let $b = \0$. Throughout this proof, pick one of the normalizations given by lemma \ref{lemma:convexHullsOfBandPis}. In these normalizations we have $\psi_\ell \geq 0$ for all $\ell$. Next, there are two cases to consider.
\begin{enumerate}
\item Suppose $\delta_\ell(b) >0$ for all $\ell$. Since $b \notin \mathcal{P}$, $b \notin \mathcal{P}_{\mathcal{L}} = \text{conv}(\{\beta^\textsc{2sls}_{\mathcal{L}\setminus\{l\}}:\ell \in\mathcal{L}\})$ for any $\mathcal{L}$ with $|\mathcal{L}| = K+1$. Since $b = \0$, lemma \ref{lemma:convexHullsOfBandPis} implies that $\0 \notin \text{conv}(\{\pi_\ell: \ell \in \mathcal{L}\})$. This holds for any set $\mathcal{L}$ such that $|\mathcal{L}| = K+1$. This implies that $\0 \notin \text{conv}(\{\pi_\ell:\ell =1,\ldots,L\})$.
\begin{itemize}
\item If $\0 \in \text{conv}(\{\pi_\ell:\ell =1,\ldots,L\})$, then Caratheodory's theorem (e.g., chapter 17 of \citealt{Rockafellar1970}) implies that $\0$ is in the convex hull of a $(K+1)$-element subset of $\{ \pi_\ell : \ell =1,\ldots,L \}$, That is, $\0 \in \text{conv}(\{\pi_\ell: \ell \in \mathcal{L}\})$ for some $\mathcal{L}$ with $| \mathcal{L} | = K+1$. This is a contradiction.
\end{itemize}
Since $\0 \notin \text{conv}(\{\pi_\ell:\ell =1,\ldots,L\})$, lemma \ref{lemma:Farkas} implies that there exists a vector $\bar{b}$ such that $\bar{b}' \pi_\ell > 0$ for all $\ell = 1,\ldots,L$. Define
\begin{align*}
	b(\varepsilon)
		&= b + \varepsilon \bar{b} \\
		&= \0 + \varepsilon \bar{b}.
\end{align*}
This is a point that starts at $b$ and moves in the direction of $\overline{b}$. Recall that we normalized $\psi_\ell \geq 0$ for all $\ell$. Moreover, note that we must actually have $\psi_\ell > 0$ since $0<\delta_\ell(\0) = | \psi_\ell |$. Thus, since $\psi_\ell > 0$ and $\pi_\ell' \bar{b} > 0$ for all $\ell$, there exists an $\bar{\varepsilon} > 0$ such that
\[
	\psi_\ell - \bar{\varepsilon} \pi_\ell' \bar{b} > 0
\]
for all $\ell$. This implies that
\begin{align*}
	| \psi_\ell - \pi_\ell' b(\bar{\varepsilon}) |
		&= | \psi_\ell - \pi_\ell' (b + \bar{\varepsilon} \bar{b}) | \\
		&= | (\psi_\ell - \pi_\ell' b) - \bar{\varepsilon} \pi_\ell' \bar{b} | \\
		&= (\psi_\ell - \pi_\ell'b) - \bar{\varepsilon} \pi_\ell' \bar{b} \\
		&= \delta_\ell(b) - \bar{\varepsilon} \pi_\ell' \bar{b} \\
		&< \delta_\ell(b).
\end{align*}
The third line uses $b = \0$ and our specific choice of $\bar{\varepsilon}$ to ensure that this term is positive, and hence we can drop the absolute value. The fourth line uses $\psi_\ell \geq 0$ and $b = \0$. The fifth line uses $\pi_\ell'\bar{b} > 0$ and $\bar{\varepsilon} > 0$.

Let
\[
	\delta_\ell' = | \psi_\ell - \pi_\ell' b(\bar{\varepsilon}) |.
\]
We have shown that $\delta' < \delta(b)$. Finally, by our characterization of $\mathcal{B}(\cdot)$ and the definition of $\delta'$, we have $b(\bar{\varepsilon}) \in \mathcal{B}(\delta')$. Hence $\mathcal{B}(\delta') \neq \emptyset$.

\item Suppose $\delta_\ell(b) = 0$ for some $\ell$'s. There can be at most $K-1$ such indices, since otherwise we would have $b \in \mathcal{P}$. Let $\mathcal{L}_0$ denote the set of these indices. As in the proof of lemma \ref{lemma:convexHullsOfBandPis}, we must have $\psi_\ell = 0$ for $\ell \in \mathcal{L}_0$. In this case, consider the subspace
\begin{align*}
	\{ \tilde{b} \in \R^K : 0 = \pi_\ell' \tilde{b}, \ \ell \in \mathcal{L}_0\}.
\end{align*}
This is a linear subspace of dimension at least $1$ (by assumption that $\delta_\ell(b) = 0$ for some $\ell$'s) and at most $K-1$ (as noted earlier). Within this subspace, we can look at the remaining indices $\{ 1,\ldots, L \} \setminus \mathcal{L}_0$. We have $\delta_\ell(b) > 0$ for all of these indices. By restricting attention to this subspace we can thus immediately apply the analysis of case 1.
\end{enumerate}
\end{proof}

For the next two lemmas, let
\[
	\text{FF}^\text{guess} = \{\delta \in \R^L_{\geq 0}: \delta_\ell = | \psi_\ell - \pi_\ell' b |, \ell=1,\ldots,L, b \in \mathcal{P} \}
\]
and let $\text{FF}$ denote the true falsification frontier.

\begin{lemma}\label{lemma:FFguessGeneralSubsetFF}
Suppose A\ref{assump:homog:relevance:gen}$^\prime$, A\ref{assump:homog:nonsing:gen}, and A\ref{assump:exogeneity:gen} hold. Suppose $L \geq K+1$. Then $\text{FF}^\text{guess} \subseteq \text{FF}$.
\end{lemma}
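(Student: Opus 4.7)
The plan is to verify the two defining properties of the falsification frontier for any $\delta \in \text{FF}^\text{guess}$: (i) the identified set $\mathcal{B}(\delta)$ is nonempty, and (ii) every strict componentwise decrease $\delta' < \delta$ yields $\mathcal{B}(\delta') = \emptyset$. Both properties should follow almost immediately from the earlier infrastructure, so the proof will be short.

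First I would take an arbitrary $\delta \in \text{FF}^\text{guess}$ and write it in the form $\delta = \delta(b) = (|\psi_1 - \pi_1'b|,\ldots,|\psi_L - \pi_L'b|)$ for some $b \in \mathcal{P}$. Lemma \ref{lemma:singletonOnFASgeneralCase} then gives $\mathcal{B}(\delta(b)) = \{b\}$ directly, so property (i) is immediate.

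For property (ii), I would fix any $\delta' \in \R^L_{\geq 0}$ with $\delta' < \delta$. Since $\mathcal{B}(\cdot)$ is monotone in $\delta$ componentwise (seen directly from the intersection-of-halfspaces characterization in equation \eqref{eq:CalBisIntersectionOfHalfspaces}), the inclusion $\mathcal{B}(\delta') \subseteq \mathcal{B}(\delta) = \{b\}$ holds, so it suffices to show $b \notin \mathcal{B}(\delta')$. By the definition of $\delta' < \delta$, there is an index $m$ with $\delta_m' < \delta_m = |\psi_m - \pi_m' b|$; note this forces $\delta_m > 0$, since $\delta_m' \geq 0$. But then the $m$th defining inequality of $\mathcal{B}(\delta')$ is violated at $b$, so $b \notin \mathcal{B}(\delta')$ and hence $\mathcal{B}(\delta') = \emptyset$. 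Combining (i) and (ii) with the definition of the falsification frontier yields $\delta \in \text{FF}$.

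There isn't really a hard step here: all the substantive work has already been done in lemma \ref{lemma:singletonOnFASgeneralCase}, which encapsulates the geometric content that the $\delta_\ell(b)$ are exactly the smallest slab half-widths that still contain $b$. The only thing to be slightly careful about is handling coordinates where $\delta_\ell(b) = 0$, but in that case strict decrease is impossible in the $\ell$th coordinate (since $\delta_\ell' \geq 0$), so the argument only needs to be run at some coordinate $m$ where the strict inequality $\delta_m' < \delta_m$ actually occurs, which is automatic from $\delta' < \delta$.
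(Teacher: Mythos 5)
Your proof is correct and follows essentially the same route as the paper's: invoke lemma \ref{lemma:singletonOnFASgeneralCase} to get $\mathcal{B}(\delta(b)) = \{b\}$, then observe that any strict componentwise decrease pushes $b$ out of the corresponding slab, forcing $\mathcal{B}(\delta') = \emptyset$ via monotonicity. Your handling of the coordinates where $\delta_\ell(b) = 0$ is in fact slightly cleaner than the paper's own wording, which asserts $0 < \delta_\ell'$ when only $\delta_\ell' < \delta_\ell$ is needed.
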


\begin{proof}[Proof of lemma \ref{lemma:FFguessGeneralSubsetFF}]
Recall the definition $\delta_\ell(b) = | \psi_\ell - \pi_\ell' b |$. Let $\delta \in \text{FF}^\text{guess}$. Then, by definition, there is a $b \in \mathcal{P}$ such that $\delta_\ell(b) = \delta_\ell$ for all $\ell$. Thus $\mathcal{B}(\delta) = \{b\}$ by lemma \ref{lemma:singletonOnFASgeneralCase}.

Let $\delta' < \delta(b)$. Then there is some index $\ell$ such that $0 < \delta_\ell' < \delta_\ell(b)$. So $\delta_\ell(b) \notin [-\delta_\ell', \delta_\ell' ]$. That is,
\[
	\psi_\ell - \pi_\ell' b \notin [-\delta'_\ell, \delta'_\ell].
\]
So $b \notin B_\ell(\delta')$. This implies that $b \notin \mathcal{B}(\delta')$. But since $\mathcal{B}(\delta') \subseteq \mathcal{B}(\delta) = \{ b \}$, we must have $\mathcal{B}(\delta') =\emptyset$. Hence, by the definition of the falsification frontier, $\text{FF}^\text{guess} \subseteq \text{FF}$.
\end{proof}

\begin{lemma}\label{lemma:FFguessGeneralSupersetFF}
Suppose A\ref{assump:homog:relevance:gen}$^\prime$, A\ref{assump:homog:nonsing:gen}, and A\ref{assump:exogeneity:gen} hold. Suppose $L \geq K+1$. Then $\text{FF}^\text{guess} \supseteq \text{FF}$.
\end{lemma}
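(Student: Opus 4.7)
The plan is to prove the contrapositive: if $\delta \notin \text{FF}^{\text{guess}}$, then $\delta \notin \text{FF}$. Throughout, I use the notation $\delta(b) = (|\psi_1 - \pi_1'b|,\ldots,|\psi_L - \pi_L'b|)$, and recall that $b \in \mathcal{B}(\delta)$ if and only if $\delta(b) \leq \delta$ componentwise, by theorem \ref{thm:idset:homog:gen}. This mirrors the three-case structure of the corresponding step in the proof of proposition \ref{prop:K1FFhomogTrt}, but with the interval $[b_{\min},b_{\max}]$ replaced by the polyhedral set $\mathcal{P}$, and with the new geometry handled by the earlier lemmas.

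Fix $\delta \notin \text{FF}^{\text{guess}}$. First, if $\mathcal{B}(\delta) = \emptyset$ then $\delta \notin \text{FF}$ directly from the definition of the falsification frontier, so I may assume $\mathcal{B}(\delta) \neq \emptyset$ and pick some $b \in \mathcal{B}(\delta)$. The argument then splits according to whether $b$ is in $\mathcal{P}$ or not.

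Case A: $b \in \mathcal{P}$. Since $b \in \mathcal{B}(\delta)$, I have $\delta(b) \leq \delta$ componentwise. The assumption $\delta \notin \text{FF}^{\text{guess}}$ rules out $\delta = \delta(b')$ for any $b' \in \mathcal{P}$, so in particular $\delta \neq \delta(b)$, giving $\delta(b) < \delta$. Setting $\delta' = \delta(b)$, lemma \ref{lemma:singletonOnFASgeneralCase} yields $\mathcal{B}(\delta') = \{b\} \neq \emptyset$. Hence there exists $\delta' < \delta$ with $\mathcal{B}(\delta') \neq \emptyset$, so by definition $\delta \notin \text{FF}$.

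Case B: $b \notin \mathcal{P}$. Lemma \ref{lemma:case2:FFguessGeneralSupersetFF} applied at this $b$ produces a $\delta' < \delta(b)$ with $\mathcal{B}(\delta') \neq \emptyset$. Combined with $\delta(b) \leq \delta$, this gives $\delta' < \delta$ and $\mathcal{B}(\delta') \neq \emptyset$, so again $\delta \notin \text{FF}$.

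The main conceptual step has already been done in lemma \ref{lemma:case2:FFguessGeneralSupersetFF}, which handles the hardest case (points outside the convex polytope $\mathcal{P}$, where we need to perturb $b$ in a direction that simultaneously shrinks all the $\delta_\ell$'s). The remaining work for the present lemma is just the case analysis above, which is a routine translation of the $K=1$ argument. No obstacle of substance beyond what is already packaged in lemmas \ref{lemma:singletonOnFASgeneralCase} and \ref{lemma:case2:FFguessGeneralSupersetFF}.
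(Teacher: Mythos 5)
Your proposal is correct and follows essentially the same route as the paper's proof: contrapositive, the trivial empty-set case, and then the two substantive cases dispatched by lemma \ref{lemma:singletonOnFASgeneralCase} (for $b \in \mathcal{P}$, using that $\delta \notin \text{FF}^\text{guess}$ forces $\delta(b) < \delta$) and lemma \ref{lemma:case2:FFguessGeneralSupersetFF} (for $b \notin \mathcal{P}$). The only cosmetic difference is that you branch on whether an arbitrarily chosen element of $\mathcal{B}(\delta)$ lies in $\mathcal{P}$, whereas the paper branches on whether $\mathcal{B}(\delta)$ contains any element outside $\mathcal{P}$; the two case decompositions are logically equivalent.
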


\begin{proof}[Proof of lemma \ref{lemma:FFguessGeneralSupersetFF}]
We will show the contrapositive: $\delta \notin \text{FF}^\text{guess}$ implies $\delta \notin \text{FF}$. So let $\delta \notin \text{FF}^\text{guess}$. There are three cases to consider.
\begin{enumerate}
\item Suppose $\delta$ is such that $\mathcal{B}(\delta) = \emptyset$.
\begin{itemize}
\item In this case the result follows immediately since $\mathcal{B}(\delta) = \emptyset$ implies $\delta\notin \text{FF}$ by definition.
\end{itemize}

\item Suppose $\delta$ is such that $\mathcal{B}(\delta)$ contains an element $b \notin \mathcal{P}$.
\begin{itemize}
\item By lemma \ref{lemma:case2:FFguessGeneralSupersetFF} there exists a $\delta' < \delta(b)$ with $\mathcal{B}(\delta') \neq \emptyset$. Moreover, $\delta(b) \leq \delta$ by the characterization of $\mathcal{B}(\delta)$ in theorem \ref{thm:idset:homog:gen}. Thus $\delta \notin \text{FF}$ by the definition of the falsification frontier.
\end{itemize}

\item Suppose $\delta$ is such that $\mathcal{B}(\delta) \neq \emptyset$ and $\mathcal{B}(\delta) \subseteq \mathcal{P}$. 
\begin{itemize}
\item We use two observations:

\medskip

\begin{enumerate}
\item Since $\delta \notin \text{FF}^\text{guess}$ we must have $\delta \neq \delta(b)$ for all $b \in \mathcal{P}$. In particular, $\delta \neq \delta(b)$ for all $b \in \mathcal{B}(\delta)$, since we are considering the case where $\mathcal{B}(\delta) \subseteq \mathcal{P}$. 

\medskip

\item Let $b \in \mathcal{B}(\delta)$. By the characterization of $\mathcal{B}(\delta)$ in theorem \ref{thm:idset:homog:gen} (see equation \eqref{eq:CalBisIntersectionOfHalfspaces}), this implies that $| \psi_\ell - \pi_\ell' b | \leq \delta_\ell$. That is, $\delta(b) \leq \delta$.
\end{enumerate}

\medskip

Let $b'$ be any element of $\mathcal{B}(\delta)$. This exists by assumption. Let $\delta' = \delta(b')$. The previous two observations imply that $\delta' \neq \delta$ and $\delta' \leq \delta$. This implies $\delta' < \delta$. Moreover, we have $\mathcal{B}(\delta') = \mathcal{B}(\delta(b')) = \{ b' \} \neq \emptyset$ by lemma \ref{lemma:singletonOnFASgeneralCase}. Thus $\delta \notin \text{FF}$, by definition of the falsification frontier.

\end{itemize}
\end{enumerate}
All values of $\delta$ must fall in one of these three cases. Therefore $\text{FF}^\text{guess} \supseteq \text{FF}$.
\end{proof}

\begin{proof}[Proof of proposition \ref{prop:generalLinearFF}]
This follows directly from lemmas \ref{lemma:FFguessGeneralSubsetFF} and \ref{lemma:FFguessGeneralSupersetFF}.
\end{proof}

\begin{proof}[Proof of proposition \ref{prop:KisLplus1FF}]
This is a special case of proposition \ref{prop:generalLinearFF}.
\end{proof}

\begin{proof}[Proof of theorem \ref{thm:generalLinearFAS}]
We have
\begin{align*}
	\bigcup_{\delta \in \text{FF}} \mathcal{B}(\delta)
		&= \bigcup_{b \in \mathcal{P}} \mathcal{B}(\delta(b)) \\
		&= \bigcup_{b \in \mathcal{P}} \{ b \} \\
		&= \mathcal{P}.
\end{align*}
The first line follows by proposition \ref{prop:generalLinearFF}. The second line follows by lemma \ref{lemma:singletonOnFASgeneralCase}.
\end{proof}

\begin{proof}[Proof of theorem \ref{thm:LisKplus1FAS}]
This is a special case of theorem \ref{thm:generalLinearFAS}. In particular, note that $\mathcal{P} = \text{FAS}^*$ in this case.
\end{proof}

To prove corollary \ref{corr:FASprojection}, we use the following definition: Let $P$ be a $K \times K$ idempotent matrix. Define the linear operator $p : \R^K \rightarrow \R^K$ by $p(a) = Pa$. $p$ is called a \emph{projection}. For $A \subseteq \R^K$, define the projection of the set $A$ as
\[
	\text{proj}(A) = \{ p(a) \in \R^K : a \in A \}.
\]
We use the following lemma.

\begin{lemma}\label{lemma:projAndconvCommute}
$\text{proj}(\text{conv}(A)) = \text{conv}(\text{proj}(A))$.
\end{lemma}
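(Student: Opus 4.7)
The plan is to prove the equality by showing both inclusions, using only two facts: that $p$ is a linear map (since $p(a)=Pa$) and that the convex hull consists of all finite convex combinations of elements of the set. No machinery beyond this should be needed; idempotence of $P$ plays no role in the argument.

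For the inclusion $\text{proj}(\text{conv}(A)) \subseteq \text{conv}(\text{proj}(A))$, I would take an arbitrary $y \in \text{proj}(\text{conv}(A))$ and write $y = p(x)$ for some $x \in \text{conv}(A)$. By definition of the convex hull, there exist $a_1,\ldots,a_n \in A$ and nonnegative weights $\lambda_1,\ldots,\lambda_n$ summing to one with $x = \sum_{i=1}^n \lambda_i a_i$. Applying linearity of $p$ gives $y = p(x) = \sum_{i=1}^n \lambda_i p(a_i)$, which is a convex combination of elements of $\text{proj}(A)$, so $y \in \text{conv}(\text{proj}(A))$.

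For the reverse inclusion $\text{conv}(\text{proj}(A)) \subseteq \text{proj}(\text{conv}(A))$, I would take an arbitrary $y \in \text{conv}(\text{proj}(A))$ and write $y = \sum_{i=1}^n \lambda_i p(a_i)$ for some $a_1,\ldots,a_n \in A$ and nonnegative weights $\lambda_1,\ldots,\lambda_n$ summing to one. Again by linearity of $p$, $y = p\bigl(\sum_{i=1}^n \lambda_i a_i\bigr)$. Since $\sum_{i=1}^n \lambda_i a_i \in \text{conv}(A)$, it follows that $y \in \text{proj}(\text{conv}(A))$. Both inclusions together give the claimed equality. There is no real obstacle here; the lemma is a direct consequence of the commutativity of linear maps with convex combinations, and the proof is essentially a two-line verification in each direction.
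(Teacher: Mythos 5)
Your proof is correct and takes essentially the same approach as the paper: both rest on the fact that a linear map commutes with convex combinations, and your first inclusion is word-for-word the paper's argument. The only difference is cosmetic: for the inclusion $\text{conv}(\text{proj}(A)) \subseteq \text{proj}(\text{conv}(A))$ the paper argues at the level of sets (monotonicity of $\text{proj}$ plus convexity of the image of a convex set under a linear map) whereas you write out the convex combination explicitly, which is if anything slightly more direct.
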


\begin{proof}[Proof of lemma \ref{lemma:projAndconvCommute}]
\textbf{Step 1 ($\supseteq$).} Since $A \subseteq \text{conv}(A)$, $\text{proj}(A) \subseteq \text{proj}(\text{conv}(A))$. Thus
\[
	\text{conv}(\text{proj}(A))
	\subseteq \text{conv}(\text{proj}(\text{conv}(A))).
\]
Since $p(\cdot)$ is a linear operator, $\text{proj}(\text{conv}(A))$ is convex. Hence
\[
	\text{conv}(\text{proj}(\text{conv}(A)))
	=
	\text{proj}(\text{conv}(A)).
\]
This with our previous result shows that $\text{conv}(\text{proj}(A)) \subseteq \text{proj}(\text{conv}(A))$.

\bigskip

\textbf{Step 2 ($\subseteq$).} Let $x \in \text{proj}(\text{conv}(A))$. Then $x = p(y)$ for some $y \in \text{conv}(A)$. Since $y \in \text{conv}(A)$, we can write
\[
	y = \sum_{i=1}^n \theta_i y_i
\]
where $ \sum_{i=1}^n \theta_i = 1$,$ \theta_i \geq 0$, and $y_i \in A$ for $i=1,\ldots,n$, where $n$ is a finite integer. Since $p(\cdot)$ is a linear operator,
\[
	p(y) = \sum_{i=1}^n \theta_i p(y_i).
\]
Thus we've written $x$ as a convex combination of elements $p(y_i) \in \text{proj}(A)$. Hence $x \in \text{conv}(\text{proj}(A))$. Thus we've shown that $\text{proj}(\text{conv}(A)) \subseteq \text{conv}(\text{proj}(A))$.
\end{proof}

\begin{proof}[Proof of corollary \ref{corr:FASprojection}]
The first part of this result states that $\mathcal{P} \subseteq \text{FAS}^*$. To see this, let $b \in \mathcal{P}$. Then $b \in \mathcal{P}_\mathcal{L}$ for some set of indices $\mathcal{L}$ with $| \mathcal{L} | = K+1$. But if $b$ is a convex combination of $\{ \beta^\textsc{2sls}_{\mathcal{L} \setminus \{ \ell \} } : \ell \in \mathcal{L} \}$, then it is also a convex combination of the larger set of elements $\{ \beta_{\mathcal{S}}^\textsc{2sls} : \mathcal{S} \subseteq \{ 1,\ldots, L \}, | \mathcal{S} | = K \}$. Hence $b \in \text{FAS}^*$.

\medskip

To prove the second part, consider projections defined by
\[
	P = 
	\begin{pmatrix}
		\alpha' \\
		0 \\
		\vdots \\
		0
	\end{pmatrix}
\]
where $\alpha \in \R^{K}$. For any set $A \subseteq \R^K$, let $[A]_1 = \{ a_1 \in \R : a = (a_1,\ldots,a_K) \in \R^K \}$. Since $P$ maps the $2,\ldots,K$ components of any vector $a \in \R^K$ to zero, it suffices to show that
\[
	[\text{proj}(\mathcal{P})]_1 = [\text{proj}(\text{FAS}^*) ]_1
\]
and
\begin{equation}\label{eq:projectionOfFASStar}
	[\text{proj}(\text{FAS}^*)]_1
	= \left[\min_{\mathcal{L}\subseteq \{1,\ldots,L\}, |\mathcal{L}| = K} \alpha' \beta_{\mathcal{L}}^\textsc{2sls}, \
	\max_{\mathcal{L}\subseteq \{1,\ldots,L\}, |\mathcal{L}| = K} \alpha' \beta_{\mathcal{L}}^\textsc{2sls}\right].
\end{equation}
We have
\begin{align*}
	\text{proj}(\text{FAS}^*)
		&= \text{proj}(\text{conv}(\mathcal{P})) \\
		&= \text{proj}(\text{conv}(\{\beta_{\mathcal{L}}^\textsc{2sls}:\mathcal{L}\subseteq \{1,\ldots,L\}, |\mathcal{L}| = K\}))\\
	&= \text{conv}(\text{proj}(\{\beta_{\mathcal{L}}^\textsc{2sls}:\mathcal{L}\subseteq \{1,\ldots,L\}, |\mathcal{L}| = K\}))\\
	&= \text{conv}(\{P\beta_{\mathcal{L}}^\textsc{2sls}:\mathcal{L}\subseteq \{1,\ldots,L\}, |\mathcal{L}| = K\}).
\end{align*}
The third line follows from lemma \ref{lemma:projAndconvCommute}. The fourth by the definition of the projection of a set. Using the specific form of the projection $P$ now gives equation \eqref{eq:projectionOfFASStar}.

Similarly,
\begin{align*}
	\text{proj}(\mathcal{P}) 
	&= \text{proj} \left(\bigcup_{\mathcal{L} \subseteq \{ 1,\ldots, L \} : | \mathcal{L} | = K+1}  \text{conv} \big( \big\{ \beta_{\mathcal{L} \setminus \{ \ell \}}^\textsc{2sls} : \ell \in \mathcal{L} \big\} \big)\right)\\
	&= \bigcup_{\mathcal{L} \subseteq \{ 1,\ldots, L \} : | \mathcal{L} | = K+1} \text{proj} \left( \text{conv} \big( \big\{ \beta_{\mathcal{L} \setminus \{ \ell \}}^\textsc{2sls} : \ell \in \mathcal{L} \big\} \big)\right)\\
	&= \bigcup_{\mathcal{L} \subseteq \{ 1,\ldots, L \} : | \mathcal{L} | = K+1}  \text{conv}\left( \text{proj} \big( \big\{ \beta_{\mathcal{L} \setminus \{ \ell \}}^\textsc{2sls} : \ell \in \mathcal{L} \big\} \big)\right).
\end{align*}
The first line follows by definition of $\mathcal{P}$. The second since projections and unions commute. The third from lemma \ref{lemma:projAndconvCommute}. Hence
\[
	[\text{proj}(\mathcal{P})]_1
	= \bigcup_{\mathcal{L} \subseteq \{ 1,\ldots, L \} : | \mathcal{L} | = K+1}
	\left[\min_{\ell\in\mathcal{L}} \; \alpha' \beta_{\mathcal{L} \setminus \{ \ell \}}^\textsc{2sls}, \
	\max_{\ell\in\mathcal{L}} \; \alpha' \beta_{\mathcal{L} \setminus \{ \ell \}}^\textsc{2sls}\right]
\]
Thus we see that the first component of $\text{proj}(\mathcal{P})$ is a union of closed intervals.

If $\mathcal{L}$ and $\mathcal{L}'$ differ by at most one element, then the intervals
\[
	\left[\min_{\ell\in\mathcal{L}} \; \alpha' \beta_{\mathcal{L} \setminus \{ \ell \}}^\textsc{2sls}, \
	\max_{\ell\in\mathcal{L}} \; \alpha' \beta_{\mathcal{L} \setminus \{ \ell \}}^\textsc{2sls}\right]
\]
and
\[
	\left[\min_{\ell\in\mathcal{L}'} \; \alpha' \beta_{\mathcal{L}' \setminus \{ \ell \}}^\textsc{2sls}, \
	\max_{\ell\in\mathcal{L}'} \; \alpha' \beta_{\mathcal{L}' \setminus \{ \ell \}}^\textsc{2sls}\right]
\]
will overlap (have non-empty intersection). Their union is therefore also a closed interval. Because we take the union over all $\mathcal{L} \subseteq \{ 1,\ldots, L \} : | \mathcal{L} | = K+1$, we can find a sequence $(\mathcal{L}_1,\ldots,\mathcal{L}_N)$ such that
\[
	\left[\min_{\ell\in\mathcal{L}_n} \; \alpha' \beta_{\mathcal{L}_n \setminus \{ \ell \}}^\textsc{2sls}, \
	\max_{\ell\in\mathcal{L}_n} \; \alpha' \beta_{\mathcal{L}_n \setminus \{ \ell \}}^\textsc{2sls}\right]
\]
and
\[
	\left[\min_{\ell\in\mathcal{L}_{n+1}} \; \alpha' \beta_{\mathcal{L}_{n+1} \setminus \{ \ell \}}^\textsc{2sls}, \
	\max_{\ell\in\mathcal{L}_{n+1}} \; \alpha' \beta_{\mathcal{L}_{n+1} \setminus \{ \ell \}}^\textsc{2sls}\right]
\]
overlap and such that
\[
	\bigcup_{n=1}^N \mathcal{L}_n = \{1,\ldots,L\}
\]
Thus
\begin{align*}
	\bigcup_{\mathcal{L} \subseteq \{ 1,\ldots, L \} : | \mathcal{L} | = K+1}  \left[\min_{\ell\in\mathcal{L}} \; \alpha' \beta_{\mathcal{L} \setminus \{ \ell \}}^\textsc{2sls}, \max_{\ell\in\mathcal{L}} \; \alpha' \beta_{\mathcal{L} \setminus \{ \ell \}}^\textsc{2sls}\right] 
	&= \left[\min_{\mathcal{L}\subseteq \{1,\ldots,L\}, |\mathcal{L}| = K} \alpha' \beta_{\mathcal{L}}^\textsc{2sls}, \max_{\mathcal{L}\subseteq \{1,\ldots,L\}, |\mathcal{L}| = K} \alpha' \beta_{\mathcal{L}}^\textsc{2sls}\right].
\end{align*}
Putting everything together yields $[\text{proj}(\mathcal{P})]_1 = [\text{proj}(\text{FAS}^*)]_1$ as desired.
\end{proof}

\begin{figure}[!t]
\centering
\includegraphics[width=0.31\linewidth]{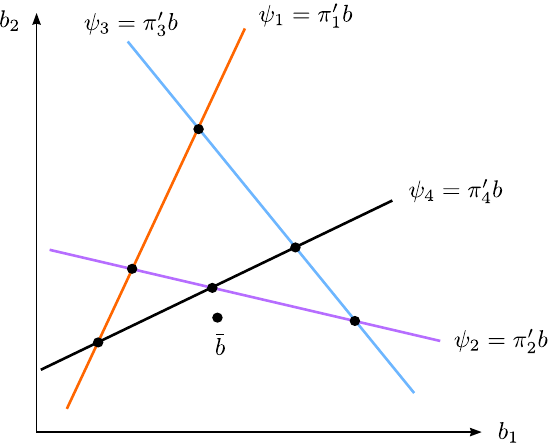}
\includegraphics[width=0.31\linewidth]{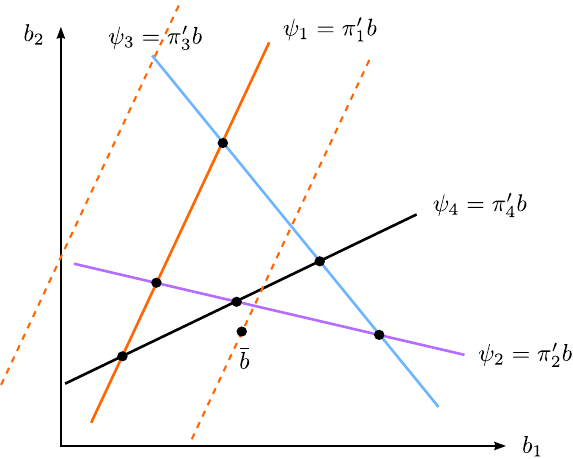}
\includegraphics[width=0.31\linewidth]{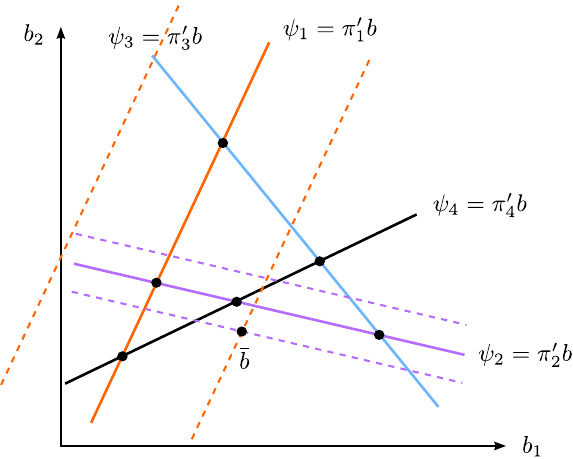} \\
\includegraphics[width=0.31\linewidth]{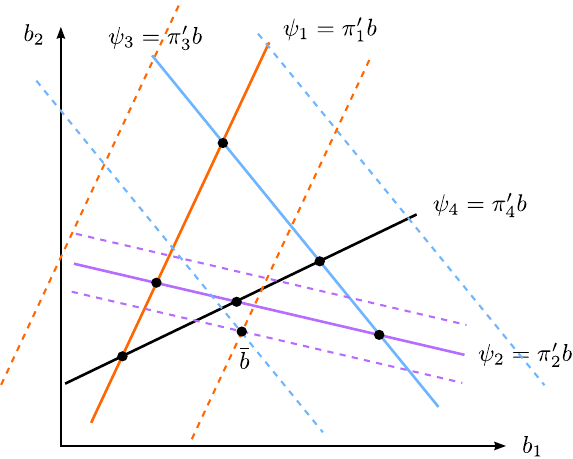}
\includegraphics[width=0.31\linewidth]{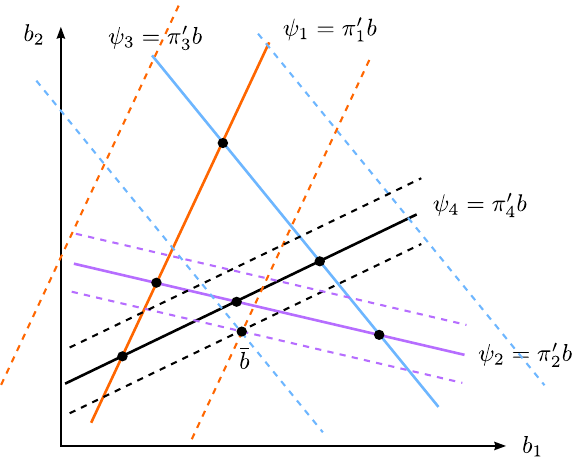}
\includegraphics[width=0.31\linewidth]{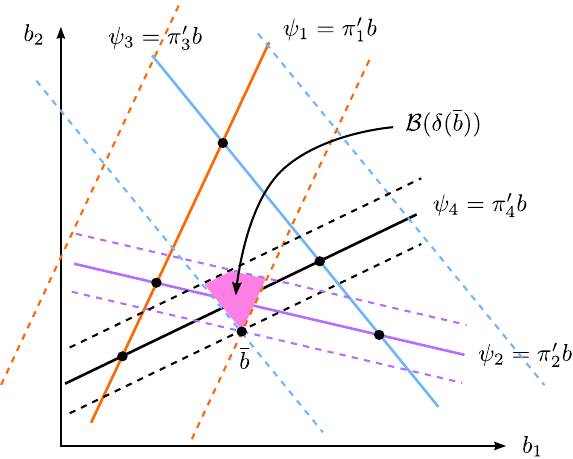}
\caption{Geometric intuition for a key step in the proof of theorem \ref{thm:generalLinearFAS}. This figure is analogous to figure \ref{fig:LisKplus1FAS_outside}, except now we have $K=2$ and $L=4$. Moreover, here we pick $\overline{b}$ to be outside of $\mathcal{P}$ but inside $\text{FAS}^*$. Thus here we illustrate why $\text{FAS}^*$ is not the falsification adaptive set when $L > K+1$. The argument is the same as before: We relax each exclusion restriction just enough to include $\overline{b}$ in the tubes. But when we do this their intersection is non-singleton. Thus we've relaxed exclusion too much. The point $\overline{b}$ is not in the falsification adaptive set.}
\label{fig:generalLinearFAS}
\end{figure}

\section{Proofs for section \ref{sec:hetModel}}

\subsubsection*{Proofs for section \ref{sec:hetTrtBinary}: Single binary instrument}

\begin{proof}[Proof of theorem \ref{thm:BinaryYoneInstrumentCdep}]
This result follows by applying theorem \ref{thm:BinaryYManyInstrumentCdep} to the case with $J = 2$ and $L =1$. Here we will show that the general expressions for $\mathcal{D}(c)$ and $\mathcal{H}_x$ used in theorem \ref{thm:BinaryYManyInstrumentCdep} simplify to the ones used in theorem \ref{thm:BinaryYoneInstrumentCdep}.

We start with $\mathcal{D}(c)$. From theorem \ref{thm:BinaryYManyInstrumentCdep}, it is the set of $(a_0,a_1)\in[0,1]^2$ such that the following eight inequalities hold
\begin{align*}
	& (1-p_Z) a_0 - \min\{1-p_Z + c,1\} (p_Z a_1 + (1-p_Z)a_0) \leq 0 \\
 	& (1-p_Z) a_0 - \max\{1-p_Z - c,0\} (p_Z a_1 + (1-p_Z)a_0) \geq 0 \\
	& (1-p_Z) (1-a_0) - \min\{1-p_Z + c,1\} [1-(p_Z a_1 + (1-p_Z)a_0)] \leq 0 \\
	& (1-p_Z) (1-a_0) - \max\{1-p_Z - c,0\} [1-(p_Z a_1 + (1-p_Z)a_0)] \geq 0 \\
	& p_Z a_1 - \min\{p_Z + c,1\} (p_Z a_1 + (1-p_Z)a_0) \leq 0 \\
 	& p_Z a_1 - \max\{p_Z - c,0\} (p_Z a_1 + (1-p_Z)a_0) \geq 0 \\
	& p_Z (1-a_1) - \min\{p_Z + c,1\} [1-(p_Z a_1 + (1-p_Z)a_0)] \leq 0 \\
	& p_Z (1-a_1) - \max\{p_Z - c,0\} [1-(p_Z a_1 + (1-p_Z)a_0)] \geq 0.
\end{align*}
Note that we indexed the elements of $\mathcal{D}(c)$ by $j \in \{0,1\}$ rather than $j \in \{1,2\}$. Also recall that $p_Z = \Prob(Z=1)$. Using the function $k_z(c)$ and removing redundant equations, we can rearrange these 8 inequalities to obtain
\begin{align*}
	a_1  &\geq a_0 k_0(c)\\
	a_0 &\geq a_1 k_1(c)\\
	(1-a_1) &\geq (1-a_0)k_0(c)\\
	(1-a_0) &\geq (1-a_1)k_1(c).
\end{align*}
This is the set of inequalities described in equation \eqref{eq:diamondSet} and used in theorem \ref{thm:BinaryYoneInstrumentCdep}.

Next consider $\mathcal{H}_x$. From theorem \ref{thm:BinaryYManyInstrumentCdep}, we have
\begin{align*}
	\mathcal{H}_x 
	= \{(a_0,a_1)\in[0,1]^2: \ &\text{For some $(q_0,q_1) \in [0,1]^2$,} \\
		& a_0 = \Prob(Y=1, X=x \mid Z=0) + \Prob(X=1-x \mid Z=0) q_0,\\
		& a_1 = \Prob(Y=1, X=x \mid Z=1) + \Prob(X=1-x \mid Z=1) q_1 \}.
\end{align*}
This is simply
\begin{align*}
	\mathcal{H}_x 
	&= [\Prob(Y=1,X=x \mid Z=0), \ \Prob(Y=1,X=x \mid Z=0) + \Prob(X=1-x \mid Z=0)]\\
	&\qquad \times [\Prob(Y=1,X=x \mid Z=1), \ \Prob(Y=1,X=x \mid Z=1) + \Prob(X=1-x \mid Z=1)]
\end{align*}
as in theorem \ref{thm:BinaryYoneInstrumentCdep}.
\end{proof}

\begin{proof}[Proof of proposition \ref{prop:PropertiesOneInstBounds}]
Parts 1 and 2 are immediate corollaries of proposition \ref{prop:PropertiesManyInstBounds}. Consider part 3. By proposition \ref{prop:PropertiesManyInstBounds}, the correspondence $\phi$ is continuous on
\[
	(c^*,1] = \left\{ c\in [c^*,1]: \big( \text{int} (\mathcal{D}(c)) \cap \text{int} (\mathcal{H}_x) \big) \neq \emptyset, x \in \{0,1\} \right\}.
\]
Here we show that $\phi$ is also continuous at $c = c^*$. Again by proposition \ref{prop:PropertiesManyInstBounds}, this correspondence is uhc at $c=c^*$. To show it is also lhc at $c=c^*$, let $U = [c^*,1]$. $U$ is an open neighborhood of $c^*$ relative to the domain $[c^*,1]$. Let $E$ be any open set such that
\[
	\big( \Theta_0(c^*)\times\Theta_1(c^*) \big) \cap E \neq \emptyset.
\]
Since the set $\Theta_0(c)\times\Theta_1(c)$ is weakly increasing in $c$ (with respect to the set inclusion order $\subseteq$),
\[
	\big( \Theta_0(c^*)\times\Theta_1(c^*) \big) \subseteq \big( \Theta_0(c)\times\Theta_1(c) \big)
\]
for any $c \in U$. Hence 
\[
	\left[ \big( \Theta_0(c^*)\times\Theta_1(c^*) \big) \cap E \right] \subseteq \left[ \big( \Theta_0(c)\times\Theta_1(c) \big) \cap E \right]
\]
for any $c \in U$. Since the set on the left is not empty, the set on the right is not empty:
\[
	\big( \Theta_0(c)\times\Theta_1(c) \big) \cap E \neq \emptyset
\]
for any $c \in U$. Therefore $\phi$ is continuous at $c=c^*$.
\end{proof}

\begin{proof}[Proof of corollary \ref{corr:hetTrtBinATEbounds}]
This follows immediately from the more general result, corollary \ref{corr:hetTrtMultipleATEbounds}.
\end{proof}

\subsubsection*{Proofs for section \ref{sec:hetTrtBinary}: Multiple discrete instruments}

\begin{proof}[Proof of theorem \ref{thm:BinaryYManyInstrumentCdep}]
There are two steps. First we show that $\Theta_0(c) \times \Theta_1(c)$ is an outer identified set, in the sense that it contains the true parameters. Then we show that it is sharp, in the sense that any element of this set is consistent with the data and the assumptions. Finally, the fact that the model is refuted if and only if this set is empty follows by the definition of a (sharp) identified set.

\bigskip

\textbf{Step 1.} We first show that the true conditional probabilities $(p_0,p_1)$ are in $\Theta_0(c) \times \Theta_1(c)$. Fix $x \in \{0,1\}$. By $c_\ell$-dependence,
\begin{align*}
	\Prob(Z_\ell = z_\ell^j \mid Y_x = 1) &\leq \min\{\Prob(Z_\ell = z_\ell^j) + c_\ell,1\} \\
	\Prob(Z_\ell = z_\ell^j \mid Y_x = 1) &\geq \max\{\Prob(Z_\ell = z_\ell^j) - c_\ell,0\} \\
	\Prob(Z_\ell = z_\ell^j \mid Y_x = 0) &\leq \min\{\Prob(Z_\ell = z_\ell^j) + c_\ell,1\} \\
	\Prob(Z_\ell = z_\ell^j \mid Y_x = 0) &\geq \max\{\Prob(Z_\ell = z_\ell^j) - c_\ell,0\}
\end{align*}
for all $j \in \{1,\ldots,J\}$. Consider the first inequality. Use Bayes' rule to rewrite the left hand side. This yields
\begin{align*}
	\Prob(Y_x = 1 \mid  Z_\ell = z_\ell^j) \Prob(Z_\ell = z_\ell^j) 
	&\leq \min\{\Prob(Z= z_\ell^j) + c_\ell,1\} \Prob(Y_x = 1)\\
	&= \min\{\Prob(Z= z_\ell^j) + c_\ell,1\} \sum_{k=1}^J \Prob(Y_x = 1 \mid  Z_\ell = z_\ell^k)\Prob(Z_\ell = z_\ell^k).
\end{align*}
In the last line we applied the law of total probability. Hence
\[
		\Prob(Y_x = 1 \mid  Z_\ell = z_\ell^j) \Prob(Z_\ell = z_\ell^j)
		- \min\{\Prob(Z= z_\ell^j) + c_\ell,1\} \sum_{k=1}^J \Prob(Y_x = 1 \mid  Z_\ell = z_\ell^k)\Prob(Z_\ell = z_\ell^k) \leq 0.
\]
Repeat these derivations for the other three inequalities. This yields the definition of $\mathcal{D}_\ell(c_\ell)$. Thus we have shown that the true conditional probabilities $\Prob(Y_x=1 \mid Z_\ell=z_\ell^j)$ are in $\mathcal{D}_\ell(c_\ell)$ for each $\ell \in \{1,\ldots,L\}$. $p_x$ is simply the vector of these conditional probabilities. Thus we have shown $p_x \in \mathcal{D}(c)$.

\bigskip

Next we show that $p_x \in \mathcal{H}_x$. Element $(\ell-1)J + j$ of $p_x$ can be written as
\begin{align*}
	[p_x]_{(\ell-1)J + j}
	&= \Prob(Y_x = 1 \mid Z_\ell = z_\ell^j) \\
	&= \Prob(Y_x = 1, X=x \mid Z_\ell = z_\ell^j) + \Prob(Y_x = 1, X=1-x \mid Z_\ell = z_\ell^j)\\
	&= \Prob(Y = 1, X=x \mid Z_\ell = z_\ell^j) + \Prob(Y_x = 1, X=1-x \mid Z_\ell = z_\ell^j)\\
	&= \Prob(Y = 1, X=x \mid Z_\ell = z_\ell^j) \\
	&\quad + \sum_{k=1}^{J^{L-1}} \Prob(Y_x = 1 \mid  X=1-x,Z_\ell = z_\ell^j, Z_{-\ell} = z_{-\ell}^k) \Prob(X=1-x, Z_{-\ell} = z_{-\ell}^k \mid Z_\ell = z_\ell^j).
\end{align*}
This is simply equation \eqref{eq:manyIVlawTotProb}. Since
\[
	\Prob(Y_x = 1 \mid X=1-x, Z_\ell = z_\ell^j, Z_{-\ell} = z_{-\ell}^k) \in [0,1],
\]
these derivations imply that
\[
	p_x = \textbf{b}_x + \textbf{A}_x q_x
\]
where $q_x \in [0,1]^{J^L}$ is a vector with elements
\[
	\Prob(Y_x = 1 \mid  X=1-x,Z_\ell = z_\ell^j, Z_{-\ell} = z_{-\ell}^k).
\]
Thus $p_x \in \mathcal{H}_x$. 

\bigskip

Thus we have shown that $p_x \in \mathcal{D}(c)\cap\mathcal{H}_x$. Since this is true for each $x \in \{0,1\}$, we have $p \in \Theta_1(c)\times \Theta_0(c)$.

\bigskip

\textbf{Step 2.} Next we show sharpness. Let $(p_0,p_1) \in \Theta_1(c)\times \Theta_0(c)$. First, $p_x \in \mathcal{D}(c)$ implies
\[
	\Prob(Z = z_\ell^j \mid Y_x = y) \in [\max\{\Prob(Z_\ell = z_\ell^j) -c_\ell, 0\}, \min\{\Prob(Z_\ell = z_\ell^j) + c_\ell, 1\}]
\]
for all $\ell \in \{1,\ldots,L \}$ and all $j \in \{1,\ldots,J\}$. This follows by reversing the arguments at the beginning of step 1. Hence $c_\ell$-dependence is satisfied for all $\ell \in \{1,\ldots,L\}$. 

Next, since $p_x \in \mathcal{H}_x$ there is a vector $q_x \in [0,1]^{J^L}$ such that
\[
	p_x = \textbf{b}_x + \textbf{A}_x q_x.
\]
This vector $q_x$ consists of elements of the form
\[
	\Prob(Y_x=1 \mid X=1-x,Z=z).
\]
Take these elements and combine them with the observed probabilities
\[
	\Prob(Y=1 \mid X=x,Z=z) = \Prob(Y_x=1 \mid X=x,Z=z)
\]
to get a distribution of $Y_x \mid (X,Z)$. Do this for both $x \in \{0,1\}$. Combine these conditional marginal distributions into a joint distribution $(Y_0,Y_1) \mid (X,Z)$ using any copula, and finally combine these with the known marginals $(X,Z)$ to get a joint distribution of $(Y_0,Y_1,X,Z)$. By construction, this joint distribution is consistent with $c$-dependence, the distribution of the observed data $(Y,X,Z)$, and yields the point $(p_0,p_1)$ that we started with.
\end{proof}

\begin{lemma}\label{lemma:ContCorresp}
Suppose B\ref{assump:NonTrivialinstrument}$^\prime$ holds. Then the correspondence $\mathcal{D}: [0,1]^L \rightrightarrows [0,1]^{LJ}$ defined by $\mathcal{D}(c)$ in equation \eqref{eq:generalDiamondSet} is continuous.
\end{lemma}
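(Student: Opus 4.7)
The plan is to reduce the problem to the single-instrument components and then establish upper and lower hemicontinuity of each component. Since $\mathcal{D}(c) = \prod_{\ell=1}^L \mathcal{D}_\ell(c_\ell)$ is a product of correspondences, where each $\mathcal{D}_\ell$ depends only on the single scalar $c_\ell$, it suffices to prove that each $\mathcal{D}_\ell : [0,1] \rightrightarrows [0,1]^J$ is continuous; continuity of finite products of continuous compact-valued correspondences then delivers the claim for $\mathcal{D}$.

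Upper hemicontinuity is the easy half. Each $\mathcal{D}_\ell(c_\ell)$ is a compact subset of the compact set $[0,1]^J$ cut out by the finite system of weak linear inequalities displayed in equation \eqref{eq:generalDiamondSet}. The only way $c_\ell$ enters is through the terms $\min\{\Prob(Z_\ell=z_\ell^j)+c_\ell,1\}$ and $\max\{\Prob(Z_\ell=z_\ell^j)-c_\ell,0\}$, which are continuous in $c_\ell$. So if $c_\ell^n \to c_\ell$ and $a^n \in \mathcal{D}_\ell(c_\ell^n)$ with $a^n \to a$, passing to the limit in each of the four families of inequalities preserves each inequality, giving $a \in \mathcal{D}_\ell(c_\ell)$. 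Since the image is contained in the compact set $[0,1]^J$, this verifies uhc.

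Lower hemicontinuity is the main obstacle because $\mathcal{D}_\ell(0)$ degenerates to the one-dimensional diagonal $\{(p,\ldots,p):p\in[0,1]\}$ and has empty relative interior inside the ambient polytope, so a naive Slater-condition argument does not apply. The plan is to exploit monotonicity together with an explicit shrinkage toward the ``independence center.'' First, because $c_\ell \mapsto \mathcal{D}_\ell(c_\ell)$ is weakly increasing in the set-inclusion order (larger $c_\ell$ gives weaker constraints), for any sequence $c_\ell^n \geq c_\ell$ with $c_\ell^n \to c_\ell$ and any $a \in \mathcal{D}_\ell(c_\ell)$ we may simply take $a^n = a \in \mathcal{D}_\ell(c_\ell^n)$. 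The delicate case is $c_\ell^n < c_\ell$. Here I would set $\mu = \sum_{k=1}^J \Prob(Z_\ell=z_\ell^k) a_k$ and define
\[
a^n_j = \lambda_n a_j + (1-\lambda_n)\mu,\qquad j=1,\ldots,J,
\]
for a scalar $\lambda_n \in [0,1]$ to be chosen. Note $\sum_k \Prob(Z_\ell=z_\ell^k) a^n_k = \mu$, so the weighted mean is preserved. Substituting $a^n_j$ into each of the four inequalities defining $\mathcal{D}_\ell(c_\ell^n)$ and using the fact that $a$ already satisfies the corresponding inequality at $c_\ell$, one sees that each constraint reduces to an upper bound on $\lambda_n$ of the form
\[
\lambda_n \leq \frac{\min\{\Prob(Z_\ell=z_\ell^j)+c_\ell^n,1\}-\Prob(Z_\ell=z_\ell^j)}{\min\{\Prob(Z_\ell=z_\ell^j)+c_\ell,1\}-\Prob(Z_\ell=z_\ell^j)}
\]
(or analogous ratios for the $\max$ and the $(1-a_j)$ constraints), with the bound being vacuous whenever the relevant sign of $a_j-\mu$ makes the constraint automatic. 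Continuity of $\min$ and $\max$ forces each such ratio to tend to $1$ as $c_\ell^n \to c_\ell$, so taking $\lambda_n$ equal to the minimum of these finitely many bounds (truncated at $1$) gives $\lambda_n \to 1$ and hence $a^n \to a$ with $a^n \in \mathcal{D}_\ell(c_\ell^n)$.

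Putting the two halves together yields continuity of the correspondence $c_\ell \mapsto \mathcal{D}_\ell(c_\ell)$, and taking the product over $\ell$ gives continuity of $\mathcal{D}$. The hardest step is checking that the ratios above are well-defined and that the shrinkage construction simultaneously handles all four families of constraints when several of them may be active at $a$; I expect this bookkeeping, together with the boundary case $c_\ell=0$ where $\mu$ could equal $0$ or $1$ (which must be treated separately by noting that then $a_j \in \{0\}$ or $\{1\}$ for all $j$ and the trivial choice $a^n = a$ already works), to constitute the main technical content of the argument.
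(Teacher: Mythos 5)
Your proof is correct, and the upper hemicontinuity half coincides with the paper's (closed graph plus compact range). The lower hemicontinuity half, which is the substantive part, takes a genuinely different route. The paper works directly with the topological definition: given an open set $E$ meeting $\mathcal{D}_\ell(\bar c)$, it perturbs a point of the intersection toward the fixed diagonal point $\tfrac12\iota_J$, uses convexity of $\mathcal{D}_\ell(\bar c)$ and openness of $E$ to land in $\mathrm{int}(\mathcal{D}_\ell(\bar c))\cap E$, and then observes that $\{c: a^*\in\mathrm{int}(\mathcal{D}_\ell(c))\}$ is open (with separate, monotonicity-based treatment of the degenerate cases, namely diagonal points and $\bar c=0$). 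You instead use the sequential characterization of lhc and an explicit homothety toward the diagonal point $(\mu,\ldots,\mu)$ with $\mu=\sum_k \Prob(Z_\ell=z_\ell^k)a_k$; this is a clever choice of center because it freezes the weighted mean appearing on the right-hand side of every constraint, so each of the four families of inequalities reduces to an explicit upper bound on the shrinkage factor $\lambda_n$ that tends to $1$ (the denominators $\min\{c,1-p_j\}$ and $\min\{c,p_j\}$ are positive for $c>0$ under B2$^\prime$, and the only case needing $c>0$ is $c^n<c$, so no division by zero occurs; the degenerate cases $\mu\in\{0,1\}$ force $a$ onto the diagonal, where $a^n=a$ works for every $c$). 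Both arguments ultimately rest on the same structural facts — monotonicity of $\mathcal{D}_\ell$ in $c$, convexity, and star-shapedness about the diagonal — but yours is more constructive and quantitative (it produces an explicit approximating sequence with a rate), whereas the paper's is shorter once one accepts the open-set bookkeeping. Your version would benefit from an explicit citation of the sequential characterization of lower hemicontinuity for correspondences between metric spaces, and from noting that the $\mu\in\{0,1\}$ degeneracy is not tied to $c_\ell=0$ but can occur for any $c_\ell$; neither point affects correctness.
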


\begin{proof}[Proof of lemma \ref{lemma:ContCorresp}]
To prove this result, we show that $\mathcal{D}_\ell$ is both upper hemicontinuous (uhc) and lower hemicontinuous (lhc).

\bigskip

\textbf{Step 1: Upper hemicontinuity.} Pick any $c_\ell \in [0,1]$ and $a \in [0,1]^J$. Consider a sequence $\{ c_\ell^n \}$ converging to $c_\ell$ and a sequence $\{ a^n \}$ converging to $a$ such that $a^n \in \mathcal{D}(c^n)$ for all $n$. From the first inequality in equation \eqref{eq:generalDiamondSet},
\[
	\Prob(Z_\ell = z_\ell^j) a_j^n - \min\{\Prob(Z_\ell = z_\ell^j) + c_\ell^n,1\} \sum_{k=1}^J \Prob(Z_\ell = z_\ell^k) a_k^n \leq 0.
\]
Taking limits as $n \rightarrow \infty$ yields
\[
	\Prob(Z_\ell = z_\ell^j) a_j - \min\{\Prob(Z_\ell = z_\ell^j) + c_\ell,1\} \sum_{k=1}^J \Prob(Z_\ell = z_\ell^k) a_k \leq 0.
\]
Repeating this for the other three inequalities shows that $a \in \mathcal{D}_\ell(c_\ell)$. Thus $\mathcal{D}_\ell$ has a closed graph. Hence $\mathcal{D}_\ell$ is uhc by theorem 17.11 in \cite{AliprantisBorder2006}, since the range $[0,1]^J$ is closed.

\bigskip

\textbf{Step 2: Lower hemicontinuity.} We next show that $\mathcal{D}_\ell$ is lhc. We do this in two parts: First for $\bar{c} \in (0,1)$ and second for $\bar{c} = 0$.

\medskip

\emph{Part A.} Let $\bar{c} \in (0,1]$. We will show $\mathcal{D}_\ell$ is lhc at $\bar{c}$. Let $E$ be an open set in $[0,1]^J$ such that
\[
	\mathcal{D}_\ell(\bar{c}) \cap E \neq \emptyset.
\]
By the definition of lhc, we must find an open set $U$ (open relative to the domain $[0,1]$) such that 
\begin{enumerate}
\item $\bar{c} \in U$, and 

\item $c \in U$ implies that $\mathcal{D}_\ell(c) \cap E \neq \emptyset$.
\end{enumerate}
We consider two cases.

\bigskip

\texttt{Case 1}. Suppose there is an element $\bar{a} = (\bar{a}_1,\ldots,\bar{a}_J) \in \mathcal{D}_\ell(\bar{c}) \cap E$ such that $\bar{a}_1 = \cdots = \bar{a}_J$. In this case, we can simply let $U = [0,1]$. Why? $U$ is open (relative to $[0,1]$). It contains $\bar{c}$ since it has to be in $[0,1]$. Moreover, we know that $\bar{a} \in E$. But since $\bar{a}_1 = \cdots = \bar{a}_J$, we also know that $\bar{a} \in \mathcal{D}_\ell(0)$. (This is simply our baseline model where all these conditional probabilities equal each other, by statistical independence.) Hence $\bar{a} \in \mathcal{D}_\ell(c)$ for any $c \in [0,1]$, since $\mathcal{D}_\ell(c)$ is weakly increasing in $c$. Thus $\mathcal{D}_\ell(c) \cap E \neq \emptyset$ for all $c \in [0,1] = U$. This is the second requirement we needed of $U$.

\bigskip

\texttt{Case 2}. Suppose none of the elements in $\mathcal{D}_\ell(\bar{c}) \cap E$ have all components equal. Let $\bar{a}$ be an arbitrary element of this intersection. Consider
\[
	a(\epsilon) = (1-\epsilon)\bar{a} + \epsilon \frac{1}{2} \iota_J
\]
where $\iota_J$ is a $J$-vector of ones. $\frac{1}{2} \iota_J \in \mathcal{D}_\ell(c)$ for all $c$ since $\mathcal{D}_\ell(c)$ always contains the diagonal. This combined with convexity of the set $\mathcal{D}_\ell(c)$ implies that $a(\epsilon) \in \mathcal{D}_\ell(\bar{c})$ for all $\epsilon \in [0,1]$. Finally, since $E$ is open, there exists an $\epsilon^* \in [0,1]$ such that $a(\epsilon^*) \in \mathcal{D}_\ell(\bar{c}) \cap E$. Let $a^* = a(\epsilon^*)$. Let 
\[
	U = \left\{c \in[0,1] : a^* \in \text{int} (\mathcal{D}_\ell(c))\right\}.
\]
Equivalently,
\begin{align*}
	U 
	= \Bigg\{ c\in [0,1] : \ &\text{ For all  $j=1,\ldots,J$,} \\
	&\Prob(Z_\ell = z_\ell^j) a_{j}^* - \min\{\Prob(Z_\ell = z_\ell^j) + c,1\} \sum_{k=1}^J \Prob(Z_\ell = z_\ell^k) a_{k}^* < 0, \\
	&\Prob(Z_\ell = z_\ell^j) a_{j}^* - \max\{\Prob(Z_\ell = z_\ell^j) - c,0\} \sum_{k=1}^J \Prob(Z_\ell = z_\ell^k) a_{k}^* > 0, \\
	&\Prob(Z_\ell = z_\ell^j) (1-a_{j}^*) - \min\{\Prob(Z_\ell = z_\ell^j) + c,1\} \sum_{k=1}^J \Prob(Z_\ell = z_\ell^k) (1-a_{k}^*) < 0,\\
	& \Prob(Z_\ell = z_\ell^j) (1-a_{j}^*) - \max\{\Prob(Z_\ell = z_\ell^j) - c,0\} \sum_{k=1}^J \Prob(Z_\ell = z_\ell^k) (1-a_{k}^*) > 0 \Bigg\}.
\end{align*}
This set $U$ can be written as
\begin{align*}
	U 
	= \bigcap_{j=1}^J \Bigg[ &\left\{c \in [0,1]: \Prob(Z_\ell = z_\ell^j) a_{j}^* - \min\{\Prob(Z_\ell = z_\ell^j) + c,1\} \sum_{k=1}^J \Prob(Z_\ell = z_\ell^k) a_{k}^* < 0 \right\} \\
	&\cap \left\{c \in [0,1]: \Prob(Z_\ell = z_\ell^j) a_{j}^* - \max\{\Prob(Z_\ell = z_\ell^j) - c,0\} \sum_{k=1}^J \Prob(Z_\ell = z_\ell^k) a_{k}^* > 0 \right\} \\
	&\cap \left\{c \in [0,1]: \Prob(Z_\ell = z_\ell^j) (1-a_{j}^*) - \min\{\Prob(Z_\ell = z_\ell^j) + c,1\} \sum_{k=1}^J \Prob(Z_\ell = z_\ell^k) (1-a_{k}^*) < 0 \right\} \\
	&\cap \left\{c \in [0,1]: \Prob(Z_\ell = z_\ell^j) (1-a_{j}^*) - \max\{\Prob(Z_\ell = z_\ell^j) - c,0\} \sum_{k=1}^J \Prob(Z_\ell = z_\ell^k) (1-a_{k}^*) > 0 \right\} \Bigg].
\end{align*}
By continuity in $c$ of the inequality components, this is a finite intersection of open sets. Hence $U$ is open. $U$ is not empty since $\bar{c} \in U$, which follows since $a^* \in \text{int}(\mathcal{D}_\ell(\bar{c}))$.

Finally, we show that $U$ satisfies the second property we need: Suppose $c \in U$. By definition of $U$, $a^* \in \text{int}(\mathcal{D}_\ell(c))$. Moreover, $a^* \in E$ by construction. Thus $a^* \in \big( \mathcal{D}_\ell(c) \cap E \big)$ for any $c \in U$. Hence
\[
	\mathcal{D}_\ell(c) \cap E \neq \emptyset	
\]
for any $c \in U$.

\medskip

\emph{Part B.} Let $\overline{c} = 0$. Let $E$ be an open set in $[0,1]^J$ such that $\mathcal{D}_\ell(0) \cap E \neq \emptyset$. Let $U = [0,1]$. $U$ is open (relative to $[0,1]$). Moreover, 
\[
	\big( \mathcal{D}_\ell(0) \cap E \big) \subseteq \big( \mathcal{D}_\ell(c) \cap E \big)
\]
since $\mathcal{D}_\ell$ is weakly increasing. Since the set on the left is not empty, the set on the right is not empty:
\[
	\mathcal{D}_\ell(c) \cap E \neq \emptyset
\]
for any $c \in U$.

\bigskip

\textbf{Step 3: Take products.} Finally, theorem 17.28 in \cite{AliprantisBorder2006} shows that the product of continuous correspondences is continuous. Hence the correspondence $\mathcal{D}(\cdot)$ defined by
\[
	\mathcal{D}(c) = \prod_{\ell=1}^L \mathcal{D}_\ell(c_\ell)
\]
is continuous.
\end{proof}

\begin{proof}[Proof of proposition \ref{prop:PropertiesManyInstBounds}]
\hfill
\begin{enumerate}
\item By lemma \ref{lemma:ContCorresp}, the correspondence $\mathcal{D}(\cdot)$ is continuous. Moreover $\mathcal{D}(\cdot)$ is closed-valued since it is defined by a finite number of linear weak inequalities. The correspondence that maps $c$ into $\mathcal{H}_x$ is continuous since it is constant-valued. It is also closed- and compact-valued since $[0,1]^{J^L}$ is compact, and $\mathcal{H}_x$ is the image of $[0,1]^{J^L}$ under a continuous (affine) mapping. Hence the set
\[
	\mathcal{C}_x = \{c \in [0,1]^L : \mathcal{D}(c)\cap \mathcal{H}_x \neq \emptyset\}
\]
is closed, by exercise 11.18(b) on page 58 of \cite{Border1985}. Finally, let
\[
	\mathcal{C} = \mathcal{C}_0 \cap \mathcal{C}_1.
\]
This set is the intersection of two closed sets and hence is closed. It depends only on $\mathcal{D}(\cdot)$ and $\mathcal{H}_x$, which are point identified. Hence $\mathcal{C}$ is point identified.

\item By definition of $\Theta_x(c)$,
\[
	\Theta_0(c) \times \Theta_1(c) = \big( \mathcal{D}(c) \times \mathcal{H}_0 \big) \times \big( \mathcal{D}(c) \times \mathcal{H}_1 \big).
\]
Each of these sets is a closed, convex polytope. To see this, consider $\mathcal{D}(c)$ and $\mathcal{H}_x$ separately.
\begin{enumerate}
\item $\mathcal{D}(c)$ is closed, as shown in part 1. It is convex since it is defined by a collection of linear inequalities.

\item $\mathcal{H}_x$ is compact, as shown in part 1. Therefore it is closed and bounded. It is convex valued since it is an affine transformation of the convex set $[0,1]^{J^L}$, and convexity is preserved by affine transformations.
\end{enumerate}
Since both $\mathcal{D}(c)$ and $\mathcal{H}_x$ are closed, convex, bounded by $[0,1]^{LJ}$, and defined by linear inequalities, they are both polytopes. Finally, all three properties---closed, convex, polytope---are preserved by taking finite Cartesian products.

\item Next we show continuity of the correspondence $\phi : \mathcal{C} \rightrightarrows [0,1]^{2LJ}$ defined by $\phi(c) = \Theta_0(c)\times\Theta_1(c)$. First consider the correspondence $\phi_x : \mathcal{C} \rightrightarrows [0,1]^{LJ}$ defined by
\[
	\phi_x(c) = \Theta_x(c) = \mathcal{D}(c) \cap \mathcal{H}_x.
\]
It is convex-valued and closed-valued since its values are the intersection of two closed and convex sets. It is uhc by theorem 17.25 in \cite{AliprantisBorder2006}, since $\mathcal{D}(c)$ and $\mathcal{H}_x$ are both uhc. It is lhc for all $c$ such that $\text{int}(\mathcal{D}(c)) \cap \text{int}(\mathcal{H}_x) \neq \emptyset$, by applying theorem B in \cite{LechickiSpakowski1985}. Finally, $\phi$ is just the product of the correspondences $\phi_0$ and $\phi_1$. The product of continuous correspondences is continuous by theorem 17.28 in \cite{AliprantisBorder2006}.

\end{enumerate}

\end{proof}

\begin{proof}[Proof of corollary \ref{corr:hetTrtMultipleATEbounds}]
\hfill
\begin{enumerate}
\item We can write the identified set for $(\Prob(Y_0 = 1), \Prob(Y_1 = 1))$ as
\begin{multline*}
	\left\{ \left( \sum_{j = 1}^J \sum_{\ell = 1}^L a_{(\ell-1)J + j}\Prob(Z_\ell = z_\ell^j), \ \sum_{j = 1}^J \sum_{\ell = 1}^L \widetilde{a}_{(\ell-1)J + j}\Prob(Z_\ell = z_\ell^j) \right) : \mathbf{a} \in \Theta_0(c), \widetilde{\mathbf{a}} \in \Theta_1(c)\right\}\\
	= \left\{\sum_{j = 1}^J \sum_{\ell = 1}^L a_{(\ell-1)J + j}\Prob(Z_\ell = z_\ell^j): \mathbf{a} \in \Theta_0(c)\right\} \times  \left\{\sum_{j = 1}^J \sum_{\ell = 1}^L \widetilde{a}_{(\ell-1)J + j}\Prob(Z_\ell = z_\ell^j): \widetilde{\mathbf{a}} \in \Theta_1(c)\right\}.
\end{multline*}
By proposition \ref{prop:PropertiesManyInstBounds}, $\Theta_x(c)$ is closed, non-empty, and bounded for $x\in\{0,1\}$. Therefore, suprema and infima of these sets are attained on $\Theta_x(c)$. Moreover, since $\Theta_x(c)$ is convex, the set
\[
	\left\{\sum_{j = 1}^J \sum_{\ell = 1}^L a_{(\ell-1)J + j}\Prob(Z_\ell = z_\ell^j): \mathbf{a} \in \Theta_x(c)\right\}
\]
equals a closed interval ranging from the infimum to the supremum.

\item The mapping $f(\cdot)$ defined by
\[
	f(\mathbf{a}) = \sum_{j = 1}^J \sum_{\ell = 1}^L a_{(\ell-1)J + j}\Prob(Z_\ell = z_\ell^j)
\]
is continuous at all $\mathbf{a} \in \R^{LJ}$. The correspondence $\Theta_x(\cdot)$ is continuous and compact-valued for all $c$ such that $\text{int}(\mathcal{D}(c)) \cap \text{int}(\mathcal{H}_x) \neq \emptyset$ by proposition \ref{prop:PropertiesManyInstBounds}. Therefore
\[
	\overline{P}_x(c) = \max\{f(\mathbf{a}):\mathbf{a} \in \Theta_x(c)\}
\]
is a continuous function of $c$ on $\{c\in\mathcal{C}: \text{int}(\mathcal{D}(c)) \cap \text{int}(\mathcal{H}_x) \neq \emptyset\}$ by the Maximum Theorem (for example, see theorem 9.14 in \citealt{Sundaram1996}). By continuity of $-f(\mathbf{a})$, $\underline{P}_x(c)$ is continuous over the same domain.

\item This follows immediately from the identified set $\Theta_0(c)\times\Theta_1(c)$ being a Cartesian product, which implies that maxima and minima for each component can be simultaneously attained.
\end{enumerate}
\end{proof}

\subsubsection*{Proofs for section \ref{sec:hetTrtCts}}

\begin{proof}[Proof of proposition \ref{prop:kitagawa3point1}]
See proposition 3.1 of \cite{Kitagawa2009}.
\end{proof}

\begin{proof}[Proof of theorem \ref{thm:ContYManyInstrumentCdep}]
This proof follows the structure of the proof of theorem \ref{thm:BinaryYManyInstrumentCdep}. 

\bigskip

\textbf{Step 1.} We first show that the true densities $((\mathbf{f}_{Y_0|Z_\ell})_{\text{all } \ell}, (\mathbf{f}_{Y_1|Z_\ell})_{\text{all } \ell})$ are in $\Theta_0(c)\times\Theta_1(c)$. Fix $x \in \{0,1\}$. By $c_\ell$-dependence, 
\begin{align*}
	\Prob(Z_\ell= z \mid Y_x = y)  &\leq \min\{\Prob(Z_\ell= z) + c_\ell,1\}\\
	\Prob(Z_\ell = z \mid Y_x = y)  &\geq \max\{\Prob(Z_\ell= z) - c_\ell,0\}.
\end{align*}
all $y \in \supp(Y_x)$ and each $z \in \{0,1\}$. These inequalities for $z=0$ are equivalent to the inequalities for $z=1$, so there are only two non-redundant inequalities here. Use Bayes' rule to rewrite the left hand side. This yields
\begin{align*}
	f_{Y_x|Z_\ell}(y \mid z) \Prob(Z_\ell = z) 
	&\leq \min\{\Prob(Z_\ell= z) + c_\ell,1\} f_{Y_x}(y)\\
	&= \min\{\Prob(Z_\ell= z) + c_\ell,1\} \sum_{\widetilde{z}=0}^1 f_{Y_x|Z_\ell}(y \mid \widetilde{z})\Prob(Z_\ell = \widetilde{z}).
\end{align*}
Hence
\[
	f_{Y_x|Z_\ell}(y \mid z) \Prob(Z_\ell = z) 
	- \min\{\Prob(Z_\ell= z) + c_\ell,1\} \sum_{\widetilde{z}=0}^1 f_{Y_x|Z_\ell}(y \mid \widetilde{z})\Prob(Z_\ell = \widetilde{z}) \leq 0.
\]
Evaluating at $z=0$ yields
\[
	f_{Y_x| Z_\ell}(y \mid 0) \Prob(Z_\ell=0) - \min \{ \Prob(Z_\ell=0) + c_\ell, 1 \} \big( f_{Y_x | Z_\ell}(y \mid 0) \Prob(Z_\ell=0) + f_{Y_x | Z_\ell}(y \mid 1) \Prob(Z_\ell=1) \big) \leq 0.
\]
Rearranging yields
\[
	f_{Y_x | Z_\ell}(y \mid 0) \Prob(Z_\ell=0) \big( 1 - \min \{ \Prob(Z_\ell=0) + c_\ell,1 \} \big)
	\leq \Prob(Z_\ell=1) \min \{ \Prob(Z_\ell=0) + c_\ell,1 \} f_{Y_x | Z_\ell}(y \mid 1)
\]
or
\[
	f_{Y_x | Z_\ell}(y \mid 0) \frac{\Prob(Z_\ell=0) \big( 1 - \min \{ \Prob(Z_\ell=0) + c_\ell,1 \} \big)}{ \Prob(Z_\ell=1) \min \{ \Prob(Z_\ell=0) + c_\ell,1 \} }
	\leq f_{Y_x | Z_\ell}(y \mid 1).
\]
In the numerator,
\[
	1 - \min \{ \Prob(Z_\ell=0) + c_\ell,1 \}
	= \max \{ \Prob(Z_\ell=1) - c_\ell, 0 \}.
\]
Hence we have
\[
	f_{Y_x | Z_\ell}(y \mid 0) k_0^\ell(c_\ell) \leq f_{Y_x | Z_\ell}(y \mid 1).
\]
Repeat these derivations for the other inequality evaluated at $z=0$ to get the second inequality in the definition of $\mathcal{D}_{x,\ell}(c_\ell)$. Thus we have shown that the true conditional densities $\mathbf{f}_{Y_x|Z_\ell}$ are in $\mathcal{D}_{x,\ell}(c_\ell)$ for all $\ell$. Since $(\mathbf{f}_{Y_x|Z_\ell})_{\text{all } \ell}$ is just the vector of these conditional densities, we have shown that it is in $\mathcal{D}_x(c)$.

\bigskip

Next we show that $(\mathbf{f}_{Y_x|Z_\ell})_{\text{all } \ell} \in \mathcal{H}_x$. For $z \in \{0,1\}$, element $(\ell-1)2 + 1 + z$ of $(\mathbf{f}_{Y_x|Z_\ell})_{\text{all } \ell}$ can be written as
\begin{align*}
	&[(\mathbf{f}_{Y_x|Z_\ell})_{\text{all } \ell}]_{(\ell-1)2 + z} \\
	&= f_{Y_x|Z_\ell}(y \mid z) \\
	&= f_{Y_x,X|Z_\ell}(y,x \mid z)+ f_{Y_x,X|Z_\ell}(y,1-x \mid z)\\
	&= f_{Y,X|Z_\ell}(y,x \mid z)+ f_{Y_x,X|Z_\ell}(y,1-x \mid z)\\
	&= f_{Y,X|Z_\ell}(y,x \mid z) + \sum_{k=1}^{2^L} f_{Y_x|X,Z_\ell,Z_{-\ell}}(y \mid 1-x,z, z_{-\ell}^k) \Prob(X=1-x, Z_{-\ell} = z_{-\ell}^k \mid Z_\ell = z).
\end{align*}
Since
\[
	f_{Y_x|X,Z_\ell,Z_{-\ell}}(y,1-x \mid z, z_{-\ell}^k)
	\in \mathcal{P}(\mathcal{Y}_x)
\]
these derivations imply that
\[
	(\mathbf{f}_{Y_x|Z_\ell})_{\text{all } \ell} = (\textbf{f}_{Y,X|Z_\ell}(\cdot,x))_{\text{all } \ell} + \textbf{A}_x \textbf{q}
\]
where $\mathbf{q} \in \mathcal{P}(\mathcal{Y}_x)^{2^L}$ is a vector with elements $f_{Y_x|X,Z_\ell,Z_{-\ell}}(y \mid 1-x,z, z_{-\ell}^k)$. Thus $(\mathbf{f}_{Y_x|Z_\ell})_{\text{all } \ell} \in \mathcal{H}_x$. 

\bigskip

Thus we have shown that $(\mathbf{f}_{Y_x|Z_\ell})_{\text{all } \ell} \in \mathcal{D}_x(c)\cap\mathcal{H}_x$. Since this is true for each $x\in\{0,1\}$, we have $((\mathbf{f}_{Y_0|Z_\ell})_{\text{all } \ell}, (\mathbf{f}_{Y_1|Z_\ell})_{\text{all } \ell}) \in \Theta_0(c)\times\Theta_1(c)$.

\bigskip

\textbf{Step 2.} Next we show sharpness. Let $(\mathbf{f}_0,\mathbf{f}_1) \in \Theta_1(c)\times \Theta_0(c)$. First, $\mathbf{f}_x \in \mathcal{D}_x(c)$ implies
\[
	\Prob(Z_\ell = z \mid Y_x = y) \in [\max\{\Prob(Z_\ell = z) -c_\ell, 0\}, \min\{\Prob(Z_\ell = z) + c_\ell, 1\}]
\]
for all $\ell \in \{1,\ldots,L \}$ and all $z \in \{0,1\}$. This follows by reversing the arguments at the beginning of step 1. Hence $c_\ell$-dependence is satisfied for all $\ell \in \{1,\ldots,L\}$.

Next, since $\mathbf{f}_x \in \mathcal{H}_x$, there is a vector of counterfactual densities $\mathbf{q}_x \in \mathcal{P}(\mathcal{Y}_x)^{2^L}$ such that
\[
	\mathbf{f}_x = (\textbf{f}_{Y,X|Z_\ell}(\cdot,x))_{\text{all } \ell} + \textbf{A}_x \mathbf{q}_x.
\]
This vector $\mathbf{q}_x$ consists of elements of the form $f_{Y_x | X,Z}(\cdot \mid 1-x,z)$. Take these densities and combine them with the observed densities
\[
	f_{Y \mid X,Z}(\cdot \mid x,z) = f_{Y_x \mid X,Z}(\cdot \mid x,z)
\]
to get a distribution of $Y_x \mid (X,Z)$. Do this for both $x \in \{0,1\}$. Combine these conditional marginal distributions into a joint distribution $(Y_0,Y_1) \mid (X,Z)$ using any copula, and finally combine these with the known marginals $(X,Z)$ to get a joint distribution of $(Y_0,Y_1,X,Z)$. By construction, this joint distribution is consistent with $c$-dependence, the distribution of the observed data $(Y,X,Z)$, and yields the point $(\mathbf{f}_0,\mathbf{f}_1)$ that we started with.
\end{proof}

\nocite{MastenPoirierFFarxiv}
\nocite{Bonk2008}
\nocite{Stanford2017}
\nocite{Ariew2018}
\nocite{ManskiPepper2018}
\nocite{ManskiPepper2018}
\nocite{ManskiPepper2018}
\nocite{Ramsahai2012}
\nocite{MachadoShaikhVytlacil2018}
\nocite{MastenPoirierFFarxiv}
\nocite{MastenPoirierFFarxiv}
\nocite{Ramsahai2012}
\nocite{MachadoShaikhVytlacil2018}
\nocite{HansenJagannathan1991,HansenJagannathan1997}
\nocite{HansenHeatonLuttmer1995}
\nocite{Ludvigson2013}
\nocite{DHaultfoeuilleEtAl2018}
\nocite{BontempsMagnacMaurin2012}
\nocite{BugniCanayShi2015}
\nocite{ChernozhukovLeeRosen2013}
\nocite{Liu1955,Liu1960,Liu1963}
\nocite{Fisher1961}
\nocite{Fisher2005}
\nocite{AngristKrueger1994}
\nocite{AltonjiElderTaber2005}
\nocite{Small2007}
\nocite{ConleyHansenRossi2012}
\nocite{Ashley2009}
\nocite{Kraay2012}
\nocite{AshleyParmeter2015}
\nocite{vanKippersluisRietveld2017,vanKippersluisRietveld2018}
\nocite{Small2007}
\nocite{ManskiPepper2000,ManskiPepper2009}
\nocite{BlundellEtAl2007}
\nocite{KreiderPepper2007}
\nocite{KreiderPepperGundersenJolliffe2012}
\nocite{GundersenKreiderPepper2012}
\nocite{KreiderPepperRoy2016}
\nocite{ChenFloresFloresLagunes2016}
\nocite{NevoRosen2012}
\nocite{ManskiPepper2000,ManskiPepper2009}
\nocite{NevoRosen2012}
\nocite{White1982}
\nocite{White1994}
\nocite{Small2007}
\nocite{Small2007}
\nocite{Imbens2003}
\nocite{Newey1985}
\nocite{ManskiTamer2002}
\nocite{ChernozhukovHongTamer2007}
\nocite{Small2007}

\end{document}